\theoremstyle{plain}
\newtheorem{theorem}{Theorem}
\newtheorem{corollary}{Corollary}
\newtheorem{lemma}{Lemma}
\newtheorem{proposition}{Proposition}
\newtheorem{claim}{Claim}
\theoremstyle{definition}
\newtheorem{definition}{Definition}
\theoremstyle{remark}
\newtheorem{remark}{Remark}
\newtheoremstyle{restate}{}{}{\itshape}{}{\bfseries}{~(restated).}{.5em}{\thmnote{#3}}
\theoremstyle{restate}
\newcommand{\Mod}[1]{\ (\mathrm{mod}\ #1)}
\crefname{theorem}{Theorem}{Theorems}
\crefname{assumption}{Assumption}{Assumptions}
\crefname{corollary}{Corollary}{Corollaries}
\crefname{lemma}{Lemma}{Lemmas}
\crefname{conjecture}{Conjecture}{Conjectures}
\crefname{proposition}{Proposition}{Propositions}
\crefname{observation}{Observation}{Observations}
\crefname{claim}{Claim}{Claims}
\crefname{property}{Property}{Properties}
\crefname{op}{Open Problem}{Open Problems}
\crefname{problem}{Problem}{Problems}
\crefname{question}{Question}{Questions}
\crefname{definition}{Definition}{Definitions}
\crefname{example}{Example}{Examples}
\crefname{sketch}{Sketch}{Sketches}
\crefname{idea}{Idea}{Ideas}
\crefname{remark}{Remark}{Remarks}
\crefname{equation}{Equation}{Equations}
\crefname{figure}{Figure}{Figures}
\crefname{table}{Table}{Tables}
\newcommand{\ignore}[1]{}
\newcommand{\eps}{\epsilon}
\newcommand{\E}{\operatorname{{\bf E}}}
\newcommand{\Ex}{\mathop{{\bf E}\/}}
\renewcommand{\Pr}{\operatorname{{\bf Pr}}}
\newcommand{\Prx}{\mathop{{\bf Pr}\/}}
\newcommand{\X}{\mathcal{X}}
\newcommand{\Y}{\mathcal{Y}}
\newcommand{\rZ}{\mathcal{Z}}
\DeclareMathOperator\erf{erf}
\newcommand{\polylog}{\mathrm{polylog}}
\newcommand{\poly}{\mathrm{poly}}
\newcommand{\bcalZ}{\bm{\mathcal{Z}}}
\newcommand{\bx}{\bm{x}}
\newcommand{\by}{\bm{y}}
\newcommand{\bxi}{\bm{\xi}}
\newcommand{\R}{\mathbb R}
\newcommand{\RR}{\R_{\geq 0}}
\newcommand{\N}{\mathbb N}
\newcommand{\NN}{\N_{\geq 1}}
\newcommand{\Z}{\mathbb Z}
\renewcommand{\i}{\mathrm{i}\mkern1mu}   
\renewcommand{\d}{\mathrm{d}}   
\newcommand{\rhs}{\mathrm{RHS}} 
\newcommand{\sig}{\mathrm{sig}}
\newcommand{\comment}[1]{}
\def\colorful{1}
\newcommand{\red}[1]{{\color{red} {#1}}}
\newcommand{\gray}[1]{{\color{gray}{#1}}}
\newcommand{\red}[1]{{{#1}}}
\newcommand{\gray}[1]{{{#1}}}
\newcounter{prob}[section]\setcounter{prob}{0}
\newcounter{algo}[section]\setcounter{algo}{0}
\newenvironment{prob}[2][]{%
\refstepcounter{prob}%
\mdfsetup{%
frametitle={%
\tikz[baseline=(current bounding box.east),outer sep=0pt]
\node[anchor=east,rectangle,fill=white]
{\strut Problem~\arabic{prob}:~#1};}}%
\mdfsetup{innertopmargin=0pt,linecolor=black,%
linewidth=1pt,topline=true,%
frametitleaboveskip=\dimexpr-\ht\strutbox\relax
}
\begin{mdframed}[]\relax%
\label{#2}}{\end{mdframed}}
\newenvironment{algo}[2][]{%
\refstepcounter{algo}%
\mdfsetup{%
frametitle={%
\tikz[baseline=(current bounding box.east),outer sep=0pt]
\node[anchor=east,rectangle,fill=white]
{\strut Algorithm~\arabic{algo}:~#1};}}%
\mdfsetup{innertopmargin=0pt,linecolor=black,%
linewidth=1pt,topline=true,%
frametitleaboveskip=\dimexpr-\ht\strutbox\relax
}
\begin{mdframed}[]\relax%
\label{#2}}{\end{mdframed}}
\newcommand{\rnote}[1]{\footnote{{\bf \color{red}Rocco}: {#1}}}
\newcommand{\trnote}[1]{\footnote{{\bf \color{orange}Tim R.}: {#1}}}
\newcommand{\yjnote}[1]{\footnote{{\bf \color{blue}Yaonan}: {#1}}}
\def\calS{\mathcal{S}} \def\calB{\mathcal{B}} 
\title{Average-Case Subset Balancing Problems
}
\author{Xi Chen, Yaonan Jin, Tim Randolph, and Rocco A.~Servedio 
\\ \texttt{ \{xichen, rocco\}@cs.columbia.edu, \{yj2552, t.randolph\}@columbia.edu }
\\ Columbia University}
\date{\today}
\begin{document}

\maketitle

\begin{abstract}

Given a set of $n$ input integers, the \emph{Equal Subset Sum} problem asks us to find two distinct subsets with the same sum. In this paper we present an algorithm that runs in time $\smash{O^*(3^{0.387n})}$ in the~average case, significantly improving over the $\smash{O^*(3^{0.488n})}$ running time of the best known worst-case algorithm  \cite{mucha2019equal} and the  Meet-in-the-Middle benchmark of $O^*(3^{0.5n})$. 

Our algorithm generalizes to a number of related problems, such as the ``\emph{Generalized Equal Subset Sum}'' problem, which asks us to assign a coefficient $c_i$ from a set $C$ to each input number $x_i$ such that $\sum_{i} c_i x_i = 0$. Our algorithm for the average-case version of this problem runs in~time $|C|^{(0.5-c_0/|C|)n}$ for some positive constant $c_0$, whenever  $C=\{0, \pm 1, \dots, \pm d\}$ or $\{\pm 1, \dots, \pm d\}$~for some positive integer $d$ (with runtime  
$O^*(|C|^{0.45n})$ when $|C|<10$). 
Our results extend to the~problem of finding ``nearly balanced'' solutions in which the target is a not-too-large nonzero offset $\tau$.

Our approach relies on new structural results that characterize the probability that $\sum_{i} c_i x_i$ $=\tau$ has a solution $c \in C^n$ when $x_i$'s are chosen randomly;
these results may be of independent interest.\ignore{We present a new elementary approach for lower-bounding the probability of a solution for coefficient sets with 0, which gives our algorithm exponentially small failure probability in this case. We also extend the classic work of Borgs, Chayes, and Pittel \cite{BCP01} for coefficient sets without 0, which gives $o_n(1)$ failure probability.} 
Our algorithm is inspired by the ``representation technique''  introduced by Howgrave-Graham and Joux \cite{howgrave2010new}. 
This requires several new ideas
  to overcome preprocessing hurdles
  that arise in the representation framework,
  as well as a novel application of dynamic programming in the solution recovery phase of the algorithm.

\end{abstract}

\thispagestyle{empty}
\newpage
\setcounter{page}{1}

\section{Introduction}
\label{sec:intro}

The Subset Sum problem and its variants are among the most famous NP-complete problems, and the question of whether Subset Sum can be solved in time $O^*(2^{(0.5 - \delta)n})$\footnote{We use $O^*(\cdot)$ notation to suppress factors polylogarithmic in the argument of the function; so the notation ``$O^*(2^{0.5n})$'' suppresses $\poly(n)$ factors.} for some constant $\delta > 0$ is one of the major questions in exact algorithms. Despite many attempts to find a ``truly faster'' exact algorithm, Horowitz and Sahni's classic $O^*(2^{0.5n})$-time Meet-in-the-Middle algorithm for worst-case inputs remains the benchmark after almost 50 years \cite{horowitz1974computing}.  

In recent years, the apparent difficulty of the Subset Sum problem has fueled work on variants and related settings. In 2010, \cite{howgrave2010new} made a significant breakthrough by showing that Subset Sum could be solved in time $O^*(2^{0.337n})$ in the average case under a reasonable heuristic assumption.\footnote{Although the original paper claims a running time of $O^*(2^{0.311n})$, a correction of the original analysis gives this $O^*(2^{0.337n})$ runtime; see \cite[Section~2.2]{becker2011improved} for details. } Subsequent works, including \cite{becker2011improved,bohme2011,bonnetain2020improved},
refined what became known as the \emph{representation technique}, and whittled the average-case exponent down to $O^*(2^{0.283n})$
under heuristic assumptions. This new technique inspired a flurry of results in related settings, including better time-space tradeoffs \cite{austrin2013space, dinur2012efficient}, a faster polynomial-space algorithm \cite{bansal}, and fast algorithms for large classes of sufficiently ``random-like'' instances \cite{austrin2015subset, austrin2016dense}.

In 2019, \cite{mucha2019equal} used the representation technique to establish an $O^*(3^{0.488n})$-time \emph{worst-case} algorithm for \emph{Equal Subset Sum}, the Subset Sum variant that asks for two different input subsets with the same sum. This resolved an open question of Woeginger (\cite{woeginger2008open}), who observed that the Meet-in-the-Middle approach to Equal Subset Sum runs in time $O^*(3^{0.5n})$ and asked whether this was a barrier for Equal Subset Sum analogous to the $O^*(2^{0.5n})$ runtime barrier for Subset Sum. Equal Subset Sum is also closely related to the Number Balancing problem of \cite{karmarkar1982differencing}, which can be thought of as the optimization version of Equal Subset Sum.

Both Subset Sum and Equal Subset Sum are special cases of the following more general problem: given a multiset of input integers, find a linear combination using only coefficients from a small set $C$ that achieves a specified target value. Writing $[a:b]$ to denote the set of integers $\{a, a+1, \dots, b\}$, we define:

\begin{prob}[Generalized Subset Sum (GSS)]{prob:GSS}
    \textbf{Input.} An input range bound $M$, an input vector $\vec{x} = (x_1,  \dots, x_n) \in [0:M-1]^n$, a set $C \subset \Z$ of allowed coefficients, and a target integer $\tau$. \\
    \textbf{Output.} A coefficient vector $\vec{c} \in C^n$ such that $\vec{c}\cdot \vec{x} = \tau$, if one exists.
\end{prob}


It is natural to compare the runtime of GSS algorithms to the size of the search space $|C|^n$. There is a straightforward generalization of Horowitz and Sahni's Meet-in-the-Middle Algorithm for GSS:  partition the input into two vectors $\vec{x}_1$ and $\vec{x}_2$ of length $n/2$ each, list all partial linear combinations $\vec{c}_{1} \cdot \vec{x}_1$ and $\vec{c}_{2} \cdot \vec{x}_2$ by enumerating all $\vec{c}_{1}, \vec{c}_{2} \in C^{n/2}$, and search the lists for a pair of linear combinations such that $\vec{c}_1 \cdot \vec{x}_1 + \vec{c}_2 \cdot \vec{x}_2$ achieves the target $\tau$. This can be done in time $O^*(|C|^{n/2})$ by sorting the lists and using two pointers that walk from opposite ends of the two lists, seeking a pair that sums to the target. The $C = \{0,1\}$ case of this algorithm is Horowitz and Sahni's Meet-in-the-Middle algorithm for Subset Sum, and the $C = \{0, \pm 1\}$ case (with target $\tau=0$ and the solution $\vec{c}=\vec{0}$ disallowed) is the Meet-in-the-Middle algorithm for Equal Subset Sum. 

\ignore{
START IGNORE
\begin{algo}[Meet-in-the-Middle for GSS (informal)]{algo:mim}
    \begin{enumerate}
        \item Partition the input into two vectors $\vec{x}_1$ and $\vec{x}_2$, each of length $n/2$.
        \item List all partial linear combinations $\vec{c} \cdot \vec{x}_1$ and $\vec{c} \cdot \vec{x}_2$ by enumerating all $\vec{c} \in C^{n/2}$.
        \item Search the lists for a pair of linear combinations such that $\vec{c}_1 \cdot \vec{x}_1 + \vec{c}_2 \cdot \vec{x}_2$ achieves the target. This can be done in time $O^*(|C|^{n/2})$ by sorting the lists and using 
        two pointers that walk from opposite ends of the two lists, seeking a pair that sums to the target.
    \end{enumerate}
\end{algo}
END IGNORE}

In light of the broad body of work that has been done on the average-case version of the original Subset Sum problem, it is natural to consider the \emph{average-case} GSS problem in which the input vector $\vec{\bx}$ is uniformly random over $[0:M-1]^n$; this average-case GSS problem is the subject of the current paper. There are two natural average-case variants of the GSS problem:  in the first variant, the target value is obtained by sampling a ``hidden'' solution, and hence every instance of this average-case problem variant admits a solution.  This variant of the problem is motivated by cryptographic applications, and corresponds to the average-case variant of Subset Sum that was studied by \cite{howgrave2010new,becker2011improved} and others; we refer to it as the ``cryptographic version'' of the average-case GSS problem. In the second version, which is the one we consider in this work, the target is a fixed value which does not depend on the draw of random input $\vec{\bx}$. \ignore{.\footnote{We consider a fixed value $|\tau| = o(Mn)$. In this range, the parameter settings at which the problem is likely to have a solution are similar to the 0 target. }}This version, which we refer to as the ``balancing version'' (see the formal definition below), has more of the flavor of Equal Subset Sum. 
Since both yes-instances and no-instances are possible for this variant, a natural goal that arises in its study is to understand the probability (as a function of the various parameter settings) that a solution exists.  Structural questions of this type have in fact been the subject of considerable study for the special case of $C = \{\pm 1\}$; see for example \cite{borgs2001phase,lueker1998exponentially}.




\subsection{Our Results}

We consider two families of symmetric coefficient sets, with and without zero. For a fixed positive integer $d \in \mathbb{N}$, we define
$C(d)  = \{ {\pm 1}, \pm 2, \dots, \pm d \}$ and 
$C_0(d)  = \{0, \pm 1, \pm 2, \dots, \pm d \}$.
These two sets cover a wide range of cases, including Equal Subset Sum. We note that while the natural coefficient set corresponding to Subset Sum is $C = \{0,1\}$, an instance $\vec{x}$ with target $\tau$ can be easily converted to GSS on $C(1) = \{\pm 1\}$ by setting a new target $\tau' := 2\tau - \sum_{i \in [n]} x_i$.\footnote{In fact, our results extend to all scale multiples and translations of $C(d)$ and $C_0(d)$. For details, refer to \Cref{subsec:other-coefficients}. }

We consider algorithms for 
  the balancing variant
  of average-case GSS over either $C=C(d)$ or $C_0(d)$.
Given as input $M,\tau$ and $\vec{x}\in [0:M-1]^n$,
  such an algorithm either returns ``no solution'' or a solution
  $\vec{c}$
  that satisfies $\vec{c}\cdot \vec{x}=\tau$ (with $\vec{c}\ne \vec{0}$ when $C=C_0(d)$
  and~$\tau=0$).
We say an algorithm fails on $(M,\tau,\vec{x})$ if it returns ``no solution'' but indeed there is a solution; otherwise we say it succeeds.


Our main algorithmic result for (the balancing variant of) average-case GSS is as follows.

\begin{theorem}[Algorithm for Average-Case GSS]
    \label{thm:main}
    Fix any $d \in \N_{\geq 1}$ and let $C = C(d)$ or $C_0(d)$. 
    For any constant $\zeta >0$,
    there is a randomized algorithm for average-case GSS with running time
\[
    O^*(|C|^{\Lambda(|C|) n + \zeta n})
    \footnote{Note that the runtime of our algorithm is independent of the input bound $M$. This occurs because the probability of a `yes' instance is exponentially small when $M = 2^{\omega(n)}$ (refer to \Cref{thm:optprecisionC,thm:optprecisionC0}).}         
    \quad \ \
        \text{where } \;\; \Lambda(z) = \max
        \begin{cases}
            1 - \frac{z+1}{2z}\log_{z}(z+1) + \frac{1}{z}\log_{z}(2) \\[0.8ex]
            \frac{2}{3} - \frac{z + 1}{3z} \log_{z}\big(\frac{z + 1}{2}\big) 
        \end{cases}.
\]
Given any $M$ and $\tau$ with $|\tau|=o(nM)$,
  the algorithm succeeds on $(M,\tau,\vec{\bx})$ with
  probability at least $1-e^{-\Omega(n)}$ when $C=C_0(d)$ and 
  with probability at least $1-o_n(1)$ when $C=C(d)$, over the draw of $\vec{\bx}\sim [0:M-1]^n$ and 
  the randomness of the algorithm.
\end{theorem}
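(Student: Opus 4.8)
The plan is to run a version of the representation technique of Howgrave-Graham and Joux \cite{howgrave2010new}, adapted to the balancing (fixed-target) average-case setting and with the new structural estimates put in place of the heuristic assumptions usually invoked for this framework. Fix the constant $\zeta>0$ from the statement. The first step is to replace ``find some solution of $\vec{c}\cdot\vec{x}=\tau$'' by ``find a solution of a prescribed shape'': from \Cref{thm:optprecisionC,thm:optprecisionC0} one extracts a profile $\pi$ — a probability distribution on $C$, essentially the uniform one since $|\tau|=o(nM)$ — with the property that, conditioned on the instance having any solution at all, with high probability over $\vec{\bx}$ it has a solution whose empirical coordinate-distribution lies within distance $\zeta/100$ of $\pi$. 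The algorithm only ever searches over vectors with profile this close to $\pi$; there are only $\poly(n)$ such integer profiles, so enumerating over them is free up to $\poly(n)$ factors. (When $M$ is so large that yes-instances occur with only exponentially small probability — cf.\ the footnote to the statement — nothing needs to be done, and the content of the theorem is in the remaining range of $M$.)

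Second, build the representation tree. A hypothetical solution $\vec{c}^\ast$ of profile $\approx\pi$ is written $\vec{c}^\ast=\vec{a}+\vec{b}$ with $\vec{a},\vec{b}$ drawn from an enlarged coefficient set $C'$ of size $|C|+1$ (the extra symbol is what makes the factor $|C|+1$ appear throughout $\Lambda$) and themselves restricted to a chosen sub-profile; the number of such decompositions of a fixed $\vec{c}^\ast$ is $\rho^{n}$ for an explicit $\rho=\rho(|C|)>1$. One repeats the split for a constant number of further levels and, at each internal node at depth $\ell$, imposes a uniformly random residue constraint modulo a prime $p_\ell\approx\rho^{n}$ on the partial dot product with $\vec{\bx}$. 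This shrinks each candidate list from its a-priori size to roughly $(\text{sub-profile count})/\rho^{n}$, while in expectation leaving a constant number of decompositions of $\vec{c}^\ast$ that pass every filter, and every internal node merges its two children's lists by sorting and a two-pointer scan. Optimizing the sub-profiles, the set $C'$, and the tree shape over the resulting exponent constraints gives running time $O^*(|C|^{\Lambda(|C|)n+\zeta n})$, the maximum in $\Lambda$ reflecting that the total cost is dominated by the larger of two stages of the algorithm: a representation-search stage whose exponent tends to $\tfrac12$ as $|C|\to\infty$ (a mild improvement over the generalized Horowitz--Sahni bound) and a later stage, involving the deeper merges and the witness reconstruction, whose exponent tends to $\tfrac13$. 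The $\zeta n$ slack absorbs the rounding of profiles to integers, the choice of primes, and lower-order error terms.

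The crux — and the place where essentially all of the work lies — is discharging the ``preprocessing'' obstacles of the representation framework unconditionally: one must prove the usual heuristic that every filtered list has essentially its expected size and that solutions distribute near-uniformly across residue classes. This is an anticoncentration/second-moment statement about $\vec{\bx}$: for a typical draw, and for every node of the tree, the number of partial vectors of the relevant sub-profile whose dot product with $\vec{\bx}$ lands in a prescribed class modulo $p_\ell$ is concentrated about its mean, so the claimed list sizes — hence the running time — hold with high probability and no ``collapse'' destroys all representations of a hidden solution. The new structural estimates for $\Pr[\exists\,\vec{c}\in C^n:\vec{c}\cdot\vec{\bx}=\tau]$ enter both to certify that a hidden solution of profile $\approx\pi$ exists and, bootstrapped to the sub-instances, to control those list sizes. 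This is also the source of the dichotomy in the success guarantee: for $C=C_0(d)$ the elementary lower bound on the solution-existence probability is exponentially strong, so conditioning and union bounds leave a failure probability of $e^{-\Omega(n)}$, which independent repetitions with fresh primes only push further down; for $C=C(d)$ the available sharp-threshold analysis (extending Borgs--Chayes--Pittel \cite{borgs2001phase}) pins the relevant probability down only to within $o_n(1)$, which is what propagates to the $o_n(1)$ failure bound.

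Finally, witness recovery. To stay within budget, the merges above are organized so as to decide merely \emph{whether} a surviving combination exists at the root, without ever materializing the (too large) full product lists; producing an explicit $\vec{c}$ then means walking back down the tree and, at each node, converting the assertion ``a valid pair of children exists'' into an actual such pair. I would do this with a dynamic program over each node's coordinate block that, scanning coordinates one at a time and maintaining the partial sums consistent with the node's target value and residue class, reconstructs one admissible child pair in time $\poly(n)$ times the node's list size; iterating over the $O(1)$ levels outputs a full solution within the stated running time. The main obstacle, to reiterate, is the rigorous list-size and solution-survival analysis of the preceding paragraph — replacing the representation heuristic by an unconditional argument powered by the new structural estimates — with the $C_0(d)$ versus $C(d)$ gap in the guarantee an artifact of how strong those estimates can be made in the two regimes.
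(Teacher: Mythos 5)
Your proposal follows the paper's general philosophy (representation technique with the structural results standing in for the usual heuristic), but several load-bearing steps are either missing or rest on claims that are not true as stated.

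First, the profile-selection step does not work the way you describe. \Cref{thm:optprecisionC} and \Cref{thm:optprecisionC0} say nothing about the \emph{shape} of the solution that exists — indeed the proof of \Cref{thm:optprecisionC0} explicitly constructs a solution in which all but $m'=(1-\eps/3)n$ coordinates get the $0$ coefficient, so "essentially uniform profile" is false for the witnesses those theorems produce. The paper instead enumerates all $\poly(n)$ profiles, proves a separate union-bound lemma showing that profiles with $\pi_0\ge(\frac{1}{|C|}+\delta)n$ admit no solution w.h.p., and — crucially — \emph{translates} the coefficient set to $D=C-z$ for a well-chosen nonzero $z$ so that $0\in D$ and the translated profile has size at most $(1-\frac{1}{|C|}+O(\delta))n$. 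Without this translation the splitting of a solution into two "half" partial vectors has no representations at all when $C=C(d)$ (there is no zero coefficient to assign to the unused half), and your "enlarged coefficient set $C'$ of size $|C|+1$" does not repair this: the $z+1$ in $\Lambda(z)$ is not the size of any coefficient set but falls out of the entropy of the half-partition count evaluated at $|\sigma|/n=1-1/|C|$. Relatedly, the two branches of the max in $\Lambda$ are not two sequential stages of one algorithm with exponents tending to $\frac12$ and $\frac13$; they correspond to two mutually exclusive choices of the modulus size $P$ ($P=a/n$ versus $P=b^{2/3}a^{-1/3}$, i.e.\ exhaustive bucket search versus birthday-paradox subsampling), with the second branch active exactly when $|C|\le 3$.

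Second, two further gaps. (i) You dispose of the large-$M$ regime but not the dense regime $M\le|C|^{(1-\eps)n}$, where the representation counting (expected list sizes, number of pseudosolutions) breaks; the paper handles this by shrinking the instance to $n'$ coordinates so that $M\in|C|^{(1\pm\eps)n'}$, assigning $0$ (for $C_0(d)$) or greedily $\pm1$ (for $C(d)$) to the discarded coordinates — this shrinking step is also exactly where the $o_n(1)$ failure probability for $C(d)$ enters. (ii) The "crux" you correctly identify — that a solution's half-partitions have $\Omega(a)$ distinct signatures and that residue classes are balanced, unconditionally over the draw of $\vec{\bx}$ — is asserted as "an anticoncentration/second-moment statement" but never argued. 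The paper's \Cref{lem:signature-distribution} proves it by a direct counting argument (each pair of distinct half-partitions collides for at most $M^{n-1}$ inputs, and the condition is independent of $\vec{x}_{S_0}$, which supplies an extra factor $1/M$ via the nonzero translation coefficient $z$); note that this argument itself \emph{requires} $z\ne 0$, i.e.\ the translation you omitted. As written, the proposal is a plausible plan but not a proof.
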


We note that $\Lambda(z) = 0.5 - \Omega(1 / z)$, and thus our algorithm beats the Meet-in-the-Middle runtime of $O^\ast(|C|^{0.5n})$ by an exponential margin for every constant $|C|$. \Cref{fig:result} plots the function $\Lambda$ and \Cref{table:results} lists our algorithm's runtime on various coefficient sets.

As a special case of \Cref{thm:main} we obtain an average-case algorithm for Equal Subset Sum that significantly improves on the worst-case  $\smash{O^*(3^{0.488n}})$ runtime of \cite{mucha2019equal}:

\begin{corollary}[Algorithm for Average-Case Equal Subset Sum]
    There exists an algorithm that solves Average-Case Equal Subset Sum in time $O^*(3^{0.387n})$ with success probability $1 - e^{-\Omega(n)}$.
\end{corollary}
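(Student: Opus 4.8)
The plan is to derive this directly from \Cref{thm:main} by recognizing Average-Case Equal Subset Sum as a single instantiation of the balancing variant of average-case GSS. Concretely, I would take $d = 1$, so that $C = C_0(1) = \{0, \pm 1\}$ and $|C| = 3$, with target $\tau = 0$ and the trivial solution $\vec{c} = \vec{0}$ disallowed — exactly the setting singled out in the problem description. The reduction is a coordinate-wise identity and needs no ``remove common elements'' step: for any input $\vec{x}$ and any two subsets $S, T \subseteq [n]$, the vector $\vec{c}$ with $c_i = 1$ for $i \in S\setminus T$, $c_i = -1$ for $i \in T\setminus S$, and $c_i = 0$ otherwise lies in $\{0,\pm1\}^n$, is nonzero iff $S \neq T$, and satisfies $\vec{c}\cdot\vec{x} = \sum_{i\in S} x_i - \sum_{i\in T} x_i$; conversely a nonzero $\vec{c} \in \{0,\pm1\}^n$ with $\vec{c}\cdot\vec{x} = 0$ gives the distinct equal-sum subsets $S = \{i : c_i = 1\}$ and $T = \{i : c_i = -1\}$. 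This correspondence fixes the uniform distribution on $[0:M-1]^n$, so any algorithm for average-case GSS on $C_0(1)$ is an algorithm for Average-Case Equal Subset Sum with the same success probability.

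Next I would record the arithmetic. Evaluating the two branches of $\Lambda$ from \Cref{thm:main} at $z = |C| = 3$ gives
\[
1 - \tfrac{2}{3}\log_3 4 + \tfrac{1}{3}\log_3 2 \;\approx\; 0.3691
\qquad\text{and}\qquad
\tfrac{2}{3} - \tfrac{4}{9}\log_3 2 \;\approx\; 0.38626 ,
\]
so the maximum is attained by the second branch and $\Lambda(3) = \tfrac{2}{3} - \tfrac{4}{9}\log_3 2 < 0.387$. Since $0.387 - \Lambda(3) > 0$ is a fixed positive constant, I can choose a constant $\zeta > 0$ with $\Lambda(3) + \zeta < 0.387$ (for instance $\zeta = 10^{-4}$). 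Applying \Cref{thm:main} with $C = C_0(1)$ and this $\zeta$ then yields a randomized algorithm running in time $O^*(3^{(\Lambda(3)+\zeta)n})$, which is $O^*(3^{0.387n})$; note the runtime is independent of the input bound $M$, exactly as in \Cref{thm:main}.

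Finally I would verify the probability guarantee. The hypothesis $|\tau| = o(nM)$ of \Cref{thm:main} holds trivially since $\tau = 0$, and because $C = C_0(1)$ is a coefficient set containing $0$, \Cref{thm:main} gives success probability $1 - e^{-\Omega(n)}$ over the draw of $\vec{\bx} \sim [0:M-1]^n$ and the algorithm's internal randomness, which is precisely the claimed bound. There is no obstacle particular to this corollary: all the difficulty has been pushed into \Cref{thm:main} and the structural probability bounds it relies on, so the only things to get right are the (immediate) reduction above and the numerical check that $\Lambda(3) < 0.387$.
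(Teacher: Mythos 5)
Your proposal is correct and follows exactly the route the paper intends: the corollary is just \Cref{thm:main} instantiated at $C = C_0(1)$, $\tau = 0$ (with $\vec{c} \neq \vec{0}$ already handled by the theorem statement), together with the numerical check that $\Lambda(3) = \tfrac{2}{3} - \tfrac{4}{9}\log_3 2 \approx 0.3863 < 0.387$ so a small constant $\zeta$ absorbs the slack. The subset-pair/coefficient-vector correspondence and the evaluation of both branches of $\Lambda$ are both right, and the exponential success probability follows because the coefficient set contains $0$.
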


\begin{figure}[t!]
    \centering
    \begin{tikzpicture}[thick, smooth, scale = 1.75]
    \draw[->] (0, 0) -- (8.25, 0);
    \draw[->] (0, 0) -- (0, 3.25);
    \node[above] at (0, 3.25) {\small $\Lambda(|C|)$};
    \node[right] at (8.25, 0) {\small $|C|$};
    
    \draw[thick] (1, 0) -- (1, 0.05);
    \draw[thick] (2, 0) -- (2, 0.05);
    \draw[thick] (3, 0) -- (3, 0.05);
    \draw[thick] (4, 0) -- (4, 0.05);
    \draw[thick] (5, 0) -- (5, 0.05);
    \draw[thick] (6, 0) -- (6, 0.05);
    \draw[thick] (7, 0) -- (7, 0.05);
    \draw[thick] (8, 0) -- (8, 0.05);
    
    \node[left] at (0, 0) {\small $0.2$};
    \node[left] at (0, 1) {\small $0.3$};
    \node[left] at (0, 2) {\small $0.4$};
    \node[left] at (0, 3) {\small $0.5$};
    \draw[thick] (0, 1) -- (0.1, 1);
    \draw[thick] (0, 2) -- (0.1, 2);
    \draw[thick] (0, 3) -- (0.1, 3);
    
    \node[below] at (0, 0) {\small $2$};
    \node[below] at (1, 0) {\small $3$};
    \node[below] at (2, 0) {\small $4$};
    \node[below] at (3, 0) {\small $5$};
    \node[below] at (4, 0) {\small $6$};
    \node[below] at (5, 0) {\small $7$};
    \node[below] at (6, 0) {\small $8$};
    \node[below] at (7, 0) {\small $9$};
    \node[below] at (8, 0) {\small $10$};
    
    \draw[domain = 0:1.6, smooth, variable=\x, blue] plot({\x}, {
        ((2/3 - ((\x+3)/(3*(\x+2))) * ln((\x+3)/2)/ln(\x+2)) * 10) - 2
    });
    \draw[domain = 1.6:8, smooth, variable=\x, blue] plot({\x}, {
        ((1 - ((\x+3)/(2*(\x+2))) * ln(\x+3)/ln(\x+2)  + (ln(2)/ln(\x+2))/(\x+2)) * 10) - 2
    });
    
    \fill[red] (0, 0.374 * 10 - 2) circle [radius = 1pt];
    \fill[red] (1, 0.386 * 10 - 2) circle [radius = 1pt];
    \fill[red] (2, 0.3995 * 10 - 2) circle [radius = 1pt];
    \fill[red] (3, 0.4182 * 10 - 2) circle [radius = 1pt];
    \fill[red] (4, 0.431 * 10 - 2) circle [radius = 1pt];
    \fill[red] (5, 0.4402 * 10 - 2) circle [radius = 1pt];
    \fill[red] (6, 0.4473 * 10 - 2) circle [radius = 1pt];
    \fill[red] (7, 0.4529 * 10 - 2) circle [radius = 1pt];
    \fill[red] (8, 0.4573 * 10 - 2) circle [radius = 1pt];
    
    \end{tikzpicture}
    \caption{Plot of $\Lambda$. The red points plot $\Lambda(z)$ for $z \in [2:10]$.
    }
    \label{fig:result}
\end{figure}
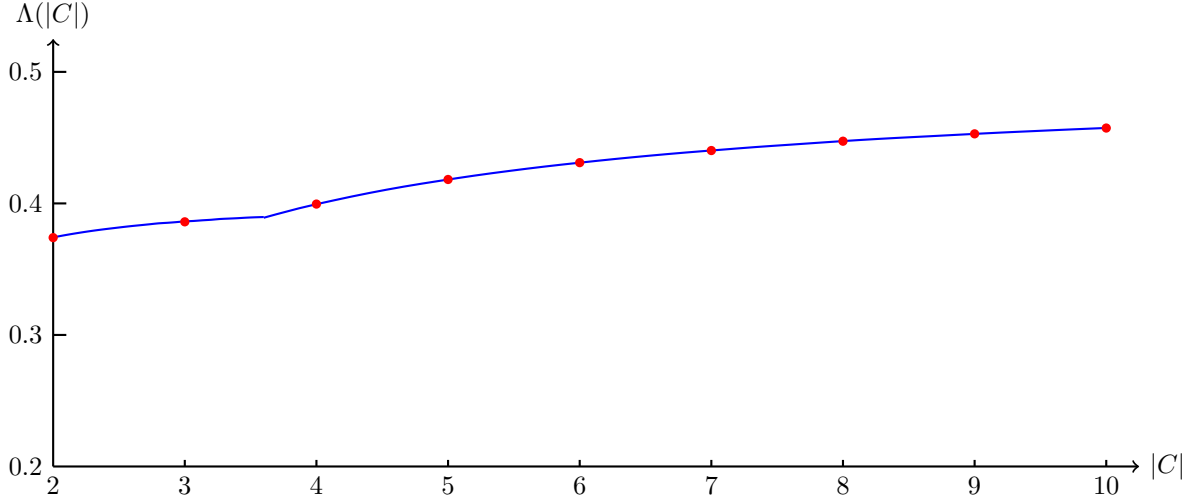

\begin{table}[t!]
\label{table:results}
\centering
\begin{tabular}{|c|c|c|c|c|c|c|}
    \hline
    \multicolumn{2}{|c|}{\rule{0pt}{12pt}$d \in \N_{\ge 1}$} & $1$ & $2$ & $3$ & $5$ & $10$ \\ [1pt]
    \hline
    \hline
    \rule{0pt}{12pt}\multirow{2}{*}{Runtime} & $C(d)$ & $O^*(|C|^{0.375n})^\dagger$ & $O^*(|C|^{0.400n})$ & $O^*(|C|^{0.432n})$ & $O^*(|C|^{0.458n})$ & $O^*(|C|^{0.479n})$ \\ [1pt]
    \cline{2-7}
    \rule{0pt}{12pt}& $C_{0}(d)$ & $O^*(|C|^{0.387n})^\ddagger$ & $O^*(|C|^{0.419n})$ & $O^*(|C|^{0.441n})$ & $O^*(|C|^{0.462n})$ & $O^*(|C|^{0.480n})$ \\ [1pt]
    \hline
    \multicolumn{7}{p{0.975\textwidth}}{\footnotesize $^\dagger$Our $C(1)$ case differs from the ``cryptographic'' average-case Subset Sum problem considered in \cite{howgrave2010new,becker2011improved,bohme2011,bonnetain2020improved}, because we consider the ``balancing'' problem with a fixed offset (for which a solution may or may not exist); see the Introduction for further discussion.} \\
    \multicolumn{7}{p{0.975\textwidth}}{\footnotesize $^\ddagger$This runtime for $C_{0}(1)$ should be contrasted with the $O^*(|C|^{0.488n})$-time worst-case runtime due to \cite{mucha2019equal}.}
\end{tabular}
\caption{Runtime of our algorithm for average-case GSS on various coefficient sets.}
\end{table}

Our algorithm has the additional property that it runs faster on \emph{dense} instances, i.e.,\ ones for which $M$ is substantially less than $|C|^n$. Intuitively, this is possible because in this regime there are likely to be many solutions.

\begin{theorem}[Average-Case GSS on Dense Instances]
    \label{thm:dense-instances}
    Fix $d \in \NN$, $C = C(d)$ or $C = C_0(d)$, $M = |C|^{\alpha n + o(n)}$ for some $\alpha \in (0, 1)$ and an offset $\tau$ with $|\tau| = o(Mn)$. For any constant $\zeta > 0$, there exists an algorithm that solves average-case GSS in time
    \begin{align*}
         O^*(|C|^{\alpha \Lambda(|C|) n + \zeta n}),
    \end{align*}
    where $\Lambda$ is defined as in \Cref{thm:main}.  The algorithm succeeds with probability at least $1 - e^{-\Omega(n)}$ for $C= C_0(d)$ and with probability at least $1 - o_n(1)$ for $C=C(d)$.
\end{theorem}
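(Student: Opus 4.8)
The plan is to reduce a dense instance on $n$ variables to an average-case GSS instance on only about $\alpha n$ variables and then invoke the algorithm of \Cref{thm:main} as a black box; the point is that a sub-instance of that size is still below the solvability threshold and hence (whp) has a solution. The one real subtlety is the coefficient set $C(d)$, which contains no zero: unlike for $C_0(d)$ we cannot simply delete the remaining $\approx (1-\alpha)n$ coordinates, so we must instead assign them fixed $\pm 1$ coefficients in a balanced way and control the offset this introduces. Here are the details. Fix a constant $\epsilon = \epsilon(\zeta)$ with $0 < \epsilon < \min\{\zeta/2,\, 1-\alpha\}$ and set $m := \lfloor (\alpha+\epsilon) n \rfloor$, decreasing $m$ by $1$ if necessary so that $n-m$ is even; then $m = \Theta(n)$, $m < n$, and $M = |C|^{\alpha n + o(n)} = |C|^{\beta m + o(m)}$ with $\beta := \alpha/(\alpha+\epsilon) \in (0,1)$ a fixed constant. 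Partition $[n]$ into $A = [1:m]$ and $B = [m{+}1:n]$. Commit in advance to a fixed $\vec c^{\,B} \in C^{\,n-m}$: take $\vec c^{\,B} = \vec 0$ if $C = C_0(d)$, and if $C = C(d)$ fix any perfect matching of $B$ and put $c_i = +1,\, c_{i'} = -1$ on each matched pair. Let $P := \vec c^{\,B} \cdot \vec x^{\,B}$, a quantity depending only on $\vec x^{\,B}$. The algorithm runs the \Cref{thm:main} algorithm with accuracy parameter $\zeta'' := \zeta/2$ on the sub-instance $(M, \tau - P, \vec x^{\,A})$; on output $\vec c^{\,A}$ it returns $(\vec c^{\,A}, \vec c^{\,B})$, and otherwise ``no solution''.

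\textbf{Correctness and runtime.} Any returned vector is valid: $(\vec c^{\,A}, \vec c^{\,B}) \cdot \vec x = \vec c^{\,A} \cdot \vec x^{\,A} + P = (\tau - P) + P = \tau$, it lies in $C^n$ since $\{0,\pm 1\} \subseteq C$, and when $C = C_0(d)$ and $\tau = 0$ it is nonzero because $\vec c^{\,A} \ne \vec 0$ by the guarantee of \Cref{thm:main} (for $C = C(d)$ all coordinates are automatically nonzero). The reduction costs $\poly(n)$ time and the \Cref{thm:main} call costs $O^*\big(|C|^{(\Lambda(|C|)+\zeta'') m}\big) = O^*\big(|C|^{(\Lambda(|C|)+\zeta'')(\alpha+\epsilon) n}\big)$; since $\Lambda(|C|) \le 1$ and $\alpha + \epsilon \le 1$, the exponent is at most $\alpha \Lambda(|C|) n + (\Lambda(|C|)\epsilon + \zeta''(\alpha+\epsilon))n \le \alpha \Lambda(|C|) n + (\epsilon + \zeta'')n \le \alpha \Lambda(|C|) n + \zeta n$, as required.

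\textbf{Failure probability.} We bound the probability that $(M,\tau,\vec{\bx})$ has a solution but the algorithm answers ``no solution''; this is at most the probability that the \Cref{thm:main} subroutine answers ``no solution'' on $(M, \tau - P, \vec x^{\,A})$. Let $\mathcal E$ be the event $|\tau - P| = o(mM)$. If $C = C_0(d)$ then $P = 0$ and $\mathcal E$ holds with probability $1$ (as $|\tau| = o(nM) = o(mM)$); if $C = C(d)$ then $\E[P^2] = \Var(P) \le (n-m) M^2$, so Markov's inequality gives $|P| \le \sqrt{n-m}\, M \log n = o(mM)$, hence $\mathcal E$, except with probability $o_n(1)$. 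Condition on $\mathcal E$ and on $\vec x^{\,B}$: then $\tau - P$ is a fixed target of magnitude $o(mM)$ and $\vec x^{\,A} \sim [0:M-1]^m$ is independent of it. On $\mathcal E$, the subroutine answers ``no solution'' only if either (i) $(M,\tau-P,\vec x^{\,A})$ has no solution at all (a nonzero one, when $\tau - P = 0$ and $C = C_0(d)$), or (ii) it has a solution and \Cref{thm:main} fails. Event (ii) has probability $e^{-\Omega(m)}$ for $C_0(d)$ and $o_m(1)$ for $C(d)$ by \Cref{thm:main}. For event (i): since $M = |C|^{\beta m + o(m)}$ with $\beta < 1$, the number $|C|^m$ of candidate coefficient vectors exceeds the width $O(dmM)$ of the range of $\vec c^{\,A} \cdot \vec x^{\,A}$ by an exponential factor, and the structural results underlying \Cref{thm:main} (\Cref{thm:optprecisionC,thm:optprecisionC0}) give that such a random sub-instance is a yes-instance except with probability $e^{-\Omega(m)}$ for $C_0(d)$ and $o_m(1)$ for $C(d)$. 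A union bound over these $O(1)$ bad events, together with $m = \Theta(n)$, gives total failure probability $e^{-\Omega(n)}$ for $C = C_0(d)$ and $o_n(1)$ for $C = C(d)$, which is the claim. The main obstacle is exactly the handling of $C(d)$: making sure the forced $\pm 1$ assignment on $B$ leaves a residual target $\tau - P$ small enough ($o(mM)$, whp) that the shrunken instance is still (whp) solvable and \Cref{thm:main} applies; everything else is a routine black-box reduction.
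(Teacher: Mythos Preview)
Your approach is the same as the paper's: restrict to a sub-instance on roughly $\alpha n$ variables, pin the remaining coordinates (to $0$ for $C_0(d)$, to $\pm 1$ for $C(d)$), and run the main algorithm on the shrunken instance. The paper invokes \Cref{finalfinalmain} directly and, for $C(d)$, uses an \emph{adaptive} sign rule (assign $-1$ if the running target is positive, $+1$ otherwise) so that the residual target is at most $M$; your fixed $\pm 1$ matching together with a second-moment bound on $P$ is an equally valid alternative.

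There is one genuine gap, specific to $C=C(1)$. You upper-bound the failure probability by $\Pr[\text{subroutine returns ``no''}]$ and then claim $\Pr[\text{sub-instance has no solution}]=o_m(1)$ via the structural theorems. But \Cref{thm:optprecisionC} only covers $d>1$; for $d=1$ the sub-instance $\vec c^{\,A}\cdot\vec x^{\,A}=\tau-P$ has \emph{no} solution whenever $\sum_{i\in A}x_i\not\equiv \tau-P\pmod 2$, which happens with probability about $1/2$, and \Cref{corr:optprecisionC1} only guarantees a hit in $\{\tau',\tau'+1\}$. So the bound you wrote is simply false for $C(1)$. The fix is not to discard the event ``the original instance has a solution'' when you pass to the upper bound: that event forces $\sum_{i\in[n]} x_i\equiv\tau\pmod 2$, and since $P\equiv\sum_{i\in B}x_i\pmod 2$ your pairing reduction preserves the parity of $\sum_i x_i-\tau$, so the sub-instance inherits the correct parity and \Cref{corr:optprecisionC1} then does give the $o_m(1)$ bound. (A cosmetic point: the aside ``$\{0,\pm1\}\subseteq C$'' is false for $C(d)$, but your actual $\vec c^{\,B}$ only uses $\pm 1$ there, so nothing breaks.)
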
 

Crucial ingredients underlying our algorithms are new structural results on the probability that random GSS instances have solutions.  In the context of previous work on closely related questions  \cite{borgs2001phase,lueker1998exponentially}, we believe that our new structural results may be of independent interest; we explain our new structural results and contrast them with prior work below.

\subsection{Our Techniques}

To explain our structural and algorithmic results, we begin with some intuition for the distribution of sums that are achievable using coefficient set $C$. Consider a uniform random draw of the input vector $\vec{\bx}$ from $[0:M-1]^n$. Since the coefficient set $C$ is symmetric about 0, \ignore{by standard concentration results almost} all elements of the random set $\bm{S}_n := \{\vec{c} \cdot \vec{\bx} : \vec{c} \in C^n\}$ have magnitude $O(Mn)$, and hence intuitively $\bm{S}_n$ is ``tightly concentrated around the origin''. \ignore{viewing $\bm{S}_n$ as a $|C|^n$-element multiset, }
Consequently, when the offset $\tau$ is not too large (the case of interest for us) it is natural to expect that a solution $\vec{c} \cdot \vec{\bx} = \tau$ is likely to exist if $M \ll |C|^n$ and is unlikely to exist if $M \gg |C|^n$. Indeed, a simple counting argument establishes this for the case that $M \gg |C|^n$, but substantially more work is required to rigorously confirm the above intuition for the $M \ll |C|^n$ case.  This is accomplished by our structural results, which we describe next.

\ignore{
}

\subsubsection{ Structural Results }

Our structural results for coefficient set $C_0(d)$ (including zero) and for $C(d)$ (not including zero) are established using very different techniques. Consider the case of $C=C_0(d)$ first. We are able to show that for any positive constant $\eps > 0$, given $M \leq |C|^{(1-\eps)n}$ and any fixed integer offset $\tau$ with $|\tau|=o(Mn)$, the probability (over a uniform random $\vec{\bx}\sim[0:M-1]^n$) that there exists a solution $\vec{c} \cdot \vec{\bx} = \tau$ with $\vec{c} \in C^n$ is \emph{exponentially} close to 1.  See \Cref{thm:optprecisionC0} for a detailed statement of this result. This extremely high probability that there exists a solution translates directly into the extremely high success probability of our algorithms in \Cref{thm:main} and \Cref{thm:dense-instances}.

To establish this structural result, we employ a novel proof strategy that aligns with the intuition sketched earlier and is based on an iterative analysis. We define random sets $\bm{S}_1, \bm{S}_2, \dots$ where the set $\bm{S}_\ell$ for $\ell \in [n]$ is given by $\bm{S}_{\ell} := \{\vec{c}' \cdot \vec{\bx}': \vec{c}' \in C^\ell\}$ with a uniform random $\vec{\bx}' \sim [0:M-1]^\ell$. We then analyze how these set sizes $|\bm{S}_{\ell}|$ increase as a function of $\ell$ by thinking of $\bx_1,\bx_2,\dots$ as being drawn in succession.  We first argue that with very high probability $|\bm{S}_{\ell}|$ increases rapidly with $\ell$ until, at some value $\bm{L}_1 < n$, it reaches a point at which it is dense on at least one ``large'' interval that is ``close to'' the origin. We then argue that at this point it suffices to draw a few more elements to ensure that some {\em partial solution} $\sum_{i \in [\bm{L}_2]} c_{i} \bx_{i}$ will hit the offset $\tau$ with very high probability, where $\bm{L}_2 \geq \bm{L}_1$ and $\bm{L}_2$ is still less than $n$. The remaining elements $\bx_{\bm{L}_2+1},\dots,\bx_n$ are simply assigned the $0$ coefficient to complete the overall solution. The detailed proof is given in \Cref{subsec:proof_C0}.

Turning to the case of coefficient set $C=C(d)$, it is clear that the absence of the 0 coefficient is a fundamental obstacle to the previous approach, and indeed we do not know how to achieve an exponentially high success probability for $C=C(d).$  Instead, to handle this case we use a very different proof strategy that extends the approach of  \cite{borgs2001phase}, who analyzed the $C(d)$ case for $d=1$.  Their analysis (see \cite[Theorem~2.1]{borgs2001phase}) shows that for any $M\leq 2^{(1-\eps)n}$, the probability that a uniform random input $\vec{\bx} \sim [0:M-1]^n$ admits a ``perfect'' solution $\vec{c} \in \{\pm 1\}^n$ is $1-o_n(1)$, where a ``perfect'' solution is one satisfying $\vec{c} \cdot \vec{\bx}=1$ if $\sum_{i \in [n]} x_i$ is odd and satisfying $\vec{c} \cdot \vec{\bx} = 0$ if $\sum_{i \in [n]} x_i$ is even.\footnote{\cite{borgs2001phase} further established a range of more refined structural results, but this is the most relevant result for our purposes.}

We extend the \cite{borgs2001phase} analysis and establish a similar result for general coefficient sets $C=C(d)$ for arbitrary $d>1$. We show that for any constant $d>1$, if $M \leq |C|^{(1-\eps)n}$ then given any fixed integer offset $\tau$ with $|\tau|=o(Mn)$, the probability (over a uniform draw of $\vec{\bx}$ from $[0:M-1]^n$) that there exists a solution $\vec{c} \cdot \vec{\bx} = \tau$ with $\vec{c} \in C^n$ is $1-o_n(1).$
At a high level, our analysis establishing this follows the arguments of \cite{borgs2001phase}; we write the number of solutions as a random integral over all coefficient vectors, then bound suitable integrals to characterize the first moment and upper bound the second moment of the relevant random variable.  The analysis requires considerable attention to detail (the domain of integration needs to be broken into three different regions, with different arguments required for each region); see \Cref{thm:optprecisionC} for a detailed statement and \Cref{apx:discard} for the proof.

\ignore{
}

\begin{remark}[Comparison with \cite{lueker1998exponentially}] It is interesting to compare our structural results \Cref{thm:optprecisionC} and \Cref{thm:optprecisionC0} with the main structural result of \cite{lueker1998exponentially}. That work analyzes the probability that given a real-valued vector $\vec{\bm{X}} = (\bm{X}_{1}, \dots, \bm{X}_{n})$ drawn uniformly at random from $[-1, 1]^n$, there exists a coefficient vector $\vec{\delta} \in \{0,1\}^n$ such that $\vec{\delta} \cdot \vec{\bm{X}}$ lies within a small range $\pm \eta$ of a specified target value $z \in [-\frac{1}{2}, \frac{1}{2}]$. By scaling the $\bm{X}_i$’s, \cite{lueker1998exponentially} implies that if $M \leq \kappa^n$ where $\kappa = 2^{1 / (2 + 2\ln 2)} \approx 1.227$, then for any target $\tau \in [-\frac{M}{2}, \frac{M}{2}]$, with exponentially high probability over a uniform draw of $\bx_1,\dots,\bx_n$ each from the continuous interval $[-M, M]$, there exists a solution $\vec{c} \in C(1)^n = \{\pm 1\}^n$ for which $|\vec{c} \cdot \vec{x} - \tau| \leq 1/2$.
While this result has a similar flavor to our \Cref{thm:optprecisionC}, our \Cref{thm:optprecisionC0}, and \cite[Theorem~2.1]{borgs2001phase}, it is incomparable to all of them (even apart from the fact that it deals with continuous rather than discrete random variables). 
The exponentially high probability bound that it gives is similar to the \Cref{thm:optprecisionC0} bound and is better than the $1-o_n(1)$ probability bound of \cite[Theorem~2.1]{borgs2001phase}, but the quantitative bound $M \leq \kappa^n$ that it requires is worse than the 
bound allowed by \cite[Theorem~2.1]{borgs2001phase}. We note that unlike the result implicit in \cite{lueker1998exponentially}, both \Cref{thm:optprecisionC} and \Cref{thm:optprecisionC0} apply to general coefficient sets $C$, and that both \Cref{thm:optprecisionC}, \Cref{thm:optprecisionC0} and \cite[Theorem~2.1]{borgs2001phase} allow the ``correct'' range of values for $M$, i.e., $M$ is allowed to be as large as $|C|^{(1-\eps)n}$. (Having this ``correct'' range of values for $M$ is essential for our ultimate purpose of developing average-case GSS algorithms. Otherwise, there would be a wide range of values for $M$ to which no strong structural result would apply.)
\end{remark}

\subsubsection{ Algorithmic Results }

Intuitively, our structural results tell us the parameter settings for which average-case GSS instances are likely to have solutions and what those solutions will look like. This allows us to prove the correctness of our algorithm, which combines new preprocessing ideas with an approach based on the representation technique.

\vspace{.1in}
\noindent
\textbf{The Representation Technique.} The representation technique works by constructing many partial \emph{candidate solutions} and filtering them in a way that reduces the search space while retaining at least one way of constructing a solution. For example, the key observation of \cite{howgrave2010new} is that any Subset Sum solution of size $n/2$ can be decomposed into $\Omega^\ast(2^{0.5n})$ ``solution pairs'', each consisting of two disjoint sets of size $n/4$. Since there are $\binom{n}{n / 4} = \Theta^*(2^{0.811n})$ input subsets of size $n/4$, this means that (under sufficiently strong assumptions on the input) hashing these subsets in a way that ensures solution pairs hash to the same value reduces the search space to $O^*(2^{0.811n - 0.5n}) = O^*(2^{0.311n})$ elements, from which a solution can be recovered using Meet-in-the-Middle.\footnote{Several challenges arise in making these ideas precise, including efficiently simulating the hash function and proving that randomness in the input results in candidate solutions that are favorably distributed with high probability. For a more detailed introduction to the representation technique for Subset Sum, see \cite[Section~2.2]{becker2011improved}.}

For $C$ larger than $\{0, 1\}$, the representation approach considers partial candidate solutions that assign nonzero coefficients to some input elements. For example, \cite{mucha2019equal} consider $C = \{0, \pm 1\}$ and partial candidate solutions that pick out approximately $n / 6$ coefficients matched with each of $1$ and $-1$. However, this approach works only for yes-instances with a significant number of $0$ coefficients. We show how to circumvent the problem for general $C$ by ``translating'' GSS instances so that any $c \in C$ can play the role of the $0$ coefficient. 

At a high level, since instances in which $M > |C|^{(1 + \eps)n}$ are extremely unlikely to have a solution, we can simply ignore such instances. On the other extreme, we show how to ``shrink'' instances for which $M < |C|^{(1-\eps)n}$ for any constant $\eps$ while preserving the existence of a solution. (The ``shrinking'' also gives an improved runtime for our algorithm). Thus, we can focus on a narrow range of values for $M$.  For such instances, we simulate a hash based on the \emph{signature}, or partial sum, of a partial candidate solution in such a way that the two halves of a genuine solution are guaranteed to hash to the same value. We prove a result (\Cref{lem:signature-distribution}) which guarantees that with high probability many partial solutions have distinct signatures and thus removes the need for any heuristic assumptions. Finally, we use a dynamic programming approach to simulate the signature-based hash, sample elements, and recover a solution. \Cref{sec:algorithm} presents our algorithm with preprocessing, correctness and runtime proofs.


\subsection{Organization}
\Cref{sec:notation} establishes useful notation and preliminaries.  
Our structural results are in \Cref{sec:structural} and our algorithmic results are in \Cref{sec:algorithm}.
Finally, \Cref{sec:applications} gives average-case reductions from GSS to (Generalized) Number Balancing as well as versions of our structural results in this setting.



\section{Notation and Preliminaries}
\label{sec:notation}

Given an $n$-dimensional vector $\vec{x}$ and $T\subseteq [n]$, we write $\vec{x}_T$ to denote the $|T|$-dimensional vector obtained as the projection of $\vec{x}$  onto indices $i \in T$.

When written without a specified base, $\log(\cdot)$ denotes the base-2 logarithm.

\vspace{.1in}
\noindent
\textbf{Parameter assumptions. } In the proofs below, we consider $M = |C|^{\Omega(n)}$. When $M = |C|^{o(n)}$, standard dynamic programming techniques solve GSS in subexponential time.

\vspace{.1in}
\noindent
\textbf{Big-O notation.} An asterisk added to big-$O$ notation ($O^*, \Omega^*$, and $\Omega^*$) indicates the suppression of factors polylogarithmic in the argument of the function. For example, $O^*(n)$ ignores $\polylog(n)$ factors and $O^*(2^n)$ ignores $\poly(n)$ factors.

\vspace{.1in}
\noindent
\textbf{Set notation.} We write $[a:b]$ for the set of integers $\{a, a+1, \dots, b\}$. This notation is simplified to $[a]$ for $\{1, 2, \dots, a\}$. We write $\Sigma(S)$ as shorthand for the sum $\sum_{s \in S} s$, $\alpha S$ as shorthand for the multiset $\{\alpha s : s \in S\}$, and $S + \alpha$ as shorthand for the multiset $\{s + \alpha: s \in S\}$. For $\eps \in (0, 1)$ and $b \geq 1$, we write $a \in b^{1\pm\eps}$ to indicate {$a \in [b^{1-\eps}, b^{1+\eps}]$.}
    
\vspace{.1in}
\noindent
\textbf{Probability notation.} Random variables like $\bm{x}_{i}$ are written in boldface. For a finite set of vectors $S$, we write  ``$\vec{\bx} \sim S$'' to indicate that vector $\vec{\bx}$ is sampled uniformly at random from $S$.

\vspace{.1in}
\noindent
\textbf{Combinatorics.} Recall the definition of multinomial coefficients:
\[
    \binom{n}{\alpha_1 n, \alpha_2 n, \dots, \alpha_m n} := \frac{n!}{(\alpha_1 n)!(\alpha_2 n)!\dots (\alpha_m n)!}
\]
for nonnegative $\alpha_1, \alpha_2, \dots, \alpha_m$ such that  $\sum_{i \in [m]} \alpha_i = 1$ and $\alpha_i n$'s are integers for all $i$. Stirling's approximation tells us that $n! = \Theta^*(n^ne^{-n})$. Substituting yields the helpful approximation
\begin{align}
    \label{eq:stirling}
    \binom{n}{\alpha_1 n, \alpha_2 n, \dots, \alpha_m n}
    = \Theta^*\Big(\prod_{i \in [m]} \alpha_i^{-\alpha_i n}\Big)
    = \Theta^*\big(2^{H(\alpha_1, \alpha_2, \dots, \alpha_m)n}\big),
\end{align}
where $H(\alpha_1, \alpha_2, \dots, \alpha_m) := -\sum_{i \in [m]} \alpha_{i} \log_{2}(\alpha_{i})$ denotes the entropy function. We write $H(\alpha)$ for the binary entropy function $H(\alpha, 1-\alpha)$.



\section{Structural Results}
\label{sec:structural}

    This section presents our structural results, which characterize when average-case GSS instances are likely to have a solution in terms of $M$, $n$, $\tau$ and $|C|$. 
    
    
    
    \subsection{ The \texorpdfstring{$C = C(1)$}{} Case }
    
    The $C = C(1)=\{\pm 1\}$ case has been studied in  previous work by Borgs, Chayes and Pittel \cite{borgs2001phase}. Their results precisely determine  the parameters of a phase transition within a subexponential window around $2^n / \sqrt{n}$ and prove that solutions exist with probability $1 - o_n(1)$ for $M$ below the window and $o_n(1)$ above. For our purposes, precisely pinning down the phase transition window is less important than establishing regions within which solutions are either very likely or very unlikely to exist, so we present a corollary of their analysis (\cite[Theorem~2.1]{borgs2001phase}) which features a larger window but more flexibility in the offset. 
 The proof can be found in \Cref{apx:corollary}.
            
    
    \begin{corollary}[GSS Solution Probability on $C(1)$ \cite{borgs2001phase}]
        \label{corr:optprecisionC1}
        Let $C = C(1)$ and fix any positive~constant $\eps > 0$. For $\vec{\bx} = (\bx_{i})_{i \in [n]} \sim [0:M-1]^n$ and any integer $\tau$ satisfying $|\tau| = o(Mn)$, we~have
        \[
            \Prx_{\vec{\bx}}\Big[\exists\hspace{0.05cm} \vec{c} \in C^n : \vec{\bx} \cdot \vec{c} \in \{\tau, \tau+1\}\Big]\hspace{0.1cm}
            \begin{cases}
                 \ge 1 - o_n(1) & \text{if~} M \le |C|^{(1-\eps)n} \\[0.3ex]
                 \le {2|C|^n}\big/{M}     & \text{if~} M \geq |C|^n.
            \end{cases}
        \]
    \end{corollary}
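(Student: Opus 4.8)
The two cases call for unrelated arguments: the bound for $M\ge|C|^n$ is a one-line union bound that uses nothing about $\tau$, while the bound for $M\le|C|^{(1-\eps)n}$ is the real content, which I would obtain by rerunning the first-and-second-moment argument behind \cite[Theorem~2.1]{borgs2001phase} with a general target $\tau$ in place of their target $\{0,1\}$ --- this is why the statement is a corollary of their \emph{analysis}, and why the window it allows ($M\le|C|^{(1-\eps)n}$) is coarser than the sharp $2^n/\sqrt{n}$ threshold they pin down.

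\emph{Upper bound ($M\ge|C|^n$).} Fix $\vec{c}\in C^n=\{\pm1\}^n$ and expose every coordinate of $\vec{\bx}$ except $\bx_1$. Then $\vec{\bx}\cdot\vec{c}=c_1\bx_1+(\text{a quantity independent of }\bx_1)$, and $c_1\bx_1$ is uniform over a set of $M$ distinct integers, so $\vec{\bx}\cdot\vec{c}$ is too; hence $\Prx_{\vec{\bx}}[\vec{\bx}\cdot\vec{c}\in\{\tau,\tau+1\}]\le 2/M$. Summing over the $|C|^n$ choices of $\vec{c}$ yields $\Prx_{\vec{\bx}}[\exists\,\vec{c}\in C^n:\vec{\bx}\cdot\vec{c}\in\{\tau,\tau+1\}]\le 2|C|^n/M$.

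\emph{Lower bound ($M\le|C|^{(1-\eps)n}$).} Let $\bm{N}$ count the $\vec{c}\in\{\pm1\}^n$ with $\vec{\bx}\cdot\vec{c}\in\{\tau,\tau+1\}$; exactly one of $\tau,\tau+1$ has the parity of $\sum_i\bx_i$, and only that value is achievable. By the symmetry $\vec{c}\mapsto-\vec{c}$ we may take $\tau\ge0$. It suffices to prove $\Pr[\bm{N}>0]\ge1-o_n(1)$, which I would do through $\Pr[\bm{N}>0]\ge\E[\bm{N}]^2/\E[\bm{N}^2]$, i.e.\ by establishing $\E[\bm{N}]\to\infty$ and $\E[\bm{N}^2]\le(1+o_n(1))\,\E[\bm{N}]^2$ --- exactly the scheme of \cite{borgs2001phase}. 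For the first moment, $\E[\bm{N}]=\sum_{\vec{c}}\Prx_{\vec{\bx}}[\vec{\bx}\cdot\vec{c}\in\{\tau,\tau+1\}]$, and a local-CLT / Fourier-inversion estimate of the type used there gives $\E[\bm{N}]=\Theta^*(|C|^n/M)\cdot e^{-\Theta(\tau^2/(nM^2))}$; the only new feature is the Gaussian factor, and the point is that $|\tau|=o(nM)$ forces $\tau^2/(nM^2)=o(n)$, so this factor is merely $2^{-o(n)}$, whereas $M\le|C|^{(1-\eps)n}$ makes $|C|^n/M\ge 2^{\eps n}$, giving $\E[\bm{N}]\ge 2^{(\eps-o(1))n}/\poly(n)\to\infty$. (One can alternatively peel off a constant fraction of the coordinates and fix their signs greedily so as to absorb most of $\tau$, reducing to a much smaller target and making the Gaussian factor harmless; this is a convenience, not a necessity.)

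\emph{The main obstacle} is the second-moment bound $\E[\bm{N}^2]\le(1+o_n(1))\,\E[\bm{N}]^2$. Here $\E[\bm{N}^2]=\sum_{\vec{c},\vec{c}'}\Prx_{\vec{\bx}}[\vec{\bx}\cdot\vec{c}\in\{\tau,\tau+1\}\wedge\vec{\bx}\cdot\vec{c}'\in\{\tau,\tau+1\}]$, and the joint probability must be bounded as a function of the overlap $|\{i:c_i=c_i'\}|$, uniformly in $\tau$ across the allowed range. This is the delicate part of the \cite{borgs2001phase} computation; redoing it for a general offset amounts to reworking their Fourier-inversion estimate with the domain of integration split into near-zero, intermediate, and far regions, much as the present paper notes is needed for its $C(d)$ structural result. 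The remaining ingredients --- averaging over the parity of $\sum_i\bx_i$, and the sign reduction to $\tau\ge0$ --- are routine.
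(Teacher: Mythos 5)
Your skeleton matches the paper's: the union bound for $M\ge|C|^n$ is identical, and for small $M$ the paper also works with the two-target count $\rZ_{n,\tau}+\rZ_{n,\tau+1}$ precisely so that the parity factor of $2$ in $\E[\rZ_{n,\tau}^2]=2\rho_n^2(1+o(1))$ cancels (the cross-term $\E[\rZ_{n,\tau}\rZ_{n,\tau+1}]$ vanishes identically), after which a second-moment/Chebyshev argument gives $1-o_n(1)$. The substantive difference is in how the target range $|\tau|=o(Mn)$ is handled. Your primary route --- rerunning the Fourier-inversion estimates of \cite{borgs2001phase} with a general offset, tracking a Gaussian factor $e^{-\Theta(\tau^2/(nM^2))}$ in the first moment and proving a second-moment bound uniform over all such $\tau$ --- is exactly the part you label ``the main obstacle'' and do not carry out. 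As written, that is a genuine gap: the corollary would not follow until that uniform second-moment estimate is actually established, and it is the hardest piece of the whole computation.

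The paper avoids this work entirely by promoting what you call ``a convenience, not a necessity'' into the main mechanism: reveal inputs one by one, assign each the $-1$ coefficient until the accumulated sum lands in $[\tau,\tau+M]$ (which takes fewer than $\eps n/2$ inputs with high probability since $|\tau|=o(Mn)$), thereby reducing to an instance with $n'\ge(1-\eps/2)n$ inputs and a target of magnitude at most $M$. In that regime \cite[Proposition~3.1]{borgs2001phase} can be cited verbatim for both moments, so no Fourier analysis needs to be redone for $C(1)$ --- which is exactly why the statement is a corollary rather than a theorem. I would recommend restructuring your argument around that reduction: it is not merely convenient, it is what lets you discharge the second-moment bound by citation instead of leaving it open.
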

    
      Note that $\vec{\bx}\cdot \vec{c}$ has the same parity as
      $\sum_{i\in [n]}\bx_i$ for every $\vec{c}\in \{\pm 1\}^n$. So 
      the parity of $\sum_{i\in [n]}\bx_i$ determines a single possible target in
      $\{\tau,\tau+1\}$.

    \subsection{ The \texorpdfstring{$C = C(d)$, $d > 1$}{} Case }

    The proof of \cite{borgs2001phase} can be further extended to the case of $C(d)$ with a 
      general $d>1$. We prove the following theorem in \Cref{apx:discard}:

    \begin{theorem}[GSS Solution Probability for $C(d)$, $d > 1$]
        \label{thm:optprecisionC}
        Let $C=C(d)$ for a fixed integer $d > 1$ and fix any constant $\eps>0$.
        For $\vec{\bx}  \sim [0:M-1]^n$ and any  integer $\tau$ satisfying 
        $|\tau| = o(Mn)$, we have  
        \[
            \Prx_{\vec{\bx}}\Big[\exists\hspace{0.05cm} \vec{c} \in C^n : \vec{\bx} \cdot \vec{c} = \tau\Big]\hspace{0.1cm}
            \begin{cases}
                 \ge 1 - o_n(1) & \text{if~} M \le |C|^{(1-\eps)n} \\[0.3ex]
                 \le  {|C|^n}\big/{M}     & \text{if~} M \geq |C|^n.
            \end{cases}
        \]
    \end{theorem}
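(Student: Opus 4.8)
The plan is to adapt the second-moment argument of \cite{borgs2001phase} from $d=1$ to arbitrary constant $d>1$. Let $\bm{Z}=\bm{Z}(\vec{\bx}):=\#\{\vec{c}\in C^n:\vec{\bx}\cdot\vec{c}=\tau\}$ be the (random) number of solutions. The upper bound is immediate from the first moment: for a fixed $\vec{c}\in C^n$ (note $\vec{c}\neq\vec{0}$, since $0\notin C(d)$), pick any coordinate $i$ with $c_i\neq 0$; conditioning on $(\bx_j)_{j\neq i}$, the event $\vec{\bx}\cdot\vec{c}=\tau$ forces $\bx_i$ to equal a single value, so $\Prx_{\vec{\bx}}[\vec{\bx}\cdot\vec{c}=\tau]\leq 1/M$, whence $\Prx[\bm{Z}\geq 1]\leq\Ex[\bm{Z}]=\sum_{\vec{c}\in C^n}\Prx_{\vec{\bx}}[\vec{\bx}\cdot\vec{c}=\tau]\leq |C|^n/M$ whenever $M\geq|C|^n$.

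For the lower bound I would use the Paley--Zygmund inequality $\Prx[\bm{Z}\geq 1]\geq(\Ex[\bm{Z}])^2/\Ex[\bm{Z}^2]$ and show that $\Ex[\bm{Z}^2]=(1+o_n(1))(\Ex[\bm{Z}])^2$ together with $\Ex[\bm{Z}]\to\infty$. Writing the indicator of $\{\vec{\bx}\cdot\vec{c}=\tau\}$ as $\int_0^1 e^{2\pi\i\theta(\vec{\bx}\cdot\vec{c}-\tau)}\,\d\theta$ and using independence of the coordinates gives the circle-method identities
\[
  \Ex[\bm{Z}] = \int_0^1 \varphi(\theta)^n\, e^{-2\pi\i\tau\theta}\,\d\theta, \qquad \Ex[\bm{Z}^2] = \int_0^1\!\!\int_0^1 \psi(\theta,\vartheta)^n\, e^{-2\pi\i\tau(\theta+\vartheta)}\,\d\theta\,\d\vartheta ,
\]
where $f(t):=\sum_{c\in C}e^{2\pi\i ct}=2\sum_{c=1}^{d}\cos(2\pi ct)$, $\varphi(\theta):=\Ex_{\bm u\sim[0:M-1]}[f(\bm u\theta)]$, and $\psi(\theta,\vartheta):=\Ex_{\bm u\sim[0:M-1]}[f(\bm u\theta)\,f(\bm u\vartheta)]$. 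A second-order Taylor expansion at the origin gives $\varphi(\theta)=|C|\,(1-\kappa M^2\theta^2+\cdots)$ with $\kappa=\Theta(d^2)$, the linear term vanishing because $\sum_{c\in C}c=0$; hence $\varphi(\theta)^n\approx|C|^n e^{-\kappa n M^2\theta^2}$, and evaluating the resulting Gaussian integral shows that the arc $|\theta|\leq n/M$ around the origin contributes the ``main term'' $\Theta^*\!\big(|C|^n M^{-1}e^{-\Theta(\tau^2/(nM^2))}\big)$, with the phase $e^{-2\pi\i\tau\theta}$ producing no cancellation on this scale and the exponential factor equal to $|C|^{-o(n)}$ \emph{precisely because} $|\tau|=o(nM)$. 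Combined with $M\leq|C|^{(1-\eps)n}$ this already gives $\Ex[\bm{Z}]\geq|C|^{\Omega(\eps n)}\to\infty$.

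The crux is bounding the rest of the domain, which is where the ``three regions'' enter. I would partition $[0,1]$ into (I) the arc $|\theta|\leq n/M$ around $0$ (and, by periodicity, around $1$), analyzed above; (II) the arcs $|\theta-p/q|\leq n/M$ with $2\leq q\leq d$ and $\gcd(p,q)=1$; and (III) the complement. On region (II), grouping the $\bm u$-sum defining $\varphi$ by residues mod $q$ gives $\varphi(p/q+\delta)\approx\frac{1}{q}\sum_{r=0}^{q-1}f(rp/q)\cdot(\text{Gaussian in }\delta)$, whose value at $\delta=0$ is $2\lfloor d/q\rfloor\leq d=|C|/2$; the same Gaussian-integral estimate as in region (I) then bounds the (finite) region-(II) contribution by $\Theta^*\!\big((|C|/2)^n M^{-1}e^{-\Theta(\tau^2/(nM^2))}\big)$, i.e.\ $2^{-\Omega(n)}$ times the main term. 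On region (III) the elementary geometric-sum bound $|\varphi(\theta)|\leq\sum_{c=1}^d(M\,\|c\theta\|)^{-1}$, together with the fact that $\theta$ lies at distance $>n/M$ from every $p/c$ with $c\leq d$, yields $|\varphi(\theta)|\leq(1+\ln d)/n$, so region (III) contributes at most $((1+\ln d)/n)^n$, which is negligible. For $\Ex[\bm{Z}^2]$ I would apply the analogous split in each of $\theta,\vartheta$: in the $(\mathrm{I},\mathrm{I})$ block, $\psi(\theta,\vartheta)=|C|^2(1-\kappa M^2(\theta^2+\vartheta^2)+\cdots)$ --- the $\theta\vartheta$ cross-term again vanishing because $\sum_{c\in C}c=0$ --- so $\psi^n$ \emph{factorizes} to leading order and this block reproduces $(1+o_n(1))(\Ex[\bm{Z}])^2$, while every other block is smaller by an exponential factor. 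The one genuinely delicate block is the ``anti-diagonal'' region $\vartheta\approx-\theta$, which absorbs the diagonal pairs $\vec{c}=\vec{c}'$ (and, more generally, pairs that agree on almost all coordinates): there $\psi(\theta,-\theta)=\Ex_{\bm u}[f(\bm u\theta)^2]\approx|C|$, the block contributes $\Theta^*(|C|^n/M)$, and this is $o((\Ex[\bm{Z}])^2)$ \emph{exactly} because $M\leq|C|^{(1-\eps)n}$ forces $|C|^n/M\gg M\cdot|C|^{o(n)}$ --- equivalently, because $\Ex[\bm{Z}]\to\infty$.

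I expect the main obstacle to be precisely this bookkeeping for regions (II)/(III) and for the mixed and anti-diagonal blocks of $\Ex[\bm{Z}^2]$. A crude uniform bound of the form ``$|\varphi(\theta)|\leq(1-\delta)|C|$ off the origin'' is useless here, since with $M$ as large as $|C|^{(1-\eps)n}$ the resulting estimate $((1-\delta)|C|)^n$ swamps the main term; every off-origin region must instead be handled by coupling a pointwise bound on $|\varphi|$ (resp.\ $|\psi|$) with a matching bound on the measure of the set on which that size is attained, and by tracking how both degrade with the denominator of the nearest low-order rational. Checking that all these estimates line up --- in particular that the loss $e^{-\Theta(\tau^2/(nM^2))}$ is common to the main term and to region (II), and that the residual $\Theta^*(|C|^n/M)$-sized contributions are dominated as soon as $M\leq|C|^{(1-\eps)n}$ --- is the ``considerable attention to detail'' the excerpt alludes to. Assembling the pieces gives $\Ex[\bm{Z}^2]=(1+o_n(1))(\Ex[\bm{Z}])^2$ with $\Ex[\bm{Z}]\to\infty$, hence $\Prx[\bm{Z}\geq 1]\geq 1-o_n(1)$.
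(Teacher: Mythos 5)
Your overall strategy is the same as the paper's: write the solution count $\bm{Z}$ as a Fourier integral over the torus, show $\E[\bm{Z}]\sim\rho_n=\Theta\big(|C|^n/(M\sqrt{n})\big)$ and $\E[\bm{Z}^2]\le(1+o_n(1))\rho_n^2$ by isolating a Gaussian main arc at the origin and killing the rest of the domain, and finish with a second-moment inequality (you use Paley--Zygmund, the paper uses an equivalent Chebyshev-type bound on $\E[|\bm{Z}/\rho_n-1|]$). Your organization of the off-origin region differs only cosmetically: you use Hardy--Littlewood major arcs around the rationals $p/q$ with $2\le q\le d$ plus minor arcs, while the paper splits radially into a tiny arc, an intermediate annulus, and a far region, controlling the far region by bounding the measure of the set where $|g|$ (resp.\ $|G|$) is large; both capture the same phenomenon, namely that $f(c\theta)$ is large only when $c\theta$ is near a multiple of the period. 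Your anti-diagonal block and your explanation of why $d>1$ removes the corner (parity) contributions match the paper's treatment of the second moment.

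The one genuine gap is your attempt to carry targets with $M\sqrt{n}\ll|\tau|=o(Mn)$ directly through the integral. You claim the main arc contributes $\Theta^*\big(|C|^nM^{-1}e^{-\Theta(\tau^2/(nM^2))}\big)$ and that everything else is dominated by it. But the non-Gaussian corrections on the main arc --- the $(1\pm O(ny^4))$ multiplicative error in the Taylor expansion of $\varphi^n$ --- already contribute an error of order $\rho_n/\poly(n)$ to $\E[\bm{Z}]$, with no damping in $\tau$, and the intermediate annulus contributes another $O(\rho_n/\sqrt{n})$. Once $|\tau|\gtrsim M\sqrt{n\log n}$, the damping factor $e^{-\Theta(\tau^2/(nM^2))}$ is smaller than any inverse polynomial, so your claimed main term is swamped by these errors: you lose the lower bound on $\E[\bm{Z}]$ that Paley--Zygmund needs, and likewise the asymptotic $\E[\bm{Z}^2]=(1+o_n(1))\E[\bm{Z}]^2$. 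The paper sidesteps this entirely by first proving the statement for $|\tau|=O(M)$, where the damping is $1-o_n(1)$ and harmless, and then reducing general $|\tau|=o(Mn)$ to that case: reveal $o(n)$ of the inputs one by one, assign each the coefficient $-1$ (for $\tau>0$), and stop once the residual target is $O(M)$, which happens with high probability by concentration; the surviving $n'=(1-o(1))n$ inputs still satisfy $M\le|C|^{(1-\eps/2)n'}$. You should adopt this reduction (or otherwise restrict the Fourier argument to $|\tau|=O(M)$) rather than tracking the $e^{-\Theta(\tau^2/(nM^2))}$ factor through every estimate.
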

    
    Generalizing the proof strategy of \cite{borgs2001phase} to $C(d)$ significantly complicates the analysis but does not substantially change the underlying intuition. Notably, \cite{borgs2001phase} confronted a parity issue in their case of $C(1)$: 
    no solution summing to a target $\tau$ that has parity different from the sum of inputs
    is possible. Our analysis explains the parity issue as a result of constructive interference in the integrand of the solution-counting function and demonstrates that no such issues occur for $C(d)$ when $d > 1$
    (compare the $\vec{\bx}\cdot \vec{c}\in \{\tau,\tau+1\}$ in \Cref{corr:optprecisionC1} and $\vec{\bx}\cdot \vec{c}=  \tau$ in \Cref{thm:optprecisionC}).
    
    This proof approach has the advantage that it neatly bounds the first and second moments of the number of solutions, not just the probability that a solution exists. However, the probability that a solution exists in the $ M\le |C|^{(1-\eps)n}$ regime will correspond to the failure probability of our algorithm later, so it is desirable to have a sharper bound than the $1 - o_n(1)$ given by \Cref{corr:optprecisionC1} and  \Cref{thm:optprecisionC}.  We achieve this in the $C = C_0(d)$ case by using a new approach described in the next subsection.
    
    \subsection{ The \texorpdfstring{$C = C_0(d)$}{} Case }

    When the coefficient set contains $0$, we analyze the probability that a solution exists by considering a process in which a set of ``achievable targets'' is grown as input numbers $\bx_1,\bx_2,\dots$ are revealed one by one. Once the set of achievable targets contains $\tau$, the existence of a solution is ensured since any remaining input elements can be assigned the $0$ coefficient. The structural result that we prove in this way is analogous to \Cref{thm:optprecisionC}, but with exponentially small failure probability for small $M$. Moreover, the argument is shorter, more elementary, and more intuitive.
    
    \begin{theorem}[GSS Solution Probability for $C = C_0(d)$]
        \label{thm:optprecisionC0}
    Let $C=C_0(d)$ for some fixed $d\in \N$, and fix any constant $\eps >0$.
    For $\vec{\bx} \sim [0:M-1]^n$ and any integer $\tau$  satisfying $|\tau|=o(Mn)$, we have
        \[
            \Prx_{\vec{\bx}}\Big[\exists\hspace{0.05cm} \vec{c} \in C^n : \vec{\bx} \cdot \vec{c} = \tau\Big]\hspace{0.1cm}
            \begin{cases}
                 \ge 1 - e^{-\Omega(n)} & \text{if~} M  \le |C|^{(1-\eps)n} \\[0.3ex]
                 \le {|C|^n}\big/{M}     & \text{if~} M \geq |C|^n.
            \end{cases}
        \]
    \end{theorem}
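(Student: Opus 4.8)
\emph{Proof strategy.} The regime $M\ge|C|^n$ is handled by a union bound. Since $0\in C_0(d)$, when $\tau=0$ we exclude the trivial vector $\vec c=\vec 0$; for every other $\vec c\in C^n$, fix an index $i$ with $c_i\ne0$ and condition on $(\bx_j)_{j\ne i}$: there is then at most one value of $\bx_i\in[0:M-1]$ with $\vec c\cdot\vec{\bx}=\tau$, so $\Pr[\vec c\cdot\vec{\bx}=\tau]\le1/M$. Summing over the at most $|C|^n$ relevant coefficient vectors gives the bound. The regime $M\le|C|^{(1-\eps)n}$ is the substantive case, which I would prove via the reveal-one-at-a-time argument sketched in \Cref{sec:intro}: writing $\bm S_\ell=\{\vec c\cdot(\bx_1,\dots,\bx_\ell):\vec c\in C^\ell\}$, we have $\bm S_0=\{0\}$, $\bm S_{\ell+1}=\bigcup_{c\in C}(\bm S_\ell+c\bx_{\ell+1})$, and hence $|\bm S_\ell|\le|\bm S_{\ell+1}|\le|C|\cdot|\bm S_\ell|$ always.

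\emph{Growth phase.} The first step is to show that $|\bm S_\ell|$ multiplies by a factor close to $|C|$ at almost every step while it is still well below $M$. Inclusion--exclusion gives $|C|\cdot|\bm S_\ell|-|\bm S_{\ell+1}|\le\sum_{\{c,c'\}:\,c\ne c'}\#\{(s,s')\in\bm S_\ell^2:\ s-s'=(c'-c)\bx_{\ell+1}\}$, and since $\bx_{\ell+1}$ is uniform on $[0:M-1]$ and $c'-c\ne0$, each ordered pair $(s,s')$ contributes probability at most $1/M$; hence $\E\big[\,|C|\cdot|\bm S_\ell|-|\bm S_{\ell+1}|\ \big|\ \bm S_\ell\,\big]\le\binom{|C|}{2}|\bm S_\ell|^2/M$. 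Fixing a constant $K=K(\eps)$ and applying Markov's inequality conditioned on the past, the event $\bm G_\ell:=\big\{\,|\bm S_{\ell+1}|\ge|C|(1-\gamma)\,|\bm S_\ell|\ \text{ or }\ |\bm S_\ell|>\gamma M/(K|C|)\,\big\}$, for a small constant $\gamma=\gamma(\eps)$, has conditional probability at least $1-\eps/4$. A Chernoff bound over the first $n$ steps shows that at most $\eps n/2$ of them fail $\bm G_\ell$, except with probability $e^{-\Omega(n)}$; since each step on which $\bm G_\ell$ holds and the threshold $\gamma M/(K|C|)$ has not yet been reached multiplies $|\bm S_\ell|$ by $\ge|C|(1-\gamma)$, while no step decreases $|\bm S_\ell|$, and since $\log_{|C|}M\le(1-\eps)n$, choosing $\gamma$ small enough forces $|\bm S_\ell|>\gamma M/(K|C|)=\Omega(M)$ at some step $\bm L_1\le(1-\eps/4)n$, except with probability $e^{-\Omega(n)}$.

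\emph{Endgame.} The plan is then: (i) upgrade ``$|\bm S_{\bm L_1}|=\Omega(M)$'' to ``$\bm S_{\bm L_1}$ has no long gaps on some interval $[-A:A]$ with $A=\Theta(\sqrt n\,M)$'' --- using the symmetry of $C$ (so $\bm S_\ell=-\bm S_\ell$) together with a second-moment bound showing that the solution-count function $v\mapsto\#\{\vec c\in C^\ell:\vec c\cdot(\bx_1,\dots,\bx_\ell)=v\}$, whose variance is $O\big(\sum_{i\le\ell}\bx_i^2\big)=O(nM^2)$, places most of its mass near the origin, so that the $\Omega(M)$ distinct achievable values cannot all lie far out; (ii) reveal a few more elements to bootstrap this dense region into a genuine full integer interval $[-W:W]\subseteq\bm S_{\bm L_2}$ with $W\ge M$, using that a freshly revealed element shrinks the set of ``holes'' in the interval --- a Markov estimate of the shape $\E_{\bx}|B\cap(B-c\bx)|\le|B|^2/M$ shows a constant fraction of added elements are good for this, so $O(\log n)$ good ones suffice, and the interval does not shrink because its end ``holes'' get filled via the $\pm1$ shifts; (iii) if $|\tau|>W$, enlarge the interval toward $\tau$: once $[-W:W]\subseteq\bm S_\ell$ with $W\ge M\ge\bx_{\ell+1}$, coefficients $0,\pm1$ yield $[-W-\bx_{\ell+1}:W+\bx_{\ell+1}]\subseteq\bm S_{\ell+1}$, so after $O(|\tau|/M)=o(n)$ further elements --- whose partial sums exceed $|\tau|$ except with probability $e^{-\Omega(n)}$ by Chernoff --- the current interval contains $\tau$; (iv) assign the coefficient $0$ to every remaining coordinate. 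The element budget is $\le(1-\eps/4)n+o(n)<n$ for large $n$, and each of the constantly many stages fails only with probability $e^{-\Omega(n)}$, so a union bound gives the claimed $1-e^{-\Omega(n)}$.

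\emph{Main obstacle.} The difficulty is concentrated in stages (i)--(ii) of the endgame: converting the purely quantitative fact ``$|\bm S_{\bm L_1}|$ is large'' into the geometric fact ``$\bm S$ eventually contains a genuine integer interval around the origin that is long enough to reach $\tau$.'' Because all of the per-step estimates only control expectations, each ``improvement'' event holds with merely constant conditional probability, so the argument must be organized so that only a \emph{constant} number of qualitatively distinct improvement phases are needed, each completed within a sublinear or linear budget of revealed elements --- only then does a single Chernoff bound over the $\Theta(n)$ elements, exploiting the $\Omega(n)$ slack between $\log_{|C|}M$ and $n$, push the overall failure probability down to $e^{-\Omega(n)}$. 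Making the running bookkeeping of interval lengths, gap sizes, and element budgets close simultaneously is the technical heart of the proof.
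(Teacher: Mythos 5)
Your overall architecture (reveal $\bx_1,\bx_2,\dots$ sequentially, grow the achievable set $\bm S_\ell$ multiplicatively, then steer it to $\tau$ and pad with $0$ coefficients) and your upper bound are the same as the paper's. But the endgame has a genuine gap, and it traces back to the growth phase. Your collision estimate $\E\big[|C||\bm S_\ell|-|\bm S_{\ell+1}|\big]\le\binom{|C|}{2}|\bm S_\ell|^2/M$ is quadratic in $|\bm S_\ell|$, so Markov gives near-$|C|$-fold growth only while $|\bm S_\ell|=O(\gamma M)$; the phase necessarily stalls at $|\bm S_{\bm L_1}|=\Theta(M)$. But $\Theta(M)$ distinct values spread over the accessible window (of width $\Theta(nM)$, or $\Theta(\sqrt n\,M)$ after your Chebyshev truncation) give, by pigeonhole, a length-$M$ interval of density only $O(1/n)$ (resp.\ $O(1/\sqrt n)$), not constant — and with sub-constant density the final ``hit $\tau$'' stage yields failure probability $e^{-\Omega(\sqrt n)}$ at best, not $e^{-\Omega(n)}$. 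The paper's key idea, which your proposal is missing, is to replace the unconditional quadratic collision bound by one that is \emph{linear} in $|\bm S_\ell|$ \emph{conditioned on no length-$M$ interval being $\gamma$-dense yet} (each $w$ then has at most $4dM\gamma+|\calB|$ partners within distance $2dM$), so the multiplicative growth provably continues until $|\bm S_m|\ge M\cdot 2^{\Omega(n)}$, which contradicts the $O(nM)$ capacity of the near-origin window and forces constant density. Relatedly, your step (i) controls the number of coefficient \emph{vectors} far from the origin, not the number of far-out \emph{distinct values}, and a Chebyshev (polynomial) tail is too weak in any case: you need a Hoeffding-type bound on the count of large-valued vectors together with a guarantee that $\bm L_1\le\log_{|C|}M+cn$ for a tiny constant $c$ (your $\eps n/2$ budget of failed steps is too generous), exactly as in the paper's Claim~1 and its choice of $m$.

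Your step (ii) — bootstrapping a $\gamma$-dense interval into a genuine full interval $[-W:W]$ — is also unsubstantiated and, I believe, does not work as described. A hole $h$ survives a fresh element $\bx$ only if $h\pm\bx\notin\bm S$ for all usable coefficients, and when the guaranteed-dense region has density only a small constant $\gamma$, this gives a \emph{linear} contraction $\E|B_{\mathrm{new}}|\le(1-\Omega(\gamma))|B|$, not the quadratic $|B|^2/M$ you posit (the quadratic rate only kicks in once $|B|\ll M$, i.e.\ once the density is already near $1$); with only constant per-step success probability from Markov, the accumulated losses prevent the recursion from terminating with probability $1-e^{-\Omega(n)}$, and the boundary holes compound the problem. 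The paper sidesteps this entirely: it never builds a full interval, but instead shifts the $\gamma$-dense interval onto $\tau$ using prefix sums of $\eps n/3$ fresh elements (Interval Shifting Lemma) and then lets each of the last $\Omega(n)$ elements hit $\tau$ independently with probability $\ge\gamma$, giving failure $(1-\gamma)^{\Omega(n)}=e^{-\Omega(n)}$. I would recommend restructuring your endgame along those lines rather than trying to repair the full-interval bootstrap.
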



\subsection{Proof of \texorpdfstring{\Cref{thm:optprecisionC0}}{} }
\label{subsec:proof_C0}

We first give the simple upper bound  in the case when $M \geq |C|^n$, which is just a union bound.~Any fixed {\em not-all-zero} $\vec{c} \in C^{n}$ has $c_{i} \neq 0$ for some $i \in [n]$. Conditioning on any outcome of $\smash{\vec{\bx}_{[n] \setminus \{i\}}}$ from $[0: M - 1]^{n - 1}$, $\vec{\bx} \cdot \vec{c} = \tau$ holds for at most one outcome of $\bx_{i}$ and thus occurs with probability at most $1/M$. Union-bounding over all $\vec{c} \in C^{n}$ gives the claimed upper bound of $|C|^n / M$.
    
In the rest of the section, we focus on the case when $M\le |C|^{(1-\eps)n}$ for some positive~constant $\eps>0$. Let $d\in \N$ be a constant with $C=C_0(d)$,
  and let $\tau$ be the target integer with $|\tau|=o(Mn)$.
For convenience we assume in the proof that $\tau\ne 0$; it will become clear at the end
  that the same proof works for $\tau=0$.
We start with some notation and a high-level overview of the proof.

Given $\vec{x} \in [0:M-1]^{\ell}$,    
we define the set $\calS(\vec{x} )$ of \emph{targets achievable with $\vec{x} $} to be
\begin{align*}
     \calS(\vec{x} ) := \big\{ \vec{x} \cdot \vec{c} : \vec{c} \in C^{\ell}\big\}.
\end{align*}
When $\ell=0$ and $\vec{x}$ is the empty vector, $\calS(\vec{x})=\{0\}$ by default.
Given $\vec{x}\in [0:M-1]^n$,
  note that $$\calS(\vec{x}_{[1]})\subseteq \calS(\vec{x}_{[2]})\subseteq  \cdots \subseteq \calS(\vec{x}_{[n]})$$ and 
$\vec{x}$ admits a solution if and only if $\calS(\vec{x}_{[n]})=\calS(\vec{x})$ contains $\tau$.\footnote{This is the reason why we assumed $\tau\ne 0$; when
  $\tau=0$ we need the extra condition that $0\in \calS(\vec{x})$ can be
  obtained by a not-all-zero $\vec{c}$.} 
Let 
$$
m:=\left\lceil \tau n + \log_{|C|} M\right\rceil
\quad\text{and}\quad
m':=\left\lceil \left(1-\frac{\eps}{3}\right)n\right\rceil,\quad\text{with
  $\tau=\frac{\eps^2}{256d^2\ln |C|}$.}
$$
Using $\log_{|C|}M\le (1-\eps)n$, we have $m'-m>
 \eps n/3$.
  
  Our proof considers the evolution of the set $\bm{S}_{ \ell}:=\calS(
  \bx_1,\ldots,\bx_\ell)$ as we draw each input element $\bx_\ell\sim [0: M - 1]$ for $\ell=1,\ldots, n$ sequentially.
It proceeds in three steps:
\begin{flushleft}\begin{enumerate}
    \item Show that with high probability over the draw of
    the first $m$ elements, 
    $\bm{S}_{m}$ occupies a constant fraction of some length-$M$ interval not too far from the origin. (\Cref{lem:main}).
    
    \item Assuming the event in item 1 happens, show that with high probability over the draw of the next $m'-m$ elements, 
    $\bm{S}_{m'}$ occupies a constant fraction of a length-$M$ interval that  contains $\tau$. (\Cref{lem:lower_bound_on_sum}). 
    
    \item Assuming the event in item 2 happens, show that $\tau\in \bm{S}_n$ with
    high probability over the draw of the last $n-m'$ elements. 
\end{enumerate}\end{flushleft}
Among these three steps, the first step is the most challenging.

Before stating \Cref{lem:main}, we give one more definition about
  what we meant by ``not too far from the origin.''
Given $\vec{x}\in [0:M-1]^\ell$, we define  
the set of  \emph{large} targets achievable with $\vec{x}$ as
\[
   {\calB}(\vec{x}) := \Big\{w \in {\calS}(\vec{x}): |w| \geq \eps n M\big/8\Big\}.
\]
The following simple claim shows that 
  $\calB(\vec{x})$ cannot be too large:

\begin{claim}
\label{lem:bound_badsolns}
For any $\ell\in [0:n]$ and $\vec{x}\in [0:M-1]^\ell$, we have
$$
    |\calB(\vec{x})|\le |C|^\ell\cdot 2\exp\left(-\frac{\eps^2 n}{128d^2}\right)
    =|C|^\ell\cdot 2|C|^{-2\tau n}.
    $$
\end{claim}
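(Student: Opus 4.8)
The plan is to bound $|\calB(\vec{x})|$ by the number of coefficient vectors that produce a ``large'' inner product with $\vec{x}$, and then estimate that number via a standard concentration inequality applied to a \emph{uniformly random} coefficient vector.

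First I would observe that the evaluation map $\vec{c}\mapsto \vec{x}\cdot\vec{c}$ is a surjection from $C^\ell$ onto $\calS(\vec{x})$; choosing one preimage for each target in $\calB(\vec{x})$ and noting that distinct targets require distinct preimages therefore shows
\[
   |\calB(\vec{x})| \;\le\; \big|\{\vec{c}\in C^\ell : |\vec{x}\cdot\vec{c}|\ge \eps n M/8\}\big| \;=\; |C|^\ell\cdot \Prx_{\vec{c}\sim C^\ell}\!\big[\,|\vec{x}\cdot\vec{c}|\ge \eps n M/8\,\big] .
\]
So it suffices to prove that the probability on the right is at most $2\exp(-\eps^2 n/(128 d^2))$.

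Next, for a uniformly random $\vec{c}\sim C^\ell$ the coordinates $\bm{c}_1,\dots,\bm{c}_\ell$ are i.i.d.; since $C=C_0(d)$ is symmetric about $0$ they have mean $0$, and each summand $x_i\bm{c}_i$ lies in an interval of length $2x_i d\le 2Md$ (using the crude bound $x_i\le M-1<M$). Thus $\vec{x}\cdot\vec{c}=\sum_{i\in[\ell]} x_i\bm{c}_i$ is a sum of independent, mean-zero, bounded random variables, and Hoeffding's inequality gives
\[
   \Prx_{\vec{c}\sim C^\ell}\!\big[\,|\vec{x}\cdot\vec{c}|\ge t\,\big]\;\le\; 2\exp\!\left(-\frac{2t^2}{\sum_{i\in[\ell]}(2x_i d)^2}\right)\;\le\; 2\exp\!\left(-\frac{t^2}{2\ell M^2 d^2}\right) .
\]
Taking $t=\eps n M/8$ makes the exponent $-\eps^2 n^2/(128\ell d^2)$, which is at most $-\eps^2 n/(128 d^2)$ since $\ell\le n$, and this equals $-2\tau n\ln|C|$ by the choice $\tau=\eps^2/(256 d^2\ln|C|)$, yielding the second stated form $2|C|^{-2\tau n}$. (The case $\ell=0$ is trivial, since then $\calS(\vec{x})=\{0\}$ and $\calB(\vec{x})=\emptyset$.)

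I do not expect a real obstacle here: the argument is elementary, and the only points to be careful about are that the symmetry of $C$ is what makes the summands mean-zero (so that Hoeffding applies around $0$), the crude per-coordinate bound $x_i<M$, and the use of $\ell\le n$ to replace $n^2/\ell$ by $n$ in the exponent.
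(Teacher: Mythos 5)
Your proof is correct and follows essentially the same route as the paper's: both bound $|\calB(\vec{x})|/|C|^\ell$ by the probability that a uniformly random coefficient vector produces a sum of magnitude at least $\eps nM/8$ (the paper phrases this via random variables $\by_i$ with $\Pr[\by_i=cx_i]=1/|C|$, which is the same object as your $x_i\bm{c}_i$), and both then apply Hoeffding with the per-coordinate range $2dx_i\le 2dM$ together with $\ell\le n$. The constants work out exactly as you computed, so there is nothing to add.
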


\begin{proof}
The case with $\ell=0$ is trivial given $\calB(\vec{x})=\emptyset$.
For $\ell\ge 1$,
let independent random variables $\by_1,\dots,\by_\ell$ be such that $\Pr[\by_i = c {x}_{i}]$ = $ {1}/{|C|}$ for each $c \in C$. Letting $\by = \sum_{i \in [\ell]} \by_i$, we have 
    \begin{align*}
        \frac{|\calB(\vec{x})|}{|C|^\ell}\le 
  \Prx_{\by}\Big[|\by| \geq \eps nM\big/8\Big]
    \end{align*}
and the claim follows from an application of the
  Hoeffding inequality (and $\ell\le n$).
\end{proof} 

We prove \Cref{lem:main} for the first step:

\begin{lemma}[Dense Interval Lemma]
\label{lem:main}
With probability $1 - e^{-\Omega(n)}$
  over $\vec{\bx} \sim [0:M-1]^m$, 
  $\calS(\vec{\bx})\setminus \calB(\vec{\bx})$ occupies at least $\gamma$-fraction
  of some length-$M$ interval with 
  \begin{equation}\label{eq:gamma}
  \gamma := \frac{\tau^2}{48|C|^2},
  \end{equation}
  i.e.,
  there exists a length-$M$ interval
  $I$ such that 
$ 
 |I\cap(\calS(\vec{\bx})\setminus \calB(\vec{\bx})) |
\ge \gamma M. 
$ 
\end{lemma}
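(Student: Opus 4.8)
\emph{Proof plan.} Write $\theta := \eps^2/(256d^2\ln|C|)$ for the constant used in defining $m$, so that $m \ge \log_{|C|}M + \theta n$, and recall $\gamma=\theta^2/(48|C|^2)$.

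First I would reduce the statement. It suffices to produce \emph{some} interval $I^\star$ of length at least $M$ containing at least $4\gamma M$ points of $\calS(\vec{\bx})$: partitioning $I^\star$ into its $\lceil |I^\star|/M\rceil$ length-$M$ blocks yields one block $I$ with $|I\cap\calS(\vec{\bx})|\ge 4\gamma M/(1+M/|I^\star|)\ge 2\gamma M$, and since \Cref{lem:bound_badsolns} gives $|\calB(\vec{\bx})|\le 2|C|^m|C|^{-2\theta n}=O(M|C|^{-\theta n})=o(\gamma M)$, this $I$ satisfies $|I\cap(\calS(\vec{\bx})\setminus\calB(\vec{\bx}))|\ge\gamma M$. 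Furthermore, revealing $\bx_1,\dots,\bx_m$ one at a time and writing $\bm{S}_\ell:=\calS(\bx_1,\dots,\bx_\ell)=\bigcup_{c\in C}(\bm{S}_{\ell-1}+c\bx_\ell)$, the presence of $0\in C$ makes the chain $\bm{S}_0\subseteq\bm{S}_1\subseteq\cdots$ non-decreasing, so it is enough to show that, with probability $1-e^{-\Omega(n)}$, at some time $\ell\le m$ the set $\bm{S}_\ell$ already meets a length-$\ge M$ interval in $\ge4\gamma M$ points.

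The core of the argument would be a \emph{growth dichotomy} run along the revelation of $\bx_1,\bx_2,\dots$. Let $T$ be the first time $\bm{S}_\ell$ meets a length-$\ge M$ interval in $\ge4\gamma M$ points; I want $\Pr[T\le m]\ge 1-e^{-\Omega(n)}$. The key claim is that, conditioned on any $\bm{S}_{\ell-1}$ with $\ell-1<T$ (so $\bm{S}_{\ell-1}$ has $<4\gamma M$ points in every length-$M$ window), a fresh draw $\bx_\ell\sim[0:M-1]$ multiplies the set size geometrically except with exponentially small probability:
\[
  \Prx_{\bx_\ell}\big[\,|\bm{S}_\ell|\ge\lambda|\bm{S}_{\ell-1}|\,\big]\ \ge\ 1-e^{-\Omega(n)},\qquad \lambda:=|C|^{1/(1+\theta/(2(1-\eps)))}\in(1,|C|).
\]
In expectation this is believable: inclusion–exclusion gives $|\bm{S}_\ell|\ge|C|\,|\bm{S}_{\ell-1}|-|C|\sum_{j=1}^{2d}|\bm{S}_{\ell-1}\cap(\bm{S}_{\ell-1}+j\bx_\ell)|$, while averaging over $\bx_\ell$ and using the ``$<4\gamma M$ per window'' hypothesis bounds $\Ex_{\bx_\ell}|\bm{S}_{\ell-1}\cap(\bm{S}_{\ell-1}+j\bx_\ell)|\le\frac1M\sum_{s\in\bm{S}_{\ell-1}}|\bm{S}_{\ell-1}\cap[s-jM:s]|<8\gamma j\,|\bm{S}_{\ell-1}|$, so $\Ex_{\bx_\ell}|\bm{S}_\ell|\ge(|C|-O(\gamma|C|^2 d))|\bm{S}_{\ell-1}|\ge\lambda|\bm{S}_{\ell-1}|$ since $\gamma$ is a tiny constant. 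Given the key claim, one sets $\ell^\star:=\lceil\log_\lambda M+2\log_\lambda n\rceil$; the choice of $\lambda$ together with $M\le|C|^{(1-\eps)n}$ makes $\ell^\star\le m$. Union-bounding the key claim over the $\le\ell^\star=O(n)$ relevant steps, with probability $1-e^{-\Omega(n)}$ either $T\le\ell^\star$ or $|\bm{S}_{\ell^\star}|\ge\lambda^{\ell^\star}\ge Mn^2$; but deterministically $|\bm{S}_{\ell^\star}|\le 2d\sum_{i\le\ell^\star}\bx_i+1$, and Hoeffding bounds $\sum_{i\le\ell^\star}\bx_i\le\ell^\star M$ with probability $1-e^{-\Omega(n)}$, so $|\bm{S}_{\ell^\star}|\le 2d\ell^\star M+1<Mn^2$ — a contradiction. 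Hence $T\le\ell^\star\le m$, as desired.

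The delicate point, and the one the paper flags as hardest, is the displayed high-probability growth bound: showing that a random $\bx_\ell$ produces small overlap $\sum_{j}|\bm{S}_{\ell-1}\cap(\bm{S}_{\ell-1}+j\bx_\ell)|$ \emph{except with exponentially small probability}, not merely in expectation. This is genuinely subtle because adversarial additive structure in $\bm{S}_{\ell-1}$ (arithmetic-progression-like pieces at small common difference, or a few well-separated clusters) can force a constant-probability moderate overlap over $\bx_\ell$ while remaining consistent with ``$<4\gamma M$ points per window''. The way around this is to carry a stronger invariant along the chain $\bm{S}_1,\bm{S}_2,\dots$ — a quantitative smoothness / bounded-additive-energy property that both precludes such structure and is itself preserved under a random translate-union with exponentially small failure probability — and verifying that invariant is where the real work sits.
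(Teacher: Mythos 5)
There is a genuine gap, and it sits exactly where you flag it. Your whole plan rests on the displayed key claim that, conditioned on $\bm{S}_{\ell-1}$ not yet being dense, a single fresh draw $\bx_\ell$ achieves $|\bm{S}_\ell|\ge\lambda|\bm{S}_{\ell-1}|$ \emph{except with probability $e^{-\Omega(n)}$}; you need this strength because you then union-bound over $\Theta(n)$ steps. But your own argument for it only controls the \emph{expected} overlap $\Ex_{\bx_\ell}\sum_j|\bm{S}_{\ell-1}\cap(\bm{S}_{\ell-1}+j\bx_\ell)|$, which by Markov yields only a constant failure probability per step, and you correctly observe that additive structure in $\bm{S}_{\ell-1}$ consistent with the low-density hypothesis can plausibly force constant-probability overlap. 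The proposed fix --- ``carry a stronger invariant ... bounded additive energy ... preserved under a random translate-union'' --- is never specified or verified, so the proof does not close. As written, the argument establishes the right expectation bound and then asserts, without proof, the concentration that is the entire difficulty.

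The paper's proof shows this strong per-step claim is unnecessary. It proves only the Markov-level statement (\Cref{claim:useful}): conditioned on the history, the growth event \eqref{hehehaha} (or the density event already holding) occurs with probability at least $1-\tau/4$. The exponential concentration is then obtained \emph{across} the $m$ steps rather than within each one: the indicators $\bm{\X}_\ell$ each succeed with probability $\ge 1-\tau/4$ given the past, so by a Chernoff/Azuma-type bound at least a $(1-\tau/2)$-fraction of the $m$ steps succeed with probability $1-e^{-\Omega(n)}$. If the density condition never held, every successful step is a multiplicative growth step, giving $|\bm{S}_m|\ge((1-\tau/4)|C|)^{(1-\tau/2)m}\ge M\cdot 2^{\Omega(n)}$, which contradicts the trivial bound $|\bm{S}_m\setminus\bm{B}_m|\le\eps nM/4$ (together with \Cref{lem:bound_badsolns} to discard $\bm{B}_m$). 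Your closing contradiction via $|\bm{S}_{\ell^\star}|\le 2d\sum_i\bx_i+1$ plays the same role as the paper's trivial upper bound, and your initial reduction handling $\calB$ is fine; if you replace your union bound over steps with the Chernoff-over-indicators aggregation, your argument becomes essentially the paper's and the unproven key claim disappears.
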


\begin{proof}
 
    We use the following claim to establish the lemma:
\begin{claim}\label{claim:useful}
Let $\ell\in [m]$ and let $\vec{y}\in [0:M-1]^{\ell-1}$ be a vector such that 
  $|I\cap (\calS(\vec{y})\setminus \calB(\vec{y}))|<\gamma M$ for any length-$M$ interval
  $I$.
When $\bm{z}\sim [0:M-1]$, we have
$$
\frac{|\calS(\vec{y}\circ \bm{z})|}{|\calS(\vec{y})|}\ge (1-\tau/4)|C|
$$
with probability at least $1-\tau/4$.
\end{claim}    
 
We delay the proof of the claim and use it to prove the lemma first.    
    Consider the experiment~of drawing $\bm{x}_1, \bm{x}_2, \dots, \bm{x}_m\sim [0:M-1]$ in turn, with $\bm{S}_\ell=\calS(\bm{x}_1,\ldots,\bm{x}_\ell)$ 
    and $\bm{B}_\ell=\calB(\bm{x}_1,\ldots,\bm{x}_\ell)$ for each $\ell\in [0:m]$
    (so $\bm{S}_0=\{0\}$ and $\bm{B}_0=\emptyset$).
    For each $\ell\in [m]$, let $\bm{\X}_\ell$ denote the indicator random variable that is set to $1$ if either 
    $|I\cap (\bm{S}_{\ell-1}\setminus \bm{B}_{\ell-1})|\ge \gamma M$ for some length-$M$ interval $I$ or 
\begin{equation}\label{hehehaha}
\frac{|\bm{S}_\ell|}{|\bm{S}_{\ell-1}|}\ge (1-\tau/4)|C|.
\end{equation}
Then conditioning on any outcome of 
  $(\bm{x}_1,\ldots,\bm{x}_{\ell-1})$ it follows from \Cref{claim:useful} that
  the probability~of $\bm{\X}_\ell=1$ is at least $1-\tau/4$.
By Chernoff bound we have the probability of $\smash{\sum_{\ell\in [m]} \bm{\X}_\ell\ge (1-\tau/2)m}$ is  $1-e^{-\Omega(n)}$.
We show that when this occurs  
  we must have $\smash{|I\cap (\bm{S}_\ell\setminus \bm{B}_\ell)|\ge \gamma M}$ for some
  length-$M$ interval $I$ for some $\ell\in [m]$, which implies the 
  same for $\bm{S}_m\setminus \bm{B}_m$ given that $\bm{S}_\ell\subseteq \bm{S}_m$.
  
To this end we assume for a contradiction that this is not the case
  for any $\ell\in [m]$.
Then every $\bm{\X}_\ell=1$ implies (\ref{hehehaha}).
Given that we always have $\bm{S}_{\ell-1}\subseteq \bm{S}_\ell$,
  we have 
\begin{align*}
        |\bm{S}_{m}|
         \geq ( (1-\tau/4)|C|)^{(1 - \tau/2)m}  
          \ge |C|^m |C|^{-(\tau/2)m} e^{-(\tau/2)m} 
         \ge M\cdot |C|^{\tau n- \tau m}
         =M\cdot 2^{\Omega(n)},
    \end{align*}
Here the second step used $1-\tau/4\ge e^{-\tau/2}$ for our small $\tau$. The third step used $m = \lceil \tau n + \log_{|C|} M \rceil$ and $e < 3 \le |C|$.
On the other hand, by \Cref{lem:bound_badsolns},
$$|\bm{B}_m|\le |C|^m\cdot 2|C|^{-2\tau n}<M.
$$
As a result, we have $|\bm{S}_m\setminus \bm{B}_m|\ge
M\cdot 2^{\Omega(n)}$, a contradiction with
the trivial upper bound of $\eps nM/4$ for $|\bm{S}_m\setminus \bm{B}_m|$ by definition.
We prove Claim \ref{claim:useful} in the rest of the proof:

\begin{proof}[Proof of Claim \ref{claim:useful}]
Fix an $\ell\in [m]$ and $\vec{y}\in [0:M-1]^{\ell-1}$
  such that $|I\cap ( \calS(\vec{y})\setminus \calB(\vec{y})|<\gamma M$ for 
  any length-$M$ interval $I$.
Let $\bm{z}\sim [0:M-1]$, and we 
  consider the quantity $ \smash{|C| \cdot |\calS(\vec{y})| - |\calS(\vec{y}\circ \bm{z})| }$ as a random variable.  It is easy to see that this random variable is nonnegative (as each element in~$\calS(\vec{y})$ gives rise to at most $|C|$ elements in $\calS(\vec{y}\circ\bm{z})$).
 We will show that 
    \begin{align}
        \label{eq:no_collision}
        \Ex_{\bm{z}}\Big[|C| \cdot |\calS(\vec{y})| - |\calS(\vec{y}\circ\bm{z})| \ignore{\Bigm| \bm{\X}_{\ell - 1} = 0}\Big]
        \leq |\calS(\vec{y})| \cdot \Big(2|C| M \cdot \gamma +| \calB(\vec{y})|\Big) \cdot \frac{|C|^2}{M}.
    \end{align}
    \ignore{The nonnegativity holds because $\bm{S}_{\ell - 1}$ contains elements $\vec{\bx}_{[\ell - 1]} \cdot \vec{c}_{[\ell - 1]}$ for all $\vec{c}_{[\ell - 1]} \in C^{\ell - 1}$ and $\bm{S}_{\ell}$ contains elements $\vec{\bx}_{[\ell]} \cdot \vec{c}_{[\ell]}$ for all $\vec{c}_{[\ell]} \in C^{\ell}$.}
     
We show \eqref{eq:no_collision} by upper-bounding the number of ``collisions'' that occur when we derive elements in $\calS(\vec{y}\circ \bm{z})$. Namely, consider the pairs $(w, c)$ for $w \in \calS(\vec{y})$ and $c \in C$. If every two distinct~pairs $(w, c) \neq (w', c' )$ were to evaluate to $w + c \bm{z} \neq w' + c' \bm{z}$, then there is no {\em collision} and we would have $|\calS(\vec{y}\circ\bm{z})| = |C| \cdot |\calS(\vec{y})|$. 
In general $|C|\cdot |\calS(\vec{y})|-|\calS(\vec{y}\circ \bm{z})|$ is at most the number of collisions.

To bound the number of collisions, we observe that a necessary condition 
  for $w+c\bm{z}=w'+c'\bm{z}$ to happen  is
  $|w-w'|\le 2dM$ since $c, c' \in C = \{0, \pm 1, \dots, \pm d\}$ and $\bm{z}\in [0: M - 1]$.
Below we bound the number of pairs $(w,c)\ne (w',c')$ such that $|w-w'|\le 2dM$:
    \begin{flushleft}\begin{itemize}
        \item Clearly, there are at most $|\calS(\vec{y})|$ possibilities for $w$ and at most $|C|^2$ possibilities for $c$ and $c'$.
        
        \item 
        By assumption,  $\calS(\vec{y}) \setminus \calB(\vec{y})$ does not occupy a $\gamma$ fraction of any length-$M$ interval.\ignore{so no length-$M$ interval has density $\gamma$ in $\bm{S}_{\ell - 1} \setminus \bm{B}_{\ell - 1}(\delta)$} Hence the interval $[w - 2dM, w + 2dM]$ contains at most $4d M \cdot \gamma$ many  elements in $ \calS(\vec{y})\setminus \calB(\vec{y})$. Moreover, this interval trivially has at most $|\calB(\vec{y})|$ many elements in $\calB(\vec{y})$. So in total there could be at most $4dM \cdot \gamma + |\calB(\vec{y})|$ elements $w' \in \calS(\vec{y})$ such that $|w-w'|\le 2dM$.
    \end{itemize}\end{flushleft}
As a result, the number of pairs $(w,c)\ne (w',c')$ with 
  $|w-w'|\le 2dM$ is at most
$$
|\calS(\vec{y})|\cdot |C|^2\cdot \Big(2|C| M \cdot \gamma + |\calB(\vec{y})|\Big).
$$
Then \eqref{eq:no_collision} follows as each pair leads to a 
  collision with probability at most $1/M$ over $\bm{z}\sim [0:M-1]$.
    
Using \Cref{lem:bound_badsolns} and our choice of $m$, we have 
    \begin{align}
        \label{eq:main:B}
        |\calB(\vec{y})|
        \leq |C|^m\cdot 2|C|^{-2\tau n}
        \le M\cdot 2|C|^{-\tau n}
        < |C| M \cdot \gamma.
    \end{align}
    Combining \eqref{eq:no_collision} and \eqref{eq:main:B} together (and using our choice of $\gamma$ in \eqref{eq:gamma}), we have
    \begin{align*}
        \Ex_{\bm{z}}\bigg[|C| - \frac{|\calS(\vec{y}\circ \bm{z})|}{|\calS(\vec{y})|} \ignore{\biggm| \bm{\X}_{\ell - 1} = 0}\bigg]
        \leq \Big(2|C|M \cdot \gamma + |\calB(\vec{y})|\Big) \cdot \frac{|C|^2}{M}
        \leq 3|C|^{3} \cdot \gamma
        = \frac{|C| \cdot \tau^{2}}{16}.
    \end{align*}
Given that the random variable in the expectation is nonnegative, we have 
  from Markov that
    \begin{align*}
         \Prx_{\bm{z}}\bigg[\frac{|\calS(\vec{y}\circ \bm{z})}{|\calS(\vec{y})|} < (1 - \tau / 4)|C|  \ignore{\biggm| \bm{\X}_{\ell - 1} = 0}\bigg]
        \le  \frac{\tau}{4}.
    \end{align*}
The claim follows. 
\end{proof}    

This finishes the proof of the Dense Interval Lemma.
\end{proof}

Next we prove the lemma for Step 2:
\begin{lemma}[Interval Shifting Lemma]
\label{lem:lower_bound_on_sum}
Let $\vec{x}\in [0:M-1]^m$ be such that $\calS(\vec{x})\setminus \calB(\vec{x})$ occupies at least $\gamma$-fraction of some length-$M$
 interval.
Let $\vec{\by}\sim [0:M-1]^{m'-m}$.
Then with probability at least $1-e^{-\Omega(n)}$,
  $\calS(\vec{x}\circ\vec{\by})$ occupies $\gamma$-fraction of some
  length-$M$ interval that contains $\tau$.
\end{lemma}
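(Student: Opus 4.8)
The plan is to turn the statement into a one-dimensional ``steering'' problem. Write $I$ for the length-$M$ interval guaranteed by the hypothesis and $I' := I \cap (\calS(\vec{x}) \setminus \calB(\vec{x}))$ for its $\gamma$-dense subset, so $|I'| \ge \gamma M$. The key reduction is the sumset identity $\calS(\vec{x} \circ \vec{\by}) = \calS(\vec{x}) + \calS(\vec{\by})$, which holds because a coefficient vector for $\vec{x}\circ\vec{\by}$ splits as $\vec{c}\circ\vec{c}'$ and contributes $\vec{c}\cdot\vec{x} + \vec{c}'\cdot\vec{\by}$. Hence it suffices to produce a single $\vec{c}' \in C^{m'-m}$ with $\vec{c}'\cdot\vec{\by} \in \tau - I$: then $I' + \vec{c}'\cdot\vec{\by} \subseteq \calS(\vec{x}\circ\vec{\by})$ occupies a $\gamma$-fraction of the length-$M$ interval $I + \vec{c}'\cdot\vec{\by}$, and this interval contains $\tau$ precisely because $\vec{c}'\cdot\vec{\by}\in\tau-I$.

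The first observation is that the target interval $J := \tau - I$ lies close to the origin. Every element of $\calS(\vec{x})\setminus\calB(\vec{x})$ has absolute value less than $\eps n M/8$ by definition of $\calB$, and $I'$ is nonempty, so $I$ contains some point $p$ with $|p| < \eps n M/8$; then $\tau - p \in J$, so the distance from $0$ to $J$ is at most $|\tau| + |p| < \eps n M / 7$ for $n$ large, using $|\tau| = o(nM)$.

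The heart of the proof is an adaptive greedy argument producing such a $\vec{c}'$ with probability $1 - e^{-\Omega(n)}$. I would reveal $\by_1, \by_2, \dots$ one at a time, maintaining the partial sum $\sigma_\ell := \sum_{j \le \ell} c_j \by_j$, and at each step choose $c_\ell$ to push $\sigma$ toward $J$: if $\sigma_{\ell-1} \in J$ we stop and set all remaining coefficients to $0$; otherwise, if $\sigma_{\ell-1}$ lies below $J$, we take $c_\ell\in\{1,\dots,d\}$ as large as possible subject to $\sigma_{\ell-1}+c_\ell\by_\ell\le\max J$ (and symmetrically when $\sigma_{\ell-1}$ lies above $J$). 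The crucial point is that since $\by_\ell\le M-1$ equals the diameter of $J$, a single step can never ``clear'' $J$: a short case analysis shows that either $\sigma_\ell$ lands inside $J$, or $c_\ell=d$ and the gap from $\sigma$ to $J$ decreases by exactly $d\by_\ell$. So, as long as we have not landed, the gap equals its initial value minus $\sum_{j\le\ell}d\by_j$; hence once this running sum exceeds the initial gap we must have landed. Since the $\by_j$ are i.i.d.\ uniform on $[0:M-1]$, the sum has expectation $\approx dM\ell/2$, which overtakes the initial gap (less than $\eps nM/7$) with room to spare once $\ell$ is somewhat more than $2\eps n/(7d)$; a Hoeffding bound then bounds the failure probability by $e^{-\Omega(n)}$, and since $m'-m>\eps n/3>2\eps n/(7d)$ for every $d\ge1$, the available supply of fresh elements is enough. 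All coefficients used lie in $C_0(d)$.

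The main obstacle is this last step, and specifically the constant bookkeeping: the margin is tight, so one must work with the expected per-step progress $\approx dM/2$ together with Hoeffding concentration (a cruder estimate that only counts the steps in which $\by_\ell$ is at least a constant fraction of $M$ would not fit inside the $\eps n/3$ budget for small $d$), and one must check that the gap between $2\eps n/(7d)$ and $\eps n/3$ comfortably absorbs the $O(M\sqrt{n})$ concentration error as well as the loss from the ``last'' step. The remaining ingredients — the sumset identity, the fact that a jump of size at most $|J|-1$ cannot skip over $J$, and the density bookkeeping at the end — are elementary.
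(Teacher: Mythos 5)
Your proposal is correct and is essentially the paper's own argument: both reduce the lemma to steering a running sum of the fresh elements $\by_i$ into a window of width $M$ at distance at most $\eps nM/7$ from the origin, observe that a step of size less than $M$ cannot jump over such a window, and apply Hoeffding to $\sum_i \by_i$ (using $m'-m \ge \eps n/3$) to guarantee enough total displacement. The only difference is presentational — the paper simply assigns the coefficient $-1$ to a prefix of the $\by_i$'s and stops at the first overshoot, so your adaptive greedy with coefficients up to $d$ is not needed; the margin between $\eps n(M-1)/6$ and $\eps nM/7$ already suffices with unit coefficients.
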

\begin{proof}
Let $I=[\alpha,\alpha+M]\subseteq [-\eps nM/8,\eps nM/8]$ be a length-$M$ interval 
  such that $J:= I\cap (\calS(\vec{x})\setminus \calB(\vec{x}))$ satisfies 
  $|J|\ge \gamma M$.
Assume without loss of generality that $\tau\notin I$ and $\tau<\alpha$.
As $|\tau|=o(Mn)$, we have $0<\alpha-\tau<\eps nM/7$.
For $\by_1,\ldots,\by_{m'-m}\sim [0:M-1]$,
it follows from Hoeffding inequality (note that the sum has expectation $(m'-m)(M-1)/2\ge (M-1)\cdot \eps n/6$ as $m'-m \ge \eps n/3$) that
$$
\sum_{i\in [m'-m]}\by_i\ge \eps nM/7>\alpha-\tau.
$$
When this happens, we let $\ell$ be the smallest index such that 
$$
\beta:=\sum_{i\in [\ell]}\by_i >\alpha-\tau  
$$
so $\beta-M\le \alpha-\tau$.
On the one hand $\calS(\vec{x}\circ \vec{\by})$ occupies at least  $\gamma$-fraction
  of $[\alpha-\beta,\alpha-\beta +M]$ since it contains $J$ after shifting down by $\beta$.
On the other hand, we have
  $\tau>\alpha-\beta$ and $\tau\le \alpha-\beta+M$.
This finishes the proof of the lemma.
\end{proof}

Now we are ready to conclude the proof of \Cref{thm:optprecisionC0}. 
Combining \Cref{lem:main} and \Cref{lem:lower_bound_on_sum} we have with probability
  at least $1-e^{-\Omega(n)}$ over $\vec{\bx}\sim [0:M-1]^{m'}$,
  $\calS(\vec{\bx})$ contains at least $\gamma$-fraction of
  a length-$M$ interval that contains $\tau$. 
Fix such a $\vec{x}\in [0:M-1]^{m'}$.
We draw the next $n-m'=\Omega(n)$ random elements $\bx_{m'+1},\ldots,\bx_n\sim
  [0:M-1]$.
 {For each such $\bx_i$, there is at least a $\gamma$ probability that 
   either $w+\bx_i=\tau$ or $w-\bx_i=\tau$ for some $w\in \calS(\vec{ x})$.
When this happens for some $i$, we get a solution by setting coefficients $c_1,\ldots,c_{m'}$
  to achieve $w$ in the sum over $\vec{x}$, setting $c_i$ to be either $1$ or $-1$
  accordingly, and setting every other $c_{m'+1},\ldots,c_n$ to be $0$.
The probability that all $n-m'$ such $\bx_i$'s fail to yield a solution is at most $(1 - \gamma)^{\Omega(n)} = e^{-\Omega(n)}.$} 
(Note that 
  the same proof works for the case when $\tau=0$ since during Step 3, we always get
  a not-all-zero solution with at least one $\{\pm 1\}$-coefficient when the event happens for some $\bx_i$.)
\ignore{\rnote{This was ``\gray{Drawing $\Theta_n(1/\gamma) = \Theta_n(1)$ elements ensures that we create a solution with high probability. For $i \leq \lfloor \delta n \rfloor$, let $\bm{\X}_i$ be an indicator random variable that is $1$ if drawing the $i$th extra input creates a collision with $\tau$ when added to our dense interval. As $\E[\bm{\X}_i] \geq \gamma /2$, using a Chernoff bound establishes that this event occurs with probability
    \[
        \Pr\Big[\sum_{1 \leq i \leq \lfloor \delta n \rfloor} \bm{\X}_i < \frac{1}{2} \cdot \frac{\gamma}{2} \cdot \lfloor \delta n \rfloor\Big] \leq \exp\Big(-\frac{\gamma}{24} \cdot \lfloor \delta n \rfloor\Big) = e^{-\Omega(n)}.
    \]
    }'' but I think we don't need a Chernoff bound, right?}}\ignore{At this point, we can assign any remaining elements the 0 coefficient to create a solution.}
    
    Taking a union bound over the failure probability at each of the three steps, we get that there is a solution with probability  $1 - e^{-\Omega(n)}$ when $M\le |C|^{(1-\eps)n}$.
    This finishes the proof of \Cref{thm:optprecisionC0}.

\section{Algorithmic Results}
\label{sec:algorithm}

This section proves \Cref{thm:main}.
Using structural results from the previous section, we start by showing that
  it suffices to design an algorithm for the core case when $M\in|C|^{(1\pm \eps)n}$
  for a small positive constant $\eps$.

\subsection{Reduction to the Core Problem}

We start with some notation. 
Given $C=C(d)$ or $C_0(d)$ for some fixed $d\in \mathbb{N}_{\ge 1}$,
  a \emph{solution profile} $\pi=(\pi_w)_{w\in C}$ is a tuple of 
  nonnegative integers that sum to $n$.
For an input vector $\vec{x}=(x_1,\ldots,x_n)$, a target offset $\tau$ and 
  a (solution) profile $\pi$,
  we say $\vec{c}\in C^n$ is a solution to GSS under the profile $\pi$ 
  if $\vec{c}\cdot \vec{x}=\tau$ and the number of occurrences of $w$ in vector $\vec{c}$ is $\pi_w$
  for each coefficient $w \in C$.

Our goal of this section is to prove the following theorem 
  for the problem of solving GSS under a given profile:
  
\begin{theorem}\label{finalfinalmain}
Fix any $d\in \mathbb{N}_{\ge 1}$ and let $C=C(d)$ or $C_0(d)$. 
For any sufficiently small constant $\xi>0$,
  there is a constant $\eps>0$ and a randomized algorithm with running
  time $|C|^{\Lambda(|C|)n+\xi n}$ that has the following performance guarantee.
Given any $M\in|C|^{(1\pm \eps)n}$, $\tau$ with $|\tau|=o(nM)$ and a profile $\pi$,
  the algorithm succeeds\footnote{Similar to the original GSS problem,
  an algorithm fails if it returns ``no solution'' but indeed there is a solution
  under the given profile $\pi$.} on $(M,\tau,\pi,\vec{\bx})$ with probability at least
  $1-e^{-\Omega(n)}$ (over the draw of $\vec{\bx}\sim [0:M-1]^n$ and the randomness
  of the algorithm).
\end{theorem}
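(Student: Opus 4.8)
The plan is to run a constant-depth version of the representation technique, made provable on uniformly random inputs by two new ingredients---a preprocessing step that reduces every profile to one containing the $0$-coefficient, and a signature-distribution lemma (\Cref{lem:signature-distribution}) replacing the customary heuristic assumptions---with the intermediate lists generated, sampled from, and backtracked through by dynamic programming.

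\textbf{Step 1 (reduction to a profile with many zeros).} Given $(M,\tau,\pi,\vec{\bx})$ with $M\in|C|^{(1\pm\eps)n}$, pick a coefficient $w^\star\in C$ that occurs at least $n/|C|$ times in $\pi$ (pigeonhole). The map $\vec{c}\mapsto\vec{c}-w^\star\vec{1}$ is a bijection from $\pi$-solutions of $(\vec{x},\tau)$ over $C$ onto $\pi'$-solutions of $(\vec{x},\,\tau-w^\star\sum_i x_i)$ over $C':=C-w^\star$, where $\pi'$ is the corresponding shift of $\pi$, so we may assume henceforth that $0\in C$, that $C\subseteq[-2d:2d]$, and that $\pi$ assigns the $0$-coefficient to at least a $1/|C|$ fraction of the coordinates. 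If moreover $n-\pi_0=o(n)$ then the entire profile-$\pi$ search space has size $2^{o(n)}$ and is brute-forced, so assume $n-\pi_0=\Omega(n)$. Throughout we condition on $\vec{\bx}$ (together with the moduli drawn during the run) being \emph{generic}, i.e.\ on the conclusion of \Cref{lem:signature-distribution} holding---an event of probability $1-e^{-\Omega(n)}$ revisited in Step 4.

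\textbf{Step 2 (representation tree and the runtime optimization).} Fix a depth $t=t(\xi)=O(1)$. A hypothetical $\pi$-solution $\vec{c}$ is the root of a binary tree of depth $t$: the $2^j$ nodes at level $j$ are partial candidate solutions in $C^n$ using a prescribed level-$j$ profile (obtained by splitting the nonzero part of $\pi$ into $2^j$ near-equal pieces and padding with $0$'s), and the two children of a node sum to it coordinatewise. Since $0\in C$, a coordinate in which a node equals $v$ may be split as $(v,0)$ or $(0,v)$---and, when $v',v-v'\in C$, as $(v',v-v')$---so a level-$(j-1)$ node has at least $2^{(1-\pi_0/n)n/2^{j-1}}=2^{\Omega(n)}$ representations as a sum of two level-$j$ children, the root in particular having $2^{\Omega(n)}$ level-$1$ representations. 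The level-$j$ split fractions and the depth $t$ are chosen to minimize the running time (with fractions quantized to multiples of $\xi$); a case analysis over $\pi$ shows this minimum equals $|C|^{\Lambda(|C|)n+O(\xi)n}$ for every profile, the two branches of the maximum defining $\Lambda$ arising from two regimes of the optimization.

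\textbf{Step 3 (signatures and dynamic programming).} To each partial candidate solution $\vec{c}'$ we attach the signature $\sig(\vec{c}')=\vec{c}'\cdot\vec{x}\bmod q$, where the modulus $q$ used at a given level is drawn uniformly from a suitable range of size about $2^{-\xi n}$ times the number of representations at that level. If $\vec{c}=\vec{c}^{(1)}+\vec{c}^{(2)}$ then $\sig(\vec{c}^{(1)})+\sig(\vec{c}^{(2)})\equiv\vec{c}\cdot\vec{x}\pmod q$, so committing to a uniformly random residue $r$ and keeping only the children whose signatures are $r$ and $\vec{c}\cdot\vec{x}-r$ respectively shrinks each list to about (its size)$/q$ and, because at least $2^{\xi n}$ representations survive in expectation and \Cref{lem:signature-distribution} supplies concentration, retains at least one representation of every genuine solution with probability $1-e^{-\Omega(n)}$. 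A list is never stored in full: at each level we generate the partial solutions lying in a target signature class by recursively splitting the coordinate set and meeting in the middle, driven by a dynamic program over the coordinates whose state records (index, partial profile consumed, partial sum modulo $q$), so that only the residues actually attained are materialized; the resulting counts also let us sample partial solutions uniformly (e.g.\ to trim an overfull list) and, once the top-level meet-in-the-middle produces $\vec{c}^{(1)}+\vec{c}^{(2)}$ with $\vec{c}^{(1)}\cdot\vec{x}+\vec{c}^{(2)}\cdot\vec{x}=\tau$ and compatible profiles, backtrack to output the full coefficient vector.

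\textbf{Step 4 (correctness, running time, and the main obstacle).} The algorithm caps every list at its budgeted size and aborts with ``no solution'' if a cap is exceeded, so the running-time bound $|C|^{\Lambda(|C|)n+O(\xi)n}$ holds unconditionally; halving $\xi$ at the outset absorbs the $O(\cdot)$, the $\poly(n)$ overheads, and the $2^{\xi n}$ inflation caused by the gentle filtering. Since the algorithm only ever outputs a valid solution, ``failure'' means it outputs ``no solution'' while a $\pi$-solution exists, and this happens only if $\vec{\bx}$ is not generic (probability $e^{-\Omega(n)}$) or if, on the generic event, a genuine $\pi'$-solution fails to survive the signature filtering at some level or is missed by the final join---both $e^{-\Omega(n)}$ by Step 3 and the exactness of the top-level meet-in-the-middle; summing gives the claimed $1-e^{-\Omega(n)}$. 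The main obstacle is \Cref{lem:signature-distribution}: proving, using only the uniform randomness of $\vec{\bx}$ and one random modulus per level---rather than the ``planted''-style regularity available in the cryptographic setting---that at every level the partial candidate solutions spread nearly uniformly over the $q$ signature classes. This is a second-moment argument whose crux is to bound, uniformly over the exponentially many partial solutions and all $t$ levels, the number of pairs of distinct partial solutions $\vec{c}',\vec{c}''$ for which $(\vec{c}'-\vec{c}'')\cdot\vec{x}$ lands in a short interval---these being exactly the pairs that would correlate signatures and overpopulate a class; a secondary difficulty is keeping the generating-and-sampling dynamic program within budget despite the exponential modulus $q$.
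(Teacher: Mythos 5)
Your overall architecture (translate the coefficient set so $0$ is available, filter partial candidate solutions by signature modulo a random modulus, generate and sample the filtered lists by a dynamic program over (index, partial profile, residue), and finish with Meet-in-the-Middle) matches the paper's. But there are genuine gaps. First, the load-bearing step — the Signature Distribution Lemma — is explicitly left unproven in your write-up, and the version you sketch is not the one the paper needs or proves. The paper's \Cref{lem:signature-distribution} is a statement about a \emph{single} level: with probability $1-e^{-\Omega(n)}$ over $\vec{\bx}$, either no solution partition $\mathbb{S}$ exists or some solution $\mathbb{S}$ has half-partitions with $\Omega(a)$ distinct signatures. Its proof hinges on translating by a \emph{nonzero} $z\in C$: for a fixed $\mathbb{S}$, the event ``few distinct half-partition signatures'' depends only on $\vec{x}_{[n]\setminus S_0}$, while the solution equation $\sig(\mathbb{S},\vec{x})=\tau-z\sum_i x_i$ still depends on $\vec{x}_{S_0}$ (because $z\neq 0$), which buys an extra factor of $1/M$ and makes the union bound over the $|C|^n$ partitions close. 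Your Step 1 picks $w^\star$ by pigeonhole and allows $w^\star=0$ (for $C_0(d)$ with $\pi_0\geq n/|C|$ this is forced), in which case that decoupling is unavailable. The paper instead proves a separate lemma showing that any profile with $\pi_0\ge(1/|C|+\delta)n$ admits no solution with probability $1-e^{-\Omega(n)}$, so that one may always translate by a nonzero $z$ while keeping $|\sigma|\le(1-1/|C|+O(\delta))n$. You do not address this case at all.

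Second, your depth-$t$ representation tree demands signature equidistribution of the partial candidate solutions at \emph{every} level, which is exactly the heuristic assumption the paper is at pains to avoid; a single-solution, single-level lemma does not supply it, and you give no argument for the intermediate levels. The paper sidesteps this entirely by using depth one: a solution is split only once into matching pairs of half-partitions, hashed modulo one random prime $\bm{p}\in[P,2P]$ with $P=\min(b^{2/3}a^{-1/3},\,a/n)$, and the two branches of $\Lambda$ come from whether the paired buckets are searched exhaustively (runtime $b/a$) or birthday-paradox subsampled (runtime $b^{2/3}a^{-1/3}$) — not from ``two regimes of a tree-depth optimization,'' which you assert without computing. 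As written, neither the claimed runtime $|C|^{\Lambda(|C|)n+O(\xi)n}$ nor the $1-e^{-\Omega(n)}$ success probability is established.
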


We first use
  \Cref{finalfinalmain} and our structural results 
  to prove \Cref{thm:main}; the proof of \Cref{thm:dense-instances}
  is similar and can be found in \Cref{sec:appendix2}.
The intuition is that if $M\ge |C|^{(1+\eps)n}$ then 
  most likely there is no solution but if $M\le |C|^{(1-\eps)n}$,
  then we can reduce the instance size to $n'$ so that $M$ falls inside
  the window $|C|^{(1\pm \eps)n'}$ and the algorithm of 
  \Cref{finalfinalmain} applies.
  
\begin{proof}[Proof of \Cref{thm:main} assuming \Cref{finalfinalmain}]
Fix a constant $\zeta > 0$ as in the statement of \Cref{thm:main}. We will show an algorithm for average-case GSS with running time $|C|^{\Lambda(|C|)n + \zeta n}$, assuming \Cref{finalfinalmain}. Set $\xi \leq \zeta / 2$ to be sufficiently small that \Cref{finalfinalmain} holds, and let $\eps>0$ be the constant determined by $\xi$ in \Cref{finalfinalmain}. We begin by considering $C = C_0(d)$ and split into three cases based on the size of $M$.

First, if $M\ge |C|^{(1+\eps)n}$ then it follows from 
  \Cref{thm:optprecisionC0} that the probability of having a solution is $e^{-\Omega(n)}$
  and thus returning ``no solution'' achieves overall success probability
  $1-e^{-\Omega(n)}.$
Next, if $M\in|C|^{(1\pm \eps)n}$ then we run the algorithm in \Cref{finalfinalmain}
  for every profile $\pi$ and return any solution it finds;
  we return ``no solution'' if no solution is found for any profile $\pi$.
Given that there are only polynomially many profiles (since $d$ is fixed),
  the success probability remains $1-e^{-\Omega(n)}$ by a union bound
  and the running time only goes up by a polynomial factor.

Finally we consider the case when $M\le |C|^{(1-\eps)n}$ but $M=|C|^{\Omega(n)}$.
Let $n'=\Omega(n)$ be an integer such that $M$ is between $|C|^{(1-\eps)n'}$ and $|C|^{(1-\eps/2)n'}$, and note that $\tau$ still satisfies
  $|\tau|=o(n'M)$ given $n'=\Omega(n)$.
We run the algorithm in \Cref{finalfinalmain} on the first $n'$ input integers
  $\vec{\bx}'=(\bx_1,\ldots,\bx_{n'})$
   and try all possible profiles.
It follows from \Cref{thm:optprecisionC0} that there is a solution
  with probability at least $1-e^{-\Omega(n')}=1-e^{-\Omega(n)}$ and thus
  a solution will be found by the algorithm of \Cref{finalfinalmain}
  with probability $1-e^{-\Omega(n)}$.
Given that $C=C_0(d)$, any solution to $\vec{\bx}'$ can be extended to 
  obtain a solution to $\vec{\bx}$ by assigning the 0 coefficient to the remaining inputs.
So in this case our algorithm finds a solution to $\vec{\bx}$ with
  probability $1-e^{-\Omega(n)}$. 


Second, consider $C=C(d)$.
The first two cases, $M \geq |C|^{(1+\eps)n}$ and $M \in |C|^{(1 \pm \eps)n}$, are similar, except that in the second case by answering
  ``no solution'' we achieve only a success probability of $1-o_n(1)$ by 
  \Cref{corr:optprecisionC1} and \Cref{thm:optprecisionC}; the success probability
  in the first case remains $1-e^{-\Omega(n)}$. 
The main difference occurs in the last case, 
  as we can no longer extend a solution
  to $\vec{\bx}'$ to a solution to $\vec{\bx}$ by setting the coefficients of
  every other input integer to be $0$.
Instead, we can shrink the input instance as follows. Reveal the last $n - n'$ input integers one by one. For each input $\bx_i$, $i \in [n'+1:n]$, perform the following operation: if the current target $\tau$ is positive, assign $-1$ to $\bx_i$ and subtract it from $\tau$ to create a new target. If the current target $\tau$ is negative, assign $+1$ to $\bx_i$ and add it to $\tau$ to create a new target.
After this procedure, we are left with $n'$ random numbers $\vec{\bx}'=(\bx_1,\ldots,\bx_{n'})$
  and a new offset $\tau'$ with $|\tau'| = o(n'M)$ such that any solution to $(\vec{\bx},\tau')$
  can be extended to a solution to $(\vec{\bx},\tau)$.
By \Cref{thm:optprecisionC},
  there is a solution with probability at least $1-o_n(1)$ over the randomness
  of $\vec{\bx}'$ in the $C = C(d)$, $d > 1$ case.
By \Cref{corr:optprecisionC1}, there is a solution with probability at least $1-o_n(1)$ over the randomness
  of $\vec{\bx}'$ in the $C = C(1)$ case if $\sum_i \bx_i$ has the same parity as $\tau$, as the shrinking procedure preserves the parity of $\sum_i \bx_i - \tau$.
This implies that running the algorithm of \Cref{finalfinalmain} 
  on $\vec{\bx}'$ and $\tau'$ for all profiles $\pi$ finds a solution
  with probability at least $1-o_n(1)$, which can be extended to a solution to $\vec{\bx}$ and $\tau$.
\end{proof}

\subsection{Algorithm Overview}
We give an overview of our algorithmic approach
  for \Cref{finalfinalmain} and the underlying definitions and ingredients required for the proof.
For the rest of the section,
  we consider $d$ to be a fixed integer with $C=C(d)$ or $C_0(d)$.
$\delta>0$ will denote a sufficiently small constant and 
  $\eps := \eps(\delta) >0$ a smaller constant that depends on $\delta$, defined later in the proof.
Our goal is to give an algorithm for GSS with a solution profile $\pi$ that runs in time $|C|^{\Lambda(|C|)n+O(\delta n)}$ and has success probability $1-e^{-\Omega(n)}$ when $M\in|C|^{(1\pm \eps)}$ and $|\tau|=o(nM)$.

\subsubsection{Terminology and Key Notions}

For the rest of the proof, the coefficient set $D := D(z)$ denotes a translation of the coefficient set $C$ by some \emph{nonzero} $z\in C$, i.e., $D=\{w - z: w \in C\}$.
Note that
  we always have $0\in D$ whether $C=C_0(d)$ or $C(d)$.

Given $D=D(z)$ for a certain nonzero $z\in C$, 
  we define \emph{input partitions} and \emph{half partitions}:

\begin{definition}
A \emph{solution profile} with respect to $D$
  is a tuple of nonnegative integers $(\sigma_w)_{w \in D}$
  that sum to $n$.
An \emph{input partition} of $[n]$ with respect to $D$ is a tuple of 
  pairwise disjoint sets $\mathbb{S} := (S_w)_{w \in D}$ with union $[n]$.
We say $\mathbb{S}$ corresponds to the solution profile 
  $\sigma$ if $|S_w| = \sigma_w$ for all $w \in D$.
The \emph{size} of a solution profile $\sigma$ is 
  given by $|\sigma|:=n-\sigma_0$; similarly the \emph{size}
  of an input partition $\mathbb{S}$ is given by 
  $|\mathbb{S}|:=\sum_{w \in D\setminus \{0\}} |S_w|$.
\end{definition}




\begin{definition}[Half-partitions]
Given a solution profile $\sigma$ with respect to $D$,
  a  \emph{half-partition} that corresponds to $\sigma$ is
  a tuple of sets $\mathbb{T}:=(T_w)_{w \in D\setminus \{0\}}$
  that satisfies the following conditions:
    \begin{enumerate}
        \item $|T_w| = \sigma_w/2$ for $w \in D \setminus \{0\}$ if $\sigma_w$ is even.
        \item $|T_w| = (\sigma_w + 1)/{2}$ or $(\sigma_w - 1)/{2}$ for $w \in D \setminus \{0\}$ if $\sigma_w$ is odd.
    \end{enumerate}
Given an input partition $\mathbb{S}$ that corresponds to $\sigma$,
  we say $\mathbb{T}$ is a half-partition of $\mathbb{S}$ if
  $T_w \subseteq S_w$ for all $w \in D\setminus \{0\}$.
Two half-partitions $\mathbb{T} = (T_w)_{w \in D \setminus \{0\}}$ and $\mathbb{T}':= (T'_w)_{w \in D \setminus \{0\}}$
  of $\mathbb{S}$ are called a \emph{matching pair} of $\mathbb{S}$ if
  for every $w \in D \setminus \{0\}$, $T_w$ and $T'_w$
  are disjoint and their union is $S_w$.

Given a half-partition $\mathbb{T} = (T_w)_{w \in D \setminus \{0\}}$ and an input partition $\mathbb{S}=(S_w)_{w \in D}$,
  their \emph{signatures} under the input vector $\vec{x}$ are defined as 
$$
\sig(\mathbb{T},\vec{x}) := \sum_{w \in D\setminus \{0\}} w \cdot \Big(\sum_{i\in T_w} x_i\Big)
\quad\ \text{and}\quad\ 
\sig(\mathbb{S},\vec{x}) := \sum_{w \in D\setminus \{0\}} w \cdot \Big(\sum_{i\in S_w} x_i\Big).$$
\end{definition}

%

For a fixed input partition $\mathbb{S}$, we let $A(\mathbb{S})$ denote the set of all half-partitions of $\mathbb{S}$. 
Let $B(\sigma)$ denote the set of all half-partitions that correspond to $\sigma$. 
For brevity, we write $a := a(\sigma) = |A(\mathbb{S})|$,
as $a$ is the same for all $\mathbb{S}$ corresponding to $\sigma$, and $b :=b(\sigma)= |B(\sigma)|$ when $\sigma$ is clear from context.

\subsubsection{High-Level Algorithm Sketch }
\label{subsubsec:sketch}

Let $\pi$ be the input solution profile with respect to $C=C_0(d)$ or $C(d)$.
Our algorithm starts by picking a nonzero $z\in C$ and translating the coefficient set to $D=C-z$.
We can use $\pi$ to induce a solution profile $\sigma$ with respect to $D$,
  where $\sigma_w = \pi_{w + z}$ for each $w \in D$.
  Our goal is now to find an input partition $\mathbb{S}$
  (with respect to $D$) that corresponds to $\sigma$ such that
\begin{equation}\label{eq:newgoal}
\sig(\mathbb{S},\vec{x})=\tau-z\cdot \sum_{i\in [n]} x_i.
\end{equation}
 
We set $z = \max_{z' \in C\setminus \{0\}} \pi_{z'}$, in order to minimize the size $|\sigma|$ of solution profile $\sigma$. 
This is natural because it allows us to focus on small-size
  candidate solutions (input partitions) after the translation.   
When $C=C(d)$, there exists a $z\in C$
  such that the translated profile $\sigma$ has size $|\sigma|\le \frac{|C|-1}{|C|}n$ by the pigeonhole principle.
And when $C=C_0(d)$, this is not possible in general 
  because $\pi_0$ can be large. (Looking ahead, we require $z \neq 0$ for our proof of \Cref{lem:signature-distribution}).
To this end, we prove the following technical lemma, which states that if $\pi_0\ge (\frac{1}{|C|}+\delta)n$ then there is no solution with probability at least $1-e^{-\Omega(n)}$.
This allows to focus on the case when $\pi_0<(\frac{1}{|C|}+\delta)n$, in which case we can choose $z \in C$ such that $\sigma$ satisfies
\begin{equation}\label{eq:sizebound}
|\sigma|\le \left(\frac{|C|-1}{|C|}+\frac{\delta}{|C|-1}\right)n.
\end{equation}

\begin{lemma}
Suppose $M\in|C|^{(1\pm \eps)n}$ for some sufficiently small constant 
  $\eps = \eps(\delta) >0$.
If $\pi_0\ge (\frac{1}{|C|}+\delta)n$,
  then $\vec{\bx}\sim [0:M-1]^n$ has no solution with probability 
  at least $1-e^{-\Omega(n)}$.
\end{lemma}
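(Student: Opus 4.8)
The plan is a first-moment (union-bound) argument: a large $\pi_0$ forces the number of coefficient vectors realizing the profile $\pi$ to be exponentially smaller than $|C|^n$, while each fixed such vector solves a given random instance with probability only $\approx 1/M \le |C|^{-(1-\eps)n}$, so a union bound over all of them leaves an exponentially small total probability.

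First I would count the vectors $\vec{c}\in C^n$ that realize $\pi$: there are exactly $\binom{n}{(\pi_w)_{w\in C}}$ of them, which by \eqref{eq:stirling} is $\Theta^*\!\big(2^{H((\pi_w/n)_{w\in C})\,n}\big)$. The crux is to show the entropy $H((\pi_w/n)_w)$ is bounded away from $\log_2|C|$ once $\pi_0/n$ is bounded away from $1/|C|$. Writing $p=\pi_0/n$ and singling out the coordinate $w=0$ via the grouping inequality for entropy, $H((\pi_w/n)_w)\le H(p)+(1-p)\log_2(|C|-1)=:f(p)$. Since $f$ is strictly concave with $f'(p)=\log_2\tfrac{1-p}{p(|C|-1)}$, it is maximized at $p=1/|C|$, where a short computation gives $f(1/|C|)=\log_2|C|$, and it is strictly decreasing for $p>1/|C|$. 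Hence $p\ge\tfrac1{|C|}+\delta$ yields $H((\pi_w/n)_w)\le f\!\big(\tfrac1{|C|}+\delta\big)=\log_2|C|-\delta_0$ for a constant $\delta_0=\delta_0(\delta)>0$ (recall $|C|$ is fixed), so the count of such vectors is $O^*\!\big(|C|^n 2^{-\delta_0 n}\big)$.

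Next I would union-bound over these vectors. If $\pi_0=n$, the unique candidate is $\vec{0}$, which is never a valid solution (disallowed when $\tau=0$ for $C_0(d)$, and fails $\vec{0}\cdot\vec{x}=\tau$ otherwise), so there is trivially no solution. Otherwise every $\vec{c}$ realizing $\pi$ has a coordinate $i$ with $c_i\ne 0$; conditioning on the values of $\bx_j$ for $j\ne i$, the equation $\vec{c}\cdot\vec{\bx}=\tau$ pins down $\bx_i$, so $\Prx_{\vec{\bx}}[\vec{c}\cdot\vec{\bx}=\tau]\le 1/M$. Summing over the $O^*(|C|^n 2^{-\delta_0 n})$ vectors and using $M\ge|C|^{(1-\eps)n}$ gives a solution probability of $O^*\!\big(2^{(\eps\log_2|C|-\delta_0)n}\big)$. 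Finally, fixing $\eps=\eps(\delta)>0$ small enough that $\eps\log_2|C|\le\delta_0/2$ makes this $e^{-\Omega(n)}$; this is precisely one of the constraints that pins down the constant $\eps(\delta)$ referred to in the surrounding text.

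I do not expect a genuine obstacle here; the only points needing a little care are the entropy estimate $f\big(\tfrac1{|C|}+\delta\big)<\log_2|C|$ (strict concavity of $f$ together with the identity $f(1/|C|)=\log_2|C|$) and the quantifier bookkeeping: $\delta$ is fixed first, then $\delta_0=\delta_0(\delta)$ is determined, then $\eps=\eps(\delta)$ is chosen small relative to $\delta_0$. Note also that the argument uses only the lower bound $M\ge|C|^{(1-\eps)n}$, not the full hypothesis $M\in|C|^{(1\pm\eps)n}$.
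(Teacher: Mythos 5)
Your proposal is correct and follows essentially the same route as the paper: a union bound over the at most $O^*(2^{Hn})$ coefficient vectors realizing $\pi$, the entropy bound $H \le \log_2|C| - \delta_0(\delta)$ when $\pi_0/n \ge 1/|C| + \delta$ (the paper evaluates the multinomial entropy at the extremal profile $(\frac{1}{|C|}+\delta, \frac{1}{|C|}-\frac{\delta}{|C|-1}, \dots)$, which equals your $f(\frac{1}{|C|}+\delta)$ via the grouping identity), the $1/M$ bound per candidate by holding out a nonzero-coefficient input, and finally choosing $\eps$ small relative to $\delta_0$. Your write-up is somewhat more explicit about the concavity/monotonicity of the entropy bound, the $\pi_0=n$ edge case, and the quantifier order, but the argument is the same.
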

\begin{proof}
This follows from taking a union bound over all solutions corresponding
  to $\pi$.
When $\pi_0\ge (\frac{1}{|C|}+\delta)n$, the number of candidate solutions is 
  at most $O^*(2^{hn})$ where 
$$
h:=H\left(\frac{1}{|C|}+\delta, \frac{1}{|C|}-\frac{\delta}{|C|-1}
,\ldots,\frac{1}{|C|}-\frac{\delta}{|C|-1}\right) = \log_2 |C| - f(\delta), 
$$
where $f$ is a nonnegative, increasing function on $[0,1]$. It follows from holding out one input element corresponding to a nonzero coefficient that each candidate solution is a solution with probability
  at most $1/M$.
Taking the union over all candidate solutions, we have that for
  $\eps < f(\delta)$, the probability of having a solution
  under the profile $\pi$ is $e^{-\Omega(n)}$.
\end{proof}

From now on, we consider the translated coefficient set $D$ and solution profile $\sigma$ that satisfy 
  the size bound in \eqref{eq:sizebound}. (Notice that $|D| = |C|$.) Our goal is to find 
  an input partition that satisfies \eqref{eq:newgoal} when it exists.
However, even with its relatively small size,
the profile $\sigma$ typically still corresponds to an exponential number of input partitions.
Assume that $\mathbb{S}$ is an input partition satisfying \eqref{eq:newgoal}.
At a high level, our algorithm searches for $\mathbb{S}$ by
  exploiting the fact that every input
  partition $\mathbb{S}$ with respect to $\sigma$ can be decomposed into a matching pair of half-partitions $(\mathbb{T},\mathbb{T}')$ in many ways. 
\Cref{lem:signature-distribution}, which we refer to as the \emph{Signature Distribution Lemma}, shows that with high probability over $\vec{\bx}\sim [0:M-1]^n$, 
  most half-partitions of $\mathbb{S}$ have distinct signatures.
Assuming this event occurs, we need to recover one matching pair of half-partitions in $A(\mathbb{S})$ from the larger search space $B(\sigma)$. (Note that our bound on the size of $\sigma$ comes into play at  this point: a profile $\sigma$ of smaller size would imply a larger ratio $a / b$ and make our search easier.) 

To recover a solution, we exploit the fact that the signatures of any 
  matching pair sum to $\tau-z\sum_i x_i$ but at the same time, by
  the Signature Distribution Lemma,
  most half-partitions in $A(\mathbb{S})$ have distinct signatures.
We (virtually) hash the half-partitions based on signature by grouping them into residue classes (which we will refer to as buckets) modulo a large random prime $\bm{p}$ with a 
  carefully picked magnitude. After the hash, most buckets contain a small fraction of the search space. We then sample pairs of buckets whose elements sum to $\tau-z\sum_i x_i \pmod{\bm{p}}$ and either subsample from  or exhaustively search over the paired buckets to recover a solution.

\begin{algorithm}[t]
    \SetAlgoLined
    \SetKwFor{RepTimes}{repeat}{times}{end}
    \textbf{Constants:} $C = C(d)$ or $C_0(d)$, sufficiently
      small $\delta>0$ and smaller $\eps=\eps(\delta)>0$ \\
    \KwIn{$M \in |C|^{(1\pm\eps)n}$, $\tau $, 
      $\vec{x} \in [0:M-1]^n$, a profile $\sigma$
      of $D=C-z$ for some nonzero $z\in C$.}
    
    
        Define $a := a(\sigma)$, $b := b(\sigma)$ as in \eqref{eq:a} and \eqref{eq:b}
        
        Define $P := \min({b}^{2/3}a^{-1/3}, a/n)$
        
        \RepTimes{$\poly(n)$}{
            Select a prime $\bm{p} \in [P, 2P]$ and an integer $\bm{r} \in [0:p-1]$ uniformly at random.
            
            If $P = a/n$, let $\bm{L}_1 = \{\mathbb{T} \in B(\sigma) \; | \; \sig(\mathbb{T},\vec{x}) = \bm{r} \pmod{\bm{p}}\}$. If $P = {b}^{2/3}a^{-1/3}$, let $\bm{L}_1$ be\\ \ \ \ \ a subsample of $O^*(b / \sqrt{a\bm{p}})$ elements drawn from this set with replacement.
            
            Let $\bm{L}_2$ be defined as $\bm{L}_1$ with respect to  $\{\mathbb{T} \in B(\sigma) \; | \; \sig(\mathbb{T},\vec{x}) = \tau-z\sum_i x_i-\bm{r} \pmod{\bm{p}}\}$.
        
            Sort $\bm{L}_1$ and $\bm{L}_2$ by signature and use Meet-in-the-Middle to find a disjoint pair $(\mathbb{T}_1, \mathbb{T}_2)$
            \\ \ \ \ \ satisfying $\sig(\mathbb{T}_1,\vec{x}) + \sig(\mathbb{T}_2,\vec{x}) = \tau-z\sum_i x_i$ if one exists.
        }
    \caption{Skeleton Algorithm for GSS}
    \label{alg:ATSS}
\end{algorithm}

\Cref{alg:ATSS} outlines the algorithm in pseudocode. Certain implementation details are deferred to the proofs of correctness (\Cref{prop:correctness}) and runtime (\Cref{prop:runtime}) to simplify the presentation. 



As mentioned earlier, we prove a Signature Distribution Lemma 
(\Cref{lem:signature-distribution}) in \Cref{sec:sig-distribution}
  that shows that when the instance has a solution, the corresponding 
  input partition almost always decomposes into 
  many pairs of half-partitions with distinct signatures. (If this condition is false, most matching pairs of half-partitions hash to the same buckets and the sampling procedure fails.) 

It remains to choose a value for the endogenous parameter $P$ that determines the magnitude of our modulus $\bm{p}$.
Recall that $a := a(\sigma) = |A(\mathbb{S})|$ counts the number of half-partitions of any input partition $\mathbb{S}$  corresponding to $\sigma$. We observe that
\begin{align}
    \label{eq:a}
    a(\sigma) = \Theta^*(2^{|\sigma|}).
\end{align}
Likewise, recall that $b  := b (\sigma)$ denotes the maximum number of 
  half-partitions corresponding to $\sigma$. 
We have $b (\sigma)=O^*(2^{hn})$ where (writing $|\sigma|=\alpha n$)
\begin{align} 
    \label{eq:b}
h := {H\left( 
    \frac{\alpha}{2(|C|-1)},\ldots, \frac{\alpha}{2(|C|-1)},1-\frac{\alpha}{2}\right)}
    =\frac{\alpha}{2}\cdot \log_2 \frac{2(|C|-1)}{\alpha}
      +\left(1-\frac{\alpha}{2}\right)\cdot \log_2 \frac{2}{2-\alpha}.
\end{align}

We prove in \Cref{sec:sig-distribution} that almost all yes-instances have solutions that decompose into half-partitions with $\Omega(a)$ distinct signatures. In this case, we expect most residue classes to contain $\Omega(a/\bm{p})$ half-partitions that are part of matching pairs and $O(b/\bm{p})$ half-partitions in total. Setting $P = a/n$ thus ensures that every pair of residue classes $\bm{r} \pmod{\bm{p}}$ and $\tau -z\sum_i x_i \bm{r} \pmod{\bm{p}}$ contains $\Omega(n)$ matching pairs in expectation. By repeatedly sampling bucket pairs and using the Meet-in-the-Middle algorithm to search for a matching pair, we recover a solution.

In fact, for $|C| \leq 3$, including the case $C = C_0(1) = \{0, \pm 1\}$ that is our primary interest, it is possible to implement a slightly more efficient sampling procedure. In these cases, we set $P = {b}^{2/3}a^{-1/3} < a$, so that each residue class pair contains an exponential number of matching pairs in expectation. After choosing a bucket pair, we then subsample each bucket to create solution lists that are likely to contain at least one matching pair by a birthday paradox argument. We then use the Meet-in-the-Middle algorithm on the subsampled lists to recover a solution as before.

\subsubsection{Section Outline}

In the next subsection we prove the Signature Distribution Lemma.
We prove the correctness of the algorithm in \Cref{prop:correctness},
  where we show that for any $\vec{x}$ that  satisfies the 
    conditions of the Signature Distribution Lemma,
  \Cref{alg:ATSS} recovers $\mathbb{S}$ with high probability.
Finally, the proof of \Cref{prop:runtime} analyzes the runtime of our algorithm and fills in several implementation details omitted from the pseudocode.

\subsection{Signature Distribution Lemma}
\label{sec:sig-distribution}

For our approach to work, we need the half-partitions of some solution $\mathbb{S}$ to have many distinct signatures. While we do not know how to guarantee this for worst-case inputs, the following lemma gives what we need for average-case inputs. Essentially, it says that with very high probability, a random instance $\vec{\bx}\sim [0:M-1]^n$ either has no solutions or has a solution $\mathbb{S}$ such that different half-partitions have many different signatures.

For $C = C_0(1)$, there is a combinatorial argument that shows that signatures are well-distributed in the worst case (see \cite{mucha2019equal} for details).\footnote{In fact, this argument can be generalized to $C_0(d)$. However, it relies crucially on the existence of the $0$ coefficient and a coefficient set that is symmetric about 0, and thus cannot be easily extended to $C(d)$ or translated instances. } We employ a probabilistic argument that holds for general $C$.

\begin{lemma}[Signature Distribution Lemma]
    \label{lem:signature-distribution}
    Let $C=C_0(d)$ or $C(d)$ and $M\in|C|^{(1\pm \eps)n}$ for some constant $\eps>0$. Let 
      $\sigma$ be a solution profile with respect to $D=C-z$ for some
      $z\in C\setminus \{0\}$ with $|\sigma|$ satisfying 
      \[
        |\sigma| \leq \Big(\frac{|C| - 1}{|C|} + \frac{\delta}{|C|-1}\Big)n.
    \]
    With probability $1-e^{-\Omega(n)}$, $\vec{\bx}\sim[0:M-1]^n$
      either has no solution $\mathbb{S}$ to $\sig(\mathbb{S},\vec{\bx})=\tau-z \sum_{i } \bx_i$, or there is a solution $\mathbb{S}$
      such that half-partitions of $\mathbb{S}$ have at least 
      $\Omega(a(\sigma))$ many distinct signatures.
\end{lemma}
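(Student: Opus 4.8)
The plan is to bound directly the probability of the ``bad event'' that $\vec{\bx}$ admits a solution $\mathbb{S}$ to $\sig(\mathbb{S},\vec{\bx})=\tau-z\sum_i\bx_i$ but \emph{every} such solution has fewer than $a/2$ distinct signatures among its half-partitions; call such an $\mathbb{S}$ a \emph{bad solution}. The first ingredient I would use is an elementary counting inequality: if a multiset of $a:=a(\sigma)$ values takes fewer than $a/2$ distinct values, then by Cauchy--Schwarz $\sum_j\binom{m_j}{2}>a/2$, so the number of ordered pairs $(\mathbb{T},\mathbb{T}')$ with $\mathbb{T}\ne\mathbb{T}'$ and equal value exceeds $a$. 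Hence, writing $\bm{B}$ for the number of triples $(\mathbb{S},\mathbb{T},\mathbb{T}')$ with $\mathbb{S}$ an input partition of $[n]$ corresponding to $\sigma$, $\sig(\mathbb{S},\vec{\bx})=\tau-z\sum_i\bx_i$, and $\mathbb{T}\ne\mathbb{T}'\in A(\mathbb{S})$ with $\sig(\mathbb{T},\vec{\bx})=\sig(\mathbb{T}',\vec{\bx})$, the number of bad solutions is at most $\bm{B}/a$. It therefore suffices to prove $\E_{\vec{\bx}}[\bm{B}]\le a\cdot e^{-\Omega(n)}$: then Markov bounds the probability of a bad solution by $e^{-\Omega(n)}$, and on the complementary event every solution (if any exists) has $\Omega(a)$ distinct signatures, as claimed.

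The heart of the argument is to show that for each fixed triple $(\mathbb{S},\mathbb{T},\mathbb{T}')$ of the above form,
\[
\Prx_{\vec{\bx}}\Big[\sig(\mathbb{S},\vec{\bx})=\tau-z{\textstyle\sum_i}\bx_i \ \text{ and } \ \sig(\mathbb{T},\vec{\bx})=\sig(\mathbb{T}',\vec{\bx})\Big]\le M^{-2}.
\]
Both events are linear equations in $\vec{\bx}$. Unpacking the first, $\sig(\mathbb{S},\vec{x})-\tau+z\sum_i x_i=\sum_i (w_i+z)x_i-\tau$, where $w_i\in D$ is the part containing $i$; in particular every index of $S_0$ carries the coefficient $z$, which is \emph{nonzero} by the choice of the translation, and $S_0\ne\emptyset$ because $|S_0|=n-|\sigma|=\Omega(n)$ by the size bound on $\sigma$. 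The second equation equals $\sum_{w\in D\setminus\{0\}} w\big(\sum_{i\in T_w}x_i-\sum_{i\in T'_w}x_i\big)$, whose coefficient on an index $i\in S_w$ is $w([i\in T_w]-[i\in T'_w])\in\{w,0,-w\}$; since $\mathbb{T}\ne\mathbb{T}'$ some $w\ne 0$ has $T_w\ne T'_w$, so there is an index $i_2$ with nonzero coefficient $\pm w$, and necessarily $i_2\notin S_0$. Picking any $i_1\in S_0$, the $2\times2$ coefficient matrix of the two equations restricted to coordinates $(x_{i_1},x_{i_2})$ is upper triangular with diagonal $(z,\pm w)$, hence nonsingular; conditioning on all other coordinates pins $(x_{i_1},x_{i_2})$ to at most one value in $[0:M-1]^2$, giving the $M^{-2}$ bound. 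Summing over the (at most $N\cdot a^2$) triples, where $N$ is the number of input partitions corresponding to $\sigma$, yields $\E[\bm{B}]\le N a^2 M^{-2}$.

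Finally I would check the entropy arithmetic. Writing $|\sigma|=\alpha n$ with $\alpha\le\frac{|C|-1}{|C|}+\frac{\delta}{|C|-1}$, Stirling gives $N=2^{o(n)}\cdot 2^{(H(\alpha)+\alpha\log_2(|C|-1))n}$ (the nonzero parts contribute at most the entropy of the uniform distribution on them), $a=a(\sigma)=\Theta^*(2^{\alpha n})$ by \eqref{eq:a}, and $M\ge|C|^{(1-\eps)n}$, so
\[
\frac{\E[\bm{B}]}{a}\le 2^{o(n)}\cdot 2^{\big(H(\alpha)+\alpha\log_2(2(|C|-1))-2(1-\eps)\log_2|C|\big)n}.
\]
The exponent $H(\alpha)+\alpha\log_2(2(|C|-1))$ is increasing in $\alpha$ throughout the relevant range and equals $\log_2|C|+\frac{|C|-1}{|C|}$ at $\alpha=\frac{|C|-1}{|C|}$, so the bracket is at most $-(\log_2|C|-\frac{|C|-1}{|C|})+2\eps\log_2|C|+O(\delta)$; since $\log_2|C|-\frac{|C|-1}{|C|}\ge\tfrac12$ for every integer $|C|\ge2$, choosing $\delta$ small and then $\eps=\eps(\delta)$ small makes the bracket a negative constant, so $\E[\bm{B}]/a=e^{-\Omega(n)}$ and the proof is complete.

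I expect the second step to be the main obstacle: a first-moment bound using only the trivial $\Pr\le M^{-1}$ per pair is too weak (the exponent arithmetic already fails at $|C|=2$), so one genuinely needs the gain of a full factor of $M$ from the fact that the two linear constraints are simultaneously non-degenerate in two coordinates --- and this is precisely where the translation to a coefficient set $D$ with $z\ne0$ and $0\in D$, together with the guarantee $S_0\ne\emptyset$ coming from the size bound \eqref{eq:sizebound}, are indispensable.
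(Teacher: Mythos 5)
Your proposal is correct and is essentially the paper's own argument recast in probabilistic rather than counting language: your triple-counting plus Markov step is the paper's ``$o(a)$ distinct signatures forces $\Omega(a)$ signature-collisions'' averaging, and your nonsingular $2\times 2$ system is exactly the paper's observation that the collision event is independent of $\vec{x}_{S_0}$ while the solution equation carries the nonzero coefficient $z$ on the nonempty set $S_0$, yielding the crucial $M^{-2}$ per-triple bound. The only (immaterial) difference is that you bound the number of input partitions by the entropy expression rather than the cruder $|C|^n$ used in the paper, which already suffices for the final exponent arithmetic.
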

\begin{proof}
Let $a:=a(\sigma)$ as in \eqref{eq:a}.
For the lemma to fail, there must be an input partition $\mathbb{S}=(S_w)_{w \in D}$
  that corresponds to $\sigma$ such that (1) $\sig(\mathbb{S},\vec{\bx})
  =\tau-z\sum_i\bx_i$ and (2) half-partitions of $\mathbb{S}$ have 
  $o_n(a)$ distinct signatures under $\vec{x}$.
Now fix any input partition $\mathbb{S}$ that corresponds to $\sigma$.
We will bound the number of $\vec{x}\in [0:M-1]^n$ such that 
both of the conditions above hold by $O(aM^{n-2})$.
Since there are no more than $|C|^n$ many input partitions $\mathbb{S}$,
  the number of such $\vec{x}$ is thus at most
    \begin{align*}
        |C|^n \cdot O(a M^{n-2}) 
        &= O^*(M^n \cdot |C|^{2\eps n} \cdot 2^{(-\frac{1}{|C|} + \frac{\delta}{|C|-1})n})  \\
        &\leq O^*(M^n \cdot 2^{-(\frac{1}{|C|} - \frac{\delta}{|C|-1} - 2\eps\log_2 |C|)n}),
    \end{align*}
    given that $a  = \Theta^*(2^{|\sigma|}) = O^*(2^{(1-\frac{1}{|C|} + \frac{\delta}{|C|-1})n})$ and $2^n\le |C|^n \leq M \cdot |C|^{\eps n}$. This is an exponentially small fraction of the input space, which has size $M^n$, when both
    $\eps$ and $\delta$ are sufficiently small.

Fixing an input partition $\mathbb{S}=(S_w)_{w \in D}$, we proceed to
  bound the number of $\vec{x}$ that satisfy both conditions above.
We start by showing that the number of $\vec{x}$ such that 
  half-partitions of $\mathbb{S}$ have $o(a)$ distinct signatures 
  is at most $O(aM^{n-1})$.

    To see this, consider any of the $a^2$ pairs of distinct half-partitions of 
      $\mathbb{S}$. With respect to any pair of distinct half-partitions, there exists an index $k\in [n]$ that appears in exactly one of them. For any fixed $\vec{x}_{[n]\setminus\{k\}}$, there is at most one choice of $x_k$ that results in both half-partitions having the same signature.
    As a result, the number of $\vec{x}$ that can lead to the same signature
      is at most $M^{n-1}$.
Union-bounding the number of signature-collisions over $a^2$ distinct pairs of half-partitions of $\mathbb{S}$ and $M^n$ instances gives at most $a^2 M^{n-1}$ signature-collisions in total over all instances $\vec{x}$.
This implies that the number of $\vec{x}$ that have 
  $o(a)$ distinct half-partition signatures, which happens only if there occur 
  $\Omega(a)$ signature-collisions, is at most $O(aM^{n-1})$.

    
Next we observe that the condition of $\vec{x}$ corresponding to $o(a)$ distinct half-partition signatures is independent of $\vec{x}_{S_0}$, as these elements do not affect half-partition signatures at all. On the other hand,
  $\vec{x}_{S_0}$ determines whether the equation
\begin{equation}
\sig(\mathbb{S},\vec{x})=\tau-z\sum_{i\in [n]} x_i
    \label{eq:s0-sig}
\end{equation}
holds because its entries appear on the right hand side with a nonzero coefficient $z$.
By holding out any input element indexed by $S_0$, we can see that among the $O(aM^{n-1})$ many input vectors $\vec{x}$ that have $o(a)$
  distinct half-partition signatures, only $1/M$-fraction of them 
  can satisfy \eqref{eq:s0-sig}.
Therefore the number of $\vec{x}$ that satisfies both conditions is at most $O(aM^{n-2})$.

This finishes the proof of the lemma.
\end{proof}

\subsection{Correctness and Runtime }
\label{sec:correctness_runtime}

\begin{proposition}[Correctness of \Cref{alg:ATSS}]
Under the setting of \Cref{finalfinalmain},
\Cref{alg:ATSS} 
  succeeds on $(M,\tau,\pi,\vec{\bx})$ with probability at least
  $1-e^{-\Omega(n)}$.
    \label{prop:correctness}
\end{proposition}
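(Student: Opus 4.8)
The plan is to verify that \Cref{alg:ATSS} is correct in the pipeline of \Cref{finalfinalmain}: by the reduction in \Cref{subsubsec:sketch} we may assume the algorithm is run on the translated coefficient set $D=C-z$ ($z\in C\setminus\{0\}$) and a profile $\sigma$ with $|\sigma|$ obeying \eqref{eq:sizebound}, so that a solution to the $\pi$-profiled instance exists iff some input partition corresponding to $\sigma$ satisfies \eqref{eq:newgoal}. First I would record that \Cref{alg:ATSS} never returns a spurious ``solution'': whenever Meet-in-the-Middle returns a disjoint pair $(\mathbb{T}_1,\mathbb{T}_2)$ with $\sig(\mathbb{T}_1,\vec{x})+\sig(\mathbb{T}_2,\vec{x})=\tau-z\sum_i x_i$, gluing them (setting $S_w=T_{1,w}\cup T_{2,w}$ for $w\in D\setminus\{0\}$ and placing the remaining indices in $S_0$) yields an input partition corresponding to $\sigma$ with signature $\tau-z\sum_i x_i$, hence a genuine solution by \eqref{eq:newgoal}; so when no solution exists the algorithm correctly answers ``no solution''. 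It therefore remains to handle the case that a solution exists, where I would condition on the $1-e^{-\Omega(n)}$-probability favorable event of the Signature Distribution Lemma (\Cref{lem:signature-distribution}): then there is a solution $\mathbb{S}^\star$ whose half-partitions realize $\Omega(a)$ distinct signatures. Write $\sigma^\star:=\sig(\mathbb{S}^\star,\vec{x})=\tau-z\sum_i x_i$; for a half-partition $\mathbb{T}=(T_w)$ of $\mathbb{S}^\star$ the complement $\mathbb{T}':=(S^\star_w\setminus T_w)$ is again a half-partition of $\mathbb{S}^\star$, forming a matching pair with $\sig(\mathbb{T},\vec{x})+\sig(\mathbb{T}',\vec{x})=\sigma^\star$. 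The strategy is to show each iteration of the \textbf{repeat} loop recovers a solution with probability at least $1/\poly(n)$; since the iterations are independent and the loop length is a polynomial that may be taken large enough, the overall failure probability is at most $(1-1/\poly(n))^{\poly(n)}+e^{-\Omega(n)}=e^{-\Omega(n)}$.

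The core of a single iteration (random prime $\bm{p}\in[P,2P]$, random $\bm{r}\in[0:\bm{p}-1]$) is a prime-spreading estimate. The $\Omega(a)$ distinct half-partition signatures of $\mathbb{S}^\star$ are integers of magnitude $O(nM)=2^{O(n)}$, and any fixed nonzero integer of that magnitude has only $O(1)$ prime divisors exceeding $P=2^{\Theta(n)}$, so by the prime number theorem $\Pr_{\bm{p}}[\bm{p}\mid v-v']=O(n/P)$ for fixed distinct values $v,v'$; summing over pairs, applying Markov, and then Cauchy--Schwarz shows that with probability $\ge 1/2$ over $\bm{p}$ these values occupy $\Omega(\bm{p}/\poly(n))$ residues modulo $\bm{p}$. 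Hence a uniform $\bm{r}$ equals the residue of some half-partition signature of $\mathbb{S}^\star$ with probability $\ge 1/\poly(n)$; when this happens, say $\mathbb{T}$ is such a half-partition, its complement $\mathbb{T}'$ satisfies $\sig(\mathbb{T}',\vec{x})\equiv\sigma^\star-\bm{r}\pmod{\bm{p}}$. In the regime $P=a/n$ there is no subsampling, so $\mathbb{T}\in\bm{L}_1$ and $\mathbb{T}'\in\bm{L}_2$, and Meet-in-the-Middle returns a disjoint pair with signature sum $\tau-z\sum_i x_i$ (one exists), yielding a solution — so a single iteration already succeeds with probability $\ge 1/\poly(n)$ in this regime.

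The regime $P=b^{2/3}a^{-1/3}$ (the active one when $|C|\le 3$, in particular for $C=C_0(1)$) needs a quantitatively stronger control of the residue classes. Here I would argue that a typical $\bm{r}$ straddles $\Omega^*(a/\bm{p})$ matching pairs of $\mathbb{S}^\star$ while the two corresponding residue classes of $B(\sigma)$ each contain only $O^*(b/\bm{p})$ half-partitions in total. The upper bound follows from Markov over residues; the lower bound follows, again by Cauchy--Schwarz, once one knows $\mathbb{S}^\star$'s half-partitions do not collapse onto $o(\bm{p})$ residue classes, which is guaranteed by combining the prime-spreading estimate with a further high-probability structural fact (provable along the lines of \Cref{lem:signature-distribution}, using $a\ll M$ from \eqref{eq:sizebound}) that no signature value is shared by more than $O^*(a/\bm{p})$ of any solution's half-partitions. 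Given a typical $(\bm{p},\bm{r})$, each of the $O^*(b/\sqrt{a\bm{p}})$-element subsamples $\bm{L}_1,\bm{L}_2$ then contains $\Omega^*(\sqrt{a/\bm{p}})$ of these matched half-partitions, respectively their complements, and a birthday-paradox estimate — the point being $\big(\sqrt{a/\bm{p}}\big)^2/(a/\bm{p})=\Omega(1)$ — shows that with constant probability some $\mathbb{T}\in\bm{L}_1$ has its complement in $\bm{L}_2$, whereupon Meet-in-the-Middle recovers a solution. Implementation details (sampling $\bm{p}$, sampling uniformly from a residue class of $B(\sigma)$, and realizing the hash by dynamic programming) are deferred to the proof of \Cref{prop:runtime}.

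I expect the main obstacle to be this $P=b^{2/3}a^{-1/3}$ analysis: one must arrange that a typical residue pair simultaneously contains enough matching pairs of $\mathbb{S}^\star$ (so subsampling can catch one) and not too many half-partitions overall (so a subsample of size $O^*(b/\sqrt{a\bm{p}})$ is not swamped), and then check that the exponents in the birthday estimate balance — which they do, exactly up to the $\poly(n)$ slack hidden in the $O^*(\cdot)$. The random prime modulus together with Markov's inequality is what tames the bucket sizes, while the average-case randomness of $\vec{\bx}$ — through the Signature Distribution Lemma and the non-crowding fact — is precisely what rules out the degenerate configurations in which $\mathbb{S}^\star$'s half-partitions cluster onto $o(\bm{p})$ residue classes and the sampling step would fail.
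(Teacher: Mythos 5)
Your proposal matches the paper's proof of \Cref{prop:correctness} in all essentials: condition on the Signature Distribution Lemma, use the prime-spreading estimate (PNT plus Markov plus a convexity/Cauchy--Schwarz count of collisions) to show a random $(\bm{p},\bm{r})$ is good with probability $\Omega(1/\poly(n))$, recover via full-bucket Meet-in-the-Middle when $P=a/n$ and via subsampling plus a birthday argument when $P=b^{2/3}a^{-1/3}$, and amplify by polynomial repetition. The only deviation is the extra ``non-crowding'' lemma you invoke in the second regime, which is unnecessary: the paper counts the $\Omega(a)$ \emph{distinct} signatures of $A(\mathbb{S})$ directly, so each distinct value falling in residue class $\bm{r}$ already contributes one matching pair regardless of how many half-partitions share that signature, and the bucket-size upper bound needed for the subsample (and for the abort in the $P=a/n$ case) comes from Markov over $B(\sigma)$ alone.
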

\begin{proof}
By the Signature Distribution Lemma, we may assume that either 
  $\vec{x}$ has no solution or there is an input partition $\mathbb{S}$
  such that $\sig(\mathbb{S},\vec{x})=\tau-z\sum_{i\in [n]}x_i$
  and half-partitions of $\mathbb{S}$ have $\Omega(a)$ many distinct signatures.
We assume the latter case because \Cref{alg:ATSS} never returns
  false solutions.

    We proceed to argue that for most primes $p \in [P, 2P]$, the signatures of $A(\mathbb{S})$ are distributed over many residue classes modulo $p$. To see this, consider any two distinct signatures $s_1, s_2$ of elements of $A(\mathbb{S})$. Because $|s_1 - s_2| < n |C| M = 2^{O(n)}$ and $P = 2^{\Omega(n)}$, the value $|s_1 - s_2|$ is divisible by $O_n(1)$ prime factors larger than $P$. Moreover, for sufficiently large $P$, there exist at least $P / \log(P)$ prime integers in the interval $[P, 2P]$. (This follows from the Prime Number Theorem. See, e.g., \cite[Lemma~2.1]{mucha2019equal}.) Thus for a prime $\bm{p}$ sampled uniformly at random from all primes in the interval $[P, 2P]$ we have
    \[
        \Prx_{\bm{p}}\big[\bm{p} \; \text{divides} \; |s_1 - s_2|\big] = O(\log(P) / P).
    \]
    
    Let $a'=\Omega(a)$ be the number of distinct signatures of $A(\mathbb{S})$.
    The expected number of pairs of half-partitions whose signatures collide mod $\bm{p}$ is $O(a'^2 \log(P) / P)$ by linearity of expectation. 
    By Markov's inequality, there exists a constant $\rho_1$ such that a randomly selected prime $\bm{p} \in [P, 2P]$ corresponds to at most $\rho_1 a'^2 \log(P)/P$ signature-collisions with probability at least $1/2$. Call such a prime $\bm{p} \in [P, 2P]$ with at most $\rho_1 a'^2\log(P) / P$ signature-collisions \emph{a good prime} $\bm{p}$. 
    
    It remains to show that for a good prime $p$, choosing a uniform random residue $\bm{r} \in [p]$ will allow us to recover $\mathbb{S}$ with inverse polynomial probability. With respect to a good prime $p$, we define a \emph{good residue} $\bm{r}$ to be one that satisfies the following two conditions:
    
    \begin{flushleft}\begin{enumerate}
        \item The signatures of at least $a' / 2p$ half-partitions in $A(\mathbb{S})$ are equal to $\bm{r} \pmod{p}$.
        \item The signatures of at most $\rho_2 bn^2 / p$ half-partitions in $B(\sigma)$ are equal to either $\bm{r} \pmod{p}$ or $\tau-\bm{r}-z\sum_ix_i \pmod{p}$, where $b:=|B(\sigma)|$, for a constant $\rho_2$ defined below.
    \end{enumerate}\end{flushleft}
    
    Fix a good prime $p$. We first prove that $\Omega(p/n^2)$ residue classes each contain at least $a'/2p$ half-partition signatures corresponding to $\mathbb{S}$. To see this, assume for contradiction that at most $p/n^2$ residue classes contain at least $a'/2p$ solution half-partitions (we call such residue classes ``large''). Under this assumption, residue classes that are not large contain fewer than $a'/2p$ solution half-partitions, so at least $a'/2$ half-partitions must fall into large residue classes. However, reasoning from the case in which signatures are evenly distributed among large residue classes, this implies $\Omega(a'^2 n^2 / p)$ signature-collisions, which contradicts the assumption that $p$ is good.
    
    Consider the $\lceil p / 2 \rceil$ pairs of residue classes defined by $(r, \tau - r-z\sum_ix_i)$ for some residue $r \in [p]$. Upon choosing a good $p$, Markov's inequality guarantees that most of the $\Omega(p/n^2)$ residue class pairs that contain $a'/2p$ matching pairs contain $O(bn^2 / p)$ half-partitions in total. Thus there exists a constant $\rho_2$ such that $1/2$ of these residue class pairs contain $\rho_2bn^2 / p$ half-partitions in total. In other words, the chance that uniform random $\bm{p}$, $\bm{r}$ are both good is $\Omega(n^{-2})$. We can inflate this probability to $1 - e^{-\Omega(n)}$ by choosing independent pairs $(\bm{p},\bm{r})$ a polynomial number of times as is done in \Cref{alg:ATSS}.
    
    We conclude by proving that \Cref{alg:ATSS} recovers a solution with very high probability conditioned on the choice of a good $p$ and $r$. First, consider the case in which $a/n < {b}^{2/3}a^{-1/3}$, in which case the algorithm sets $P = a/n$. Thus the residue class pair defined by $(r, \tau-r-z\sum_{i}x_i)$ contains $a'/2p = \Omega(a'n / a) = \Omega(n)$ matching pairs; searching with Meet-in-the-Middle guarantees recovery.
    
    Second, consider the case in which $a/n > {b}^{2/3}a^{-1/3}$, in which case  $P = {b}^{2/3}a^{-1/3}$. Let 
    \[
        \kappa > \frac{a'}{2p} = \frac{a'}{2} \cdot \frac{a^{1/3}}{{b}^{2/3}} = \Omega\left(\frac{a^{4/3}}{{b}^{ 2/3}}\right) = \Omega(n)
    \]
    denote the number of matching pairs in the residue class pair defined by $(r, \tau - r-z\sum_i x_i)$. In this case, we create $\bm{L}_1$ and $\bm{L}_2$ by subsampling  $$2\rho_2 b n^3 / \sqrt{ap} = O^*(b / \sqrt{ap})$$ 
    elements from each residue class uniformly at random with replacement.
    
    In expectation, the number of
    matching pairs contained in $\bm{L}_1$ and $\bm{L}_2$ is thus at least
    \[
        \frac{2\rho_2 b n^3}{\sqrt{ap}} \cdot \frac{\kappa p}{\rho_2 b n^2} \ge 2n\kappa\sqrt{\frac{p}{a}}. 
    \]
    Applying a Chernoff bound implies that both lists contain at least $n\kappa\sqrt{\frac{p}{a}} = \Omega(n^{3/2})$ half-partitions that are members of some matching pair with probability $1 - e^{-\Omega(n)}$ as $a / n > p$ by definition.
    
    In other words, with very high probability $\bm{L}_1$ and $\bm{L}_2$ each contain 
    \[
        n\kappa\sqrt{\frac{p}{a}} \geq n\frac{a'}{2p}\sqrt{\frac{p}{a}} = \Omega(n\sqrt{\kappa})
    \]
    uniform random elements sampled from matching pair sets of size $\kappa$. This yields a matching pair in our subsample with probability $1 - e^{-\Omega(n)}$ by the birthday paradox. Searching with Meet-in-the-Middle guarantees recovery.
\end{proof}

\begin{proposition}[Runtime of \Cref{alg:ATSS}]
    \label{prop:runtime}
    When fully specified as below, \Cref{alg:ATSS} runs in time
    $|C|^{\Lambda(|C|) n + O(\delta n)},$
    where $\Lambda(z)$ is defined as in \Cref{thm:main}.
\end{proposition}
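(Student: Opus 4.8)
The main loop of \Cref{alg:ATSS} runs $\poly(n)$ times, and a prime $\bm p\in[P,2P]$ can be drawn in $\poly(n)$ time — by the Prime Number Theorem there are at least $P/\log P$ primes in $[P,2P]$, and $P=2^{\Omega(n)}$, so rejection sampling together with a polynomial-time primality test works — so the running time is $\poly(n)$ times the worst-case cost of a single iteration. The plan is therefore to (i) specify the implementation of one iteration, (ii) bound its cost in the two regimes $P=a/n$ and $P={b}^{2/3}a^{-1/3}$ in terms of $a=\Theta^*(2^{|\sigma|})$ and $b=O^*(2^{hn})$ from \eqref{eq:a}--\eqref{eq:b}, and (iii) maximize the resulting exponent over the admissible profile sizes $|\sigma|=\alpha n$, which by \eqref{eq:sizebound} satisfy $\alpha\le\frac{|C|-1}{|C|}+\frac{\delta}{|C|-1}$.

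\textbf{Implementing one iteration.} The workhorse is a dynamic program over the input indices $1,\dots,n$ whose states are triples consisting of a prefix length, a tuple $(\sigma'_w)_{w\in D\setminus\{0\}}$ with $0\le\sigma'_w\le\sigma_w$ recording how many indices have so far been placed in $T_w$, and the partial signature modulo $\bm p$. Since $|D|=|C|$ is a fixed constant, the number of count-tuples is $\prod_w(\sigma_w+1)\le(n+1)^{|C|-1}=\poly(n)$, so the DP has $O^*(\bm p)$ states, each with $O(|C|)$ outgoing transitions (index $i$ enters one of the $|C|-1$ sets $T_w$, or none of them); running it forward takes time $O^*(\bm p)$ and yields, for every residue $t$, the count $|\{\mathbb T\in B(\sigma):\sig(\mathbb T,\vec x)\equiv t\pmod{\bm p}\}|$ together with the path counts of all intermediate states. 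Using these path counts one can, in $O^*(1)$ time per element, either enumerate the residue class $\{\mathbb T\in B(\sigma):\sig(\mathbb T,\vec x)\equiv\bm r\pmod{\bm p}\}$ by a backward traversal of the DP DAG, or draw uniform random elements of it by the standard weighted backward walk (choose each predecessor with probability proportional to its path count). The list $\bm L_1$ (and symmetrically $\bm L_2$) is then built by enumerating the whole residue class when $P=a/n$, and by taking $O^*(b/\sqrt{a\bm p})$ independent uniform samples when $P={b}^{2/3}a^{-1/3}$; in either case, if the list under construction would exceed $\rho_2 bn^2/\bm p$ elements we abort the iteration. By the good-prime/good-residue argument in the proof of \Cref{prop:correctness} — for which every relevant residue class has at most $\rho_2 bn^2/\bm p$ members, and which is hit with probability $\Omega(n^{-2})$ — and since the loop repeats $\poly(n)$ times, this capping does not affect the $1-e^{-\Omega(n)}$ success guarantee. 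Sorting $\bm L_1,\bm L_2$ by signature and running Meet-in-the-Middle (a two-pointer sweep checking disjointness and the exact identity $\sig(\mathbb T_1,\vec x)+\sig(\mathbb T_2,\vec x)=\tau-z\sum_i x_i$) costs $O^*(|\bm L_1|+|\bm L_2|)$.

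\textbf{Cost of one iteration and optimization over $\alpha$.} Every quantity above is $2^{\Theta(n)}$, so $O^*(\cdot)$ absorbs all $\poly(n)$ overheads. When $P=a/n$ the DP costs $O^*(\bm p)=O^*(a)$ and each capped list has size $O^*(b/\bm p)=O^*(b/a)$; this is exactly the regime where $a/n\le{b}^{2/3}a^{-1/3}$, equivalently $a^2=O^*(b)$, so $a\le O^*(b/a)$ and one iteration costs $O^*(b/a)=2^{(h(\alpha)-\alpha+o(1))n}$. When $P={b}^{2/3}a^{-1/3}$ the DP costs $O^*(\bm p)=O^*({b}^{2/3}a^{-1/3})$, and since $\sqrt{a\bm p}\ge\sqrt{aP}=(ab)^{1/3}$ each list has size $O^*(b/\sqrt{a\bm p})\le O^*({b}^{2/3}a^{-1/3})$, so one iteration costs $2^{(\frac23 h(\alpha)-\frac13\alpha+o(1))n}$. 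Hence the worst-case exponent is at most $\max\bigl(\max_\alpha(h(\alpha)-\alpha),\ \max_\alpha(\tfrac23 h(\alpha)-\tfrac13\alpha)\bigr)$ over admissible $\alpha$. From \eqref{eq:b} one computes $h'(\alpha)=\tfrac12\log_2\tfrac{2(|C|-1)(2-\alpha)}{\alpha}-\tfrac12$, so $(h-\alpha)'>0$ iff $\alpha<\tfrac{2(|C|-1)}{|C|+3}$ and $(\tfrac23 h-\tfrac13\alpha)'>0$ iff $\alpha<\tfrac{2(|C|-1)}{|C|+1}$; both functions have derivative $+\infty$ at $0^+$. Since $\tfrac{2(|C|-1)}{|C|+1}>\tfrac{|C|-1}{|C|}$ for all $|C|\ge2$, the second function is increasing on the whole admissible interval and is maximized at $\alpha=\tfrac{|C|-1}{|C|}$, where substitution into \eqref{eq:b} and routine simplification give exactly $\log_2|C|$ times the second branch of $\Lambda(|C|)$. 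For $|C|\ge4$ the algorithm is in the $P=a/n$ regime at this endpoint, and since $\tfrac{2(|C|-1)}{|C|+3}\ge\tfrac{|C|-1}{|C|}$ for $|C|\ge3$, $h-\alpha$ is also increasing there, so its maximum equals $\log_2|C|$ times the first branch of $\Lambda(|C|)$; for $|C|\le3$ one checks that the two exponents agree at the boundary between the two regimes and the $\tfrac23 h-\tfrac13\alpha$ piece increases beyond it, so $\max_\alpha(h(\alpha)-\alpha)$ over the admissible range is dominated by the second-branch value. In all cases the worst-case per-iteration exponent is at most $\Lambda(|C|)\log_2|C|+O(\delta)$, the $O(\delta)$ arising from the $\tfrac{\delta}{|C|-1}$ slack in $\alpha$ and the Lipschitz continuity of $h-\alpha$ and $\tfrac23 h-\tfrac13\alpha$; multiplying by the $\poly(n)$ loop factor gives total running time $|C|^{\Lambda(|C|)n+O(\delta n)}$.

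\textbf{Main obstacle.} The calculus in the last step is routine once $h'$ is written out; the genuinely delicate part is the implementation — making the single dynamic program serve simultaneously as an \emph{enumerator} of a prescribed signature residue class (for $|C|\ge4$) and as a \emph{uniform sampler} from it (for $|C|\le3$) in time proportional to the output size, and choosing the list-size cap so that the runtime bound holds unconditionally on every iteration while the good-prime/good-residue analysis of \Cref{prop:correctness} still applies on the iterations that matter. A secondary point of care is confirming that $\bm p$, $|\bm L_1|$, $|\bm L_2|$ and the DP state count are all $2^{\Theta(n)}$, so that the $O^*$ bookkeeping genuinely hides only lower-order factors.
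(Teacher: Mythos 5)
Your implementation (the $n\times p$ dynamic-programming table indexed by prefix length, partial profile, and signature residue, used both to enumerate a residue class and to sample uniformly from it), your case split on $P=a/n$ versus $P=b^{2/3}a^{-1/3}$ with the abort threshold $\rho_2 bn^2/\bm{p}$, and your optimization of $h(\alpha)-\alpha$ and $\tfrac{2}{3}h(\alpha)-\tfrac{1}{3}\alpha$ over $\alpha\le 1-1/|C|+O(\delta)$ all coincide with the paper's proof, and the calculus is correct.

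There is, however, one genuine gap: you assert that sorting $\bm{L}_1,\bm{L}_2$ and running the two-pointer sweep costs $O^*(|\bm{L}_1|+|\bm{L}_2|)$. That is only the cost of locating the signature values $s_1,s_2$ with $s_1+s_2=\tau-z\sum_i x_i$. When a block of $k_1$ equal signatures in $\bm{L}_1$ pairs with a block of $k_2$ in $\bm{L}_2$, the sweep must examine up to $k_1k_2$ \emph{pseudosolutions} --- pairs whose signatures sum correctly but whose half-partitions overlap --- and reject each one individually before it can certify "no disjoint pair here." Nothing in your argument bounds the number of such pairs, so the per-iteration cost is not controlled by the list sizes alone. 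The paper closes this by a separate probabilistic argument: over the random input $\vec{\bx}$ and the random choice of $p,r$, the expected number of pseudosolutions is $O(b^2/(Mp))=(b/p)\cdot e^{-\Omega(n)}$ in Case~1 (using that $b$ is smaller than $M\in|C|^{(1\pm\eps)n}$ by a factor $e^{\Omega(n)}$, which follows from \eqref{eq:b}), and $O(b/a)\cdot e^{-\Omega(n)}$ after subsampling in Case~2; Markov's inequality then shows that pseudosolution processing exceeds the sampling time only with exponentially small probability, in which case the iteration aborts and returns "no solution." You need this step (or an equivalent bound) for the claimed runtime to hold; note also that it consumes part of the failure-probability budget, so it interacts with the correctness guarantee and cannot be dismissed as an implementation detail.
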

\begin{proof}
For fixed prime $p$ and residue class $r$, we start by describing how to
  generate $\bm{L}_1$ and $\bm{L}_2$ efficiently.
    
    To sample half-partitions by the residue class of their signature, we construct a $n \times p$ dynamic programming table in which each cell $(i, j) \in [n] \times [p]$ stores the number of coefficient vectors $\vec{c} \in D^i$ such that $\vec{c} \cdot x_{[i]} = j \pmod{p}$. This number is stored as $\poly(n)$ different values indicating how many coefficient vectors match each partial solution profile $\pi = (\pi_w)_{w \in D}$ that partitions the integer $i$. 
    Each cell $(i, j)$ can be filled by consulting the cells $(i-1, j - w x_i \pmod{p})$ for each $w \in D$. As a result, constructing the table takes time $O^*(p)$.
    
    Our dynamic programming table allows us to sample uniformly at random from the set of half-partitions that correspond to $\sigma$ and have signatures that fall into the residue class $r \pmod{p}$. To do this, we begin at cell $(n, r)$ and consider only those coefficient vectors indexed by solution profiles corresponding to half-partitions. We sample a coefficient $w \in D$, weighting by the number of half-partitions that assign the coefficient $w$ to $x_n$. We then consider cell $(n-1, r - w x_n \pmod{p})$ and continue sampling until a single half-partition is recovered.
    
    Once the table is created, we consider two cases based on the choice of $P$.
    
    \begin{flushleft}\begin{itemize}
        \item \textbf{Case~1: $P = a/n$.} In this case, we sample the entire residue class:
    \begin{align*}
      \bm{L}_1 &= \{\mathbb{T} \in B(\sigma) \mid \sig(\mathbb{T},\vec{x}) = r \Mod{p}\}\ \ \quad \text{and}\\
      \bm{L}_2 &= \{\mathbb{T} \in B(\sigma) \mid \sig(\mathbb{T},\vec{x}) = \tau - r-z\textstyle{\sum}_i x_i \Mod{p}\}.
     \end{align*}
     Under the assumption that $p$ is a good prime and $r$ is a good residue class, this takes time $\poly(n) \cdot \rho_2 bn^2 / p = O^*(b/p)$. (If there are more than $\rho_2bn^2/p$ elements in our residue class, we know that $r$ is not a good residue and abort the loop.)
        \item \textbf{Case~2: $P = {b}^{2/3}a^{-1/3}$.} In this case, we sample $O^*(b / \sqrt{ap})$ half-partitions drawn without replacement to create $\bm{L}_1$ and $\bm{L}_2$. 
    \end{itemize}\end{flushleft}
    
    The final step of Case~1 is to search $\bm{L}_1 \times \bm{L}_2$ for a matching pair. The Meet-in-the-Middle procedure takes time $O^*(\max(|\bm{L}_1|, |\bm{L}_2|, \text{ps}(\bm{L}_1, \bm{L}_2))$, where $\text{ps}(\bm{L}_1, \bm{L}_2)$ counts the number of \emph{pseudosolutions}: pairs of half-partitions $(\mathbb{T}_1, \mathbb{T}_2) \in \bm{L}_1 \times \bm{L}_2$ such that $\sig(\mathbb{T}_1) + \sig(\mathbb{T}_2) = \tau-z\sum_i x_i$ but $(\mathbb{T}_1, \mathbb{T}_2)$ is not necessarily a matching pair (due to overlapping elements). Each pseudosolution must be checked and rejected. Using linearity of expectation over all distinct pairs of half-partitions in $B(\sigma)$, the expected number of pseudosolutions over a random input $\vec{\bm{x}}$ and choice of $p, r$ is 
    \[
        O\left(\frac{b^2}{Mp}\right) = \frac{b}{p} \cdot e^{-\Omega(n)},
    \]
    where it follows from \eqref{eq:b} that
      $b$ is smaller than $M\in|C|^{(1\pm \eps)n}$ by $e^{\Omega(n)}$.
    Hence by Markov's inequality, the probability that processing pseudosolutions takes time $\Omega(b/p)$, longer than the time to sample the residue class, is exponentially small. (If this occurs, the algorithm aborts and returns ``no solution''.) In Case~2, the expected number of pseudosolutions
  after subsampling is 
$$
O\left(\frac{b^2}{Mp}\cdot \left(\frac{b/\sqrt{ap}}{b/p}\right)^2\right)
\le O\left(\frac{b}{p}\cdot \frac{p}{a}\right)\cdot e^{-\Omega(n)}
\le O\left(\frac{b}{a} \right)\cdot e^{-\Omega(n)}
$$
so processing pseudosolutions takes time $O(b/a)$ with very high probability.
    
    The total runtime is thus the time it takes to create the table plus the time to sample and search for a matching pair. 
    In Case~1, our runtime is:
    \begin{equation}
        \label{eq:case1runtime}
        O^* (\max(p, b/p) ) = O^* (\max(a,b/a) ) = O^*(b/a),
    \end{equation}
    where the first equality follows from the fact that $p = \Theta(a/n)$ in Case~1 and the second equality follows from the fact that $a / n \leq b^{2/3}a^{-1/3}$ in Case~1.
 In Case~2, our runtime is
    \begin{equation}
        \label{eq:case2runtime}
        O^*(\max(p, b/\sqrt{ap},b/a)) = O^*(b^{2/3}a^{-1/3}). 
    \end{equation}
    Thus the algorithm has running time at most
    $O^*(\max(b/a,  b^{2/3}a^{-1/3}).$
    
    Let $\alpha := |\sigma|/n$. Recall from \eqref{eq:a} and \eqref{eq:b} that $a(\sigma) = \Theta^*(2^{|\sigma|})$ and that $b(\sigma) = O^*(2^{hn})$, where
    \[
    h = {H\left( 
    \frac{\alpha}{2(|C|-1)},\ldots, \frac{\alpha}{2(|C|-1)},1-\frac{\alpha}{2}\right)}
    =\frac{\alpha}{2}\cdot \log_2 \left(\frac{2(|C|-1)}{\alpha}\right)
      +\left(1-\frac{\alpha}{2}\right)\cdot \log_2 \left(\frac{2}{2-\alpha}\right).
    \]
    It follows from these two equations that $b^{2/3}a^{-1/3} \leq a/n$ when $|C| \leq 3$, and thus we are always in Case~2 when $|C| \leq 3$. 
    
      
    Assuming \eqref{eq:sizebound} without loss of generality, we have that  $\alpha$ satisfies $\alpha\le 1-1/|C|+O(\delta)$.
    Using \eqref{eq:a} and \eqref{eq:b}, we have the following bounds on $b/a$ and $b^{2/3}a^{-1/3}$: 
    $$
        \frac{b}{a}\le 2^{H_1(\alpha)n},\quad \text{where $H_1(\alpha)=   \frac{\alpha}{2}\cdot \log_2\left(\frac{2|C|-2}{\alpha}\right) + \frac{2-\alpha}{2}\cdot \log_2\left(\frac{2}{2-\alpha}\right) - \alpha  $}
    $$
    and
    $$
    \frac{b^{2/3}}{a^{1/3}}\le 2^{H_2(\alpha)n},\quad\text{where  $H_2(\alpha)=   \frac{ \alpha}{3}\cdot \log_2\left(\frac{2|C|-2}{\alpha}\right) + \frac{2-\alpha}{3}\cdot \log_2\left(\frac{2}{2-\alpha}\right) -\frac{ \alpha}{3}. $}
    $$
    When $|C|\ge 4$, in the range
    $(0,1-1/|C|+O(\delta)]$ the maximum of $H_1$ and $H_2$ (achieved by $H_1$) is
    $$
    H_1\big(1-1/|C|+O(\delta)\big)\le \log_2 |C|+\frac{1}{C}-\frac{|C|+1}{2|C|}\log_2 (|C|+1)+O(\delta^2).
    $$
    When $|C|\le 3$, in the range $(0, 1-1/|C|+O(\delta)]$,
    the maximum (achieved by $H_2$) is 
    $$
    H_2\big(1-1/|C|+O(\delta)\big)\le \frac{2}{3}\log_2 |C| -\frac{|C|+1}{3|C|}\log_2 \left(\frac{|C|+1}{2}\right)+O(\delta^2).
    $$
  Therefore, the runtime of the algorithm is $|C|^{\Lambda(n)n + O(\delta n)}$.
\end{proof}   


\section{Generalizations and Related Problems}
\label{sec:applications}

    \subsection{Other Coefficient Sets}
    \label{subsec:other-coefficients}
    
    \Cref{thm:main} applies to $C = C(d)$ and $C = C_0(d)$, but our results can be applied to GSS on scale multiples and translations of $C(d)$ and $C_0(d)$. To see this, observe that for any integers $\alpha, \beta$, an instance of GSS on the coefficient set $\alpha C + \beta$ is equivalent to GSS on $C$ with target $\alpha^{-1}(t - \beta \sum_i x_i)$. 
    
    \subsection{Number Balancing}

    The Number Balancing problem attempts to divide $n$ real numbers in $[0,1]$ into two sets in a way that minimizes the difference between the two sums. The problem can be thought of as the optimization version of GSS on $C = \{0, \pm 1\}$. We introduce the following generalized version.
    
    \stepcounter{prob}\stepcounter{prob}
    \begin{prob}[Generalized Number Balancing (GNB)]{prob:GNB}
    \textbf{Input.} A vector $\vec{y} = (y_1, y_2, \dots, y_n) \in [0, 1]^n$, a coefficient set $C \subset \Z$, and a precision $\delta >0$. \\
    \textbf{Output.} A coefficient vector $\vec{c} \in C^n$ that satisfies $|\vec{c}\cdot \vec{y}| \leq \delta$, or ``no solution'' if no solution exists.
    \end{prob}
    
    In the average-case version of this problem, we consider inputs sampled uniformly at random from $[0,1]$. In \cite{karmarkar1982differencing}, Karmarkar and Karp demonstrate a linear-time algorithm for worst-case Number Balancing that achieves precision $\delta = n^{-\Omega(\log(n))}$. However, a solution with exponentially small precision always exists by the pigeonhole principle. Scaling an average-case GNB instance $\vec{\bm{y}}$ by $\delta^{-1}$ and truncating the result yields a vector of integers that can be interpreted as an instance $\vec{\bm{x}}$ of GSS sampled uniformly from $[0:\delta^{-1}-1]^n$. A solution to $\vec{\bm{x}}$ on $C$ with target $\tau = 0$ is then a solution to $\vec{y}$ on $C$ with precision $\delta n$. This insight yields the following corollary to \Cref{thm:optprecisionC0} (as well as analogues for \Cref{thm:optprecisionC} and \Cref{corr:optprecisionC1}, omitted here for brevity.) 

    \begin{corollary}[Optimal Precision for GNB with $C = C_0(d)$]
        \label{thm:optprecisionGNB-C0}
        Fix $C = C_0(d)$ and any $\eps > 0$, and consider $\vec{\by} \sim [0,1]^n$. Then we have
        \begin{align*}
            \Prx_{\vec{\by}}\big[\exists \vec{c} \in C^n : |\vec{\by} \cdot \vec{c}| < \delta n\big]
            \begin{cases}
                 = 1 - e^{-\Omega(n)} & \text{if~} \delta = \Omega^*(|C|^{-(1-\eps)n}) \\[0.5ex]
                 \le \delta |C|^n     & \text{if~} \delta \leq |C|^{-n}.
            \end{cases}
        \end{align*}
    \end{corollary}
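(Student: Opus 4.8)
The plan is to reduce average-case GNB to average-case GSS by scaling and rounding and then to invoke \Cref{thm:optprecisionC0}. Fix $\delta>0$ and set $M:=\lceil d/\delta\rceil$. Given $\vec{\by}\sim[0,1]^n$, define $\bx_i:=\lfloor M\by_i\rfloor$ for each $i\in[n]$. Since $M\by_i$ is uniform on $[0,M]$, each $\bx_i$ is uniform on $[0:M-1]$ (the event $\by_i=1$ has measure zero) and the $\bx_i$ are independent, so $\vec{\bx}\sim[0:M-1]^n$; moreover $0\le \by_i-\bx_i/M<1/M$ for every $i$ with probability one. For the existential to be meaningful we read it (as with the $\tau=0$ case of \Cref{thm:optprecisionC0} for a coefficient set containing $0$) over \emph{non-all-zero} $\vec{c}$.

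For the lower bound, suppose $\delta=\Omega^*(|C|^{-(1-\eps)n})$. Then $M=\lceil d/\delta\rceil\le |C|^{(1-\eps/2)n}$ for all large $n$ since $d$ is a constant, so \Cref{thm:optprecisionC0} applied with offset $\tau=0$ (which trivially satisfies $|\tau|=o(Mn)$) and with $\eps/2$ in place of $\eps$ gives that, with probability $1-e^{-\Omega(n)}$, there is a non-all-zero $\vec{c}\in C^n$ with $\vec{c}\cdot\vec{\bx}=0$. For any such $\vec{c}$,
\[
|\vec{c}\cdot\vec{\by}|=\Big|\sum_{i}c_i\big(\by_i-\tfrac{\bx_i}{M}\big)\Big|\le \sum_i|c_i|\cdot\tfrac1M\le \frac{dn}{M}\le \delta n,
\]
so $\vec{c}$ solves the GNB instance $\vec{\by}$ with precision $\delta n$, which yields the $1-e^{-\Omega(n)}$ bound.

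For the upper bound, suppose $\delta\le|C|^{-n}$; here a direct union bound is cleanest, since the rounding reduction runs the wrong way for upper bounds. Fix $\vec{c}\in C^n\setminus\{\vec{0}\}$ and choose $k$ with $c_k\ne0$. Conditioning on $(\by_i)_{i\ne k}$, the quantity $\vec{c}\cdot\vec{\by}$ is affine in $\by_k$ with slope $c_k$, hence uniform over an interval of length $|c_k|\ge 1$ as $\by_k$ ranges over $[0,1]$; therefore $\Pr[|\vec{c}\cdot\vec{\by}|<\delta n]\le 2\delta n/|c_k|\le 2\delta n$. Union-bounding over the fewer than $|C|^n$ non-zero coefficient vectors gives $\Pr[\exists\,\vec{c}\ne\vec{0}:|\vec{c}\cdot\vec{\by}|<\delta n]=O(\delta n\,|C|^n)=O^*(\delta\,|C|^n)$, which matches the stated bound up to the suppressed polynomial factor.

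No step is a serious obstacle; the only mildly delicate points are (i) choosing the scale $M=\lceil d/\delta\rceil$ so that the discretization error $\sum_i|c_i|/M\le dn/M$ stays below the target precision $\delta n$ while keeping $M$ inside the range $|C|^{(1-\eps')n}$ required by \Cref{thm:optprecisionC0}, and (ii) inheriting the non-all-zero guarantee from the $\tau=0$ instance of that theorem. The analogues for \Cref{thm:optprecisionC} and \Cref{corr:optprecisionC1} follow identically, replacing the $e^{-\Omega(n)}$ failure probability by $o_n(1)$ and, in the $C(1)$ case, carrying along the parity constraint on $\sum_i\bx_i$.
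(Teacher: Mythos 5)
Your proof is correct and follows essentially the same route as the paper: scale and truncate $\vec{\by}$ to obtain a uniform GSS instance on $[0:M-1]^n$, apply \Cref{thm:optprecisionC0} with $\tau=0$ (inheriting the not-all-zero guarantee), and bound the rounding error by $\sum_i|c_i|/M$. Your choice $M=\lceil d/\delta\rceil$ is in fact slightly more careful than the paper's $M=\delta^{-1}$ (which leaves a harmless constant factor $d$ in the precision), and your explicit union bound for the sparse regime honestly yields $O^*(\delta|C|^n)$ rather than the stated $\delta|C|^n$ --- a $\Theta(n)$ slack already implicit in the paper's statement.
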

    
    Moreover, the reduction from GNB to GSS allows us to use our algorithm to solve average-case GNB on symmetric coefficient sets.
    
    \begin{corollary}[Algorithm for Average-Case GNB]
        For any $\alpha \in (0, 1)$, $C = C(d)$ or $C = C_0(d)$, and any constant $\eps > 0$, there exists an algorithm that solves average-case GNB with precision $|C|^{-\alpha n}$ in time
        \[
            O(|C|^{\alpha \Lambda(|C|)n + \eps n}), 
        \]
        where $\Lambda(n)$ is defined as in \Cref{thm:main}. For uniform random $\vec{\by} \in [0,1]^n$, the algorithm is correct with probability at least $1 - e^{-\Omega(n)}$ for $C = C_0(d)$ and $1 - o_n(1)$ for $C = C(d)$.
    \end{corollary}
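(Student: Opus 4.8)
The plan is to reduce average-case GNB with precision $|C|^{-\alpha n}$ to the dense-instance version of average-case GSS established in \Cref{thm:dense-instances}, via a straightforward discretization of the real-valued inputs (this is exactly the reduction sketched before \Cref{thm:optprecisionGNB-C0}, instantiated quantitatively). Given a GNB instance $\vec{y}\in[0,1]^n$, set $M$ to be a large enough integer with $\log_{|C|}M=\alpha n+o(n)$, e.g.\ $M:=\lceil (d+1)n\cdot |C|^{\alpha n}\rceil$, and define $\vec{x}\in[0:M-1]^n$ by $x_i:=\lfloor My_i\rfloor$. Run the algorithm of \Cref{thm:dense-instances} on the GSS instance $(M,\tau,\vec{x})$ with $\tau=0$; if it returns $\vec{c}\in C^n$ with $\vec{c}\cdot\vec{x}=0$, output $\vec{c}$, and otherwise output ``no solution''. (For $C=C(1)$ one additionally runs the GSS algorithm with $\tau=1$ to account for the parity of $\sum_i\bm{x}_i$, exactly as in the proof of \Cref{thm:main}, and outputs any solution found.)

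First I would verify the two basic properties of the reduction. Since $\bm{y}_i<1$ almost surely when $\bm{y}_i$ is uniform on $[0,1]$, the value $\bm{x}_i=\lfloor M\bm{y}_i\rfloor$ lies in $[0:M-1]$ and is uniformly distributed there; by independence $\vec{\bx}$ is (a.s.) uniform over $[0:M-1]^n$, so the reduction preserves the average-case input distribution. Moreover, for any $\vec{c}\in C^n$ with $\vec{c}\cdot\vec{x}=0$,
\[
  |\vec{c}\cdot\vec{y}| \;=\; \frac{1}{M}\,\bigl|\vec{c}\cdot(M\vec{y}-\vec{x})\bigr|
  \;\le\; \frac{1}{M}\sum_{i\in[n]}|c_i|\cdot|My_i-x_i|
  \;<\; \frac{dn}{M} \;\le\; |C|^{-\alpha n},
\]
using $|c_i|\le d$ and $0\le My_i-x_i<1$; a zero-target GSS solution is therefore a GNB solution of the required precision (and for $C=C(1)$ the case $\vec{c}\cdot\vec{x}=1$ gives the slightly worse bound $(dn+1)/M\le|C|^{-\alpha n}$, still valid for our choice of $M$). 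Finally $\tau=0$ (and $\tau=1$) trivially satisfies $|\tau|=o(Mn)$ and $M=|C|^{\alpha n+o(n)}$ with the same $\alpha\in(0,1)$, so \Cref{thm:dense-instances} applies to this instance.

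It remains to track the error probability and runtime. The algorithm never outputs an invalid coefficient vector, so it can only err by reporting ``no solution'' when a GNB solution exists. Since $M=|C|^{\alpha n+o(n)}\le|C|^{(1-\eps')n}$ for any $\eps'<1-\alpha$ and $n$ large, \Cref{thm:optprecisionC0} (for $C_0(d)$) and \Cref{thm:optprecisionC}/\Cref{corr:optprecisionC1} (for $C(d)$, with $\tau\in\{0,1\}$ in the $d=1$ case) guarantee the GSS instance with the appropriate target has a solution with probability $1-e^{-\Omega(n)}$, respectively $1-o_n(1)$, over the draw of $\vec{\bx}$; conditioned on this, the algorithm of \Cref{thm:dense-instances} recovers one with the same probability, yielding a valid GNB output. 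For the runtime, the discretization step costs $\poly(n)$ time (arithmetic on $O(n)$-bit integers), and invoking \Cref{thm:dense-instances} with $\zeta:=\eps/2$ gives running time $O^*(|C|^{\alpha\Lambda(|C|)n+(\eps/2)n})$; the suppressed $\poly(n)=|C|^{o(n)}$ factors are absorbed into the remaining $(\eps/2)n$ of slack for $n$ large, giving the claimed $O(|C|^{\alpha\Lambda(|C|)n+\eps n})$.

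The substantive work has already been carried out in \Cref{thm:dense-instances} (and the structural results \Cref{thm:optprecisionC0}, \Cref{thm:optprecisionC}, \Cref{corr:optprecisionC1}); what remains is only discretization bookkeeping. The one point requiring a little care is to check simultaneously that truncation keeps $\vec{\bx}$ exactly uniform, that the precision loss incurred is only the polynomial factor $O(n)$ (so it vanishes inside the exponent), and that the $o(n)$ slack in $\log_{|C|}M$ is precisely the tolerance built into the hypothesis of \Cref{thm:dense-instances}.
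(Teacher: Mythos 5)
Your proposal is correct and follows essentially the same route as the paper: scale $\vec{\by}$ by $M=\Theta(n|C|^{\alpha n})$ and truncate to get a uniform GSS instance, invoke the structural results (\Cref{thm:optprecisionC0}, \Cref{thm:optprecisionC}, \Cref{corr:optprecisionC1}, with target $1$ as a fallback for the $C(1)$ parity issue) for existence, and recover a solution via \Cref{thm:dense-instances}. Your version just makes the constants in the precision bound and the uniformity of the truncation explicit, which the paper leaves implicit.
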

    \begin{proof}
        We can convert $\vec{\by} \sim [0,1]^n$ into $\vec{\bx} \sim [0:n|C|^{\alpha n}-1]^n$ by scaling and then truncating the input. (Note that this preserves uniform sampling.) Our structural results then guarantee the existence of a solution to this GSS instance with probability $1 - e^{-\Omega(n)}$ or $1 - o_n(1)$ depending on whether $C = C_0(d)$ or $C=C(d)$.\footnote{If $C = \{\pm 1\}$ and $|\vec{\bx}|_1$ has odd parity, a solution that achieves target $\tau = 1$ is fine.} If a solution exists, it corresponds to a GNB solution with precision $|C|^{-\alpha n}$ and can be recovered by \Cref{alg:ATSS} with probability $1 - e^{-\Omega(n)}$ in time $O(|C|^{\alpha \Lambda(|C|)n + \eps n})$ by \Cref{thm:dense-instances}.
    \end{proof}
    

\bibliographystyle{alpha}
\bibliography{main}
    
\appendix

\section{Proof of \texorpdfstring{\Cref{corr:optprecisionC1}}{}}
\label{apx:corollary}

    Let $C = C(1)$. We first consider the upper bound  when $M \geq |C|^n = 2^n$.  Given any fixed $\vec{c} \in C^n$, conditioned on any values for $\bx_1, \bx_2, \dots, \bx_{n-1}$, the probability that $\bx_n$ is such that $\vec{\bx} \cdot \vec{c} \in \{\tau, \tau + 1\}$ is at most $2 / M$. Union-bounding over all coefficient vectors gives the result.

 {
    We proceed to prove the lower bound for $M \leq |C|^{(1-\eps)n} = 2^{(1-\eps)n}$ for some $\eps > 0$. We start with the case in which $|\tau| \leq M$ and then extend our analysis to all $\tau$ such that $|\tau| = o(Mn)$.
    
    Fix a target $\tau$ such that $|\tau| \leq M$.
    Let $\rZ_{n,\tau}$ denote the expected number of solutions $\vec{c}\in C^n$ of $\vec{c}\cdot \vec{\bx}=\tau$ over $\vec\bx \sim [0:M-1]^n$.}
    In this case, \cite{borgs2001phase} Proposition 3.1 together with
      $M\le 2^{(1-\eps)n}$ implies the following bounds on $\rZ_{n,\tau}$:
        \[
            \E\big[\rZ_{n,\tau}\big] = \rho_n(1 + O(n^{-1})) \quad\text{and}\quad
            \E\big[\rZ_{n,\tau}^2\big]= 2\rho_n^2(1+O(n^{-1})).
        \]
        where $\rho_n$ is defined as
        $$
        \rho_n:=\sqrt{\frac{3}{2\pi n}} \cdot \frac{2^n}{M}.
        $$
        Note that we cannot directly apply Chebyshev's inequality to obtain
          concentration of $\rZ_{n,\tau}$ because of the extra factor of $2$ in 
          $\E[\rZ_{n,\tau}]^2$.
        (The factor of $2$ is there because of the observation that $\rZ_{n,\tau}$
        can have a solution only when the sum of $\vec{\bx}$ is even, which happens
        with probability $1/2$ over $\vec{\bx}$.)
        
        Since we are interested in the probability of having $\vec{c}\in C^n$
          with $\vec{c}\cdot \vec{x}\in \{\tau,\tau+1\}$, we have
        $$
            \E\big[\rZ_{n,\tau}+\rZ_{n,\tau+1}\big] = 2\rho_n(1 + O(n^{-1})) \quad\text{and}\quad
            \E\big[(\rZ_{n,\tau}+\rZ_{n,\tau+1})^2\big]= 4\rho_n^2(1+O(n^{-1})) ,
        $$where the second equation used the fact that $\E[\rZ_{n,\tau}\rZ_{n,\tau+1}]=0$,
        because for any $\vec{x}$ one can never have a solution for both $\tau$ and $\tau+1$
        due to the parity issue.
        It follows from Chebyshev's inequality that $\rZ_{n,\tau}+\rZ_{n,\tau+1}>0$
        with probability $1-o_n(1)$ (see the beginning of the proof of \Cref{thm:optprecisionC}).
        
        
        To extend the result to a larger range of offsets, we fix $|\tau| = o(Mn)$, and assume without loss of generality that $\tau$ is positive. Consider the experiment in which input elements $\bx_1,\bx_2,\ldots$ are revealed~one by one and 
        assigned the ``$-1$'' coefficient.
        We stop at $\bx_i$ when $\bx_1+\ldots+\bx_i\in [\tau,\tau+M]$ for the first time.
        Standard concentration arguments show that this process 
          stops with less than $\eps n/2$ elements revealed with high probability (using $|\tau|=o(Mn)$).
        When this happens, we reduce the problem to a new GSS 
          instance with target $|\tau'|\le M$ and $n'\ge (1-\eps/2)n$ elements. 
        The latter implies that $M\le |C|^{(1-\eps/2)n'}$  so
          this reduces to our earlier analysis (with $\eps$ there replaced by $\eps/2$).

\section{Proof of \texorpdfstring{\Cref{thm:dense-instances}}{}}
\label{sec:appendix2}

The proof of \Cref{thm:dense-instances} is similar to the proof of \Cref{thm:main} and follows a similar intuition: if $M \leq |C|^{(1-\eps)n}$ for any $\eps > 0$, then we can reduce the instance size to $n'$ so that $M$ falls inside the window $|C|^{(1 \pm \eps)n'}$ and the algorithm of \Cref{finalfinalmain} applies. Moreover, shrinking the instance size results in a faster running time.

\begin{proof}[Proof of \Cref{thm:dense-instances}]
Fix $d \in \N_{\geq 1}$, $C = C(d)$ or $C = C_0(d)$, and $\tau$ such that $|\tau| = o(Mn)$. Consider $M = |C|^{\alpha n + o(n)}$ for some $\alpha \in (0,1)$. Fix a constant $\zeta$ as specified in the statement of \Cref{thm:dense-instances}. We proceed to show an algorithm for average-case GSS with running time $|C|^{\alpha\Lambda(|C|)n + \zeta n}$. 

As $M = |C|^n \cdot 2^{-\Omega(n)}$, we can use the procedures described in the proof of \Cref{thm:main} to create a new instance $\vec{\bx}'$ of average-case GSS with $n'$ elements, where $n'$ satisfies
\begin{equation}
    \label{eq:thm2eps1}
        |C|^{(1-\eps_1)n'} \leq M \leq |C|^{(1-\eps_1/2)n'}
\end{equation}
for a constant $\eps_1 > 0$ that can be made arbitrarily small.

In the $C = C_0(d)$ case, we can simply set $\vec{\bx}' = \vec{\bx}_{[n']}$. In the $C = C(d)$ case, we perform the shrinking operation described in the proof of \Cref{thm:main}, assigning $+1$ and $-1$ coefficients to the elements of $\vec{\bx}_{[n'+1:n]}$ and adjusting the target $\tau$ accordingly. In both cases, we create a new instance $\vec{\bx}'$, uniformly distributed over $[0:M-1]^{n'}$, with a new target $\tau'$ such that $|\tau'| = o(n'M)$, and for which any solution can be easily converted to a solution for $\vec{\bx}$.

Set $\xi \leq \zeta/2$ to be sufficiently small that \Cref{finalfinalmain} holds on $\vec{\bx}'$, and let $\eps > 0$ be the constant determined by $\xi$ in \Cref{finalfinalmain}. Run the algorithm in \Cref{finalfinalmain} on $\vec{\bx}'$ for every profile $\pi$ and return any solution found (or ``no solution'' if no solution is found for any profile $\pi$.) Because there are polynomially many solution profiles, this takes time
\[
    O^*(|C|^{\Lambda(|C|)n' + \xi n})
    = O^*(|C|^{(\alpha \Lambda(|C|)/(1-\eps_1) + \zeta/2)n + o(n)}),
\]
where we use the fact that $n' \leq \alpha / (1-\eps_1)$, which follows from \Cref{eq:thm2eps1}. For sufficiently small $\eps_1$, this is dominated by $|C|^{\alpha\Lambda(|C|)n + \zeta n}$.

Taking a union bound over the chance that the algorithm in \Cref{finalfinalmain} fails on any solution profile yields a success probability of $1 - e^{-\Omega(n)}$ on $\vec{\bx}'$. In the $C = C_0(d)$ case, $\vec{\bx}'$ has a solution with probability $1 - e^{-\Omega(n)}$ by \Cref{thm:optprecisionC0}, and thus we solve $\vec{\bx}$ with probability $1 - e^{-\Omega(n)}$. By \Cref{thm:optprecisionC}, there is a solution with probability at least $1-o_n(1)$ over the randomness of $\vec{\bx}'$ in the $C = C(d)$, $d > 1$ case. By \Cref{corr:optprecisionC1}, there is a solution with probability at least $1-o_n(1)$ over the randomness
  of $\vec{\bx}'$ in the $C = C(1)$ case if $\sum_i \bx_i$ has the same parity as $\tau$, as the shrinking procedure preserves the parity of $\sum_i \bx_i - \tau$. Thus we solve $\vec{\bx}$ with probability $1 - o_n(1)$ in this case.
\end{proof}

    
    

\section{ Proof of \texorpdfstring{\Cref{thm:optprecisionC}}{} }
\label{apx:discard}

Here the challenge is again to prove the lower bound; the proof of the upper bound on solution probability is trivial and follows the same argument as that in the proof of \Cref{thm:optprecisionC0}.

To prove the lower bound, consider a coefficient set $C = C(d)$ for a fixed constant $d > 1$, $M = O^*(|C|^{(1-\eps)n})$ for a fixed constant $\eps > 0$, and a target $\tau$ satisfying $|\tau| = O(M)$. (We expand our discussion to all $\tau$ such that $|\tau| = o(Mn)$ at the end of the subsection.) Define
\begin{align}
    \label{eq:rho}
    \rho_{n} := \frac{|C|^{n+1/2}}{M\sqrt{2\pi n \kappa \sum_{c \in C} c^{2}}}
    \qquad \text{where} \qquad
    \kappa := \Ex_{\bxi \sim [0: M - 1]} \Big[\frac{\bxi^{2}}{M^{2}}\Big] = \frac{1}{3} - \frac{1}{2M} + \frac{1}{6M^{2}}.
\end{align}
Because $|C|=2d$ and $\sum_{c \in C} c^{2} = \frac{2}{3} d(d+1)(2d+1)$ are fixed constants, we have $\rho_{n} = \Theta(\frac{|C|^{n}}{M \sqrt{n}})$.
Moreover, because $M = O^{*}(|C|^{(1 - \eps)n})$ for a constant $\eps > 0$, we have $\rho_{n} = |C|^{\Omega(n)}$.

Let $\bcalZ := \bcalZ(M, C, \tau)$ be the random variable that counts the number of solution vectors $\vec{c} \in C^{n}$ for a random ATSS instance $\vec{\bx} \sim [0: M - 1]^{n}$ with target value $\tau$. Our proof of the lower bound on solution probability when $M = O^*(|C|^{(1-\eps)n})$ generalizes \cite[Proposition~3.1]{borgs2001phase}, and consists of three parts:
\begin{enumerate}
    \item \Cref{lem:discard-sub1} proves that $\E_{\vec{\bx}}[\bcalZ] = \rho_{n} \cdot (1 \pm o_{n}(1))$.
    \item \Cref{lem:discard-sub2} proves that $\E_{\vec{\bx}}[\bcalZ^{2}] \leq \rho_{n}^{2} \cdot (1 + o_{n}(1))$.
    \item In the proof of \Cref{thm:optprecisionC},
    we use the preceding lemmas to show $\bcalZ = \rho_{n} \cdot (1 \pm o_{n}(1))$ with probability $1 - o_{n}(1)$. The bound for $M = O^*(|C|^{(1-\eps)n})$ follows.
\end{enumerate}

\begin{figure}[t]
    \centering
    \subfloat[$f$ with $M=8$, $C = C(2)$.
    \label{subfig:f2}]{ \includegraphics[]{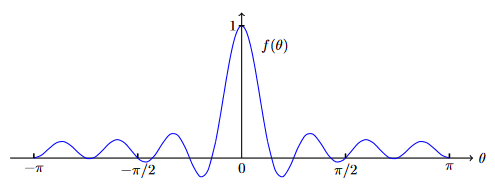} } \\
    
    \subfloat[$g$ with $M=8$, $C = C(2)$.
    \label{subfig:g2}]{ \includegraphics[]{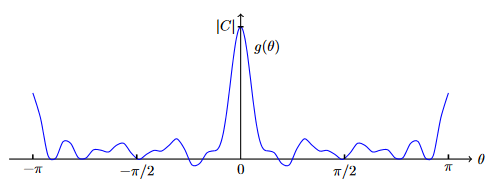} }
    \caption{Example plots of $f$ and $g$. As $|C|$ increases, so does the complexity of the oscillation.}
    \label{fig:f-g2}
\end{figure}

The first two steps are accomplished by expressing $\bcalZ$ as an integral written in terms of a certain function $g: [-\pi,\pi] \to \R$ to the $n$th power. In both cases, we show that the mass of $g^{n}$ is highly concentrated in a region near the origin and tightly bound the value of the function in this region. In particular, we will employ the function $f: [-\pi,\pi] \to \R$,
\begin{align}
    \label{eq:f-def}
    f(\theta) := \Ex_{\bxi \sim [0: M - 1]}[\cos(\theta \bxi)]
    = \frac{1}{M} \sum_{j \in [0: M - 1]} \cos (j \theta)
    = \frac{1}{M}\Big(\frac{\sin((M - 1/2)\theta)}{2 \sin(\theta/2)} +\frac{1}{2}\Big),
\end{align}
which is defined almost identically\footnote{The difference occurs because \cite{borgs2001phase} consider inputs drawn uniformly from the set $[M]$, while we consider inputs drawn uniformly from the set $[0: M - 1]$.} to the function $f$ that occurs in the proof of \cite[Proposition 3.1]{borgs2001phase}. The last equality corresponds to \cite[Equation~(3.11)]{borgs2001phase}.

\begin{proof}[Proof of \Cref{thm:optprecisionC} (Assuming \Cref{lem:discard-sub1,lem:discard-sub2})]
    We have
    \begin{align*}
        \E_{\vec{\bx}}\Big[\big|\bcalZ / \rho_{n} - 1\big|\Big]
        & \leq \sqrt{\E_{\vec{\bx}}\Big[\big(\bcalZ / \rho_{n} - 1\big)^{2}\Big]} \\
        & = \frac{1}{\rho_{n}} \sqrt{\E_{\vec{\bx}}[\bcalZ^{2}] - \E_{\vec{\bx}}[\bcalZ]^{2}   + \Big(\E_{\vec{\bx}}[\bcalZ] - \rho_{n}\Big)^{2}} \\
        & \leq \frac{1}{\rho_{n}} \sqrt{\E_{\vec{\bx}}[\bcalZ^{2}] - \E_{\vec{\bx}}[\bcalZ]^{2}} + \frac{1}{\rho_{n}} \Big|\E_{\vec{\bx}}[\bcalZ] - \rho_{n}\Big| \\
        & = o_{n}(1).
    \end{align*}
    Here the first step holds since any random variable $\bm{\mathcal{X}}$ satisfies that $\E[\bm{\mathcal{X}}]^{2} \leq \E[\bm{\mathcal{X}}^2]$, the second step is elementary algebra, the third step holds since $\sqrt{a + b} \leq \sqrt{a} + \sqrt{b}$ for any $a, b \in \RR$, and the last step uses \Cref{lem:discard-sub1} that $\E_{\vec{\bx}}[\bcalZ] = \rho_{n} \cdot (1 \pm o_{n}(1))$ and \Cref{lem:discard-sub2} that $\E_{\vec{\bx}}[\bcalZ^{2}] \leq \rho_{n}^{2} \cdot (1 \pm o_{n}(1))$.
    
    Our instance has a solution unless $|\bcalZ / \rho_{n} - 1| > 1$, which occurs with probability at most $o_n(1)$ by Markov's inequality. Accordingly, for $C = C(d)$, $d > 1$, $|\tau| = O(M)$, and $M = O^*(|C|^{(1-\eps)n})$ for any constant $\eps > 0$, we have
    \[
        \Pr_{\vec{\bx}}\big[\exists \vec{c} \in C^n : \vec{\bx} \cdot \vec{c} = \tau\big] = 1 - o_n(1).
    \]
   
    It remains to consider the full range of targets $\tau$.  Consider a randomly sampled GSS instance as above but with $|\tau| = o(Mn)$. Without loss of generality, consider $\tau > 0$ and consider the experiment in which we sample input elements one by one and assign each the coefficient $-1$. Each step of this process effectively creates a new instance with one fewer input element and a smaller target $\tau'$. It is a simple exercise to show that when $\tau = o(Mn)$, our new instance satisfies $\tau' = O(M)$ with high probability after $o(n)$ steps. This establishes the lower bound on the probability that a solution exists in \Cref{thm:optprecisionC}.
    
    Finally, we consider the upper bound on solution probability. For any coefficient vector $\vec{c} \in C^n$, we observe that it solves at most a $1 / M$ fraction of GSS instances. This follows from the fact that for any fixed set of $n-1$ input elements in $[0:M-1]^{n-1}$, there is at most one choice for the last element such that $\vec{c}$ solves the instance. Union-bounding over all coefficient vectors yields that the number of instances that admit any solution is at most $|C|^n \cdot M^{n-1}$. Dividing by the size of the instance space yields
    \[
        \Pr_{\vec{\bx}}\big[\exists \vec{c} \in C^n : \vec{\bx} \cdot \vec{c} = \tau\big] \leq \frac{|C|^n}{M}. \qedhere
    \]
\end{proof}


\subsection{ Expectation of \texorpdfstring{\boldmath$\mathcal{Z}$}{} }
\label{sec:discard-ez}

\begin{lemma}[Expectation of $\bcalZ$]
    \label{lem:discard-sub1}
    Fix a coefficient set $C = C(d) = \{\pm 1, \pm 2, \dots, \pm d\}$ for some integer $d > 1$, $M = O^*(|C|^{(1 - \eps)n})$ for some constant $\eps > 0$, and a target $\tau = O(M)$. For $\rho_n$ and $f$ defined as in \cref{eq:rho} and \cref{eq:f-def}, we have
    \[
        \E_{\vec{\bx}}[\bcalZ]
        = \frac{1}{2\pi} \int_{-\pi}^{\pi} \cos(\tau\theta) \cdot g(\theta)^{n} \cdot \d\theta
        = \rho_{n} \cdot (1 \pm o_{n}(1)),
    \]
    where the function $g(\theta) := \sum_{c \in C} f(c\theta)$.
\end{lemma}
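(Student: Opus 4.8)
The plan is to evaluate $\E_{\vec\bx}[\bcalZ]$ by Fourier inversion and then estimate the resulting oscillatory integral by Laplace's method. For the first equality, note that $\vec c\cdot\vec x-\tau\in\Z$ for every $\vec c\in C^n$, so $\mathbf{1}[\vec c\cdot\vec x=\tau]=\frac1{2\pi}\int_{-\pi}^\pi e^{\i\theta(\vec c\cdot\vec x-\tau)}\,\d\theta$. Summing over $\vec c\in C^n$, factoring $\sum_{\vec c\in C^n}\prod_j e^{\i\theta c_j x_j}=\prod_j\bigl(\sum_{c\in C}e^{\i\theta c x_j}\bigr)$, and taking the expectation over the independent coordinates $\bx_j\sim[0:M-1]$ gives $\E_{\vec\bx}[\bcalZ]=\frac1{2\pi}\int_{-\pi}^\pi e^{-\i\tau\theta}\prod_{j\in[n]}\E_{\bxi}\bigl[\sum_{c\in C}e^{\i\theta c\bxi}\bigr]\,\d\theta$. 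Because $C$ is symmetric about $0$, pairing $c$ with $-c$ shows $\sum_{c\in C}\E_\bxi[e^{\i\theta c\bxi}]=\sum_{c\in C}\E_\bxi[\cos(\theta c\bxi)]=\sum_{c\in C}f(c\theta)=g(\theta)$, which is real, and since $g$ is even the $\sin(\tau\theta)$ part of the integral vanishes over $[-\pi,\pi]$, yielding $\E_{\vec\bx}[\bcalZ]=\frac1{2\pi}\int_{-\pi}^\pi\cos(\tau\theta)\,g(\theta)^n\,\d\theta$.

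For the second equality I would split $[-\pi,\pi]$ into the three pieces alluded to in the excerpt. \emph{(i) The central region} $|\theta|\le T:=n^{-1/2+\delta}/M$ for a small constant $\delta$. Here Taylor expansion gives $f(\phi)=1-\tfrac12\E[\bxi^2]\phi^2+O(\E[\bxi^4]\phi^4)=1-\tfrac12\kappa M^2\phi^2+O(M^4\phi^4)$, hence $g(\theta)=|C|\bigl(1-\tfrac{\kappa M^2\sum_{c}c^2}{2|C|}\theta^2+O(M^4\theta^4)\bigr)$ and so $g(\theta)^n=|C|^n\exp\!\bigl(-\tfrac{n\kappa M^2\sum_c c^2}{2|C|}\theta^2\bigr)(1+o(1))$ uniformly, since $nM^4\theta^4\le n^{-1+4\delta}\to0$; also $\cos(\tau\theta)=1+O(\tau^2\theta^2)=1+o(1)$ because $\tau=O(M)$. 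Extending the Gaussian integral to all of $\R$ (the tail beyond $T$ is negligible as $nM^2T^2=\Theta(n^{2\delta})\to\infty$) and using $\int_\R e^{-a\theta^2}\d\theta=\sqrt{\pi/a}$ with $a=\tfrac{n\kappa M^2\sum_c c^2}{2|C|}$ gives $\tfrac1{2\pi}|C|^n\sqrt{\pi/a}\,(1+o(1))=\rho_n(1+o(1))$, which is exactly the definition \eqref{eq:rho} of $\rho_n$.

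It remains to show the other two regions contribute $o(\rho_n)$. \emph{(ii) The moderate region} $T\le|\theta|\le\eta$ (small constant $\eta$): using $f(\phi)\le1-c\min(\phi^2M^2,1)$ for $|\phi|\le\pi$ (from $1-\cos x\ge\tfrac2{\pi^2}x^2$ on $[-\pi,\pi]$ together with the closed form \eqref{eq:f-def} for larger $\phi$) one gets $g(\theta)\le|C|\bigl(1-c'\min(\theta^2M^2,1)\bigr)$, whence $\int_{T\le|\theta|\le\eta}|g(\theta)|^n\d\theta=O\!\bigl(|C|^n e^{-\Omega(nM^2T^2)}/(nM^2T)\bigr)=o(\rho_n)$. \emph{(iii) The far region} $\eta\le|\theta|\le\pi$: here the bound $|f(\phi)|=O\bigl(1/(M\cdot\mathrm{dist}(\phi,2\pi\Z))\bigr)$ shows $g(\theta)=O(1/M)$ whenever $\theta$ is bounded away (by a constant) from the finite ``resonant'' set $\{2\pi a/k:1\le k\le d,\ a\in\Z\}\cap[-\pi,\pi]$, so that region is negligible; the danger is a secondary bump of $g$ near a resonant point $\theta_0\ne0$. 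Writing $\theta=\theta_0+\psi$, only the indices $k$ with $k\theta_0\in2\pi\Z$ contribute $\approx1$ (indeed $f(k\psi)\le1-\Omega(\min(k^2\psi^2M^2,1))$) and, writing $\theta_0=2\pi p/q$ in lowest terms with $q\ge2$, there are at most $\lfloor d/q\rfloor\le\lfloor d/2\rfloor$ such $k$, so $g(\theta_0+\psi)\le2\lfloor d/2\rfloor\le|C|/2$ up to $O(1/M)$. Thus each bump has height at most $(|C|/2)^n$ with width $O(1/(M\sqrt n))$ for $|\psi|\lesssim1/M$ and a tail $\int(\Theta(1/(M\psi)))^n\,\d\psi=O(\Theta(1)^n/(Mn))$ for $|\psi|\gtrsim1/M$; since there are only $O(d^2)=O(1)$ resonant points, the total is $O\bigl((|C|/2)^n/(M\sqrt n)+\Theta(1)^n/(Mn)\bigr)=o(|C|^n/(M\sqrt n))=o(\rho_n)$. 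Summing the three regions gives $\E_{\vec\bx}[\bcalZ]=\rho_n(1\pm o_n(1))$.

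The main obstacle is region (iii): one must locate the resonant points precisely, show that every secondary peak has height strictly below $|C|$ — this is exactly where $d>1$ departs from the $d=1$ ``parity'' obstruction of \cite{borgs2001phase}, and one must check that $2\lfloor d/2\rfloor\le|C|/2$ is the worst case — and carry along the $1/M$ factor from each bump's width so that the secondary peaks are genuinely $o(\rho_n)$ even though $\rho_n$ itself may be only $|C|^{\Omega(n)}$. The matching of constants across the boundaries of the three regions, and between the core and the polynomially-decaying tail of each resonant bump, is routine but lengthy, which is presumably why this estimate is relegated to an appendix.
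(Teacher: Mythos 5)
Your first equality and your treatment of the central region match the paper's Part~I essentially verbatim (the paper uses the cutoff $\sqrt{64\ln(n)/n}/M$ rather than $n^{-1/2+\delta}/M$, but the Taylor expansion, the Gaussian integral, and the identification with $\rho_n$ are the same). Your region (iii) is a legitimate alternative to the paper's Part~III: where you locate the resonant points $2\pi p/q$ and bound the peak height by $2\lfloor d/q\rfloor\le |C|/2$ plus $O(1/M)$, the paper argues more softly that the two terms $f(\pm\theta)$ are at most $1/a$ everywhere on $[b_0/M,\pi]$ (so $|g|\le |C|-2+2/a\le |C|^{1-\eps_1}$ pointwise) and that $|g|>|C|/a$ only on a set of measure $O(b_0|C|/M)$ by periodicity. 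Both arguments hinge on the same two ingredients — a pointwise gain over $|C|$ and a $1/M$ measure factor — and both correctly isolate why $d>1$ avoids the $d=1$ resonance.

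The genuine gap is in your region (ii). You bound $|g(\theta)|\le |C|\bigl(1-c'\min(\theta^2M^2,1)\bigr)$ on $T\le|\theta|\le\eta$ and conclude the integral is $O\bigl(|C|^n e^{-\Omega(nM^2T^2)}/(nM^2T)\bigr)$, but that expression accounts only for the Gaussian tail near $\theta=T$. The sub-interval $1/M\le|\theta|\le\eta$ has \emph{constant} Lebesgue measure, and there your pointwise bound gives only $|g|^n\le(|C|(1-c'))^n$; its contribution $\eta\,(|C|(1-c'))^n$ is $o(\rho_n)=o(|C|^n/(M\sqrt n))$ only if $(1-c')^nM\sqrt n\to 0$, i.e.\ $c'>(1-\eps)\ln|C|$, which is false for any $c'$ obtainable from the stated inequality (e.g.\ $f(\phi)\approx\sin(M\phi)/(M\phi)\approx 0.84$ at $M|\phi|=1$, so $c'\le 0.16$ there, while $(1-\eps)\ln|C|\ge\ln 4-o(1)$). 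Since $M$ may be as large as $|C|^{(1-\eps)n}$, a region of constant measure cannot be dismissed with a pointwise saving that is merely a constant factor per coordinate. The fix is either to move the outer boundary of region (ii) down to $K/M$ for a sufficiently large constant $K$ so that the region has measure $O(1/M)$ and a saving of $|g|^n\le|C|^n/n$ suffices — this is exactly the paper's choice of $b_0/M$ with $b_0>a>|C|^2/\eps$, together with its case analysis showing $|f(c\theta)|\le n^{-1/n}$ throughout — or to replace $1-c'$ on $[K/M,\eta]$ by the decaying bound $|g(\theta)|=O(|C|/(M|\theta|))$ and integrate it, which effectively restores the $O(1/(Mn))$ measure; you already perform this second computation for the tails of your resonant bumps in region (iii), but it is missing where it is actually needed.
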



    For any choice of $\vec{\bx} \sim [0: M - 1]^{n}$, we observe that
    \begin{align*}
        \bcalZ
        & = \sum_{\vec{c} \in C^{n}} \frac{1}{2\pi} \int_{-\pi}^{\pi} e^{\i\theta(\vec{\bx} \cdot \vec{c} - \tau)} \cdot \d\theta \\
        & = \frac{1}{2\pi} \int_{-\pi}^{\pi} e^{-\i \theta \tau} \sum_{\vec{c} \in C^{n}} \Big(\prod_{j \in [n]} e^{\i \theta c_j \bx_j}\Big) \cdot \d\theta \\
        & = \frac{1}{2\pi} \int_{-\pi}^{\pi} e^{-\i \theta \tau} \prod_{j \in [n]} \Big(\sum_{c \in C} e^{\i \theta c \bx_j}\Big) \cdot \d\theta,
    \end{align*}
    where the first step chooses an integral to act as an indicator for the event $\{\vec{\bx} \cdot \vec{c} = \tau\}$, the second step factors the dot product,\footnote{We have assumed $C = C(d) = \{\pm 1, \pm 2, \dots, \pm d\}$, so we need not worry about excluding the $\vec{\mu}(C)$ solution. However, we can generalize this to the case where $\vec{\mu}(C) \in C^{n}$ by subtracting one from this equation.} and the last step exchanges the sum and the product (which is enabled by the special form of the integrand).
    
    We note that $\sum_{c \in C} e^{\i c \theta \bx_j}
    = \sum_{c \in [d]} (e^{\i c \theta \bx_j} + e^{-\i c \theta \bx_j})
    = \sum_{c \in [d]} 2 \cos(c \theta \bx_j)
    = \sum_{c \in C} \cos(c \theta \bx_j)$,
    by the fact that $C = C(d) = \{\pm 1, \pm 2, \dots, \pm d\}$, the Euler formula, and cancelling sines.
    This identity allows us to simplify our equation for $\bcalZ$ to
    \begin{align}
        \label{eq:z-identity}
        \bcalZ = \frac{1}{2\pi} \int_{-\pi}^{\pi} \cos(\tau\theta) \prod_{j \in [n]} \Big(\sum_{c \in C} \cos(c \theta \bx_j)\Big) \cdot \d\theta.
    \end{align}
    Since all the $\bx_j \sim [0: M - 1]$ are i.i.d., we can break down the expectation $\E_{\vec{\bx}}[\bcalZ]$ as follows:
    \begin{align*}
        \E_{\vec{\bx}}[\bcalZ]
        & = \frac{1}{2\pi} \int_{-\pi}^{\pi} \cos(\tau\theta) \cdot \bigg(\Ex_{\bxi \sim [0: M - 1]} \Big[\sum_{c \in C} \cos(c \theta \bxi)\Big] \bigg)^{n} \d\theta \\
        & = \frac{1}{2\pi} \int_{-\pi}^{\pi} \cos(\tau\theta) \cdot g(\theta)^{n} \cdot \d\theta,
    \end{align*}
    where the last step follows because $f(c \theta) = \Ex_{\bxi \sim [0: M - 1]}[\cos(c \theta \bxi)]$ (see \eqref{eq:f-def}) and $g(\theta) = \sum_{c \in C} f(c\theta)$. Later we will see that the mass of this integral is highly concentrated around the origin
    , where $\cos(\tau\theta) \cdot g(\theta)^{n} \approx |C|^{n}$.
    
    
    
    Fix any $a$ and $b$ satisfying $|C|^{2} / \eps < a < b$. In the rest of this proof, let us consider sufficiently large $n \geq 6$ and $M \geq 20$ such that
    \begin{align}
        \label{eq:M-bound}
        \sqrt{64\ln(n) / n} \leq 1 \leq |C|
        \qquad \mbox{and} \qquad
        M \geq \max \Big\{8d \cdot b, ~ \frac{a \cdot b}{2b - 2a}\Big\}.
    \end{align}
    Then the quantity $b_{0}$ below is well defined (as $0 \leq b / (2M) \leq 1 / d \leq 1$) and satisfies $b \leq b_{0} \leq (\pi / 2) \cdot b$.
    \begin{align}
        \label{eq:b0}
        b_{0} := 2M \cdot \sin^{-1}\Big(\frac{b}{2M}\Big).
    \end{align}

    To evaluate the expectation $\E_{\vec{\bx}}[\bcalZ]$, we readopt the approach by \cite{borgs2001phase} and split its formula into the following three parts. (Based on \eqref{eq:M-bound} and \eqref{eq:b0}, we notice that $\sqrt{64\ln(n) / n} \leq |C| < b \leq b_{0}$ and $\frac{b_{0}}{M} \leq \frac{(\pi / 2) \cdot b}{8d \cdot b} \leq \frac{\pi}{16}$. Hence, for each of these parts, the interval of integration is well defined.)
    \begin{align}
        \E_{\vec{\bx}}[\bcalZ]
        & = \frac{1}{2\pi} \int_{|\theta| \leq \frac{\sqrt{64\ln(n) / n}}{M}} \cos(\tau\theta) \cdot g(\theta)^{n} \cdot \d\theta
        \label{eq:ez-part1}\tag{Part~I} \\
        & \qquad + \frac{1}{2\pi} \int_{|\theta| \in [\frac{\sqrt{64\ln(n) / n}}{M},\; \frac{b_{0}}{M}]} \cos(\tau\theta) \cdot g(\theta)^{n} \cdot \d\theta
        \label{eq:ez-part2}\tag{Part~II} \\
        & \qquad + \frac{1}{2\pi} \int_{|\theta| \in [\frac{b_{0}}{M},\; \pi]} \cos(\tau\theta) \cdot g(\theta)^{n} \cdot \d\theta
        \label{eq:ez-part3}\tag{Part~III}.
    \end{align}
    Below let us quantify \eqref{eq:ez-part1}, \eqref{eq:ez-part2}, and \eqref{eq:ez-part3} one by one.

    \begin{claim}
    \label{cla:discard-sub1-1}
    $|\eqref{eq:ez-part1}| = \rho_{n}(1 \pm o(1))$.
    \end{claim}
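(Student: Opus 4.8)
The plan is to show that over the window $|\theta|\le \sqrt{64\ln(n)/n}/M$ the integrand $\cos(\tau\theta)\,g(\theta)^n$ is, after rescaling, uniformly close to a Gaussian whose total integral is exactly $\rho_n$ up to a $(1\pm o(1))$ factor. First I would substitute $\phi = M\theta$, so that the window becomes the fixed-scale window $|\phi|\le \sqrt{64\ln(n)/n}$ and the integrand becomes $\frac{1}{2\pi M}\cos(\tau\phi/M)\,g(\phi/M)^n$.

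The key local estimate is a second-order expansion of $f$ near $0$. Writing $\cos x = 1-\tfrac{x^2}{2}+r(x)$ with $|r(x)|\le x^4/24$ and taking expectations over $\bxi\sim[0:M-1]$ gives $f(\theta) = 1 - \tfrac12\E[\bxi^2]\theta^2 + O(\E[\bxi^4]\theta^4)$; since $\E[\bxi^2] = \kappa M^2$ (by the definition of $\kappa$ in \eqref{eq:rho}) and $\E[\bxi^4] = \Theta(M^4)$, summing $f(c\theta)$ over $c\in C$ and setting $\theta = \phi/M$ yields
\[
    g(\phi/M) = |C| - \frac{\kappa}{2}\Big(\sum_{c\in C}c^2\Big)\phi^2 + O(\phi^4),
\]
with the $O(\cdot)$ depending only on $d$. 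On the window we have $\phi^2 = O(\ln n/n)$, hence $n\phi^4 = O((\ln n)^2/n) = o(1)$, so raising to the $n$-th power and expanding the logarithm gives, uniformly over the window, $g(\phi/M)^n = |C|^n e^{-\lambda\phi^2}(1\pm o(1))$ where $\lambda := \frac{n\kappa\sum_{c\in C}c^2}{2|C|} = \Theta(n)$. Likewise, since $\tau = O(M)$ and $|\phi| = o(1)$ on the window, $\tau\phi/M = o(1)$, so $\cos(\tau\phi/M) = 1\pm o(1)$ uniformly there.

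Combining these and extending the truncated Gaussian integral to all of $\R$ — which costs only a $(1\pm o(1))$ factor, since $\lambda = \Theta(n)$ makes the truncation radius a $\Theta(\sqrt{\ln n})$ multiple of the width $\lambda^{-1/2}$, and since $\kappa\ge 1/4$ (valid for $M\ge 20$) and $\sum_{c\in C}c^2\ge|C|$ give $\lambda\cdot\frac{64\ln n}{n}\ge 2\ln n$ — I would conclude
\[
    \eqref{eq:ez-part1} = \frac{|C|^n}{2\pi M}\,(1\pm o(1))\int_{\R}e^{-\lambda\phi^2}\,\d\phi
    = \frac{|C|^n}{2\pi M}\sqrt{\frac{2\pi|C|}{n\kappa\sum_{c\in C}c^2}}\,(1\pm o(1))
    = \rho_n(1\pm o(1)).
\]
Since the integrand is positive throughout the window, the same bound holds for $|\eqref{eq:ez-part1}|$, which is what the claim asserts.

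The step I expect to be the main obstacle is the bookkeeping of error terms when raising $g(\phi/M)$ to the $n$-th power: one must verify that both the quartic Taylor remainder and the $a^2$-type terms from expanding $\log(1-a+b)$ (with $a = \Theta(\ln n/n)$) contribute only $o(1)$ to the exponent, uniformly over the window. This is precisely why the window has width $\Theta(\sqrt{\ln n/n})$ — wide enough that the Gaussian tail outside it is an $n^{-\Omega(1)}$ fraction of the total, yet narrow enough that $n\phi^4 = o(1)$ everywhere on it.
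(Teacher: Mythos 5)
Your proposal is correct and follows essentially the same route as the paper's proof: the substitution $y=M\theta$, the quadratic Taylor expansion of $f$ and $g$ with multiplicative error $1\pm O(ny^4)=1\pm o(1)$ on the window, the bound $\cos(\tau y/M)=1\pm o(1)$, and the evaluation of the truncated Gaussian integral with a negligible tail (which the paper phrases via $\erf$ and the bound $|1-\erf(z)|\le e^{-z^2}$, equivalent to your tail estimate). The constants all check out, so there is nothing to add.
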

    
    \begin{proof}
    Following \cite{borgs2001phase}, we set $y := M \theta$ and consider $|y| \leq \sqrt{64\ln(n) / n} = o_{n}(1)$ small enough. In this range, we observe that
    \begin{align}
        f(\theta) = f(y / M)
        & = \Ex_{\bxi \sim [0: M - 1]} \big[\cos(\bxi(y/M))\big]
        \nonumber \\
        & = \Ex_{\bxi \sim [0: M - 1]} \bigg[1 - \frac{\bxi^{2}}{2} (y / M)^{2} \pm O(y^{4})\bigg]
        \nonumber \\
        & = 1 - \frac{\kappa}{2} y^{2} \pm O(y^4)
        \nonumber \\
        & = \Big(1 - \frac{\kappa}{2} y^{2}\Big) \cdot (1 \pm O(y^4)),
        \label{eq:fbound3}
    \end{align}
    where the second step uses the Taylor series $\cos(z) = 1 - \frac{1}{2}z^{2} \pm O(z^{4})$, the third step follows from the definition $\kappa = \Ex_{\bxi \sim [0: M - 1]} [\frac{\bxi^{2}}{M^{2}}]$, and the last step converts the (additive) error term into a multiplicative form.
    We can write $g(y / M) = \sum_{c \in C} f(c y / M)$ in a similar form:
    \begin{align*}
        g(y / M)
        & = \Big(|C| - \frac{\kappa}{2} \Big(\sum_{c \in C} c^{2}\Big) y^{2}\Big) \cdot (1 \pm O(y^4)) \\
        & = |C| \cdot \exp\bigg(-\frac{\kappa}{2|C|} \Big(\sum_{c \in C} c^{2}\Big) y^{2}\bigg) \cdot (1 \pm O(y^{4})),
    \end{align*}
    where the last step uses the Taylor series $e^{-z} = 1 - z \pm O(z^{2})$. It follows that
    \begin{align}
        \label{eq:gbound3}
        g(y / M)^{n} = |C|^{n} \cdot \exp\bigg(-\frac{n \kappa}{2|C|} \Big(\sum_{c \in C} c^{2}\Big) y^{2}\bigg) \cdot (1 \pm O(n y^{4})).
    \end{align}
    
    
    Based on \eqref{eq:gbound3}, we can evaluate \eqref{eq:ez-part1} as follows:
    \begin{align*}
        \eqref{eq:ez-part1}
        & = \frac{1}{2\pi M} \int_{|y| \leq \sqrt{64\ln(n) / n}} \cos(ty / M) \cdot g(y / M)^{n} \cdot \d y \\
        & = \frac{1}{2\pi M} \int_{|y| \leq \sqrt{64\ln(n) / n}} (1 \pm O(y^{2})) \cdot g(y / M)^{n} \cdot \d y \\
        & = \frac{|C|^{n}}{2\pi M} \int_{|y| \leq \sqrt{64\ln(n) / n}} (1 \pm O(y^{2} + n y^{4})) \cdot \exp\bigg(-\frac{n \kappa}{2|C|} \Big(\sum_{c \in C} c^{2}\Big) y^{2}\bigg) \d y \\
        & = \frac{|C|^{n}}{2\pi M} \cdot (1 \pm o_{n}(1)) \cdot \int_{|y| \leq \sqrt{64\ln(n) / n}} \exp\bigg(-\frac{n \kappa}{2|C|} \Big(\sum_{c \in C} c^{2}\Big) y^{2}\bigg) \d y \\
        & = \frac{|C|^{n}}{2\pi M} \cdot (1 \pm o_{n}(1)) \cdot \sqrt{\frac{2\pi |C|}{n \kappa \sum_{c \in C} c^{2}}} \cdot \erf\left(\sqrt{\frac{32 \kappa}{|C|} \Big(\sum_{c \in C} c^{2}\Big) \ln(n)}\right) \\
        & = \rho_{n} \cdot (1 \pm o_{n}(1)) \cdot \erf\left(\sqrt{\frac{32 \kappa}{|C|} \Big(\sum_{c \in C} c^{2}\Big) \ln(n)}\right).
    \end{align*}
    Here the first step changes variable $y = M \theta$. The second step follows as $\cos(z) = 1 - O(z^{2})$ and $|t| = O(M)$. The third step substitutes \eqref{eq:gbound3}. The fourth step follows as $O(y^{2} + n y^{4}) = o_{n}(1)$ when $|y| \leq \sqrt{64\ln(n) / n}$. The fifth step resolves the integral by using the Gaussian error function $\erf(z)$. And the last step uses the definition of $\rho_{n}$ (see \eqref{eq:rho}).
    
    
    
    To finish the proof of \Cref{cla:discard-sub1-1}, it remains to reason about the Gaussian error function $\erf(z)$. It is known that $|1 - \erf(z)| \leq e^{-z^{2}}$ for any $z \in \R_{\geq 0}$. As a consequence, we have
    \begin{align}
        \label{eq:erf}
        \erf\left(\sqrt{\frac{32 \kappa}{|C|} \Big(\sum_{c \in C} c^{2}\Big) \ln(n)}\right)
        = 1 \pm n^{-32 \kappa |C|^{-1} (\sum_{c \in C} c^{2})}
        = 1 \pm n^{-32 \kappa}
        = 1 \pm o_{n}(1),
    \end{align}
    where the second step holds because $C = C(d) = \{\pm 1, \pm 2, \dots, \pm d\}$ and then $|C|^{-1} (\sum_{c \in C} c^{2}) \geq 1$, and the last step holds because $\kappa = \frac{1}{3} - \frac{1}{2M} + \frac{1}{6M^{2}} = \Omega_{n}(1)$.

    Combining everything together completes the proof of \Cref{cla:discard-sub1-1}.
    \end{proof}
    
    \begin{claim}
    \label{cla:discard-sub1-2}
    $|\eqref{eq:ez-part2}| = o(\rho_{n})$.
    \end{claim}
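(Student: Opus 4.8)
The plan is to mimic the analysis of the intermediate region in \cite{borgs2001phase}, adapted to the coefficient set $C = C(d)$, $d>1$. First I would substitute $y := M\theta$ (so $\d\theta = \d y/M$), bound $|\cos(\tau\theta)| \le 1$, and use that $g$ is even, reducing the claim to
\[
    \big|\eqref{eq:ez-part2}\big| \;\le\; \frac{1}{\pi M}\int_{\sqrt{64\ln(n)/n}}^{b_0}\big|g(y/M)\big|^{n}\,\d y .
\]
So it suffices to prove a pointwise estimate $|g(y/M)| \le |C|\cdot e^{-c_1\min(y^2,\,1)}$ for all $\sqrt{64\ln(n)/n}\le |y|\le b_0$, with $c_1 := 4\kappa/(\pi^2|C|) > 0$.

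To prove this pointwise estimate, split the $y$-range at a large constant $y_0$ (concretely $y_0 := 2M\sin^{-1}(a/(2M))$, which satisfies $a\le y_0\le (\pi/2)a$ and lies strictly between $\sqrt{64\ln(n)/n}$ and $b_0$ for large $n$). Note that throughout, $\theta = y/M = o_n(1)$ since $y = O(1)$ and $M = |C|^{\Omega(n)}\to\infty$, and in particular $|c\theta|\le db_0/M \le \pi/8$ for all $c\in C$ by the constraint $M\ge 8db$ in \eqref{eq:M-bound}. For $|y|\le y_0$: write $1-f(c\theta) = \Ex_{\bxi}[1-\cos(c\theta\bxi)]$ and use the elementary bound $1-\cos x\ge (2/\pi^2)x^2$ valid on $|x|\le\pi$; since $|c\theta\bxi|\le |c|\,|y|$ and $\kappa M^2 = \Ex[\bxi^2]$, this gives $1-f(c\theta)\ge (2\kappa/\pi^2)c^2 y^2$ whenever $|c|\,|y|\le\pi$. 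Keeping only the $c=\pm1$ terms (the rest being nonnegative as $f\le1$), and using the crude closed-form bound $|f(\theta)|\le\tfrac12+o_n(1)$ from \eqref{eq:f-def} when $\pi<|y|\le y_0$, yields $g(\theta)\le |C|-(4\kappa/\pi^2)\min(y^2,1)$, hence $g(\theta)\le |C|e^{-c_1\min(y^2,1)}$; the matching lower bound $g(\theta)\ge -|C|e^{-c_1\min(y^2,1)}$ follows since $f(c\theta)=\tfrac{\sin(cy)}{cy}\,(1+o_n(1))+o_n(1)\ge \min_t\tfrac{\sin t}{t}-o_n(1)>-\tfrac14$ in this range. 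For $y_0\le |y|\le b_0$: the closed form together with $\sin(\phi/2)\ge (2/\pi)(\phi/2)$ for $|\phi|\le\pi$ gives $|f(c\theta)|\le \frac{\pi}{2|c|\,|y|}+\frac{1}{2M}\le \frac{\pi}{2a}+o_n(1)$, a small constant since $a>|C|^2/\eps$; hence $|g(\theta)|\le\sum_{c\in C}|f(c\theta)|\le \tfrac12|C|\le|C|e^{-c_1}$. I expect this to be the main obstacle: compared to \cite{borgs2001phase} one must control $g(\theta)=\sum_{c\in C}f(c\theta)$, a superposition of $d$ ``sinc-like'' terms at different scales, and the inequalities imposed on $M$ in \eqref{eq:M-bound} are exactly what is needed to make the two regime-estimates hold simultaneously; getting all the constants to line up is the delicate part.

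Granting the pointwise estimate, the rest is routine. The integral is dominated by the Gaussian tail near the inner endpoint:
\[
    \int_{\sqrt{64\ln(n)/n}}^{b_0}e^{-c_1 n\min(y^2,1)}\,\d y \;\le\; \int_{\sqrt{64\ln(n)/n}}^{\infty}e^{-c_1 n y^2}\,\d y + b_0\,e^{-c_1 n} \;=\; \frac{n^{-\Omega(1)}}{\sqrt{n}},
\]
using $\int_z^{\infty}e^{-u^2}\,\d u\le e^{-z^2}/(2z)$ with $z = \Theta(\sqrt{\ln n})$ and $b_0 = O(1)$. Therefore
\[
    \big|\eqref{eq:ez-part2}\big| \;\le\; \frac{|C|^{n}}{\pi M}\cdot\frac{n^{-\Omega(1)}}{\sqrt{n}} \;=\; \rho_n\cdot n^{-\Omega(1)} \;=\; o(\rho_n),
\]
since $\rho_n = \Theta\big(|C|^{n}/(M\sqrt{n})\big)$ by \eqref{eq:rho}. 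This proves \Cref{cla:discard-sub1-2}.
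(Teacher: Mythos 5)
Your proposal is correct, and it follows the same overall strategy as the paper's proof of \Cref{cla:discard-sub1-2} (change variables $y = M\theta$, establish a pointwise bound on $|g|$ over the intermediate annulus, then integrate), but the pointwise estimate itself is obtained quite differently. The paper proves the uniform bound $|f(c\theta)| \le n^{-1/n}$ for \emph{every} nonzero $c \in C$ (via the identity leading to \eqref{eq:f-bound5} and a case split at $|y| = \pi/(2|c|)$), so that $|g|^n \le |C|^n/n$ everywhere on the annulus and the integral is bounded trivially by the width $2b_0$; this yields $|\eqref{eq:ez-part2}| = O(|C|^n/(Mn)) = \rho_n \cdot O(n^{-1/2})$. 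You instead extract decay only from the $c = \pm 1$ summands, lower-bounding $1 - f(\pm\theta) = \Ex_{\bxi}[1-\cos(\theta\bxi)]$ directly from the definition via $1-\cos x \ge (2/\pi^2)x^2$, and you handle the absolute value with a separate crude lower bound $f(c\theta) > -1/4$ from the sinc form; this gives the Gaussian-shaped bound $|g| \le |C|e^{-c_1\min(y^2,1)}$ and a tail integral. Your route is slightly more robust (it does not require every $f(c\theta)$ to be small, only that the $\pm 1$ terms contribute a deficit and that no term is too negative), and it avoids the inequality $|\sin z/z| + (1-\cos z)/40 \le e^{-z^2/8}$ that the paper needs to verify; the paper's route is more uniform across coefficients and gives a marginally cleaner final bound. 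I checked the delicate points — the validity of $1-\cos x \ge (2/\pi^2)x^2$ on $|x|\le\pi$ with $|c\theta\bxi| \le |c||y|$, the consistency of the constant $c_1 = 4\kappa/(\pi^2|C|)$ with both the $\pi < |y| \le y_0$ bound ($4\kappa/\pi^2 < 1$) and the two-sided bounds ($e^{-c_1} > 1/2 > 1/4$), the ordering $\sqrt{64\ln(n)/n} < y_0 < b_0$, and the tail integral computation — and all go through. Both arguments conclude with $o(\rho_n)$.
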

    
    \begin{proof}
    Once again we set $y := M \theta$. Recall that \eqref{eq:ez-part2} considers the range $|\theta| \in [\frac{\sqrt{64\ln(n) / n}}{M},\; \frac{b_{0}}{M}]$, namely $|y| \in [\sqrt{64\ln(n) / n},\; b_{0}]$. In this range, it turns out that
    \begin{align}
        \label{eq:f-bound4}
        |f(c \theta)| = |f(c y / M)| \leq \frac{1}{\sqrt[n]{n}},
    \end{align}
    for any (nonzero) coefficient $c \in C$. Assuming the truth of \eqref{eq:f-bound4} for the moment, we have
    \begin{align}
        \label{eq:g-bound4}
        |g(\theta)| = |g(y / M)|
        \leq \sum_{c \in C} |f(c y / M)|
        \leq \frac{|C|}{\sqrt[n]{n}},
    \end{align}
    for any $|y| \in [\sqrt{64\ln(n) / n},\; b_{0}]$. Hence, the magnitude of \eqref{eq:ez-part2} is at most
    \begin{align*}
        |\eqref{eq:ez-part2}|
        & = \bigg| \frac{1}{2\pi M} \int_{|y| \in [\sqrt{64\ln(n) / n},\; b_{0}]} \cos(t y / M) \cdot g(y / M)^{n} \cdot \d y \bigg| \\
        & \leq \frac{1}{2\pi M} \cdot 2b_{0} \cdot \Big( \max_{|y| \in [\sqrt{64\ln(n) / n},\; b_{0}]} |g(y / M)| \Big)^{n} \\
        & \leq \frac{1}{2\pi M} \cdot 2b_{0} \cdot \frac{|C|^{n}}{n} \\
        & = \frac{b_{0} |C|^{n}}{\pi M n} \\
        & = o(\rho_{n}),
    \end{align*}
    where the first step changes variables using the equality $y = M \theta$, the third step applies \eqref{eq:g-bound4}, and the last step holds since $b_{0} \leq \frac{\pi}{2} \cdot b$ is a constant (see \eqref{eq:b0}) and $\rho_{n} = \Theta(\frac{|C|^{n}}{M \sqrt{n}})$ (see \eqref{eq:rho}).

    It remains to verify \eqref{eq:f-bound4}. Before doing so, we observe that for any $|y| \leq b_{0}$,
    \begin{align}
        \label{eq:ez1-f-domain}
        \Big|\frac{c y}{2M}\Big|
        \leq \frac{|c| \cdot b_{0}}{2M}
        \leq \frac{|c| \cdot (\pi / 2) \cdot b}{16d \cdot b}
        \leq \frac{|c| \cdot (\pi / 32)}{d}
        \leq \frac{\pi}{32},
    \end{align}
    where the second step holds because $b_{0} \leq (\pi / 2) \cdot b$ (see \eqref{eq:b0}) and $M \geq 8d \cdot b$ (see \eqref{eq:M-bound}), and the last step holds because $c \in C = C(d) = \{\pm 1, \pm 2, \dots, \pm d\}$.

    Thus for any $|y| \leq b_{0}$, we can upper-bound $|f(c \theta)| = |f(c y / M)|$ as follows.
    \begin{align}
        |f(c \theta)| = |f(c y / M)|
        & = \Big|\frac{\sin(c y - c y / (2M))}{2 M \sin(c y / (2M))} + \frac{1}{2M} \Big|
        \nonumber \\
        & = \Big| \frac{\sin(c y)}{2M \tan(c y / (2M))} + \frac{1 - \cos(c y)}{2M} \Big|
        \nonumber \\
        & \leq \Big|\frac{\sin(c y)}{c y}\Big| + \frac{1 - \cos(c y)}{2M}
        \nonumber \\
        & \leq \Big|\frac{\sin(c y)}{c y}\Big| + \frac{1 - \cos(c y)}{40}.
        \label{eq:f-bound5}
    \end{align}
    Here the first step changes variables using the equality $y = M \theta$. The second step applies the identity $\sin(\alpha - \beta) = \sin(\alpha)\cos(\beta) - \sin(\beta)\cos(\alpha)$. The third step holds since $|\tan(z)| \geq |z|$ for any $|z| \leq \frac{\pi}{2}$ (notice that $|\frac{c y}{2M}| \leq \frac{\pi}{4} \leq \frac{\pi}{2}$, see \eqref{eq:ez1-f-domain}). And the last step follows since $M \geq 20$ (see \eqref{eq:M-bound}).

    We prove \eqref{eq:f-bound4} for any nonzero $c \in C$ and any $|y| \in [\sqrt{64\ln(n) / n},\; b_{0}]$ in two cases.
    
    \vspace{.1in}
    \noindent
    {\bf Case~I: $\sqrt{64\ln(n) / n} \leq |y| \leq \frac{\pi}{2|c|}$.}
    We know that $|\sin(z) / z| \leq e^{-z^{2} / 6}$ for any $|z| \leq \frac{\pi}{2}$ and $\cos(z) \geq 1 - \frac{1}{2}z^{2}$ for any $z \in \R$. Applying both facts to \eqref{eq:f-bound5} gives
    \begin{align*}
        |f(c \theta)| = |f(c y / M)|
        & \leq \Big|\frac{\sin(c y)}{c y}\Big| + \frac{1 - \cos(c y)}{40} \\
        & \leq e^{-\frac{1}{6} c^{2} y^{2}} + \frac{1}{80}c^{2} y^{2} \\
        & \leq e^{-\frac{1}{8} c^{2} y^{2}} \\
        & \leq e^{-\frac{1}{8} y^{2}} \\
        & \leq \frac{1}{\sqrt[n]{n}},
    \end{align*}
    where the third step holds since $e^{-z^{2} / 6} + \frac{1}{80}z^{2} \leq e^{-z^{2} / 8}$ for any $|z| \leq \frac{\pi}{2}$, the fourth step holds since $c \in C$ is a nonzero integer, and the last step holds when $|y| \geq \sqrt{64\ln(n) / n}$.
    
    \vspace{.1in}
    \noindent
    {\bf Case~II: $\frac{\pi}{2|c|} \leq |y| \leq b_{0}$.}
    Following \eqref{eq:f-bound5}, in this range we have
    \begin{align*}
        |f(c \theta)| = |f(c y / M)|
        & \leq \Big|\frac{\sin(c y)}{c y}\Big| + \frac{1 - \cos(c y)}{40} \\
        & \leq \frac{|\sin(c y)|}{\pi / 2} + \frac{1 + |\cos(c y)|}{40} \\
        & \leq \frac{2}{\pi} + \frac{1}{20} \\
        & \leq \frac{1}{\sqrt[n]{n}},
    \end{align*}
    where the second step applies $|y| \geq \frac{\pi}{2|c|}$, the third step holds because $|\sin(c y)| \leq 1$ and $|\cos(c y)| \leq 1$, and the last step holds because $\frac{2}{\pi} + \frac{1}{20} \approx 0.6866$ and $\frac{1}{\sqrt[n]{n}} = 1 - o_{n}(1)$.
    
    Combining both cases together gives \eqref{eq:f-bound4}. This completes the proof of \Cref{cla:discard-sub1-2}.
    \end{proof}

    \begin{claim}
    \label{cla:discard-sub1-3}
    $|\eqref{eq:ez-part3}| = o(\rho_{n})$.
    \end{claim}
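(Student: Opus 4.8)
The plan is to start from $|\eqref{eq:ez-part3}|\le \tfrac1\pi\int_{b_0/M}^{\pi}|g(\theta)|^{n}\,\d\theta$, using $|\cos(\tau\theta)|\le 1$ and the fact that $g$ (from \Cref{lem:discard-sub1}) and $\cos(\tau\theta)$ are even, so that everything reduces to controlling $|g(\theta)|=\big|\sum_{c\in C}f(c\theta)\big|$ on $[b_0/M,\pi]$. The intuition is that $|f(\psi)|$ is close to $1$ only when $\psi$ lies within $O(1/M)$ of $2\pi\Z$; hence $|g(\theta)|$ is close to $|C|=2d$ only near $\theta=0$, which is excluded from Part~III because $b_0/M$ is a large constant times $1/M$. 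Near the ``resonance points'' $\theta^{\ast}=2\pi m/c$ ($c\in[d]$, $m\ge 1$) — writing $\theta^{\ast}=2\pi p/q$ in lowest terms, so that $q\ge 2$ since $0<2\pi p/q\le \pi+o(1)<3\pi/2$ — only the $2\lfloor d/q\rfloor\le d$ coefficients $c\in C$ divisible by $q$ can make $f(c\theta^{\ast})$ large, so $|g|\le |C|/2+o(1)$ there; away from all resonance points $|g|\le 1$.

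Quantitatively I would first record that for every $\psi$, by the closed form \eqref{eq:f-def} and $|\sin x|\ge \tfrac{2}{\pi}\operatorname{dist}(x,\pi\Z)$,
\[
  |f(\psi)|\le \frac{1}{2M\,|\sin(\psi/2)|}+\frac{1}{2M}\le \frac{\pi}{2M\operatorname{dist}(\psi,2\pi\Z)}+\frac{1}{2M}.
\]
Then I would fix a window half-width $w:=\pi|C|/M=\Theta(1/M)$ (so that $\tfrac{\pi}{2Mw}\cdot|C|=\tfrac12$), noting $w<b_0/M$ for the admissible choices of $a,b$ and $M$ large, and set $\Theta^{\ast}:=\{2\pi m/c:\ c\in[d],\ m\ge 1,\ 2\pi m/c\le \pi+w\}$ (of size $O(d^{2})$), $\mathcal N:=\bigcup_{\theta^{\ast}\in\Theta^{\ast}}(\theta^{\ast}-w,\theta^{\ast}+w)$ and $\mathcal G:=[b_0/M,\pi]\setminus\mathcal N$, so $|\mathcal N|=O(d^{2}w)=O(1/M)$.

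On $\mathcal G$: if $\theta\in\mathcal G$, $c\in[d]$ had $\operatorname{dist}(c\theta,2\pi\Z)<cw$, then $|\theta-2\pi m/c|<w$ for some $m\ge 1$ (it cannot be $m\le 0$ since $\theta\ge b_0/M>w$), so $\theta\in\mathcal N$, a contradiction; hence $\operatorname{dist}(c\theta,2\pi\Z)\ge w$ for all $c\in[d]$, giving $|g(\theta)|\le 2\sum_{c=1}^{d}|f(c\theta)|\le |C|\big(\tfrac{\pi}{2Mw}+\tfrac1{2M}\big)=\tfrac12+\tfrac{|C|}{2M}\le 1$. On $\mathcal N\cap[b_0/M,\pi]$: a window centre $\theta^{\ast}=2\pi p/q$ has $q\ge 2$; for $\theta$ in the window, the $2\lfloor d/q\rfloor$ coefficients $c\in C$ with $q\mid c$ contribute $|f(c\theta)|\le 1$ each, while for $q\nmid c$ one has $\operatorname{dist}(c\theta^{\ast},2\pi\Z)\ge 2\pi/q$, hence $\operatorname{dist}(c\theta,2\pi\Z)\ge 2\pi/q-dw\ge \pi/q$ for $M$ large, so $|f(c\theta)|\le \tfrac{q+1}{2M}$; summing, $|g(\theta)|\le 2\lfloor d/q\rfloor+\tfrac{d(d+1)}{M}\le |C|/2+O(1/M)$ because $q\ge 2$.

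Combining,
\[
  |\eqref{eq:ez-part3}|\le \frac{1}{\pi}\Big(|\mathcal G|\cdot 1^{n}+|\mathcal N|\cdot\big(|C|/2+O(1/M)\big)^{n}\Big)=O(1)+O\!\left(\frac{(|C|/2)^{n}}{M}\right),
\]
and since $\rho_{n}=\Theta\big(|C|^{n}/(M\sqrt n)\big)=|C|^{\Omega(n)}\to\infty$ (using \eqref{eq:rho} and $M=O^{*}(|C|^{(1-\eps)n})$) while $(|C|/2)^{n}/M=o\big(|C|^{n}/(M\sqrt n)\big)$ because $2^{-n}\sqrt n\to 0$, both terms are $o(\rho_{n})$, proving the claim. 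The main obstacle is the case analysis near the resonance points: one must verify carefully that the reduced denominator $q$ is at least $2$ at every resonance point in the range — so that at most half of the $|C|$ summands can be near $\pm1$ simultaneously — and that the remaining summands stay uniformly bounded away from $\pm1$ throughout the (width‑$\Theta(1/M)$) window, which is exactly what keeps the $\mathcal N$‑contribution of order $(|C|/2)^{n}/M$ rather than $|C|^{n}/M$.
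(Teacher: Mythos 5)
Your proposal is correct, but it reaches the bound by a genuinely different route from the paper. The paper's proof of \Cref{cla:discard-sub1-3} exploits the fact that the two coefficients $c=\pm 1$ contribute terms with $|f(\pm\theta)|\le 1/a$ \emph{throughout} $|\theta|\in[\frac{b_0}{M},\pi]$ (this is \eqref{eq:fbound1}); bounding the remaining $|C|-2$ summands trivially by $1$ gives the crude uniform estimate $|g(\theta)|\le |C|-2+2/a=|C|^{(1-\eps_1)}$ on all of \eqref{eq:ez-part3}'s domain, which is then paired with a measure bound of $O(b_0|C|/M)$ on the set where $|g|>|C|/a$ (via periodicity of $f$), yielding $O(|C|^{(1-\eps_1)n}/M)+O(|C|^{-n})$. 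You instead carry out a resonance analysis: you localize the set where \emph{any} $|f(c\theta)|$ can be large to $O(d^2)$ windows of width $\Theta(1/M)$ around the points $2\pi m/c$, and---since each such point has reduced denominator $q\ge 2$ in the range $(0,\pi+o(1)]$---you show that at most $2\lfloor d/q\rfloor\le |C|/2$ summands can be near $1$ simultaneously while the rest are $O(1/M)$, giving the sharper window bound $|g|\le |C|/2+O(1/M)$ and the total $O(1)+O((|C|/2)^n/M)=o(\rho_n)$. Your divisibility argument is closer in spirit to what the paper is forced to do for the \emph{second} moment (\Cref{cla:discard-sub2-3} and the surrounding remark on constructive interference), where the $c=\pm 1$ shortcut is unavailable; for the first moment it buys a quantitatively better exponential rate $(|C|/2)^n$ versus $|C|^{(1-\eps_1)n}$ with $\eps_1$ possibly tiny, at the cost of the case analysis you flag. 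Two details worth making explicit: the step $\big(|C|/2+O(1/M)\big)^n=O\big((|C|/2)^n\big)$ uses the standing assumption $M=|C|^{\Omega(n)}$ (alternatively, absorb the $O(1/M)$ into a base such as $5|C|/8<|C|$, which still gives $o(\rho_n)$); and the inequality $b_0/M>w$ needed to rule out $m\le 0$ follows from $b>a>|C|^2/\eps\ge \pi|C|$ via \eqref{eq:M-bound} and \eqref{eq:b0}.
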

    
    \begin{proof}
    Recall that \eqref{eq:ez-part3} considers the range $|\theta| \in [\frac{b_{0}}{M},\; \pi]$, in which we have
    \begin{align}
        \label{eq:fbound1}
        |f(\theta)|
        = \Big|\frac{1}{M} \Big(\frac{\sin((M - 1/2)\theta)}{2 \sin(\theta/2)} +\frac{1}{2}\Big)\Big|
        \leq \frac{1}{b} + \frac{1}{2M}
        \leq \frac{1}{a}.
    \end{align}
    Here the first inequality holds as $|\sin((M - 1/2)\theta)| \leq 1$ and $|\sin(\theta / 2)| \geq |\sin(\frac{b_{0}}{2M})| = \frac{b}{2M}$ (see \eqref{eq:b0}), and the second inequality holds as $M \geq \frac{a b}{2b - 2a}$ (see \eqref{eq:M-bound}).

    For any $c \in C$ and $|\theta| \in [\frac{b_{0}}{M},\; \pi]$, we know from \eqref{eq:f-def} that $|f(c\theta)| = |\E_{\bxi \sim [0: M - 1]}[\cos(c \theta \bxi)]| \leq 1$. (Recall that $C = C(d) \supseteq \{\pm 1\}$.) Applying this fact and \eqref{eq:fbound1}, for any $|\theta| \in [\frac{b_{0}}{M},\; \pi]$ we have
    \begin{align}
        \label{eq:gbound2}
        |g(\theta)|
        \leq |f(\theta)| + |f(-\theta)| + \sum_{c \in C \setminus \{\pm 1\}} |f(c \theta)|
        \leq \frac{2}{a} + (|C| - 2)
        = |C|^{(1 - \eps_{1})}
    \end{align}
    for some constant $\eps_{1} := \eps_{1}(|C|, a) \in (0, 1)$, where the last step holds since $a > |C|^{2} / \eps > 1$ (see \eqref{eq:M-bound}). Further, we know that for any integer $c \in C$ the function $f_{c}(\theta) := f(c\theta)$ is $2\pi$-periodic (see \eqref{eq:f-def}). Thus \eqref{eq:fbound1} implies that 
    \begin{align}
        \label{eq:fbound2}
        \Big|\Big\{\theta \in [-\pi, \pi] : |f(c\theta)| > \frac{1}{a} \Big\}\Big| \leq \frac{2b_0}{M}.
    \end{align}
    This, given that $g(\theta) = \sum_{c \in C} f(c \theta)$, further gives
    \begin{align}
        \label{eq:gbound1}
        \Big|\Big\{\theta \in [-\pi, \pi] : |g(\theta)| > \frac{|C|}{a} \Big\}\Big| \leq \frac{2b_0 |C|}{M}.
    \end{align}
    
    By considering separately the subregion on which $|C| / a < |g(\theta)| \leq |C|^{(1 - \eps_{1})}$ and the subregion on which $|g(\theta)| \leq |C| / a$, the magnitude of \eqref{eq:ez-part3} is at most
    \begin{align*}
        |\eqref{eq:ez-part3}|
        & = \bigg| \frac{1}{2\pi} \int_{|\theta| \in [\frac{b_{0}}{M},\; \pi]} \cos(\tau\theta) \cdot g(\theta)^{n} \cdot \d\theta \bigg| \\
        & \leq \frac{1}{2\pi} \int_{|\theta| \in [\frac{b_{0}}{M},\; \pi]} |g(\theta)|^{n} \cdot \d\theta \\
        & \leq \frac{1}{2\pi} \bigg(|C|^{(1-\eps_{1})n} \cdot \frac{2b_0 |C|}{M} + \Big(\frac{|C|}{a}\Big)^{n} \cdot 2\pi\bigg) \\
        & = O\Big(\frac{|C|^{(1-\eps_1)n}}{M}\Big) + O\Big(\frac{1}{|C|^{n}}\Big),
    \end{align*}
    where the third step applies \eqref{eq:gbound2} and \eqref{eq:gbound1}, and the fourth step holds since (for the first term) both $b_{0}$ and $|C|$ are constants and (for the second term) $a > |C|^{2} / \eps > |C|^{2}$ (see \eqref{eq:M-bound}).
    
    Recall \eqref{eq:rho} that $\rho_{n} = |C|^{\Omega(n)}$.
    So the above two terms are upper-bounded by $O(|C|^{(1-\eps_1)n} / M) = \rho_{n} \cdot O(\sqrt{n} \cdot |C|^{-\eps_{1} n}) = o(\rho_{n})$ and $O(|C|^{-n}) = o_{n}(1) = o(\rho_{n})$, respectively.
    
    This completes the proof of \Cref{cla:discard-sub1-3}.
    \end{proof}

    
\begin{proof}[Proof of \Cref{lem:discard-sub1}]
    Putting \Cref{cla:discard-sub1-3,cla:discard-sub1-2,cla:discard-sub1-1} together completes the proof of \Cref{lem:discard-sub1} as follows:
    \begin{align*}
        \E_{\vec{\bx}}[\bcalZ]
        & ~ = ~ \eqref{eq:ez-part1} ~ + ~ \eqref{eq:ez-part2} ~ + ~ \eqref{eq:ez-part3} \\
        & ~\in ~ \eqref{eq:ez-part1} ~ \pm ~ (|\eqref{eq:ez-part2}| ~ + ~ |\eqref{eq:ez-part3}|) \\
        & ~ = ~ \rho_{n} \cdot (1 \pm o_{n}(1)).
        \qedhere
    \end{align*}
\end{proof}

\subsection{Upper Bound on the Second Moment of \texorpdfstring{\boldmath{$\mathcal{Z}$}}{}}
\label{sec:discard-ez2}

\begin{lemma}[Second Moment of {\boldmath{$\mathcal{Z}$}}]
    \label{lem:discard-sub2}
     Fix a coefficient set $C = C(d) = \{\pm 1, \pm 2, \dots, \pm d\}$ for some integer $d > 1$, $M = O^*(|C|^{(1 - \eps)n})$ for some constant $\eps > 0$, and a target $\tau = O(M)$. For $\rho_n$ and $f$ defined as in \cref{eq:rho} and \cref{eq:f-def}, we have
    \[
        \E_{\vec{\bx}}[\bcalZ^{2}]
        \leq \frac{1}{4\pi^{2}} \int_{-\pi}^{\pi} \int_{-\pi}^{\pi} |G(\theta_{1}, \theta_{2})|^{n} \cdot \d\theta_{1} \d\theta_{2}
        \leq \rho_{n}^{2}\cdot (1 + o_{n}(1)).
    \]
    where the function $G(\theta_{1}, \theta_{2}) := \sum_{(c_{1}, c_{2}) \in C^{2}} f(c_{1} \theta_{1} + c_{2} \theta_{2})$ for any $(\theta_{1}, \theta_{2}) \in [-\pi, \pi]^{2}$.
\end{lemma}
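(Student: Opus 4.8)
Starting from \eqref{eq:z-identity}, I would write
$\bcalZ^{2} = \frac{1}{4\pi^{2}} \int_{-\pi}^{\pi}\int_{-\pi}^{\pi} \cos(\tau\theta_{1})\cos(\tau\theta_{2}) \prod_{j\in[n]} \big(\sum_{c\in C}\cos(c\theta_{1}\bx_{j})\big)\big(\sum_{c\in C}\cos(c\theta_{2}\bx_{j})\big)\,\d\theta_{1}\,\d\theta_{2}$,
pairing up the two copies of the index $j$. Since the $\bx_{j}$ are i.i.d., Fubini gives
$\E_{\vec{\bx}}[\bcalZ^{2}] = \frac{1}{4\pi^{2}}\int\int \cos(\tau\theta_{1})\cos(\tau\theta_{2})\big(\E_{\bxi}[(\sum_{c}\cos(c\theta_{1}\bxi))(\sum_{c}\cos(c\theta_{2}\bxi))]\big)^{n}\,\d\theta_{1}\,\d\theta_{2}$.
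Expanding the product of the two inner sums, applying the identity $\cos\alpha\cos\beta = \tfrac{1}{2}(\cos(\alpha-\beta)+\cos(\alpha+\beta))$, and using the symmetry $C=-C$ to collapse the two halves, the per-coordinate expectation becomes $\sum_{(c_{1},c_{2})\in C^{2}}\E_{\bxi}[\cos((c_{1}\theta_{1}+c_{2}\theta_{2})\bxi)] = \sum_{(c_{1},c_{2})\in C^{2}} f(c_{1}\theta_{1}+c_{2}\theta_{2}) = G(\theta_{1},\theta_{2})$. Bounding $|\cos(\tau\theta_{i})|\le 1$ and $G^{n}\le |G|^{n}$ then yields the first inequality of the lemma.

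\textbf{Step 2: the main (Gaussian) region.} For the second inequality I would split $[-\pi,\pi]^{2}$ exactly as in the proof of \Cref{lem:discard-sub1}. On the core region $|\theta_{1}|,|\theta_{2}|\le \sqrt{64\ln(n)/n}/M$, set $y_{i}=M\theta_{i}$ and Taylor-expand using $f(\phi)=1-\tfrac{\kappa}{2}(M\phi)^{2}\pm O((M\phi)^{4})$. The cross terms vanish because $\sum_{(c_{1},c_{2})\in C^{2}}c_{1}c_{2}=(\sum_{c}c)^{2}=0$, so $\sum_{(c_{1},c_{2})\in C^{2}}(c_{1}\theta_{1}+c_{2}\theta_{2})^{2}=|C|\big(\sum_{c}c^{2}\big)(\theta_{1}^{2}+\theta_{2}^{2})$, giving $G(\theta_{1},\theta_{2})^{n}=|C|^{2n}\exp\big(-\tfrac{n\kappa}{2|C|}(\sum_{c}c^{2})M^{2}(\theta_{1}^{2}+\theta_{2}^{2})\big)\cdot(1\pm o_{n}(1))$ uniformly on this region. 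The integral then factors as a product of two one-dimensional Gaussian integrals, each equal (by the computation in \Cref{cla:discard-sub1-1}, together with the error-function estimate \eqref{eq:erf}) to $\sqrt{2\pi|C|/(n\kappa\sum_{c}c^{2})}\cdot(1\pm o_{n}(1))$; multiplying by $\tfrac{1}{4\pi^{2}}\cdot\tfrac{|C|^{2n}}{M^{2}}$ produces exactly $\rho_{n}^{2}\cdot(1\pm o_{n}(1))$.

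\textbf{Step 3: the remaining regions.} It remains to show the integral over the complement of the core region is $o(\rho_{n}^{2})$, and this is where the work lies. I would split the complement into: (i) a thin annulus $\{\sqrt{64\ln(n)/n}/M < \max(|\theta_{1}|,|\theta_{2}|)\le T/M\}$ for a small constant $T$, where the Taylor bound still yields $|G(\theta_{1},\theta_{2})|\le |C|^{2}\exp(-\Omega(M^{2}(\theta_{1}^{2}+\theta_{2}^{2})))$, so $|G|^{n}\le |C|^{2n}n^{-\Omega(1)}$ and, since the annulus has measure $O(1/M^{2})$, its contribution is $o(\rho_{n}^{2})$ (this is precisely why the threshold $\sqrt{64\ln(n)/n}$ is chosen); and (ii) the outer region $\max(|\theta_{1}|,|\theta_{2}|)>T/M$. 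For (ii) I would use the pointwise bound $|f(\phi)|\le \tfrac{2\pi}{M\cdot\mathrm{dist}(\phi,2\pi\Z)}$ (which follows from $|\sin(\theta/2)|\ge \mathrm{dist}(\theta,2\pi\Z)/\pi$, as in \eqref{eq:fbound1}), so a given argument $c_{1}\theta_{1}+c_{2}\theta_{2}$ contributes $|f|$ bounded away from $1$ unless it lies within $O(1/M)$ of a multiple of $2\pi$. The crucial structural fact — the $d>1$ analog of the parity phenomenon \cite{borgs2001phase} confronted for $d=1$ — is that whenever $(\theta_{1},\theta_{2})$ is not within $O(1/M)$ of the origin, it is impossible for \emph{all} the arguments $c_{1}\theta_{1}+c_{2}\theta_{2}$, $(c_{1},c_{2})\in C^{2}$, to lie simultaneously within $O(1/M)$ of multiples of $2\pi$: the pairs $(1,1)$ and $(1,-1)$ force $2\theta_{1},2\theta_{2}\equiv 0 \Mod{2\pi}$, hence $\theta_{i}$ near $\{0,\pm\pi\}$, and since $d>1$ the pair $(2,1)$ then kills the surviving ``corner'' combinations (e.g. $2\theta_{1}+\theta_{2}\equiv \pi\Mod{2\pi}$ at $(\theta_{1},\theta_{2})\approx(\pm\pi,\pm\pi)$). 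Consequently on the outer region a constant fraction of the $|C|^{2}$ arguments have $|f|$ bounded away from $1$, so $|G|\le \lambda|C|^{2}$ for a constant $\lambda<1$; and the sub-region of (ii) on which $|G|$ can still be a large multiple of $|C|^{2}$ (small neighborhoods of the finitely many resonant points such as $(0,\pm\pi)$, $(\pm\pi,0)$, $(\pm\pi,\pm\pi)$) has measure $O(1/M^{2})$, where $|G|\le \lambda_{k}|C|^{2}$ for a strictly sub-maximal constant $\lambda_{k}<1$. Pairing each sub-region's pointwise bound ($\lambda^{n}|C|^{2n}$ or $\lambda_{k}^{n}|C|^{2n}$) with its measure ($(2\pi)^{2}$ or $O(1/M^{2})$), and recalling $\rho_{n}^{2}=\Theta(|C|^{2n}/(M^{2}n))$ with $M=O^{*}(|C|^{(1-\eps)n})$, every such contribution is $o(\rho_{n}^{2})$; summing the finitely many pieces completes the bound. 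Then $\E_{\vec{\bx}}[\bcalZ^{2}]\le \rho_{n}^{2}(1+o_{n}(1))$ as claimed.

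\textbf{Main obstacle.} The hard part is the outer-region case analysis in Step 3. Unlike the one-dimensional setting, where it suffices that a single argument $c\theta$ be bounded away from $0$, here $G$ is a sum over a two-dimensional lattice of arguments and can a priori be large along curves; one must carefully enumerate the finitely many near-resonant regions, bound $|G|$ on each by a strictly sub-maximal multiple of $|C|^{2}$, and — as in the one-dimensional \Cref{cla:discard-sub1-3}, where the $O(1/M)$ measure of the bad set compensates the $|C|^{n}$ size of $|g|^{n}$ — show each such region has measure $O(1/M^{2})$ so that this factor compensates the near-$|C|^{2n}$ size of $|G|^{n}$ there. Verifying that $d>1$ eliminates all resonant points except the origin (equivalently, that no half-integer/parity obstruction survives) is the conceptual crux.
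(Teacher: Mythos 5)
Your Steps 1 and 2 are correct and coincide with the paper's argument: the product-to-sum identity plus the symmetry of $C$ is equivalent to the paper's $\cos\cos-\sin\sin$ manipulation, and the Gaussian computation on the core region, including the vanishing of the cross terms, is exactly \Cref{cla:discard-sub2-1}. The problems are in Step 3, and while your plan has the right spirit (pair pointwise bounds on $|G|$ with the measures of the sets where they hold), the accounting you propose does not close.

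The main gap is in the outer region (ii). You bound the generic part by ``$|G|\le\lambda|C|^{2}$ for a constant $\lambda<1$'' and pair it with measure $(2\pi)^{2}$, giving a contribution of order $\lambda^{n}|C|^{2n}$. Since $\rho_{n}^{2}=\Theta(|C|^{2n}/(M^{2}n))$ and $M$ may be as large as $|C|^{(1-\eps)n}$, this is $o(\rho_{n}^{2})$ only if $\lambda<|C|^{-2(1-\eps)}$, i.e., you need $|G|\lesssim|C|^{2\eps}$ on the bulk, not merely a sub-maximal multiple of $|C|^{2}$. Such a bound does hold generically (off the resonant sets each $|f(c_{1}\theta_{1}+c_{2}\theta_{2})|\le 1/a$ with $a>|C|^{2}/\eps$, so $|G|\le\eps$), but then the exceptional set is \emph{not} just $O(1/M^{2})$-neighborhoods of finitely many points: there is an intermediate tier you have omitted, namely the $O(1/M)$-neighborhoods of the lines $c_{1}\theta_{1}+c_{2}\theta_{2}\equiv 0\pmod{2\pi}$. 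For instance, on the strip $|\theta_{1}+\theta_{2}|=O(1/M)$ all $|C|$ terms with $c_{1}=c_{2}$ are $\approx 1$, so $|G|\approx|C|$ there; this set has measure $\Theta(1/M)$, and its contribution $\approx|C|^{n}/M$ is $o(\rho_{n}^{2})$ only because of that $1/M$ factor. Neither of your two tiers covers it. The paper's \Cref{cla:discard-sub2-5} (the set $Q'$ where $|G|>|C|^{2}/a$ has measure $O(1/M)$, with pointwise bound $|C|+\eps$ off $Q''$) and \Cref{cla:discard-sub2-4} (the set $Q''$ where $|G|>|C|+\eps$ has measure $O(1/M^{2})$, with pointwise bound $|C|^{2-\eps_{2}}$ from \Cref{cla:discard-sub2-3}) exist precisely to set up this three-tier accounting.

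A secondary gap is in your annulus (i): the claim that ``the Taylor bound still yields $|G|\le|C|^{2}\exp(-\Omega(M^{2}(\theta_{1}^{2}+\theta_{2}^{2})))$'' does not follow termwise, because individual arguments $c_{1}y_{1}+c_{2}y_{2}$ vanish identically along lines through the origin (e.g.\ $y_{2}=-y_{1}$ kills all $d$ terms with $c_{1}=c_{2}$), so decay in each term's argument does not translate into decay in $\|y\|$. The paper's \Cref{cla:discard-sub2-2} fixes this by pairing $(c_{1},c_{2})$ with $(c_{2},-c_{1})$: the two null lines are perpendicular, so by the Pythagorean theorem at least one of the paired arguments is $\gtrsim\|y\|$, and each pair contributes at most $\tfrac{1}{2}(1+e^{-\Omega(\|y\|^{2})})\le n^{-2/n}$. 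Some argument of this kind is needed before your annulus estimate is legitimate; your instinct that the ``conceptual crux'' is verifying that $d>1$ kills the corner resonances is right, but the quantitative bookkeeping above is where the proof actually lives.
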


    For any choice of $\vec{\bx} \sim [0: M - 1]^{n}$, we deduce from \eqref{eq:z-identity} that
    \begin{align*}
        \bcalZ^{2}
        & = \bigg(\frac{1}{2\pi} \int_{-\pi}^{\pi} \cos(\tau\theta) \prod_{j \in [n]} \Big(\sum_{c \in C} \cos(c \theta \bx_j)\Big) \cdot \d\theta\bigg)^{2} \\
        & = \frac{1}{4\pi^{2}} \int_{-\pi}^\pi \int_{-\pi}^\pi \cos(t\theta_{1}) \cos(t\theta_{2}) \prod_{j \in [n]} \Big(\sum_{(c_{1}, c_{2}) \in C^{2}} \cos(c_{1} \theta_{1} \bx_j) \cos(c_{2} \theta_{2} \bx_j)\Big) \cdot \d\theta_{1} \d\theta_{2} \\
        & \leq \frac{1}{4\pi^{2}} \int_{-\pi}^\pi \int_{-\pi}^\pi \prod_{j \in [n]} \bigg|\sum_{(c_{1}, c_{2}) \in C^{2}} \cos(c_{1} \theta_{1} \bx_j) \cos(c_{2} \theta_{2} \bx_j)\bigg| \cdot \d\theta_{1} \d\theta_{2} \\
        & = \frac{1}{4\pi^{2}} \int_{-\pi}^\pi \int_{-\pi}^\pi \prod_{j \in [n]} \bigg|\sum_{(c_{1}, c_{2}) \in C^{2}} \Big(\cos(c_{1} \theta_{1} \bx_j) \cos(c_{2} \theta_{2} \bx_j) - \sin(c_{1} \theta_{1} \bx_j) \sin(c_{2} \theta_{2} \bx_j)\Big)\bigg| \cdot \d\theta_{1} \d\theta_{2} \\
        & = \frac{1}{4\pi^{2}} \int_{-\pi}^\pi \int_{-\pi}^\pi \prod_{j \in [n]} \bigg|\sum_{(c_{1}, c_{2}) \in C^{2}} \cos(c_{1} \theta_{1} \bx_j + c_{2} \theta_{2} \bx_j)\bigg| \cdot \d\theta_{1} \d\theta_{2},
    \end{align*}
    where the last second step holds since $C = C(d) = \{\pm 1, \pm 2, \dots, \pm d\}$ is a symmetric set and $\sin(z)$ is an odd function, and the last step uses the identity $\cos(\alpha + \beta) = \cos(\alpha) \cos(\beta) - \sin(\alpha) \sin(\beta)$.
    
    Since all the $\bx_j \sim [0: M - 1]$ are i.i.d., the following holds for the second moment $\E_{\vec{\bx}}[\bcalZ^{2}]$:
    \begin{align}
        \E_{\vec{\bx}}[\bcalZ^{2}]
        & \leq \frac{1}{4\pi^{2}} \int_{-\pi}^\pi \int_{-\pi}^\pi \bigg|\sum_{(c_{1}, c_{2}) \in C^{2}} \Ex_{\bxi \sim [0: M - 1]}\Big[\cos(c_{1} \theta_{1} \bxi + c_{2} \theta_{2} \bxi)\Big]\bigg|^{n} \cdot \d\theta_{1} \d\theta_{2}
        \nonumber \\
        & = \frac{1}{4\pi^{2}} \int_{-\pi}^\pi \int_{-\pi}^\pi \bigg|\sum_{(c_{1}, c_{2}) \in C^{2}} f(c_{1} \theta_{1} + c_{2} \theta_{2})\bigg|^{n} \cdot \d\theta_{1} \d\theta_{2}
        \nonumber \\
        & = \frac{1}{4\pi^{2}} \int_{-\pi}^\pi \int_{-\pi}^\pi |G(\theta_{1}, \theta_{2})|^{n} \cdot \d\theta_{1} \d\theta_{2},
        \label{eq:z2-1}
    \end{align}
    where the second step
    uses the definition $f(\theta) = \Ex_{\bxi \sim [0: M - 1]}[\cos(\theta \bxi)]$ (see \eqref{eq:f-def}).
    
    \begin{remark}
    We will show that the mass of \eqref{eq:z2-1} is concentrated around the origin, where $|G(\theta_{1}, \theta_{2})|^{n} \approx |C|^{2n}$. Notably, this is \emph{not} true when $C = C(1)$, as considered in \cite{borgs2001phase}. In the $C = C(1)$ case, mass is also concentrated at the points $[\pm \pi, \pm \pi]$, the corners of the region of integration. This is because the function $G(\theta_1, \theta_2) = \sum_{(c_1, c_2) \in C^2} f(c_1\theta_1 + c_2\theta_2)$ consists of four simple summands that interfere constructively at this point. However, when $C = C(d)$ for any $d > 1$, interference from other summands ensures that this phenomenon does not happen.
    \end{remark}

    We reuse the same constants $a$, $b$ and $b_{0}$ (namely $|C|^{2} / \eps < a < b$ and $b_{0} = 2M \cdot \sin^{-1}(\frac{b}{2M})$) introduced in the proof of \Cref{lem:discard-sub1}. Define the subregions $R'_{\theta} \subseteq R''_{\theta} \subseteq [-\pi, \pi]^{2}$ by letting
    \begin{align*}
        R'_{\theta} := \Big[-\frac{\sqrt{64\ln(n)}}{M \sqrt{n}},\; \frac{\sqrt{64\ln(n)}}{M \sqrt{n}}\Big]^{2}
        \qquad \text{and} \qquad
        R''_{\theta} := \Big[-\frac{b_{0}}{M},\; \frac{b_{0}}{M}\Big]^{2}.
    \end{align*}
    Similar to the approach of \cite{borgs2001phase}, we split the formula \eqref{eq:z2-1} into the following three parts.
    \begin{align}
        \E_{\vec{\bx}}[\bcalZ]
        & \leq \frac{1}{4\pi^{2}} \iint_{(\theta_{1}, \theta_{2})\in R'_{\theta}} |G(\theta_{1}, \theta_{2})|^{n} \cdot \d \theta_{1} \d \theta_{2}
        \label{eq:ez2-part1}\tag{Part~1} \\
        & \qquad + \frac{1}{4\pi^{2}} \iint_{(\theta_{1}, \theta_{2})\in R''_{\theta} \setminus R'_{\theta}} |G(\theta_{1}, \theta_{2})|^{n} \cdot \d \theta_{1} \d \theta_{2}
        \label{eq:ez2-part2}\tag{Part~2} \\
        & \qquad + \frac{1}{4\pi^{2}} \iint_{(\theta_{1}, \theta_{2})\in [-\pi,\; \pi]^{2} \setminus R''_{\theta}} |G(\theta_{1}, \theta_{2})|^{n} \cdot \d \theta_{1} \d \theta_{2}
        \label{eq:ez2-part3}\tag{Part~3}.
    \end{align}
    We evaluate \eqref{eq:ez2-part1} and \eqref{eq:ez2-part2} respectively in \Cref{cla:discard-sub2-1,cla:discard-sub2-2}. \eqref{eq:ez2-part3} is rather tricky; we first prove three auxiliary results (\Cref{cla:discard-sub2-3,cla:discard-sub2-4,cla:discard-sub2-5}) and then evaluate it in \Cref{cor:discard-sub2}.

    \begin{claim}
    \label{cla:discard-sub2-1}
    $\eqref{eq:ez2-part1} = \rho_{n}^{2} \cdot (1 \pm o_{n}(1))$.
    \end{claim}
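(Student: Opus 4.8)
The plan is to reuse the proof of Claim~\ref{cla:discard-sub1-1} almost verbatim, now carried out in two dimensions. First I would change variables $y_i := M\theta_i$, turning $\eqref{eq:ez2-part1}$ into $\frac{1}{4\pi^2 M^2}\iint_{|y_1|,|y_2|\le\sqrt{64\ln(n)/n}}\big|G(y_1/M,\,y_2/M)\big|^n\,\d y_1\,\d y_2$. On this square every argument $c_1 y_1 + c_2 y_2$ of $f$ has magnitude at most $2d\sqrt{64\ln(n)/n}=o_n(1)$ (using that $d$ is a fixed constant), so the Taylor expansion $f(u/M) = (1-\tfrac{\kappa}{2}u^2)(1\pm O(u^4))$ from \eqref{eq:fbound3} is valid for each of the $|C|^2$ summands of $G(\theta_1,\theta_2)=\sum_{(c_1,c_2)\in C^2}f(c_1\theta_1+c_2\theta_2)$.

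The one genuinely new ingredient is the cancellation of the cross term coming from the symmetry of $C=C(d)$: since $\sum_{(c_1,c_2)\in C^2}c_1c_2 = \big(\sum_{c\in C}c\big)^2 = 0$, we get $\sum_{(c_1,c_2)\in C^2}(c_1 y_1 + c_2 y_2)^2 = |C|\big(\sum_{c\in C}c^2\big)(y_1^2+y_2^2)$. Hence $G(y_1/M, y_2/M) = |C|^2 - \tfrac{\kappa}{2}|C|\big(\sum_{c}c^2\big)(y_1^2+y_2^2) \pm O(\cdots)$, and exponentiating exactly as in \eqref{eq:gbound3} yields $\big|G(y_1/M,y_2/M)\big|^n = |C|^{2n}\exp\big(-\tfrac{n\kappa}{2|C|}(\sum_c c^2)(y_1^2+y_2^2)\big)(1\pm o_n(1))$, where the multiplicative error is uniformly $o_n(1)$ on the square because $n\cdot(c_1 y_1 + c_2 y_2)^4 = O((\ln n)^2/n) = o_n(1)$.

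Substituting this into the integral, the Gaussian weight factors as a product of two identical one-dimensional integrals. Each one-dimensional integral evaluates to $\sqrt{2\pi|C|/(n\kappa\sum_c c^2)}\cdot\erf(\cdots)$ with precisely the same error-function argument $\sqrt{32\kappa|C|^{-1}(\sum_c c^2)\ln(n)}$ that appears in Claim~\ref{cla:discard-sub1-1}, which is $1\pm o_n(1)$ by \eqref{eq:erf}. Collecting constants gives $\eqref{eq:ez2-part1} = \frac{|C|^{2n+1}}{2\pi M^2 n\kappa\sum_c c^2}(1\pm o_n(1)) = \rho_n^2(1\pm o_n(1))$, using the definition of $\rho_n$ in \eqref{eq:rho}.

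I do not expect a real obstacle here: this is a two-dimensional rerun of the argument for Claim~\ref{cla:discard-sub1-1}, and essentially the only thing to verify beyond that proof is the vanishing of $\sum_{(c_1,c_2)}c_1 c_2$, which is immediate from the symmetry of $C(d)$. The only care needed is, as in the one-dimensional case, checking that the fourth-order Taylor error survives being raised to the $n$th power as a $1\pm o_n(1)$ factor over the entire square $R'_\theta$. (Note that the factors $\cos(\tau\theta_1)\cos(\tau\theta_2)$ were already discarded via $|\cos|\le 1$ in passing to the bound \eqref{eq:z2-1}, so they play no role in this claim.)
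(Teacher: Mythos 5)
Your proposal is correct and follows the paper's own proof essentially step for step: the change of variables $y_i = M\theta_i$, the Taylor expansion of $f$ via \eqref{eq:fbound3}, the cancellation of the cross terms $2c_1c_2y_1y_2$ by symmetry of $C(d)$, the exponentiation with a uniform $1\pm O(ny_1^4+ny_2^4)=1\pm o_n(1)$ error, and the evaluation of the factored Gaussian integral via $\erf$ and \eqref{eq:erf}. The final constant $\frac{|C|^{2n+1}}{2\pi M^2 n\kappa\sum_c c^2}=\rho_n^2$ also matches \eqref{eq:rho}, so there is nothing to add.
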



    \begin{proof}
        (This proof is similar to the proof of \Cref{cla:discard-sub1-1}.)
        Following \cite{borgs2001phase}, we set $y_{1} := M \theta_{1}$ and $y_{2} := M \theta_{2}$. Thus we consider small enough $|y_{1}|, |y_{2}| \leq \sqrt{64\ln(n) / n} = o_{n}(1)$. In this range,

        \begin{align*}
            G(y_{1} / M, y_{2} / M)
            & = \sum_{(c_{1}, c_{2}) \in C^{2}} f(c_{1} y_{1} / M + c_{2} y_{2} / M) \\
            & = \sum_{(c_{1}, c_{2}) \in C^{2}} \Big(1 - \frac{\kappa}{2} (c_{1} y_{1} + c_{2} y_{2})^{2}\Big) \cdot (1 \pm O(y_{1}^{4} + y_{2}^{4})) \\
            & = \Big(|C|^{2} - |C| \cdot \frac{\kappa \sum_{c \in C} c^{2}}{2} \cdot (y_{1}^{2} + y_{2}^{2})\Big) \cdot (1 \pm O(y_{1}^{4} + y_{2}^{4})) \\
            & = |C|^{2} \cdot \exp\bigg(-\frac{\kappa \sum_{c \in C} c^{2}}{2|C|} (y_{1}^{2} + y_{2}^{2})\bigg) \cdot (1 \pm O(y_{1}^{4} + y_{2}^{4})),
        \end{align*}
        where the first step changes variables $y_{1} = M \theta_{1}$ and $y_{2} = M \theta_{2}$, the second step holds since we have $f(y / M) = (1 - \frac{\kappa}{2} y^{2}) \cdot (1 \pm O(y^4))$ for small $|y|$ (see \eqref{eq:fbound3}), the third step is elementary algebra (notice that $C = C(d) = \{\pm 1, \pm 2, \dots, \pm d\}$ is a symmetric set, so the crossing terms $2 c_{1} c_{2} y_{1} y_{2}$ get cancelled), and the last step uses the approximation $e^{-z} = (1 - z) \cdot (1 \pm O(z^{2}))$ for small $|z|$.

        As a consequence, the following holds for any $|y_{1}|, |y_{2}| \leq \sqrt{64\ln(n) / n}$.
        \begin{align}
            |G(y_{1} / M, y_{2} / M)|^{n}
            & = |C|^{2n} \cdot \exp\bigg(-\frac{n \kappa \sum_{c \in C} c^{2}}{2|C|} (y_{1}^{2} + y_{2}^{2})\bigg) \cdot (1 \pm O(n y_{1}^{4} + n y_{2}^{4}))
            \nonumber \\
            & = |C|^{2n} \cdot \exp\bigg(-\frac{n \kappa \sum_{c \in C} c^{2}}{2|C|} (y_{1}^{2} + y_{2}^{2})\bigg) \cdot (1 \pm o_{n}(1)).
            \label{eq:G-bound1}
        \end{align}
        This allows us to bound the magnitude of \eqref{eq:ez2-part1} as follows:
        \begin{align*}
            \eqref{eq:ez2-part1}
            & = \frac{1}{4\pi^{2} M^{2}} \iint_{|y_{1}|, |y_{2}| \leq \sqrt{64\ln(n) / n}} |G(y_{1} / M, y_{2} / M)|^{n} \cdot \d y_{1} \d y_{2} \\
            & = \frac{|C|^{2n}}{4\pi^{2} M^{2}} \cdot (1 \pm o_{n}(1)) \cdot \iint_{|y_{1}|, |y_{2}| \leq \sqrt{64\ln(n) / n}} \exp\bigg(-\frac{n \kappa \sum_{c \in C} c^{2}}{2|C|} (y_{1}^{2} + y_{2}^{2})\bigg) \cdot \d y_{1} \d y_{2} \\
            & = \frac{|C|^{2n}}{4\pi^{2} M^{2}} \cdot (1 \pm o_{n}(1)) \cdot \frac{2\pi |C|}{n \kappa \sum_{c \in C} c^{2}} \cdot \erf\left(\sqrt{\frac{32 \kappa}{|C|} \Big(\sum_{c \in C} c^{2}\Big) \ln(n)}\right)^{2} \\
            & = \frac{|C|^{2n}}{4\pi^{2} M^{2}} \cdot (1 \pm o_{n}(1)) \cdot \frac{2\pi |C|}{n \kappa \sum_{c \in C} c^{2}} \cdot (1 \pm o_{n}(1))^{2} \\
            & = \rho_{n}^{2} \cdot (1 \pm o_{n}(1)).
        \end{align*}
        Here the first step changes variables $y_{1} = M \theta_{1}$ and $y_{2} = M \theta_{2}$. The second step substitutes \eqref{eq:G-bound1}. The third step resolves the integral by using the Gaussian error function $\erf(z)$. The fourth step uses the approximation of $\erf(z)$ in \eqref{eq:erf}, and the last step uses the definition of $\rho_{n}$ (see \eqref{eq:rho}).
        
        This completes the proof of \Cref{cla:discard-sub2-1}.
    \end{proof}

    \begin{claim}
        \label{cla:discard-sub2-2}
        $\eqref{eq:ez2-part2} = o(\rho_{n}^{2})$.
    \end{claim}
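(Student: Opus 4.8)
The plan is to follow the template of \Cref{cla:discard-sub1-2}, but with one extra twist. A direct two-dimensional analogue of the bound used there, $|G(\theta_1,\theta_2)| \le |C|^2/\sqrt[n]{n}$ uniformly on $R''_\theta\setminus R'_\theta$, would only yield $\eqref{eq:ez2-part2}=O(\rho_n^2)$ rather than $o(\rho_n^2)$: the region $R''_\theta\setminus R'_\theta$ has area $\Theta(1/M^2)$, whereas $\rho_n^2=\Theta(|C|^{2n}/(M^2 n))$ already carries a factor $1/n$. So I would instead aim to prove that $|G(\theta_1,\theta_2)|^n \le |C|^{2n}\,n^{-(1+c)}$ for some constant $c>0$, uniformly on $R''_\theta\setminus R'_\theta$.

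The starting point is the factorization $G(\theta_1,\theta_2) = \Ex_{\bxi\sim[0:M-1]}\big[h(\theta_1,\bxi)\,h(\theta_2,\bxi)\big]$, where $h(\theta,\bxi):=\sum_{c\in C}\cos(c\theta\bxi)$; this follows from $f(\theta)=\Ex_{\bxi}[\cos(\theta\bxi)]$, the identity $\cos(\alpha+\beta)=\cos\alpha\cos\beta-\sin\alpha\sin\beta$, and the vanishing of $\sum_{c\in C}\sin(c\theta\bxi)$ for the symmetric set $C=C(d)$. On $R''_\theta\setminus R'_\theta$ at least one coordinate — say $\theta_1$, by symmetry — satisfies $\sqrt{64\ln n/n}/M \le |\theta_1| \le b_0/M$, while $|\theta_2|\le b_0/M$. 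Bounding $|h(\theta_2,\bxi)|\le|C|$ pointwise and applying Cauchy--Schwarz in $\bxi$ gives the key inequality
\[
    |G(\theta_1,\theta_2)| \;\le\; |C|\cdot\Ex_{\bxi}\big[|h(\theta_1,\bxi)|\big] \;\le\; |C|\sqrt{\Phi(\theta_1)},
    \qquad \Phi(\theta):=\Ex_{\bxi}\big[h(\theta,\bxi)^2\big],
\]
whose right-hand side depends only on the large coordinate $\theta_1$, which is what later makes the $\theta_2$-integration trivial.

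Then I would expand, using $\cos^2 u=\tfrac12(1+\cos 2u)$ and $\cos u\cos v=\tfrac12(\cos(u-v)+\cos(u+v))$,
\[
    \Phi(\theta) = \tfrac{|C|}{2} + \tfrac12\sum_{c\in C}f(2c\theta) + \sum_{c<c'\in C}\Big(f\big((c-c')\theta\big)+f\big((c+c')\theta\big)\Big),
\]
which equals $|C|^2$ at $\theta=0$. For our $\theta_1$, every summand $f(k\theta_1)$ with $k$ a nonzero integer and $|k|\le 2d$ satisfies $|f(k\theta_1)|\le n^{-8/n}$ for $n$ large: this is exactly \eqref{eq:f-bound4}, whose proof applies verbatim with $|k|\le 2d$ in place of $|c|\le d$ (the hypothesis $M\ge 8db$ keeps $|k\theta_1|/2\le\pi/2$), and whose Case~I branch in fact delivers the stronger $|f(k\theta_1)|\le e^{-(M\theta_1)^2/8}\le n^{-8/n}$ while Case~II gives a constant $<1$. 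The only summands of $\Phi(\theta_1)$ not of this form are the $d$ pairs with $c'=-c$, each equal to $f(0)=1$. Collecting, $\Phi(\theta_1)\le|C|+(|C|^2-|C|)n^{-8/n} = |C|^2-(|C|^2-|C|)(1-n^{-8/n}) \le |C|^2\big(1-\tfrac34\cdot\tfrac{4\ln n}{n}\big)$ for $n$ large, using $|C|=2d\ge4$ (so $1-1/|C|\ge\tfrac34$) and $1-n^{-8/n}\ge\tfrac{4\ln n}{n}$ eventually.

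It follows that $|G(\theta_1,\theta_2)|^n \le |C|^n\,\Phi(\theta_1)^{n/2} \le |C|^{2n}\big(1-\tfrac{3\ln n}{n}\big)^{n/2} \le |C|^{2n}\,e^{-\frac32\ln n} = |C|^{2n}\,n^{-3/2}$, and the same bound holds on the symmetric piece where $|\theta_2|$ is the large coordinate. Integrating this uniform bound over the area $\le 2\cdot(2b_0/M)^2$ of $R''_\theta\setminus R'_\theta$ (the factor $2$ absorbing the two symmetric cases) gives $\big|\eqref{eq:ez2-part2}\big| \le \tfrac{2b_0^2}{\pi^2}\cdot\tfrac{|C|^{2n}}{M^2\,n^{3/2}}$, and since $\rho_n=\Theta(|C|^n/(M\sqrt n))$ by \eqref{eq:rho}, this is $O(\rho_n^2/\sqrt n)=o(\rho_n^2)$, as needed. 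The step I expect to be the main obstacle is making the gap $|C|^2-\Phi(\theta_1)$ quantitatively $\Theta(\ln n/n)$ with a large enough leading constant to overwhelm the $\Theta(1/n)$ built into $\rho_n^2$; this is precisely why one must route through the Cauchy--Schwarz estimate $|G|\le|C|\sqrt{\Phi(\theta_1)}$ rather than settle for the crude $|G|\le|C|^2/\sqrt[n]{n}$. Everything else reduces to the one-variable estimates already established in \Cref{cla:discard-sub1-2}.
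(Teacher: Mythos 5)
Your argument is correct, and it reaches the conclusion by a genuinely different route than the paper. The paper stays in two dimensions: it groups the summands of $G$ into pairs $G_{c_1,c_2}(y_1,y_2)=\tfrac12\big(f(\tfrac{c_1y_1+c_2y_2}{M})+f(\tfrac{c_2y_1-c_1y_2}{M})\big)$ via the rotation $(c_1,c_2)\mapsto(c_2,-c_1)$, and uses the Pythagorean identity $d'^2+d''^2=y_1^2+y_2^2$ for the distances to the perpendicular lines $c_1z_1+c_2z_2=0$ and $c_2z_1-c_1z_2=0$ to conclude that at least one of the two $f$-arguments is far from $0$; this yields $|G_{c_1,c_2}|\le n^{-2/n}$ and hence $|G|^n\le|C|^{2n}n^{-2}$. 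You instead collapse the problem to one dimension: the factorization $G(\theta_1,\theta_2)=\Ex_{\bxi}[h(\theta_1,\bxi)h(\theta_2,\bxi)]$ together with the pointwise bound $|h|\le|C|$ and Cauchy--Schwarz reduces everything to the one-variable second moment $\Phi(\theta_1)=\Ex_{\bxi}[h(\theta_1,\bxi)^2]$ in whichever coordinate is guaranteed to be large, and the expansion of $\Phi$ into terms $f(k\theta_1)$ with $|k|\le 2d$ (total weight $|C|$ on the non-decaying $k=0$ terms, $|C|^2-|C|$ on the rest) gives $\Phi(\theta_1)\le|C|^2(1-3\ln n/n)$ and hence $|G|^n\le|C|^{2n}n^{-3/2}$. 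Both arguments correctly diagnose that the crude uniform bound $|C|^{2n}/n$ is not enough against $\rho_n^2=\Theta(|C|^{2n}/(M^2n))$ and manufacture an extra polynomial saving; the paper's is slightly stronger ($n^{-2}$ versus your $n^{-3/2}$), but either suffices. Two points you should make explicit in a full write-up: (i) extending \eqref{eq:f-bound4} to arguments $k\theta_1$ with $1\le|k|\le 2d$ requires rechecking the domain condition \eqref{eq:ez1-f-domain}, which survives with $\pi/32$ replaced by $\pi/16$, still comfortably below $\pi/2$, so the tangent estimate in \eqref{eq:f-bound5} and both cases of the one-variable analysis go through; and (ii) your final numerical step uses $1-1/|C|\ge 3/4$, i.e.\ $|C|=2d\ge 4$, which is exactly the standing hypothesis $d>1$ of \Cref{lem:discard-sub2}, so the constants close, but only because of that hypothesis.
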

    
    \begin{proof}
    (This proof is similar to the proof of \Cref{cla:discard-sub1-2}.)
    Once again, we set $y_{1} := M \theta_{1}$ and $y_{2} := M \theta_{2}$. Then \eqref{eq:ez2-part2} corresponds the range $(y_{1}, y_{2}) \in R''_{y} \setminus R'_{y}$, where
    \begin{align*}
        R'_{y} := [-\sqrt{64\ln(n) / n},\; \sqrt{64\ln(n) / n}]^{2}
        \qquad \mbox{and} \qquad
        R''_{y} := [-b_{0},\; b_{0}]^{2}.
    \end{align*}
    For each pair $(c_{1}, c_{2}) \in C^{2}$, we define the function $G_{c_{1}, c_{2}}(y_{1}, y_{2}) := \frac{1}{2}(f(\frac{c_{1} y_{1}}{M} + \frac{c_{2} y_{2}}{M}) + f(\frac{c_{2} y_{1}}{M} - \frac{c_{1} y_{2}}{M}))$. We will show that for any pair $(c_{1}, c_{2}) \in C^{2}$ and any point $(y_{1}, y_{2}) \in R''_{y} \setminus R'_{y}$, we have
    \begin{align}
        \label{eq:G-bound2}
        |G_{c_{1}, c_{2}}(y_{1}, y_{2})| \leq \frac{1}{\sqrt[n]{n^{2}}}.
    \end{align}
    Assuming \eqref{eq:G-bound2} for the moment, for any point $(y_{1}, y_{2}) \in R''_{y} \setminus R'_{y}$ we have
    \begin{align}
        |G(\theta_{1}, \theta_{2})|
        = |G(y_{1} / M, y_{2} / M)|
        & = \bigg|\sum_{(c_{1}, c_{2}) \in C^{2}} f(c_{1} y_{1} / M + c_{2} y_{2} / M)\bigg|
        \nonumber \\
        & = \bigg|\sum_{(c_{1}, c_{2}) \in C^{2}} G_{c_{1}, c_{2}}(y_{1}, y_{2})\bigg|
        \nonumber \\
        & \leq \frac{|C|^{2}}{\sqrt[n]{n^{2}}},
        \label{eq:G-bound3}
    \end{align}
    where the second step holds since, regarding the definition of $G_{c_{1}, c_{2}}$, the mapping $(c_{1}, c_{2}) \mapsto (c_{2}, -c_{1})$ is one-to-one, and the last step uses \eqref{eq:G-bound2}.
    
    Hence, we can upper-bound \eqref{eq:ez2-part2} as follows.
    \begin{align*}
        \eqref{eq:ez2-part2}
        & = \frac{1}{4\pi^{2} M^{2}} \iint_{y_{1}, y_{2} \in R''_{y} \setminus R'_{y}} |G(y_{1} / M + y_{2} / M)|^{n} \cdot \d y_{1} \d y_{2} \\
        & \leq \frac{1}{4\pi^{2} M^{2}} \cdot |R''_{y}| \cdot \Big( \max_{y_{1}, y_{2} \in R''_{y} \setminus R'_{y}} |G(y_{1} / M + y_{2} / M)| \Big)^{n} \\
        & \leq \frac{1}{4\pi^{2} M^{2}} \cdot |R''_{y}| \cdot \frac{|C|^{2n}}{n^{2}} \\
        & = \frac{b_{0}^{2} |C|^{2n}}{\pi^{2} M^{2} n^{2}} \\
        & = o(\rho_{n}^{2}),
    \end{align*}
    where the first step changes variables $y_{1} = M \theta_{1}$ and $y_{2} = M \theta_{2}$, the third step substitutes \eqref{eq:G-bound3}, the fourth step is elementary algebra (notice that $|R''_{y}| = 4b_{0}^{2}$), and the last step holds because $b_{0} \leq \frac{\pi}{2} \cdot b$ is a constant (see \eqref{eq:b0}) and $\rho_{n} = \Theta(\frac{|C|^{n}}{M \sqrt{n}})$ (see \eqref{eq:rho}).

    It remains to verify \eqref{eq:G-bound2}. Before doing so, we observe that for any point $(y_{1}, y_{2}) \in R''_{y} \setminus R'_{y}$,
    \begin{align*}
        \Big|\frac{c_{1} y_{1} + c_{2} y_{2}}{2M}\Big|
        \leq \frac{|c_{1}| + |c_{2}|}{2M} \cdot b_{0}
        \leq \frac{|c_{1}| + |c_{2}|}{16d \cdot b} \cdot (\pi / 2) \cdot b
        \leq \frac{|c_{1}| + |c_{2}|}{d} \cdot (\pi / 4)
        \leq \frac{\pi}{2}.
    \end{align*}
    where the second step holds because $b_{0} \leq (\pi / 2) \cdot b$ (see \eqref{eq:b0}) and $M \geq 8d \cdot b$ (see \eqref{eq:M-bound}), and the last step holds because $c \in C = C(d) = \{\pm 1, \pm 2, \dots, \pm d\}$. Since we always have $|\frac{c_{1} y_{1} + c_{2} y_{2}}{2M}| \leq \frac{\pi}{2}$ in the considered range, the arguments for \eqref{eq:f-bound5} can be reused, resulting in
    \begin{align}
        |f(c_{1} y_{1} / M + c_{2} y_{2} / M)|
        \leq \Big|\frac{\sin(c_{1} y_{1} + c_{2} y_{2})}{c_{1} y_{1} + c_{2} y_{2}}\Big| + \frac{1 - \cos(c_{1} y_{1} + c_{2} y_{2})}{40}.
        \label{eq:f-bound7}
    \end{align}
    Similarly, for any point $(y_{1}, y_{2}) \in R''_{y} \setminus R'_{y}$, we also have $|\frac{c_{2} y_{1} - c_{1} y_{2}}{2M}| \leq \frac{\pi}{2}$ and
    \begin{align}
        |f(c_{2} y_{1} / M - c_{1} y_{2} / M)|
        \leq \Big|\frac{\sin(c_{2} y_{1} - c_{1} y_{2})}{c_{2} y_{1} - c_{1} y_{2}}\Big| + \frac{1 - \cos(c_{2} y_{1} - c_{1} y_{2})}{40}.
        \label{eq:f-bound8}
    \end{align}

    To prove \eqref{eq:G-bound2} for any pair $(c_{1}, c_{2}) \in C^{2}$ and any point $(y_{1}, y_{2}) \in R''_{y} \setminus R'_{y}$, let us do case analysis.
    
    \vspace{.1in}
    \noindent
    {\bf Case~1: When $|c_{1} y_{1} + c_{2} y_{2}| \geq \frac{\pi}{2}$.}
    We deduce that
    \begin{align*}
        |G_{c_{1}, c_{2}}(y_{1}, y_{2})|
        & = \frac{1}{2} \cdot \Big|f(c_{1} y_{1} / M + c_{2} y_{2} / M) + f(c_{2} y_{1} / M - c_{1} y_{2} / M)\Big| \\
        & \leq \frac{1}{2} \cdot \Big(\rhs \mbox{ of } \eqref{eq:f-bound7} + 1\Big) \\
        & \leq \frac{1}{2} \cdot \Big(\frac{2}{\pi} + \frac{1}{20} + 1\Big) \\
        & \leq \frac{1}{\sqrt[n]{n^{2}}},
    \end{align*}
    where the second step uses \eqref{eq:f-bound7} and the fact that $|f(\theta)| \leq 1$ for any $\theta \in \R$ (see \eqref{eq:f-def}), the third step holds since $|\frac{\sin(z)}{z}| + \frac{1 - \cos(z)}{40} \leq \frac{2}{\pi} + \frac{1}{20}$ when $|z| \geq \frac{\pi}{2}$ (see the proof of \Cref{cla:discard-sub1-2}, {\bf Case~II}), and the last step holds since $\frac{1}{2} \cdot (\frac{2}{\pi} + \frac{1}{20} + 1) \approx 0.8433$ and $\frac{1}{\sqrt[n]{n^{2}}} = 1 - o_{n}(1)$.

    \vspace{.1in}    
    \noindent
    {\bf Case~2: When $|c_{2} y_{1} - c_{1} y_{2}| \geq \frac{\pi}{2}$.}
    Here we can reapply the arguments for {\bf Case~1}.

    \vspace{.1in}    
    \noindent
    {\bf Case~3: When $|c_{1} y_{1} + c_{2} y_{2}| \leq \frac{\pi}{2}$ and $|c_{2} y_{1} - c_{1} y_{2}| \leq \frac{\pi}{2}$.}
    Combining \eqref{eq:f-bound7} and \eqref{eq:f-bound8} gives
    \begin{align}
        |G_{c_{1}, c_{2}}(y_{1}, y_{2})|
        & = \frac{1}{2} \cdot \Big|f(c_{1} y_{1} / M + c_{2} y_{2} / M) + f(c_{2} y_{1} / M - c_{1} y_{2} / M)\Big|
        \nonumber \\
        & \leq \frac{1}{2} \cdot \Big(\rhs \mbox{ of } \eqref{eq:f-bound7} + \rhs \mbox{ of } \eqref{eq:f-bound8}\Big)
        \nonumber \\
        & \leq \frac{1}{2} \cdot \Big(e^{-\frac{1}{8}(c_{1} y_{1} + c_{2} y_{2})^{2}} + e^{-\frac{1}{8}(c_{2} y_{1} - c_{1} y_{2})^{2}}\Big),
        \label{eq:G-bound4}
    \end{align}
    where the third step holds since $|\frac{\sin(z)}{z}| + \frac{1 - \cos(z)}{40} \leq e^{-z^{2} / 8}$ for any $|z| \leq \frac{\pi}{2}$ (see the proof of \Cref{cla:discard-sub1-2}, {\bf Case~I}; notice that $|\frac{c_{1} y_{1} + c_{2} y_{2}}{2M}| \leq \frac{\pi}{2}$ and $|\frac{c_{2} y_{1} - c_{1} y_{2}}{2M}| \leq \frac{\pi}{2}$).

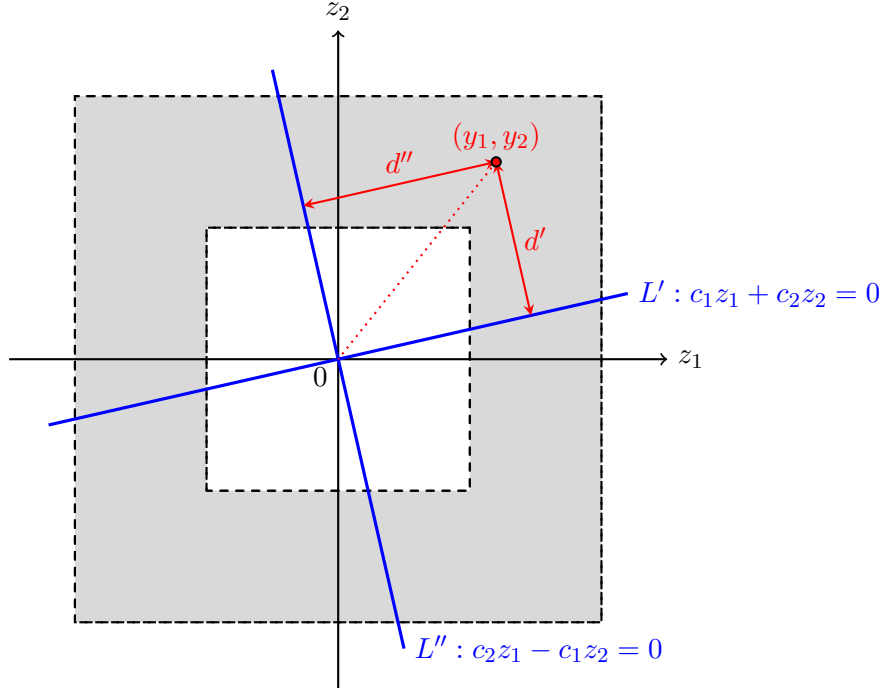
\begin{figure}[t]
    \centering
    \begin{tikzpicture}[thick, smooth, scale = 1.75]
    \draw [dashed, fill = gray!30] (-2, -2) rectangle (2, 2);
    \draw [dashed, fill = white] (-1, -1) rectangle (1, 1);
    \draw[->] (-2.5, 0) -- (2.5, 0);
    \draw[->] (0, -2.5) -- (0, 2.5);
    \node[right] at (2.5, 0) {$z_{1}$};
    \node[above] at (0, 2.5) {$z_{2}$};
    \draw[dashed] (2, 2) -- (-2, 2) -- (-2, -2) -- (2, -2) -- (2, 2);
    \draw[dashed] (1, 1) -- (-1, 1) -- (-1, -1) -- (1, -1) -- (1, 1);
    \draw[blue, very thick] (-2.2, -0.5) -- (2.2, 0.5) (-0.5, 2.2) -- (0.5, -2.2);
    \draw[red, dotted] (1.2, 1.5) -- (0, 0);
    \draw[stealth-stealth, red] (1.2, 1.5) -- (-0.2652, 1.1670);
    \draw[stealth-stealth, red] (1.2, 1.5) -- (1.4652, 0.3330);
    \node[right, red] at (1.3326, 0.9165) {$d'$};
    \node[above, red] at (0.4674, 1.3335) {$d''$};
    \node[above, red] at (1.2, 1.5) {$(y_{1}, y_{2})$};
    \draw[black, fill = red] (1.2, 1.5) circle [radius = 1pt];
    \node[right, blue] at (2.2, 0.5) {$L': c_{1} z_{1} + c_{2} z_{2} = 0$};
    \node[right, blue] at (0.5, -2.2) {$L'': c_{2} z_{1} - c_{1} z_{2} = 0$};

    \node[anchor = 45] at (0, 0) {$0$};
    \end{tikzpicture}
    \caption{Demonstration for the formula $\eqref{eq:G-bound4} \geq |G_{c_{1}, c_{2}}(y_{1}, y_{2})|$. The gray region represents the considered range $(y_{1}, y_{2}) \in R''_{y} \setminus R'_{y}$, where $R'_{y} = [-\sqrt{64\ln(n) / n},\; \sqrt{64\ln(n) / n}]^{2}$ and $R''_{y} = [-b_{0},\; b_{0}]^{2}$. The Pythagorean theorem guarantees that $d'^{2} + d''^{2} = y_{1}^{2} + y_{2}^{2}$.}
    \label{fig:discard-sub2-2}
\end{figure}

    Let us give a geometric interpretation for \eqref{eq:G-bound4}, which is demonstrated in \Cref{fig:discard-sub2-2}. Consider two straight lines $L': c_{1} z_{1} + c_{2} z_{2} = 0$ and $L'': c_{2} z_{1} - c_{1} z_{2} = 0$. By elementary algebra, we can see that the Euclidean distance $d' := d'(y_{1}, y_{2})$ (resp.\ the Euclidean distance $d'' := d''(y_{1}, y_{2})$) from a given point $(y_{1}, y_{2}) \in \R$ to straight line $L'$ (resp.\ straight line $L''$) is given by
    \begin{align}
    \label{eq:G-bound5}
        d' = \frac{|c_{1} y_{1} + c_{2} y_{2}|}{\sqrt{c_{1}^{2} + c_{2}^{2}}}
        \qquad \mbox{and} \qquad
        d'' = \frac{|c_{2} y_{1} - c_{1} y_{2}|}{\sqrt{c_{1}^{2} + c_{2}^{2}}}.
    \end{align}
    Moreover, $L'$ and $L''$ are perpendicular (by construction), so the Pythagorean theorem ensures that $d'^{2} + d''^{2} = y_{1}^{2} + y_{2}^{2}$. Accordingly, the larger distance $\max\{d',\; d''\}$ is lower bounded by
    \begin{align}
    \label{eq:G-bound6}
        \max\{d',\; d''\} \geq \sqrt{(d'^{2} + d''^{2}) / 2} = \sqrt{(y_{1}^{2} + y_{1}^{2}) / 2} \geq \sqrt{32\ln(n) / n},
    \end{align}
    where the first step uses the AM-GM inequality, and the last step holds since $(y_{1}, y_{2}) \notin R'_{y} = [-\sqrt{64\ln(n) / n},\; \sqrt{64\ln(n) / n}]^{2}$ (see \Cref{fig:discard-sub2-2}).

    Putting everything together results in
    \begin{align*}
        |G_{c_{1}, c_{2}}(y_{1}, y_{2})|
        \leq \eqref{eq:G-bound4}
        & = \frac{1}{2} \cdot \Big(e^{-\frac{c_{1}^{2} + c_{2}^{2}}{8} \cdot d'^{2}} + e^{-\frac{c_{1}^{2} + c_{2}^{2}}{8} \cdot d''^{2}}\Big) \\
        & \leq \frac{1}{2} \cdot \Big(e^{-\frac{1}{4} \cdot d'^{2}} + e^{-\frac{1}{4} \cdot d''^{2}}\Big) \\
        & \leq \frac{1}{2} \cdot \Big(e^{-\frac{1}{4} \cdot 0} + e^{-\frac{1}{4} \cdot 32\ln(n) / n}\Big) \\
        & = \frac{1}{2} \cdot \Big(1 + \frac{1}{\sqrt[n]{n^{8}}}\Big) \\
        & \leq \frac{1}{\sqrt[n]{n^{2}}},
    \end{align*}
    where the first step applies \eqref{eq:G-bound5}, the second step follows since the coefficients $c_{1}, c_{2} \in C$ are nonzero integers, the third step applies \eqref{eq:G-bound6} and the fact that $\min\{d',\; d''\} \geq 0$, the fourth step is elementary algebra, and the last step holds whenever $n \geq 6$ (see \eqref{eq:M-bound}).

    Combining all the three cases together gives \eqref{eq:G-bound2}. This completes the proof of \Cref{cla:discard-sub2-2}.
    \end{proof}
    
    Below we use \Cref{cla:discard-sub2-3,cla:discard-sub2-4,cla:discard-sub2-5} (as auxiliaries) to evaluate \eqref{eq:ez2-part3} in \Cref{cor:discard-sub2}. This evaluation together with \Cref{cla:discard-sub2-1,cla:discard-sub2-2} immediately gives \Cref{lem:discard-sub2}.

    \begin{claim}
        \label{cla:discard-sub2-3}
        There exists some constant $\eps_{2} := \eps_{2}(|C|, a) > 0$ such that, for any $(\theta_{1}, \theta_{2})\in [-\pi,\; \pi]^{2} \setminus R''_{\theta}$,
        \begin{align*}
            |G(\theta_{1}, \theta_{2}))| \leq |C|^{2} - \frac{1 - 1 / a}{2} \leq |C|^{(2 - \eps_{2})}.
        \end{align*}
    \end{claim}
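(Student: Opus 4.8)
The plan is to reduce the bound on $|G(\theta_{1},\theta_{2})|$ to the assertion that \emph{at least one} of the $|C|^{2}$ summands $f(c_{1}\theta_{1}+c_{2}\theta_{2})$ is small. Since $f(\theta)=\Ex_{\bxi}[\cos(\theta\bxi)]$ always satisfies $|f(\theta)|\le 1$, if we can produce some pair $(c_{1},c_{2})\in C^{2}$ with $|f(c_{1}\theta_{1}+c_{2}\theta_{2})|\le 1/a$, then
\[
|G(\theta_{1},\theta_{2})|\le\sum_{(c_{1},c_{2})\in C^{2}}|f(c_{1}\theta_{1}+c_{2}\theta_{2})|\le(|C|^{2}-1)+\frac{1}{a}=|C|^{2}-\Bigl(1-\frac1a\Bigr)\le|C|^{2}-\frac{1-1/a}{2},
\]
the last step using $a>1$ (from \eqref{eq:M-bound}). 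Because $a>1$, the quantity $|C|^{2}-\tfrac{1-1/a}{2}$ is a positive constant strictly less than $|C|^{2}$, so it equals $|C|^{2-\eps_{2}}$ with $\eps_{2}:=2-\log_{|C|}\!\bigl(|C|^{2}-\tfrac{1-1/a}{2}\bigr)>0$, a constant depending only on $|C|$ and $a$. So it remains to exhibit one small summand.

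For that I would reuse the estimate already established in the proof of \Cref{cla:discard-sub1-3}: by \eqref{eq:fbound1}, $|f(\theta)|\le\tfrac1b+\tfrac{1}{2M}\le\tfrac1a$ whenever $|\theta|\in[b_{0}/M,\pi]$. Since $f$ is even and $2\pi$-periodic, this says exactly that $|f(\theta)|\le 1/a$ whenever $\mathrm{dist}(\theta,2\pi\Z)\ge b_{0}/M$. Write $q:=b_{0}/M$, and recall from \eqref{eq:b0}, \eqref{eq:M-bound} that $q\le\pi/16$. Observe that $G(\theta_{1},\theta_{2})=G(\theta_{2},\theta_{1})$ (relabel $(c_{1},c_{2})\leftrightarrow(c_{2},c_{1})$), and that $(\theta_{1},\theta_{2})\notin R''_{\theta}=[-q,q]^{2}$ means $|\theta_{1}|>q$ or $|\theta_{2}|>q$; so we may assume $|\theta_{1}|>q$, and since $|\theta_{1}|\le\pi$ this gives $\mathrm{dist}(\theta_{1},2\pi\Z)=|\theta_{1}|\in(q,\pi]$.

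Now I would examine the four summands indexed by $(c_{1},c_{2})\in\{\pm1,\pm2\}\times\{1\}$; these are legitimate indices precisely because $d>1$ forces $2\in C$, which is exactly where the hypothesis $d>1$ (and the contrast with $C(1)$ in \cite{borgs2001phase}) enters. Their arguments are $\theta_{2}+k\theta_{1}$ for $k\in\{\pm1,\pm2\}$. Suppose for contradiction that all four lie within $q$ of $2\pi\Z$. Subtracting the $k=1$ argument from the $k=2$ argument (subadditivity of $\mathrm{dist}(\cdot,2\pi\Z)$) gives $\mathrm{dist}(\theta_{1},2\pi\Z)<2q$, hence $|\theta_{1}|<2q$ (as $|\theta_{1}|\le\pi<2\pi-2q$); subtracting the $k=-1$ argument from the $k=1$ argument gives $\mathrm{dist}(2\theta_{1},2\pi\Z)<2q$, hence $|\theta_{1}|<q$ or $|\theta_{1}|>\pi-q$. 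Since $2q\le\pi/8<\pi-q$, the two conclusions jointly force $|\theta_{1}|<q$, contradicting $|\theta_{1}|>q$. Hence some $\theta_{2}+k\theta_{1}$ has $\mathrm{dist}(\cdot,2\pi\Z)\ge q=b_{0}/M$, so the corresponding summand has magnitude $\le1/a$, which finishes the proof.

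The main obstacle is the degenerate regime where $|\theta_{1}|$ is close to $\pi$: there the naive choice of looking only at the two summands with $(c_{1},c_{2})=(1,\pm1)$ breaks down, since $\theta_{1}\pm\theta_{2}$ can both be near $2\pi\Z$ while $\theta_{1}\approx\pm\pi$. Bringing in the coefficient $c_{1}=2$ (available only because $d>1$) is what resolves this, and everything else is elementary bookkeeping with distances to $2\pi\Z$ using $q\le\pi/16$.
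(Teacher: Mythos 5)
Your proof is correct, and it takes a genuinely different route from the paper's. The paper groups the $|C|^2$ summands into pairs via the map $(c_{1},c_{2})\mapsto(c_{2},-c_{1})$, defines $f_{c_{1},c_{2}}(\theta_1,\theta_2):=\tfrac12\bigl(f(c_{1}\theta_{1}+c_{2}\theta_{2})+f(c_{2}\theta_{1}-c_{1}\theta_{2})\bigr)$, and argues geometrically that the two ``bad'' regions $R_{1,1}$ and $R_{2,1}$ (where both members of the respective pair can simultaneously be large) are disjoint and, outside $R''_{\theta}$, cannot both contain the point; this yields one pair whose average is at most $(1+1/a)/2$ and hence the stated bound. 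The disjointness/covering step is justified by inspection of \Cref{fig:discard-sub2-3} rather than by a written computation. You instead fix $c_{2}=1$, look at the four summands with $c_{1}=k\in\{\pm1,\pm2\}$ (available only because $d>1$, the same place the paper uses $2\in C$), and run an elementary triangle-inequality argument on $\mathrm{dist}(\cdot,2\pi\Z)$: if all of $\theta_{2}+k\theta_{1}$ were within $q=b_{0}/M$ of $2\pi\Z$, then differencing $k=2,1$ forces $|\theta_{1}|<2q$ while differencing $k=1,-1$ forces $|\theta_{1}|<q$ or $|\theta_{1}|>\pi-q$, which together with $2q\le\pi/8$ contradict $|\theta_{1}|>q$. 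This is fully self-contained, handles the troublesome corner regime $|\theta_{1}|\approx\pi$ explicitly, and in fact gives the slightly stronger intermediate bound $|G|\le|C|^{2}-(1-1/a)$, which of course implies $|C|^{2}-\tfrac{1-1/a}{2}$. The symmetry reduction $G(\theta_{1},\theta_{2})=G(\theta_{2},\theta_{1})$ and the periodic/even extension of \eqref{eq:fbound1} to $\mathrm{dist}(\theta,2\pi\Z)\ge b_{0}/M$ are both valid, so no gaps remain.
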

    
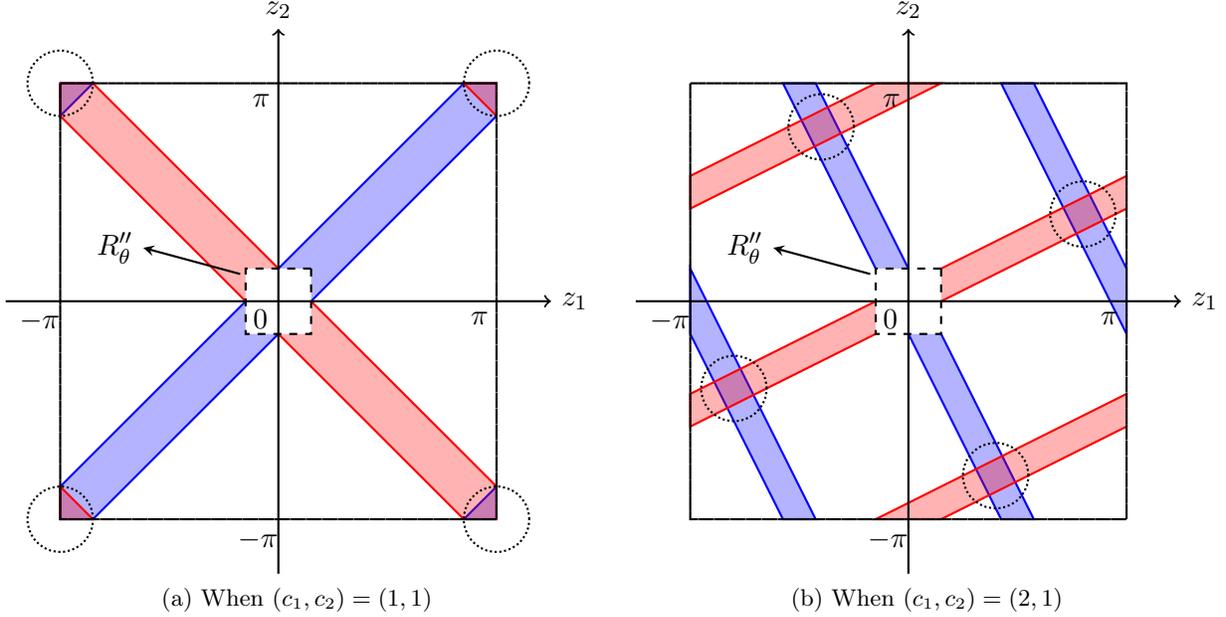
\begin{figure}[t]
    \centering
    \subfloat[When $(c_{1}, c_{2}) = (1, 1)$
    \label{subfig:discard-sub2-3:1}]{
    \begin{tikzpicture}[thick, smooth, scale = 1.45]
    \draw[blue, fill = blue, fill opacity = 0.3] (-2, 2) -- (-2, 1.7) -- (-1.7, 2) -- cycle;
    \draw[blue, fill = blue, fill opacity = 0.3] (-2, -2) -- (-2, -1.7) -- (1.7, 2) -- (2, 2) -- (2, 1.7) -- (-1.7, -2) -- cycle;
    \draw[blue, fill = blue, fill opacity = 0.3] (2, -2) -- (2, -1.7) -- (1.7, -2) -- cycle;

    \draw[red, fill = red, fill opacity = 0.3] (-2, -2) -- (-2, -1.7) -- (-1.7, -2) -- cycle;
    \draw[red, fill = red, fill opacity = 0.3] (-2, 2) -- (-2, 1.7) -- (1.7, -2) -- (2, -2) -- (2, -1.7) -- (-1.7, 2) -- cycle;
    \draw[red, fill = red, fill opacity = 0.3] (2, 2) -- (2, 1.7) -- (1.7, 2) -- cycle;

    \draw[dashed, fill = white] (0.3, 0.3) -- (0.3, -0.3) -- (-0.3, -0.3) -- (-0.3, 0.3) -- cycle;
    \node[above](n0) at (-1.5, 0.25) {$R''_{\theta}$};
    \draw [-stealth, smooth, thick] plot coordinates {(-0.35, 0.25) (n0.east)};

    \draw [dashed] (-2, -2) rectangle (2, 2);
    \draw[->] (-2.5, 0) -- (2.5, 0);
    \draw[->] (0, -2.5) -- (0, 2.5);
    \node[right] at (2.5, 0) {$z_{1}$};
    \node[above] at (0, 2.5) {$z_{2}$};
    \draw[dashed] (2, 2) -- (-2, 2) -- (-2, -2) -- (2, -2) -- (2, 2);

    \node[anchor = 45] at (0, 0) {$0$};
    \node[anchor = 45] at (-2, 0) {$-\pi$};
    \node[anchor = 45] at (2, 0) {$\pi$};
    \node[anchor = 45] at (0, -2) {$-\pi$};
    \node[anchor = 45] at (0, 2) {$\pi$};
    
    \draw[densely dotted] (2, 2) circle (0.3);
    \draw[densely dotted] (-2, 2) circle (0.3);
    \draw[densely dotted] (2, -2) circle (0.3);
    \draw[densely dotted] (-2, -2) circle (0.3);
    \end{tikzpicture}
    }
    \hfill
    \subfloat[When $(c_{1}, c_{2}) = (2, 1)$
    \label{subfig:discard-sub2-3:2}]{
    \begin{tikzpicture}[thick, smooth, scale = 1.45]
    \draw[blue, fill = blue, fill opacity = 0.3] (-2, -0.3) -- (-2, 0.3) -- (-0.85, -2) -- (-1.15, -2) -- cycle;
    \draw[blue, fill = blue, fill opacity = 0.3] (-1.15, 2) -- (-0.85, 2) -- (1.15, -2) -- (0.85, -2) -- cycle;
    \draw[blue, fill = blue, fill opacity = 0.3] (0.85, 2) -- (1.15, 2) -- (2, 0.3) -- (2, -0.3) -- cycle;

    \draw[red, fill = red, fill opacity = 0.3] (-2, 0.85) -- (-2, 1.15) -- (-0.3, 2) -- (0.3, 2) -- cycle;
    \draw[red, fill = red, fill opacity = 0.3] (-2, -1.15) -- (-2, -0.85) -- (2, 1.15) -- (2, 0.85) -- cycle;
    \draw[red, fill = red, fill opacity = 0.3] (-0.3, -2) -- (0.3, -2) -- (2, -1.15) -- (2, -0.85) -- cycle;

    \draw[dashed, fill = white] (0.3, 0.3) -- (0.3, -0.3) -- (-0.3, -0.3) -- (-0.3, 0.3) -- cycle;
    \node[above](n0) at (-1.5, 0.25) {$R''_{\theta}$};
    \draw [-stealth, smooth, thick] plot coordinates {(-0.35, 0.25) (n0.east)};

    \draw [dashed] (-2, -2) rectangle (2, 2);
    \draw[->] (-2.5, 0) -- (2.5, 0);
    \draw[->] (0, -2.5) -- (0, 2.5);
    \node[right] at (2.5, 0) {$z_{1}$};
    \node[above] at (0, 2.5) {$z_{2}$};
    \draw[dashed] (2, 2) -- (-2, 2) -- (-2, -2) -- (2, -2) -- (2, 2);

    \node[anchor = 45] at (0, 0) {$0$};
    \node[anchor = 45] at (-2, 0) {$-\pi$};
    \node[anchor = 45] at (2, 0) {$\pi$};
    \node[anchor = 45] at (0, -2) {$-\pi$};
    \node[anchor = 45] at (0, 2) {$\pi$};
    
    \draw[densely dotted] (1.6, 0.8) circle (0.3);
    \draw[densely dotted] (-0.8, 1.6) circle (0.3);
    \draw[densely dotted] (-1.6, -0.8) circle (0.3);
    \draw[densely dotted] (0.8, -1.6) circle (0.3);
    \end{tikzpicture}
    }
    \caption{The subregion $R_{c_{1}, c_{2}} := \{(\theta_{1}, \theta_{2}) \in [-\pi, \pi]^{2}: \mbox{$(c_{1} \theta_{1} + c_{2} \theta_{2}), (c_{2} \theta_{1} - c_{1} \theta_{2}) \in [-\frac{b_{0}}{M},\; \frac{b_{0}}{M}]_{\bigcirc}$}\}$ is the intersection of the red area (on which $(c_{1} \theta_{1} + c_{2} \theta_{2}) \in [-\frac{b_{0}}{M},\; \frac{b_{0}}{M}]_{\bigcirc}$) and the blue area (on which $(c_{2} \theta_{1} - c_{1} \theta_{2}) \in [-\frac{b_{0}}{M},\; \frac{b_{0}}{M}]_{\bigcirc}$). Given any parameters $t_{R}, t_{B} \in \R$, by construction a (red) straight line $c_{1} \theta_{1} + c_{2} \theta_{2} = t_{R}$ and a (blue) straight line $c_{2} \theta_{1} - c_{1} \theta_{2} = t_{B}$ are vertical. The subregion $R''_{\theta}$ (namely the dashed square) is excluded from $[-\pi, \pi]^{2}$. \Cref{subfig:discard-sub2-3:1,subfig:discard-sub2-3:2} respectively choose $(c_{1}, c_{2}) = (1, 1)$ and $(c_{1}, c_{2}) = (2, 1)$.}
    \label{fig:discard-sub2-3}
\end{figure}
    
    \begin{proof}
    Define the function $f_{c_{1}, c_{2}}(\theta_{1}, \theta_{2}) := \frac{1}{2}(f(c_{1} \theta_{1} + c_{2} \theta_{2}) + f(c_{2} \theta_{1} - c_{1} \theta_{2}))$ for each pair $(c_{1}, c_{2}) \in C^{2}$. Because $|f(\theta)| \leq 1$ for any $\theta \in \R$ (see \eqref{eq:f-def}), we also have $|f_{c_{1}, c_{2}}(\theta_{1}, \theta_{2})| \leq 1$ for any $(\theta_{1}, \theta_{2}) \in \R^{2}$. Recall that $|f(\theta)| \leq \frac{1}{a}$ for any $|\theta| \in [\frac{b_{0}}{M},\; \pi]$ (see \eqref{eq:fbound1}) and $\frac{b_{0}}{M} \leq \frac{(\pi / 2) \cdot b}{8d \cdot b} \leq \frac{\pi}{16}$ (see \eqref{eq:M-bound} and \eqref{eq:b0}). Since $f$ is $2\pi$-periodic (see \eqref{eq:f-def}), only on the $2\pi$-periodic range $[-\frac{b_{0}}{M},\; \frac{b_{0}}{M}]_{\bigcirc} := \bigcup_{k \in \Z} [-\frac{b_{0}}{M} + 2k\pi,\; \frac{b_{0}}{M} + 2k\pi]$ can we possibly have $|f(\theta)| > \frac{1}{a}$. Regarding the function $f_{c_{1}, c_{2}}$, we consider the analogous subregion
    \[
        R_{c_{1}, c_{2}} := \Big\{(\theta_{1}, \theta_{2}) \in [-\pi, \pi]^{2}: \mbox{$(c_{1} \theta_{1} + c_{2} \theta_{2}), (c_{2} \theta_{1} - c_{1} \theta_{2}) \in [-\frac{b_{0}}{M},\; \frac{b_{0}}{M}]_{\bigcirc}$}\Big\}.
    \]
    For any point $(\theta_{1}, \theta_{2}) \in [-\pi, \pi]^{2} \setminus R_{c_{1}, c_{2}}$, we have either $|f(c_{1} \theta_{1} + c_{2} \theta_{2})| \leq \frac{1}{a}$ or $|f(c_{2} \theta_{1} - c_{1} \theta_{2})| \leq \frac{1}{a}$ or both, which (together with the fact that $|f(\theta)| \leq 1$ for any $\theta \in \R$) implies
    \begin{align}
        \label{eq:discard-sub2-3}
        |f_{c_{1}, c_{2}}(\theta_{1}, \theta_{2})|
        \leq \frac{|f(c_{1} \theta_{1} + c_{2} \theta_{2})| + |f(c_{2} \theta_{1} - c_{1} \theta_{2})|}{2} \leq \frac{1 + 1 / a}{2}.
    \end{align}
    
    Because $C = C(d) = \{\pm 1, \pm 2, \dots, \pm d\}$ for a fixed integer $d > 1$, the considered pair-indices $(c_{1}, c_{2}) \in C^{2}$ include both $(c_{1}, c_{2}) = (1, 1)$ and $(c_{1}, c_{2}) = (2, 1)$.
    As shown in \Cref{subfig:discard-sub2-3:1}, the union of the $\frac{b_{0}}{M}$-neighborhoods of four points $(\pm \pi, \pm \pi)$ covers $R_{1, 1}$. Also, as shown in \Cref{subfig:discard-sub2-3:2}, the union of the $\frac{b_{0}}{M}$-neighborhoods of four points $(\frac{4\pi}{5}, \frac{2\pi}{5})$, $(-\frac{2\pi}{5}, \frac{4\pi}{5})$, $(-\frac{4\pi}{5}, -\frac{2\pi}{5})$ and $(\frac{2\pi}{5}, -\frac{4\pi}{5})$ covers $R_{2, 1}$. We must have $R_{1, 1} \cap R_{2, 1} = \emptyset$, because the $\frac{b_{0}}{M} \leq \frac{\pi}{16}$ is small enough.

    Given the above arguments, we deduce that for any $(\theta_{1}, \theta_{2}) \in [-\pi, \pi]^{2} \setminus R''_{\theta}$,
    \begin{align*}
        |G(\theta_{1}, \theta_{2})|
        & = \Bigg|\sum_{(c_{1}, c_{2}) \in C^{2}} f(c_{1} \theta_{1} + c_{2} \theta_{2})\Bigg| \\
        & = \Bigg|\sum_{(c_{1}, c_{2}) \in C^{2}} f_{c_{1}, c_{2}}(\theta_{1}, \theta_{2})\Bigg| \\
        & \leq \frac{1 + 1 / a}{2} + |C|^{2} - 1 \\
        & = |C|^{2} - \frac{1 - 1 / a}{2}.
    \end{align*}
    Here the second step holds since, regarding the definition of $f_{c_{1}, c_{2}}$, the mapping $(c_{1}, c_{2}) \mapsto (c_{2}, -c_{1})$ is one-to-one. The third step uses \eqref{eq:discard-sub2-3}, the fact that $|f_{c_{1}, c_{2}}(\theta_{1}, \theta_{2})| \leq 1$ for each pair $(c_{1}, c_{2}) \in C^{2}$ and any $(\theta_{1}, \theta_{2}) \in \R^{2}$, and the fact that $R_{1, 1} \cap R_{2, 1} = \emptyset$.
    
    Because $a > |C|^{2} / \eps > 1$ (see \eqref{eq:M-bound}), there must exist some constant $\eps_{2} := \eps_{2}(|C|, a) > 0$ such that $|C|^{2} - (1 - 1 / a) / 2 \leq |C|^{(2 - \eps_{2})}$. This finishes the proof of \Cref{cla:discard-sub2-3}.
    \end{proof}

    \begin{claim}
        \label{cla:discard-sub2-5}
        The subregion $Q' \subseteq [-\pi, \pi]^{2}$ defined below has measure at most $|Q'| \leq |C|^{2}  \cdot \frac{4\pi b_{0}}{M}$.
        \[
            Q' := \Big\{(\theta_{1}, \theta_{2}) \in [-\pi, \pi]^{2}: |G(\theta_{1}, \theta_{2})| > \frac{|C|^{2}}{a}\Big\}.
        \]
    \end{claim}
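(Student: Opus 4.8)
The plan is to cover $Q'$ by a union of $|C|^{2}$ thin ``slabs,'' one for each pair $(c_{1},c_{2})\in C^{2}$, each of measure $\frac{4\pi b_{0}}{M}$, and then take a union bound. The first step is to translate the hypothesis $|G(\theta_{1},\theta_{2})|>|C|^{2}/a$ into a statement about a single summand. I would invoke \eqref{eq:fbound1}, which already gives $|f(\theta)|\le 1/a$ for $|\theta|\in[\frac{b_{0}}{M},\pi]$; since $f$ is even and $2\pi$-periodic (see \eqref{eq:f-def}), this shows $|f(\theta)|>1/a$ can only occur for $\theta$ in the $2\pi$-periodic set $[-\frac{b_{0}}{M},\frac{b_{0}}{M}]_{\bigcirc}$. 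Because $G(\theta_{1},\theta_{2})=\sum_{(c_{1},c_{2})\in C^{2}}f(c_{1}\theta_{1}+c_{2}\theta_{2})$ has exactly $|C|^{2}$ summands, if all of them had absolute value at most $1/a$ we would get $|G(\theta_{1},\theta_{2})|\le |C|^{2}/a$; hence every $(\theta_{1},\theta_{2})\in Q'$ satisfies $c_{1}\theta_{1}+c_{2}\theta_{2}\in[-\frac{b_{0}}{M},\frac{b_{0}}{M}]_{\bigcirc}$ for at least one pair $(c_{1},c_{2})\in C^{2}$. This gives the containment $Q'\subseteq\bigcup_{(c_{1},c_{2})\in C^{2}}R_{c_{1},c_{2}}$, where $R_{c_{1},c_{2}}:=\{(\theta_{1},\theta_{2})\in[-\pi,\pi]^{2}:c_{1}\theta_{1}+c_{2}\theta_{2}\in[-\frac{b_{0}}{M},\frac{b_{0}}{M}]_{\bigcirc}\}$.

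The second step is to bound $|R_{c_{1},c_{2}}|$ by Fubini's theorem. Since $C=C(d)$ contains no zero coefficient, $c_{2}\ne 0$, so for each fixed $\theta_{1}$ the substitution $u=c_{2}\theta_{2}$ sends $\theta_{2}\in[-\pi,\pi]$ onto $u\in[-|c_{2}|\pi,|c_{2}|\pi]$, an interval of length $2|c_{2}|\pi$, i.e.\ exactly $|c_{2}|$ full periods of length $2\pi$. The translated strip $[-\frac{b_{0}}{M},\frac{b_{0}}{M}]_{\bigcirc}-c_{1}\theta_{1}$ has measure $\frac{2b_{0}}{M}$ in each period, so its intersection with $[-|c_{2}|\pi,|c_{2}|\pi]$ has measure $|c_{2}|\cdot\frac{2b_{0}}{M}$; dividing by the Jacobian $|c_{2}|$ shows the $\theta_{2}$-slice of $R_{c_{1},c_{2}}$ has measure $\frac{2b_{0}}{M}$, independently of $\theta_{1}$. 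Integrating over $\theta_{1}\in[-\pi,\pi]$ yields $|R_{c_{1},c_{2}}|=\frac{4\pi b_{0}}{M}$, and summing over the $|C|^{2}$ pairs gives $|Q'|\le |C|^{2}\cdot\frac{4\pi b_{0}}{M}$, as claimed.

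I expect the main (mild) obstacle to be the slice measure computation: one has to be careful that the interval $[-\pi,\pi]$ in the $\theta_{2}$ variable maps under $\theta_{2}\mapsto c_{2}\theta_{2}$ onto an \emph{integer} number of periods of the periodic strip, so that the per-period measure multiplies cleanly no matter where the translation by $c_{1}\theta_{1}$ places the strip. This is exactly why $[-|c_{2}|\pi,|c_{2}|\pi]$, of length $2|c_{2}|\pi$, is needed rather than an approximate count that would lose a constant factor; with this observation the rest is a union bound together with the elementary fact that a sum of $|C|^{2}$ terms each of absolute value at most $1/a$ has absolute value at most $|C|^{2}/a$.
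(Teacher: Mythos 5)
Your proposal is correct and follows essentially the same route as the paper: both reduce $|G|>|C|^{2}/a$ to a single summand exceeding $1/a$, bound the measure of each of the $|C|^{2}$ ``bad slabs'' by $\frac{4\pi b_{0}}{M}$ using the $2\pi$-periodicity of $f$ together with a Fubini slice argument (the paper slices in $\theta_{1}$ for fixed $\theta_{2}$ via the sets $A_{c_{1},w}$, you slice in $\theta_{2}$ for fixed $\theta_{1}$ — an immaterial swap), and finish with a union bound. Your explicit observation that the nonzero integer coefficient maps $[-\pi,\pi]$ onto an exact integer number of periods is precisely the point the paper's $|A_{c_{1},w}|\le \frac{2b_{0}}{M}$ bound relies on.
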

    
    \begin{proof}
        To begin with, we introduce an auxiliary set $A_{c_{1}, w}$ for each $c_{1} \in C$ and any offset $w \in \R$.
        \begin{align}
            \label{eq:setA-def}
            A_{c_{1}, w} := \Big\{\theta_{1} \in [-\pi, \pi] : |f(c_{1} \theta_{1} + w)| > \frac{1}{a} \Big\}.
        \end{align}
        Recall \eqref{eq:fbound2} that $|A_{c_{1}, 0}| \leq \frac{2b_{0}}{M}$. Because $f$ is $2\pi$-periodic, regarding any integer $c_{1} \in C$, the function $f_{c_{1}}(\theta_{1}) := f(c_{1}\theta_{1})$ is also $2\pi$-periodic. Thus, we also have 
        \begin{align}
            \label{eq:Abound1}
            |A_{c_{1}, w}| \leq \frac{2b_0}{M}
        \end{align}
        for any $c_1 \in C$, $w \in \R$. It follows that, for any pair $(c_{1}, c_{2}) \in C^{2}$, we have
        \begin{align}
            \label{eq:fbound5}
            \Big|\Big\{ (\theta_{1}, \theta_{2}) \in [-\pi, \pi]^{2} : |f(c_{1} \theta_{1} + c_{2} \theta_{2})| > \frac{1}{a} \Big\}\Big| \leq 2\pi \cdot \frac{2b_0}{M} = \frac{4\pi b_0}{M}.
        \end{align}
        Recall that $G(\theta_{1}, \theta_{2}) = \sum_{(c_{1}, c_{2}) \in C^{2}} f(c_{1} \theta_{1} + c_{2} \theta_{2})$. Suppose $|G(\theta_{1}, \theta_{2})| > |C|^{2} / a$, then we must have $|f(c_{1} \theta_{1} + c_{2} \theta_{2})| > 1 / a$ for at least one pair $(c_{1}, c_{2}) \in C^{2}$. Union-bounding the measure \eqref{eq:fbound5} over all pairs $(c_{1}, c_{2}) \in C^{2}$ concludes the proof of \Cref{cla:discard-sub2-5}.
    \end{proof}

    \begin{claim}
        \label{cla:discard-sub2-4}
        The subregion $Q'' \subseteq [-\pi, \pi]^{2}$ defined below has measure at most $|Q''| \leq |C|^{4} \cdot (\frac{2b_{0}}{M})^{2}$.
        \[
            Q'' := \Big\{(\theta_{1}, \theta_{2}) \in [-\pi, \pi]^{2}: |G(\theta_{1}, \theta_{2})| > |C| + \eps \Big\}.
        \]
    \end{claim}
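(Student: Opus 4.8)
The plan is to show that membership of a point $(\theta_1,\theta_2)$ in $Q''$ forces two \emph{linearly independent} integer combinations $c_1\theta_1+c_2\theta_2$ and $c_1'\theta_1+c_2'\theta_2$, with coefficients drawn from $C$, to each lie within $b_0/M$ of a multiple of $2\pi$; a torus-measure argument then confines $(\theta_1,\theta_2)$ to a set of measure $(2b_0/M)^2$, and a union bound over the fewer than $|C|^4$ relevant pairs of combinations closes the claim.

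First I would translate the hypothesis $|G(\theta_1,\theta_2)|>|C|+\eps$ into a counting statement. Recall that $|f(\theta)|\le 1$ for every $\theta$, while $|f(\theta)|\le 1/a$ whenever $\theta\bmod 2\pi\notin[-b_0/M,b_0/M]$ — this is \eqref{eq:fbound1} together with the $2\pi$-periodicity and evenness of $f$. Call a pair $(c_1,c_2)\in C^2$ \emph{active at $(\theta_1,\theta_2)$} if $c_1\theta_1+c_2\theta_2$ lies within $b_0/M$ of a multiple of $2\pi$. Since $G=\sum_{(c_1,c_2)\in C^2}f(c_1\theta_1+c_2\theta_2)$ has $|C|^2$ summands, if at most $|C|$ of them were active we would obtain $|G|\le|C|+(|C|^2-|C|)/a<|C|+|C|^2/a<|C|+\eps$ by the bound $a>|C|^2/\eps$ from \eqref{eq:M-bound}, contradicting $(\theta_1,\theta_2)\in Q''$. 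Hence every point of $Q''$ has at least $|C|+1$ active pairs.

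Next I would extract two \emph{independent} active pairs. The combinatorial point is that at most $|C|$ elements of $C^2$ are scalar multiples of any fixed primitive vector $(p,q)$: if $(kp,kq)\in C^2$ then $|kp|,|kq|\le d$, so $|k|\le d$ and there are at most $2d=|C|$ admissible $k$. Therefore, among the $\ge|C|+1$ active pairs at $(\theta_1,\theta_2)$, at least two — say $(c_1,c_2)$ and $(c_1',c_2')$ — lie in distinct directions, i.e. $D:=c_1c_2'-c_2c_1'\ne 0$.

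It remains to bound, for fixed such integers, the measure of the set $E$ of $(\theta_1,\theta_2)\in[-\pi,\pi]^2$ at which both of these pairs are active, and then sum over choices. Identifying $[-\pi,\pi]^2$ with the torus $(\R/2\pi\Z)^2$, the map $\phi(\theta_1,\theta_2)=(c_1\theta_1+c_2\theta_2,\;c_1'\theta_1+c_2'\theta_2)$ descends to a surjective group endomorphism of the torus (integer matrix, nonzero determinant $D$), hence preserves Lebesgue measure, so $|\phi^{-1}(U)|=|U|$ for every measurable $U$. Since $E$ is exactly the $\phi$-preimage of $[-b_0/M,b_0/M]^2$ (as a subset of the target torus), and $[-b_0/M,b_0/M]$ is an arc of length $2b_0/M<2\pi$ (using $b_0/M\le\pi/16$ from \eqref{eq:M-bound} and \eqref{eq:b0}), we get $|E|=(2b_0/M)^2$. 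The main obstacle is precisely this step: the naive change of variables $u=c_1\theta_1+c_2\theta_2$, $v=c_1'\theta_1+c_2'\theta_2$ yields only the weaker bound $(|C|+1)^2(2b_0/M)^2$, since each coordinate sweeps through $\Theta(|C|)$ periods of the arc, and the torus/Haar-measure viewpoint is what removes this spurious factor. Finally, $Q''\subseteq\bigcup E$, where the union runs over the fewer than $|C|^2\cdot|C|^2=|C|^4$ ordered pairs $\big((c_1,c_2),(c_1',c_2')\big)\in C^2\times C^2$ with $c_1c_2'-c_2c_1'\ne 0$; the union bound gives $|Q''|\le|C|^4(2b_0/M)^2$.
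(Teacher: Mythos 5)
Your proof is correct, and it takes a genuinely different route from the paper's. The paper groups the $|C|^{2}$ summands of $G$ by the second coefficient, writing $G(\theta_{1},\theta_{2})=\sum_{c_{2}\in C}q(\theta_{1},c_{2}\theta_{2})$ with $q(\theta_{1},w)=\sum_{c_{1}\in C}f(c_{1}\theta_{1}+w)$; since $|q|\le |C|$, a point of $Q''$ forces two distinct $c_{2}\ne c_{2}'$ with both $|q|$-values above $|C|/a$, and the measure of each such event is bounded by $|C|^{2}\cdot(\frac{2b_{0}}{M})^{2}$ via a translation argument: the set $B_{w}=\{\theta_{1}:|q(\theta_{1},w)|>\frac{|C|}{a}\}$ is a translate of $B_{0}$, and $(c_{2}'-c_{2})\bm{\theta}_{2}\bmod 2\pi$ is uniform when $\bm{\theta}_{2}$ is, so the joint probability factors as $(|B_{0}|/2\pi)^{2}$ with $|B_{0}|\le |C|\cdot\frac{2b_{0}}{M}$ from \eqref{eq:Abound1}. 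You instead work with the individual summands: the bound $|f|\le 1/a$ off the periodized arc (from \eqref{eq:fbound1}) forces at least $|C|+1$ ``active'' pairs at any point of $Q''$, a pigeonhole over lines through the origin in $C^{2}$ (each containing at most $2d=|C|$ coefficient pairs, since $0\notin C$) extracts two linearly independent ones, and the Haar-measure preservation of the surjective torus endomorphism $(\theta_{1},\theta_{2})\mapsto(c_{1}\theta_{1}+c_{2}\theta_{2},\,c_{1}'\theta_{1}+c_{2}'\theta_{2})$ gives measure exactly $(\frac{2b_{0}}{M})^{2}$ per pair of pairs, with a union over fewer than $|C|^{4}$ choices. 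Both arguments land on the same constant, and both ultimately rest on the same translation-invariance phenomenon; yours isolates cleanly why the determinant of the coefficient matrix does not enter (which is precisely what a naive change of variables gets wrong, as you note), while the paper's stays one-dimensional and avoids any appeal to covering maps at the cost of a slightly less transparent conditioning step. Your linear-independence extraction is also a nice explicit structural statement that the paper obtains implicitly from $c_{2}\ne c_{2}'$.
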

    
    \begin{proof}
        For this proof, we introduce a convenient decomposition of the function $G$. Observe that $G(\theta_{1}, \theta_{2}) = \sum_{c_{2} \in C} q(\theta_{1}, c_{2} \theta_{2})$, where 
        \[
            q(\theta_{1}, w) := \sum_{c_{1} \in C} f(c_{1} \theta_{1} + w).
        \]
        This allows us to rewrite $Q''$ as
        \[
            Q'' = \Big\{(\theta_{1}, \theta_{2}) \in [-\pi, \pi]^{2}: \Big|\sum_{c_{2} \in C} q(\theta_{1}, c_{2} \theta_{2})\Big| > |C| + \eps\Big\}.
        \]
        As $\max_{x} q(x) \leq |C| \cdot \max_x f(x) \leq |C|$, $q$ is upper-bounded by $|C|$. Thus for any point $(\theta_1, \theta_2) \in Q''$, there must exist two distinct coefficients $c_2, c'_{2} \in C$ such that $|q(\theta_1, c_2\theta_2)|, |q(\theta_1, c'_{2}\theta_2)| > \frac{\eps}{|C|}$. In other words, the considered set $Q''$ is covered by
        \begin{align}
            \label{eq:restate-Q''}
            Q'' \subseteq \bigcup_{c_2 \neq c'_{2} \in C} Q_{c_2, c'_{2}},
        \end{align}
        where
        \begin{align*}
            Q_{\alpha, \beta} := \Big\{(\theta_{1}, \theta_{2}) \in [-\pi, \pi]^{2}: |q(\theta_1, \alpha\theta_2)|, |q(\theta_1, \beta\theta_2)| \geq \frac{|C|}{a} \Big\}.
        \end{align*}
        This definition of $Q_{\alpha, \beta}$ suffices because $a > |C|^2 / \eps$ (see \eqref{eq:M-bound}) and thus $\frac{|C|}{a} < \frac{\eps}{|C|}$.
        
        To complete the proof, it suffices to show that
        \begin{align}
            \label{eq:Qc2-final-todo}
            |Q_{c_2, c'_{2}}| \leq |C|^2 \cdot \Big(\frac{2b_0}{M}\Big)^{2}
        \end{align}
        for each pair $(c_2, c'_{2}) \in C^2$ with $c_2 \neq c'_{2}$. (Suppose so, then the measure $|Q''|$ can be upper-bounded by $|C|^2 \cdot \max_{c_2 \neq c'_{2}} |Q_{c_2, c'_{2}}| \leq |C|^{4} \cdot (\frac{2b_{0}}{M})^{2}$ using \eqref{eq:restate-Q''}.)
        
        To show \eqref{eq:Qc2-final-todo}, fix a pair $(c_2, c'_2) \in C^{2}$ such that $c_2 \neq c'_{2}$. We observe that 
        \begin{align*}
             Q_{c_2, c'_2} = \Big\{(\theta_{1}, \theta_{2}) \in [-\pi, \pi]^{2}: \theta_{1} \in \big( B_{c_{2} \theta_{2}} \cap B_{c'_{2} \theta_{2}} \big) \Big\}, 
        \end{align*}
        for $B_{w} := \{\theta_{1} \in [-\pi, \pi] : |q(\theta_{1}, w)| > \frac{|C|}{a}\}$. Intuitively, $B_w$ captures the values for which $|q(\theta_1, w)|$ is large. Because $B_w$ is determined by the function $q(\theta_1, w) = \sum_{c \in C} f(c_1\theta_1 + w)$, it has a convenient property: for any $w, z \in \R$, $B_{w + z}$ is just the set $B_w$ translated by $z \pmod{2\pi}$. \Cref{fig:discard-sub2-5} provides a visual aid for the structure of this set and the remainder of the proof.
        
        It will be convenient to write $|Q_{c_2, c'_{2}}|$ in terms of the probability that a random point in $[-\pi, \pi]^2$ is contained in $Q_{c_2, c'_{2}}$. We have
        \begin{align}
            |Q_{c_2, c'_2}|
            & = 4\pi^2 \cdot \Prx_{\bm{\theta}_1, \bm{\theta_2} \sim [-\pi, \pi]} \Big[\bm{\theta}_1 \in \big( B_{c_{2} \bm{\theta}_{2}} \cap B_{c'_{2} \bm{\theta}_{2}} \big)\Big]
            \nonumber \\
            &= 4\pi^2 \cdot\Prx_{\bm{\theta}_1, \bm{\theta_2} \sim [-\pi, \pi]} \Big[\bm{\theta}_1 \in \big( B_{0} \cap B_{(c'_{2} - c_2) \bm{\theta}_{2}} \big)\Big]
            \nonumber \\
            &= 4\pi^2 \cdot \Prx_{\bm{\theta}_1 \sim [-\pi, \pi]} \Big[\bm{\theta}_1 \in B_0\Big] \cdot \Prx_{\bm{\theta}_1, \bm{\theta_2} \sim [-\pi, \pi]} \Big[\bm{\theta}_1 \in B_{(c'_{2} - c_2)\bm{\theta}_2} ~ \Big| ~ \bm{\theta}_1 \in B_0\Big].
            \label{eq:discard-sub2-4:5}
        \end{align}
        Here the second step follows from translating both sets by $c_2 \bm{\theta}_2 \pmod{2\pi}$, which does not change the size of the intersection. And the last step uses the identity $\Pr[A \cap B] = \Pr[A] \cdot \Pr[B \mid A]$.
        
        Given any nonzero integer $k$ and any fixed offset $w \in [-\pi, \pi]$, we observe that
        \begin{align}
            \label{eq:discard-sub2-4:6}
            \Prx_{\bm{\theta}_2\sim[-\pi, \pi]} [w \in B_{k \bm{\theta}_2}] = \Prx_{\by \sim[-\pi, \pi]} [\by \in B_0] = \frac{|B_0|}{2\pi},
        \end{align}
        because sampling $\bm{\theta_2}$ uniformly from $[-\pi, \pi]$ distributes $k \bm{\theta}_2 \pmod{2\pi}$ uniformly over $[-\pi, \pi]$. \eqref{eq:discard-sub2-4:6} holds for any nonzero integer $k \neq 0$ and any offset $w \in [-\pi, \pi]$, so both (conditional) probabilities in \eqref{eq:discard-sub2-4:5} can be substituted with $\eqref{eq:discard-sub2-4:6} = |B_0| / (2\pi)$.
        
        Recall the definition of the set family $A_{c_{1}, w}$ in \eqref{eq:setA-def}. Using \eqref{eq:Abound1} to union-bound \eqref{eq:discard-sub2-4:5} yields \eqref{eq:Qc2-final-todo}:
        \[
            |Q_{c_2, c'_2}| = 4\pi^2 \cdot \Big(\frac{|B_0|}{2\pi}\Big)^2 \leq \Big(\sum_{c \in C} |A_{c, 0}|\Big)^2 \leq |C|^2 \cdot \Big(\frac{2b_0}{M}\Big)^2.
            \qedhere
        \]
    \end{proof}
    
\begin{figure}[t]
    \centering
    \begin{tikzpicture}[thick, smooth, scale = 2.25]
    \path[fill = gray!30, domain = -2: 2] (-3.30, 0.60) -- (-3.30, 0.45) -- (-2.75, 1) -- (-2.9, 1) -- cycle;
    \draw[dotted] (-3.30, 0.60) -- (-2.9, 1) (-3.30, 0.45) -- (-2.75, 1);
    \path[fill = gray!30] (-3.30, -0.05) -- (-3.30, -0.50) -- (-1.8, 1) -- (-2.25, 1) -- cycle;
    \draw[dotted] (-3.30, -0.05) -- (-2.25, 1) (-3.30, -0.50) -- (-1.80, 1);
    \path[fill = gray!30] (-3.30, -0.85) -- (-3.30, -0.95) -- (-1.35, 1) -- (-1.45, 1) -- cycle;
    \draw[dotted] (-3.30, -0.85) -- (-1.45, 1) (-3.30, -0.95) -- (-1.35, 1);
    
    \path[fill = gray!30, domain = -2: 2] (-2.9, -1) -- (-2.75, -1) -- (-0.75, 1) -- (-0.9, 1) -- cycle;
    \draw[dotted] (-2.9, -1) -- (-0.9, 1) (-2.75, -1) -- (-0.75, 1);
    \path[fill = gray!30] (-2.25, -1) -- (-1.80, -1) -- (0.20, 1) -- (-0.25, 1) -- cycle;
    \draw[dotted] (-2.25, -1) -- (-0.25, 1) (-1.80, -1) -- (0.20, 1);
    \path[fill = gray!30] (-1.45, -1) -- (-1.35, -1) -- (0.65, 1) -- (0.55, 1) -- cycle;
    \draw[dotted] (-1.45, -1) -- (0.55, 1) (-1.35, -1) -- (0.65, 1);
    
    \path[fill = gray!30, domain = -2: 2] (-0.9, -1) -- (-0.75, -1) -- (1.25, 1) -- (1.1, 1) -- cycle;
    \draw[dotted] (-0.9, -1) -- (1.1, 1) (-0.75, -1) -- (1.25, 1);
    \path[fill = gray!30] (-0.25, -1) -- (0.20, -1) -- (2.20, 1) -- (1.75, 1) -- cycle;
    \draw[dotted] (-0.25, -1) -- (1.75, 1) (0.20, -1) -- (2.20, 1);
    \path[fill = gray!30] (0.55, -1) -- (0.65, -1) -- (2.65, 1) -- (2.55, 1) -- cycle;
    \draw[dotted] (0.55, -1) -- (2.55, 1) (0.65, -1) -- (2.65, 1);
    
    \path[fill = gray!30, domain = -2: 2] (1.1, -1) -- (1.25, -1) -- (3.25, 1) -- (3.1, 1) -- cycle;
    \draw[dotted] (1.1, -1) -- (3.1, 1) (1.25, -1) -- (3.25, 1);
    \path[fill = gray!30] (1.75, -1) -- (2.20, -1) -- (3.30, 0.1) -- (3.3, 0.55) -- cycle;
    \draw[dotted] (1.75, -1) -- (3.3, 0.55) (2.20, -1) -- (3.30, 0.1);
    \path[fill = gray!30] (2.55, -1) -- (2.65, -1) -- (3.3, -0.35) -- (3.3, -0.25) -- cycle;
    \draw[dotted] (2.55, -1) -- (3.3, -0.25) (2.65, -1) -- (3.3, -0.35);
    
    \path[fill = gray!30, domain = -2: 2] (3.1, -1) -- (3.25, -1) -- (3.30, -0.95) -- (3.30, -0.80) -- cycle;
    \draw[dotted] (3.1, -1) -- (3.30, -0.80) (3.25, -1) -- (3.30, -0.95);

    \draw[color = blue] (-3.3, -1) -- (3.3, -1) (-3.3, 1) -- (3.3, 1);
    \draw[color = blue] (-3, -1) -- (-3, 1) (-1, -1) -- (-1, 1) (1, -1) -- (1, 1) (3, -1) -- (3, 1);

    \draw[red, dotted] (-1, -1.2) -- (-1, 1.2);
    \draw[red, dotted] (-0.80, -1.2) -- (-0.80, 1.2);
    \draw[red, dotted] (-0.45, -1.2) -- (-0.45, 1.2);
    \draw[red, dotted] (-0.35, -1.2) -- (-0.35, 1.2);
    \draw[red, dotted] (0.1, -1.2) -- (0.1, 1.2);
    \draw[red, dotted] (0.25, -1.2) -- (0.25, 1.2);
    \draw[red, dotted] (0.75, -1.2) -- (0.75, 1.2);
    \draw[red, dotted] (1, -1.2) -- (1, 1.2);

    \path[draw, red, fill = red!30] (-1, 0.75) -- (-0.8, 0.95) -- (-0.8, 1) -- (-0.9, 1) -- (-1, 0.9) -- cycle;
    \path[draw, red, fill = red!30] (-1, -0.2) -- (-0.8, 0) -- (-0.8, 0.45) -- (-1, 0.25) -- cycle;
    \path[draw, red, fill = red!30] (-1, -0.65) -- (-0.8, -0.45) -- (-0.8, -0.35) -- (-1, -0.55) -- cycle;
    \path[draw, red, fill = red!30] (-0.9, -1) -- (-0.8, -1) -- (-0.8, -0.9) -- cycle;
    
    \path[draw, red, fill = red!30] (-0.45, 0) -- (-0.45, -0.1) -- (-0.35, 0) -- (-0.35, 0.1) -- cycle;
    \path[draw, red, fill = red!30] (-0.45, 0.35) -- (-0.35, 0.45) -- (-0.35, 0.9) -- (-0.45, 0.8) -- cycle;
    \path[draw, red, fill = red!30] (-0.45, -0.55) -- (-0.45, -0.7) -- (-0.35, -0.6) -- (-0.35, -0.45) -- cycle;
    
    \path[draw, red, fill = red!30] (0.1, 0.9) -- (0.2, 1) -- (0.1, 1) -- cycle;
    \path[draw, red, fill = red!30] (0.1, 0.45) -- (0.25, 0.6) -- (0.25, 0.7) -- (0.1, 0.55) -- cycle;
    \path[draw, red, fill = red!30] (0.1, 0) -- (0.1, -0.15) -- (0.25, 0) -- (0.25, 0.15) -- cycle;
    \path[draw, red, fill = red!30] (0.1, -0.65) -- (0.1, -1) -- (0.2, -1) -- (0.25, -0.95) -- (0.25, -0.5) -- cycle;
    
    \path[draw, red, fill = red!30] (0.75, 0.65) -- (0.75, 0.5) -- (1, 0.75) -- (1, 0.9) -- cycle;
    \path[draw, red, fill = red!30] (0.75, 0) -- (0.75, -0.45) -- (1, -0.2) -- (1, 0.25) -- cycle;
    \path[draw, red, fill = red!30] (0.75, -0.8) -- (0.75, -0.9) -- (1, -0.65) -- (1, -0.55) -- cycle;

    \draw[red] (-1, -1.33) -- (-1, -1.27);
    \draw[red] (-0.8, -1.27) -- (-0.8, -1.33);
    \draw[red] (-1, -1.3) -- (-0.8, -1.3);
    
    \draw[red] (-0.45, -1.33) -- (-0.45, -1.27);
    \draw[red] (-0.35, -1.27) -- (-0.35, -1.33);
    \draw[red] (-0.45, -1.3) -- (-0.35, -1.3);
    
    \draw[red] (0.1, -1.33) -- (0.1, -1.27);
    \draw[red] (0.25, -1.27) -- (0.25, -1.33);
    \draw[red] (0.1, -1.3) -- (0.25, -1.3);
    
    \draw[red] (0.75, -1.33) -- (0.75, -1.27);
    \draw[red] (1, -1.27) -- (1, -1.33);
    \draw[red] (0.75, -1.3) -- (1, -1.3);
    
    \node(n0)[left, red] at (-0.3, -1.65) {$B_{0}$};
    \draw [-stealth, smooth, red] plot coordinates {(n0.north west) (-0.9, -1.32)};
    \draw [-stealth, smooth, red] plot coordinates {(n0.north) (-0.4, -1.32)};
    \draw [-stealth, smooth, red] plot coordinates {(n0.north east) (0.175, -1.32)};
    \draw [-stealth, smooth, red] plot coordinates {(n0.east) (0.7, -1.55) (0.875, -1.32)};

    \draw[<->] (-3.5, 0) -- (3.5, 0);
    \draw[<->] (0, -1.5) -- (0, 1.5);
    \node[above] at (0, 1.5) {$\theta_2$};
    \node[right] at (3.5, 0) {$\theta_1$};
    
    \draw[thick] (-2, -0.03) -- (-2, 0.03);
    \draw[thick] (2, -0.03) -- (2, 0.03);
    
    \node[anchor = 45] at (0, 0) {$0$};
    \node[anchor = 30] at (-1, 0) {$-\pi$};
    \node[anchor = 30] at (-2, 0) {$-2\pi$};
    \node[anchor = 30] at (-3, 0) {$-3\pi$};
    \node[anchor = 45] at (1, 0) {$\pi$};
    \node[anchor = 30] at (2, 0) {$2\pi$};
    \node[anchor = 30] at (3, 0) {$3\pi$};
    \end{tikzpicture}
    \caption{Visual aid for the proof of \Cref{cla:discard-sub2-4}. For any offset $\theta_2 \in [-\pi, \pi]$, the range $B_{(c'_{2} - c_2)\theta_2}$ is the intersection of the line segment $(-\pi, \theta_2)$--$(\pi, \theta_2)$ with the gray regions. The red regions indicate $\{(\theta_1, \theta_2) \in [-\pi, \pi]^{2}: \theta_1 \in B_0 \cap B_{(c'_{2} - c_2)\theta_2}\}$.}
    \label{fig:discard-sub2-5}
\end{figure}
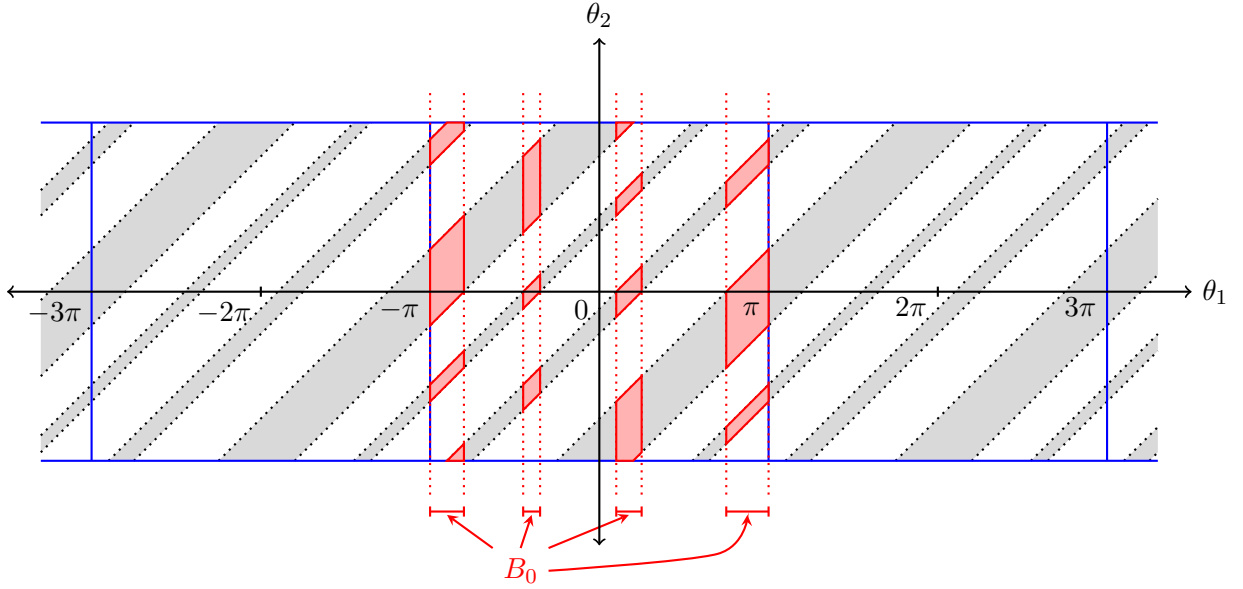

    \begin{corollary}
        \label{cor:discard-sub2}
        $\eqref{eq:ez2-part3} = o(\rho_{n}^{2})$.
    \end{corollary}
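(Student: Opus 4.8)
The plan is to bound \eqref{eq:ez2-part3} by partitioning the region of integration $[-\pi,\pi]^2\setminus R''_\theta$ into three pieces according to the magnitude of $|G(\theta_1,\theta_2)|$ and feeding each piece into the appropriate one of \Cref{cla:discard-sub2-3,cla:discard-sub2-5,cla:discard-sub2-4}. Let $Q'$ and $Q''$ be the sets from \Cref{cla:discard-sub2-5} and \Cref{cla:discard-sub2-4}. Since $a>|C|^2/\eps$ (see \eqref{eq:M-bound}) we have $|C|^2/a<\eps<|C|+\eps$, so $Q''\subseteq Q'$, and hence the three sets
\[
A_1:=\big([-\pi,\pi]^2\setminus R''_\theta\big)\cap Q'',\qquad
A_2:=\big([-\pi,\pi]^2\setminus R''_\theta\big)\cap(Q'\setminus Q''),\qquad
A_3:=\big([-\pi,\pi]^2\setminus R''_\theta\big)\setminus Q'
\]
partition $[-\pi,\pi]^2\setminus R''_\theta$. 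On $A_1$ we have $|G|\le|C|^{2-\eps_2}$ by \Cref{cla:discard-sub2-3}; on $A_2$ we have $|G|\le|C|+\eps$ by definition of $Q''$; on $A_3$ we have $|G|\le|C|^2/a$ by definition of $Q'$. Using the measure bounds $|Q''|\le|C|^4(2b_0/M)^2=O(1/M^2)$, $|Q'|\le|C|^2\cdot 4\pi b_0/M=O(1/M)$ (recall $b_0=O(1)$ by \eqref{eq:b0}), and $|A_3|\le 4\pi^2$, I would conclude
\[
\eqref{eq:ez2-part3}\ \le\ \frac{1}{4\pi^2}\Big(|Q''|\cdot|C|^{(2-\eps_2)n}+|Q'|\cdot(|C|+\eps)^n+4\pi^2(|C|^2/a)^n\Big)\ =\ O\!\Big(\tfrac{|C|^{(2-\eps_2)n}}{M^2}\Big)+O\!\Big(\tfrac{(|C|+\eps)^n}{M}\Big)+O\!\big((|C|^2/a)^n\big).
\]

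It then remains to check that each of these three terms is $o(\rho_n^2)$, where $\rho_n^2=\Theta\big(|C|^{2n}/(M^2n)\big)$ by \eqref{eq:rho}. For the first term the ratio to $\rho_n^2$ is $\Theta(n|C|^{-\eps_2 n})=o_n(1)$. For the second term the ratio is $\Theta\big(nM(|C|+\eps)^n/|C|^{2n}\big)$; plugging in $M=O^*(|C|^{(1-\eps)n})$ bounds this by $O^*\!\big(n\,((|C|+\eps)/|C|^{1+\eps})^n\big)$, which is $o_n(1)$ because $|C|^{\eps}\ge 1+\eps\ln|C|>1+\eps/|C|$ when $|C|\ge 2$ (as $|C|\ln|C|>1$), hence $|C|+\eps<|C|^{1+\eps}$. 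For the third term the ratio is $O\big(nM^2(|C|^2/a)^n/|C|^{2n}\big)\le O\big(n(|C|^2/a)^n\big)=o_n(1)$, using $M\le|C|^n$ and $a>|C|^2/\eps>|C|^2$. Summing the three terms gives $\eqref{eq:ez2-part3}=o(\rho_n^2)$.

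I expect essentially no obstacle here beyond bookkeeping: all the substantive work is already contained in \Cref{cla:discard-sub2-3,cla:discard-sub2-5,cla:discard-sub2-4}, which is precisely where the distinction between $C=C(1)$ and $C=C(d)$ with $d>1$ matters — for $d>1$ the constructive interference that inflates $|G|$ near the corners $(\pm\pi,\pm\pi)$ in the $d=1$ analysis is cancelled by the additional summands, so $|G|$ stays bounded below $|C|^2$ away from the origin. Within the corollary itself the only place requiring a little care is the middle term, where one must invoke $M=O^*(|C|^{(1-\eps)n})$ with the same constant $\eps$ that appears in the definition of $Q''$ and in the lower bound $a>|C|^2/\eps$; once those are in hand the exponential gap $(|C|+\eps)/|C|^{1+\eps}<1$ closes the argument.
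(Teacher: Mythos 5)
Your proposal is correct and matches the paper's own proof essentially step for step: the same three-way decomposition of $[-\pi,\pi]^2\setminus R''_{\theta}$ via $Q'$ and $Q''$, the same pointwise bounds on $|G|$ from \Cref{cla:discard-sub2-3,cla:discard-sub2-5,cla:discard-sub2-4}, and the same term-by-term comparison against $\rho_n^2=\Theta(|C|^{2n}/(M^2 n))$. The only cosmetic difference is in how the middle term is closed (you show $|C|+\eps<|C|^{1+\eps}$ directly, while the paper uses $|C|+\eps<|C|^{1+\eps/4}$ together with $Mn=O(|C|^{(1-\eps/2)n})$), which is immaterial.
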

    
    \begin{proof}
    For ease of reference, let us restate the results in \Cref{cla:discard-sub2-3,cla:discard-sub2-4,cla:discard-sub2-5}:
    \begin{itemize}
        \item \Cref{cla:discard-sub2-3}: $|G(\theta_{1}, \theta_{2}))| \leq |C|^{(2 - \eps_{2})}$ for any point $(\theta_{1}, \theta_{2}) \in [-\pi,\; \pi]^{2} \setminus R''_{\theta}$.

        \item \Cref{cla:discard-sub2-5}: The subregion $Q'$ on which $|G(\theta_{1}, \theta_{2})| > |C|^{2} / a$ has measure at most $|C|^{2} \cdot \frac{4\pi b_{0}}{M}$.

        \item \Cref{cla:discard-sub2-4}: The subregion $Q''$ on which $|G(\theta_{1}, \theta_{2})| > |C| + \eps$ has measure at most $|C|^{4} \cdot (\frac{2b_{0}}{M})^{2}$.
    \end{itemize}
    Obviously, the three subregions $(Q'' \setminus R''_{\theta})$ and $(Q' \setminus Q'')$ and $[-\pi,\; \pi]^{2} \setminus Q'$ together covers the domain of integration for \eqref{eq:ez2-part3}, namely $[-\pi,\; \pi]^{2} \setminus R''_{\theta}$. As a consequence, we have
    \begin{align*}
        \eqref{eq:ez2-part3}
        & \leq \frac{1}{4\pi^{2}} \bigg(\iint\limits_{(\theta_{1}, \theta_{2}) \in [-\pi,\; \pi]^{2} \setminus Q'}
        + \iint\limits_{(\theta_{1}, \theta_{2}) \in Q' \setminus Q''}
        + \iint\limits_{(\theta_{1}, \theta_{2}) \in Q'' \setminus R''_{\theta}}\bigg)
        |G(\theta_{1}, \theta_{2})|^{n} \cdot \d \theta_{1} \d \theta_{2} \\
        & \leq \frac{1}{4\pi^{2}} \bigg( \Big(\frac{|C|^{2}}{a}\Big)^{n} \cdot 4\pi^{2} + (|C| + \eps)^n \cdot |Q'| + |C|^{(2 - \eps_{2})n} \cdot |Q''| \bigg) \\
        & \leq \frac{1}{4\pi^{2}} \bigg( \Big(\frac{|C|^{2}}{a}\Big)^{n} \cdot 4\pi^{2} + (|C| + \eps)^n \cdot |C|^{2}  \cdot \frac{4\pi b_{0}}{M} + |C|^{(2 - \eps_{2})n} \cdot |C|^{4} \cdot \Big(\frac{2b_{0}}{M}\Big)^{2} \bigg) \\
        & = O(\eps^{n}) + O\Big(\frac{(|C| + \eps)^{n}}{M}\Big) + O\Big(\frac{|C|^{(2 - \eps_{2})n}}{M^{2}}\Big).
    \end{align*}
    Here the second step follows from \Cref{cla:discard-sub2-3} and the definitions of $Q'$ and $Q''$. The third step applies \Cref{cla:discard-sub2-4,cla:discard-sub2-5}. The fourth step holds since (for the first term) $a > |C|^{2} / \eps$ using \eqref{eq:M-bound} and (for the second/third terms) both $|C|$ and $b_{0}$ are constants. Since $M = O^{*}(|C|^{(1 - \eps)n})$ for a constant $\eps > 0$, we know from \eqref{eq:rho} that $\rho_{n}^{2} = \Theta(|C|^{2n} / (M^{2} n)) = \Omega(|C|^{\eps n})$. Observe that:
    \begin{itemize}
        \item The first term $O(\eps^{n}) = o_{n}(1) = o(\rho_{n}^{2})$.
        
        \item The second term $O((|C| + \eps)^{n} \cdot M^{-1}) = O(\rho_{n}^{2}) \cdot (|C| + \eps)^{n} \cdot \frac{M n}{|C|^{2n}}$, by the definition of $\rho_{n}$ \eqref{eq:rho}. Then an upper bound $O(\rho_{n}^{2}) \cdot |C|^{-(\eps / 4)n} = o(\rho_{n}^{2})$ can be inferred from the calculation below.
        \begin{align*}
            (|C| + \eps)^{n} \cdot \frac{M n}{|C|^{2n}}
            & = O\Big((|C| + \eps)^{n} \cdot \frac{1}{|C|^{(1 + \eps / 2)n}}\Big) \\
            & = O\Big(|C|^{(1 + \eps / 4)n} \cdot \frac{1}{|C|^{(1 + \eps / 2)n}}\Big) \\
            & = O(|C|^{-(\eps / 4)n}).
        \end{align*}
        Here the first step holds because $M = O^{*}(|C|^{(1 - \eps)n})$ and then (given that $\eps > 0$ is a constant) $M n = O(|C|^{(1 - \eps / 2)n})$. The second step holds since $|C| + \eps \leq |C| \cdot (1 + \eps / 4) < |C|^{1 + \eps / 4}$, given that $|C| \geq 4$ (namely $C = C(d) = \{\pm 1, \dots, \pm d\}$ for some integer $d > 1$).
        
        \item The third term $O(|C|^{(2 - \eps_{2})n} \cdot M^{-2}) = O(\rho_{n}^{2}) \cdot n \cdot |C|^{-\eps_{2}n} = o(\rho_{n}^{2})$, given that $\eps_{2} = \eps_{2}(|C|, n) > 0$ is a constant (see the statement of \Cref{cla:discard-sub2-3}).
    \end{itemize}
    Putting everything together completes the proof of \Cref{cor:discard-sub2}.
    \end{proof}

    \begin{proof}[Proof of \Cref{lem:discard-sub2}]
    Putting \Cref{cla:discard-sub2-1,cla:discard-sub2-2,cor:discard-sub2} together, we have 
    \begin{align*}
        \E_{\vec{\bx}}[\bcalZ^{2}]
        ~ \leq ~ \eqref{eq:ez2-part1} ~ + ~ \eqref{eq:ez2-part2} ~ + ~ \eqref{eq:ez2-part3}
        ~ \leq ~ \rho_{n} \cdot (1 + o_{n}(1)).
\end{align*}
This finishes the proof of \Cref{lem:discard-sub2}.
    \end{proof}

\end{document}